\documentclass[a4paper,11pt,twoside]{article}

\usepackage{graphicx,color}

\usepackage[latin1]{inputenc}
\usepackage{amsmath}
\usepackage{amsfonts}
\usepackage{amssymb}

\usepackage{a4wide}

\usepackage[thmmarks,amsmath,amsthm]{ntheorem}

\usepackage[mathscr]{eucal}

\usepackage{index}

\usepackage{fancyhdr}

\usepackage{times}

\usepackage{tikz}
\usetikzlibrary{arrows}




\numberwithin{equation}{section}

\theoremstyle{nonumberplain}

\theoremstyle{plain}
\newtheorem{thm}{Theorem}[section]

\newtheorem{lemma}[thm]{Lemma}
\newtheorem{cor}[thm]{Corollary}

\newtheorem{observation}[thm]{Observation}

\theoremstyle{definition}

\newtheorem{example}[thm]{Example}

\theoremstyle{remark}
\newtheorem{claim}{Claim}

\theoremsymbol{\ensuremath{_\square}}
\renewtheorem*{proof}{Proof\setcounter{claim}{0}}
\theoremsymbol{\ensuremath{_\dashv}}
\newtheorem*{clproof}{Proof}

\renewcommand{\qed}{\mbox{}\hfill$\scriptstyle\square$\par\medskip}

\newcommand{\nc}[1]{\expandafter\newcommand\csname#1\endcsname}

\nc{gX}[1]{\nc{g#1}{\mathfrak #1}} 

\nc{aX}[1]{\nc{a#1}{\mathcal{#1}}} 

\gX G
\gX H
\gX C
\gX S
\gX A

\aX H
\aX G
\aX D
\aX {QG}
\aX S


\addtolength{\textwidth}{10pt}

\newcommand{\m}[1]{\mathcal{#1}}
\newcommand{\mset}[1]{\left\lbrace#1\right\rbrace} 

\newlength{\ppshort}
\setlength{\ppshort}{\textwidth}
\addtolength{\ppshort}{-80pt}

\newcommand{\idfunc}{\mathbf{id}}

\newcommand{\Nat}{ \mathbb{N}}
\newcommand{\Int}{ \mathbb{Z}}
\newcommand{\Real}{ \mathbb{R}}


\newcommand{\FP}{\textup{FP}}
\newcommand{\NP}{\textup{NP}}
\newcommand{\sharpP}{\#\textup{P}}
\newcommand{\scalp}[1]{\ensuremath{\langle #1\rangle}}

\newcommand{\Tle}{ \ensuremath{\le} }  
\newcommand{\Tequiv}{ \ensuremath{\equiv} }  
\newcommand{\rank}[1]{\ensuremath{\mathrm{rank}\, #1}} 



\newcommand{\sdef}[1]{\emph{#1}} 
\newcommand{\sdefi}[2]{\emph{#1}} 

\newcommand{\sdefisub}[3]{\emph{#1}} 

\newcommand{\sdefis}[3]{\emph{#1}} 
\newcommand{\idxsymb}[1]{#1} 
\newcommand{\idxsymbm}[1]{#1} 

\setlength{\marginparwidth}{1.8cm}

\newcommand{\comment}[1]{}

\newcommand{\marc}[1]{\comment{{\bf Marc: }#1}}
\newcommand{\martin}[1]{\comment{{\bf Martin: }#1}}
\newcommand{\draftOK}[1]{}

\newcommand{\PP}{\textup{P}}

\newcommand{\cnt}{\textup{COUNT}}

\renewcommand{\vec}[1]{\mathbf{#1}}



\newcommand{\eval}{\textup{EVAL}}

\newcommand{\evalk}{\textup{EVAL}^{\textsf{pin}}}

\newcommand{\cntk}{\textup{COUNT}^{\textsf{pin}}}

\newcommand{\row}[1]{\ensuremath{_{#1,*}}}

\newcommand{\C}{\mathbb{C}}
\newcommand{\Q}{\mathbb{Q}}
\renewcommand{\Int}{\mathbb{Z}}

\usepackage{mathrsfs}


\usepackage{stmaryrd}

\newcommand{\twres}[1]{[#1]}

\newcommand{\vcfg}{\sigma} 

\newenvironment{condition}[1]
{\begin{list}{} {\setlength{\leftmargin}{40pt} 
                \setlength{\itemindent}{-6pt}}
\item[\textbf{#1}]} 
{\end{list}}

\newcommand{\cond}[1]{\textup{\textbf{(#1)}}}

\newcommand{\df}{\textup{dom}}  


\newcommand{\vpin}{\phi} 

\newcommand{\Qu}{\mathbb{Q}}

\newcommand{\wset}{\mathcal{W}} 

\newcommand{\vdeg}{d} 


\newcommand{\Ralg}{\mathbb{R}_{\mathbb{A}}}
\newcommand{\Calg}{\mathbb{C}_{\mathbb{A}}}
\newcommand{\Ring}{\mathbb S}


\begin{document}

\pagestyle{headings}




\title{{Counting Homomorphisms and Partition Functions}}
\author{Martin Grohe \\
Humboldt Universit\"at zu Berlin \\
Berlin, Germany
\and 
Marc Thurley\footnote{supported in part by Marie Curie Intra-European Fellowship 271959 at the Centre de Recerca Matem\`{a}tica, Bellaterra, Spain}\\
Centre de Recerca Matem\`{a}tica\\
Bellaterra, Spain
}

\maketitle 





\section{Introduction}
\emph{Homomorphisms} between relational
structures are not only fundamental mathematical objects, but are also
of great importance in an applied computational context. Indeed,
\emph{constraint satisfaction problems}, a wide class of algorithmic
problems that occur in many different areas of computer science such
as artificial intelligence or database theory, may be viewed as asking
for homomorphisms between two relational structures
\cite{fedvar98}. 
In a logical setting, homomorphisms may be viewed as witnesses for
\emph{positive primitive formulas} in a relational language.
As we shall see below, homomorphisms, or more precisely the numbers of
homomorphisms between two structures, are also related to a fundamental
computational problem of statistical physics. Homomorphisms of graphs
are generalizations of colorings, and for that reason a homomorphism from a
graph $G$ to a graph $H$ is also called an \emph{$H$-coloring} of
$G$. Note that if $H$ is the complete graph on $k$-vertices,
then an $H$-coloring of a graph $G$ may be viewed as a proper
$k$-coloring of $G$ in the usual graph theoretic sense that adjacent
vertices are not allowed to get the same color.

It is thus no surprise that the computational complexity of
various algorithmic problems related to homomorphisms, in particular
the \emph{decision problem} of whether a homomorphism between
two given structures exists and the \emph{counting problem} of determining the
number of such homomorphisms, have been intensely studied. (For the
decision problem, see, for
example, \cite{barkozniv09,bul06,buljeakro05,gro07,helnes90}. References
for the counting problem will be given in Section~\ref{sec:complexity}. Other related problems, such as optimization or
enumeration problems, have been studied, for example, in
\cite{aus07,bdgm09,djkk08,rag08,schnoschno07}.) 

In this article, we are concerned with the complexity of counting
homomorphisms from a given structure $A$ to a fixed structure
$B$. Actually, we are mainly interested in a
generalization of this problem to be introduced in the next section. We almost
exclusively focus on graphs. The first part of the article, consisting
of the following two sections, is a short survey of what is known
about the problem. In the second part, consisting of the remaining
Sections~\ref{sec:it}-\ref{sec:11c}, we give a proof of a theorem due to Bulatov
and the first author of this paper \cite{bulgro05}, which classifies
the complexity of partition functions described by matrices with
non-negative entries. The proof we give here is essentially the same as
the original one, with a few shortcuts due to \cite{thu09}, but it is
phrased in a different, more graph theoretical language that may make it
more accessible to most readers.

\section{From Homomorphisms to Partition Functions}
For a fixed graph $H$ we let $Z_H$ be the ``homomorphism-counting
function'' that maps each graph $G$ to the number of homomorphisms
from $G$ to $H$. Several well-known combinatorial graph invariants can
be expressed as homomorphism counting functions, as the following
examples illustrate:

\begin{figure}
  \centering
  \begin{tikzpicture}
  [
  line width=0.4mm,
  vertex/.style={draw,circle,inner sep=1pt,minimum
    size=2.5mm},
  every edge/.style={draw}
  ]
  \small
  \node[vertex] (i1) at (0,0) {$1$};
  \node[vertex] (i2) at (1.5,0) {$2$} edge (i1);
  \draw (i1) .. controls (-1,-1) and (-1,1) .. (i1); 
  \draw (0.5,-0.8) node {$I$};

  \node[vertex] (j1) at (4,0) {$1$};
  \node[vertex] (j2) at (5.5,0) {$2$};
  \draw (j1) .. controls (4.5,0.5) and (5,0.5) .. (j2);
  \draw (j1) .. controls (4.5,-0.5) and (5,-0.5) .. (j2);
  \draw (j1) .. controls (3,-1) and (3,1) .. (j1); 
  \draw (4.5,-0.8) node {$J$};

  \node[vertex] (k21) at (8,0) {$1$};
  \node[vertex] (k22) at (9.5,0) {$2$} edge (k21);
  \draw (8.75,-0.8) node {$K_2$};

  \node[vertex] (k31) at (12,0) {$1$};
  \node[vertex] (k32) at (13.5,0) {$2$} edge (k31);
  \node[vertex] (k33) at (12.75,1.2) {$3$} edge (k32) edge (k31);
  \draw (12.75,-0.8) node {$K_3$};
  
\end{tikzpicture}

  \caption{The graphs $I$, $J$, $K_2$, $K_3$}
  \label{fig:is-col}
\end{figure}

\begin{example}\label{exa:is}
  Let $I$ be the first graph displayed in Figure~\ref{fig:is-col}, and
  let $G$ be an arbitrary graph. Remember that an \emph{independent set} (or
  \emph{stable set}) of $G$ is set of pairwise nonadjacent vertices of
  $G$. For every set $S\subseteq
  V(G)$, we define a mapping $h_S:V(G)\to V(I)$ by $h_S(v)=2$ if $v\in S$ and $h_S(v)=1$ otherwise. Then $h_S$ is a
  homomorphism from $G$ to $I$ if and only if $S$ is an independent
  set. Thus the number $Z_I(G)$ of homomorphisms from $G$ to $I$ is
  precisely the number of independent sets of $G$.
\end{example}

\begin{example}\label{exa:col}
  For every positive integer $k$, let $K_k$ be the complete graph with
  vertex set $[k]:=\{1,\ldots,k\}$ (see Figure~\ref{fig:is-col}). Let
  $G$ be a graph. Recall that a \emph{(proper) $k$-coloring} of $G$
  is a mapping $h:V(G)\to[k]$ such that for all $vw\in E(G)$ it holds
  that $h(v)\neq h(w)$. Observe that a mapping $h:V(G)\to[k]$ is a
  $k$-coloring of $G$ if and only if it is a homomorphism from $G$ to
  $K_k$. Hence $Z_{K_k}(G)$ is the number of $k$-colorings of $G$.
\end{example}

\begin{figure}
  \centering
  $A(I)=
  \begin{pmatrix}
    1&1\\1&0
  \end{pmatrix}$
  \hspace{8mm}
  $A(J)=
  \begin{pmatrix}
    1&2\\2&0
  \end{pmatrix}$
  \hspace{8mm}
  $A(K_2)=
  \begin{pmatrix}
    0&1\\1&0
  \end{pmatrix}$
  \hspace{8mm}
  $A(K_3)=
  \begin{pmatrix}
    0&1&1\\1&0&1\\1&1&0
  \end{pmatrix}$
  \caption{The adjacency matrices of the graphs $I,K_2,k_3$}
  \label{fig:is-col-adj}
\end{figure}

\noindent
Unless mentioned otherwise, graphs in this article are undirected, and
they may have loops and parallel edges. 
Graphs without loops and 
parallel edges are called \emph{simple}. 
We always
assume the edge set and the vertex set of a graph to be disjoint. The class of all graphs is denoted by
$\aG$. A \emph{graph invariant} is a function defined on
$\aG$ that is invariant under isomorphism.  The \emph{adjacency
  matrix} of a graph $H$ is the square matrix $A:=A(H)$ with rows and
columns indexed by vertices of $H$, where the entry $A_{v,w}$ at row
$v$ and column $w$ is the number of edges from $v$ to
$w$. Figure~\ref{fig:is-col-adj} shows the adjacency matrices of the
graphs in Figure~\ref{fig:is-col}. For
all graphs $G,H$, we define a \emph{homomorphism} from $G\to H$ to be a
mapping $h:V(G)\cup E(G)\to V(H)\cup E(H)$ such that for all $v\in
V(G)$ it holds that $h(v)\in V(H)$ and for all edges $e\in E(G)$ with
endvertices $v,w$ it holds that $h(e)\in E(H)$ is an edge with
endvertices $h(v),h(w)$.\footnote{Usually, homomorphisms from $G$ to
  $H$ are defined to be mappings $g:V(G)\to V(H)$ that preserve
  adjacency. A mapping $g:V(G)\to V(H)$ is a homomorphism in this sense if and only
  if it has an extension $h:V(G)\cup E(G)\to V(H)\cup E(H)$ that is a
  homomorphism as defined above. Thus the two notions are closely
  related. However, if $H$ has parallel edges, then there a
  different numbers of homomorphisms for the two notions.}
The
following observation expresses a homomorphism counting function
$Z_H$ in
terms of the adjacency matrix of $H$:

\begin{observation}\label{obs:hom-par}
  Let $H$ be a graph and $A:=A(H)$. Then for every graph $G$,
  \begin{equation}\label{eq:hom-par}
  Z_H(G)=\sum_{\sigma:V(G)\to V(H)}\prod_{\substack{e\in E(G)\textup{ with}\\\textup{endvertices }v,w}}A_{\sigma(v),\sigma(w)}.
  \end{equation}
\end{observation}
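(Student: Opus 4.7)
The plan is to prove the identity by classifying each homomorphism according to the underlying vertex map. First I would observe that any homomorphism $h:V(G)\cup E(G)\to V(H)\cup E(H)$ decomposes uniquely as a pair $(\sigma,\eta)$, where $\sigma:=h\rstr{V(G)}$ is a map $V(G)\to V(H)$ and $\eta:=h\rstr{E(G)}$ is a map $E(G)\to E(H)$, subject to the compatibility condition that for each edge $e\in E(G)$ with endvertices $v,w$, the edge $\eta(e)$ has endvertices $\sigma(v),\sigma(w)$. Conversely, any such pair assembles into a homomorphism. This allows me to rewrite
$$
Z_H(G)\;=\;\sum_{\sigma:V(G)\to V(H)}\bigl|\{\eta:E(G)\to E(H)\mid(\sigma,\eta)\text{ compatible}\}\bigr|.
$$

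Next I would count the compatible edge maps $\eta$ for a fixed $\sigma$. Since the edges of $G$ are distinct combinatorial objects and the compatibility constraint acts independently on each edge, specifying $\eta$ amounts to choosing, independently for each $e\in E(G)$ with endvertices $v,w$, an edge of $H$ with endvertices $\sigma(v),\sigma(w)$. By the definition of the adjacency matrix, the number of such edges is exactly $A_{\sigma(v),\sigma(w)}$. The multiplication principle then gives that the inner count equals $\prod_{e}A_{\sigma(v),\sigma(w)}$, and substituting this expression yields precisely \eqref{eq:hom-par}.

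The only point requiring care is the handling of loops and parallel edges, since these are exactly what distinguish the edge-map formulation of homomorphism used here from the more usual vertex-only formulation (as noted in the footnote preceding the observation). Parallel edges cause no problem because different edges of $H$ with the same endvertex pair give distinct images for $\eta(e)$ and are counted separately by $A_{\sigma(v),\sigma(w)}$; for a loop $e\in E(G)$ at $v$ one has $\sigma(v)=\sigma(w)$, so $\eta(e)$ must be a loop at $\sigma(v)$, and the paper's convention (visible in the entry $A(I)_{1,1}=1$ in Figure~\ref{fig:is-col-adj}) makes $A_{\sigma(v),\sigma(v)}$ exactly the number of such loops. Once this bookkeeping is checked, the proof is a direct unfolding of the definitions.
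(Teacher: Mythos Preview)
Your argument is correct and is exactly the intended one: the paper states this as an observation without proof, and your proof is the natural unfolding of the paper's definition of homomorphism as a map on $V(G)\cup E(G)$ together with the definition of $A(H)$. Your care with loops and parallel edges, including the check against $A(I)_{1,1}=1$, is appropriate and matches the paper's conventions.
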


\noindent
To simplify the notation, we write
$\displaystyle\prod_{\substack{vw\in E(G)}}$ instead of
$\displaystyle\prod_{\substack{e\in E(G)\text{
      with}\\\text{endvertices }v,w}}$ in similar expressions in the
following. By convention, the empty sum
evaluates to $0$ and the empty product evaluates to $1$. Thus
for the empty graph $\emptyset$ we have $Z_H(\emptyset)=1$ for all $H$ and 
$Z_\emptyset(G)=0$ for all $G\neq\emptyset$.

\begin{example}\label{exa:is}\martin{neu}
  Consider the second graph $J$ displayed in Figure~\ref{fig:is-col}, and
  let $G$ be an arbitrary graph. $Z_J(G)$ is a weighted sum over all
  independent sets of $G$: For all sets $S,T\subseteq
  V(G)$ we let $e(S,T)$ be the number of edges between $S$ and
  $T$. Then
  \[
  Z_J(G)=\sum_{\substack{S\subseteq V(G)\\\text{independent
        set}}}2^{e(S,V(G)\setminus S)}.
  \]
\end{example}

Equation~\eqref{eq:hom-par} immediately suggests the following
generalization of the homomorphism counting functions $Z_H$: For every
symmetric $n\times n$ matrix $A$ with entries from some ring $\Ring$
we
let $Z_A:\aG\to \Ring$ be the function that associates the following
element of $\Ring$ with each graph $G=(V,E)$:
\begin{equation}
\label{eq:part}
Z_A(G):=\sum_{\;\sigma:V\to [n]\;}\prod_{vw\in E}A_{\sigma(v),\sigma(w)}.
\end{equation}
We call functions $Z_A$, where $A$ is a symmetric matrix over $\Ring$,
\emph{partition functions} over $\Ring$. (All rings in this paper are commutative
with a unit. $\Ring$ always denotes a ring.) 

With each $n\times n$ matrix $A$ we associate a simple graph $H(A)$
with vertex set $[n]$ and edge set $\{ij\mid A_{i,j}\neq
0\}$. We may view $A$ as assigning nonzero weights to the edges of
$G(A)$, and we may view $Z_A(G)$ as a weighted sum of homomorphisms
from $G$ to $H(A)$, where the \emph{weight} of a mapping
$\sigma:V\to [n]$ is 
\[
\omega_A(G,\sigma):=\prod_{vw\in
  E}A_{\sigma(v),\sigma(w)}.
\]
Homomorphisms are precisely the mappings with nonzero weight. Inspired
by applications in statistical physics (see
Section~\ref{sec:stat-phys}), we often call the elements of the index
set of a matrix, usually $[n]$, \emph{spins}, and we call mappings
$\sigma:V\to [n]$ assigning a spin to each vertex of a graph
\emph{configurations}.

\begin{example}\label{exa:euler}
  Recall that a graph $G$ is Eulerian if there is a closed walk in $G$
  that traverses each edge exactly once. It is a well-known theorem,
  which goes back to Euler, that a graph is Eulerian if and only if it
  is connected and every vertex has even degree. Consider the matrix
  \[
  U=
  \begin{pmatrix}
    1&-1\\-1&1
  \end{pmatrix}.
  \]
  It is not hard to show that for every $N$-vertex graph $G$ we have $Z_{U}(G)=2^N$ if every vertex of $G$ has even
  degree and $Z_U(G)=0$ otherwise. Hence on connected graphs, $(1/2^{N})\cdot Z_{U}$ is the
  characteristic function of Eulerianicity.
\end{example}

\begin{example}\label{exa:evensubgraphs}
   Consider the matrix
  \[
  B=
  \begin{pmatrix}
    1&1\\1&-1
  \end{pmatrix}.
  \]
  Let $G$ be a graph. Then for every $\sigma:V(G)\to[2]$ it holds that 
  \[
  \omega_B(G,\sigma)
  =
  \begin{cases}
    1&\text{if the induced subgraph $G\big[\sigma^{-1}(2)\big]$ has an even
      number of edges},\\
    -1&\text{otherwise}.
  \end{cases}
  \]
  It follows that for every $N$-vertex graph $G$,
  \[
  \frac{1}{2}Z_{B}(G)+2^{N-1}
  \]
  is the number of
  induced subgraphs of $G$ with an even number of edges.
\end{example}

\begin{example}
  Recall that a \emph{cut} of a graph is a partition of its vertex set
  into two parts, and the \emph{weight} of a cut is the number of
  edges from one part to the other. A \emph{maximum cut} is a cut of
  maximum weight.
  Consider the matrix
  \[
  C:=
  \begin{pmatrix}
    1&X\\
    X&1
  \end{pmatrix}
  \]
  over the polynomial ring $\mathbb Z[X]$. It is not hard to see that
  for every graph $G$,
  the degree of the polynomial $Z_{C}(G)$ is the weight of a maximum
  cut of $G$ and the leading coefficient the number of maximum cuts. 
\end{example}

\noindent
  Graph polynomials present another important way to uniformly describe
  families of graph invariants. Examples of graph polynomials
  are the \emph{chromatic polynomial} and the \emph{flow
    polynomial}. Both of these are subsumed by the bivariate 
  \emph{Tutte polynomial}, arguably the most important graph
  polynomial. The following example exhibits a relation between the
  Tutte polynomial and partition functions. 

\begin{example}
Let $G=(V,E)$ be a graph with $N$ vertices, $M$
edges, and $Q$ connected components. For a subset $F\subseteq E$, by $q(F)$ we
denote the number of connected components of the graph $(V,F)$. 
The \emph{Tutte polynomial} of $G$ is the bivariate polynomial
$T(G;X,Y)$ defined by 
\[
T(G;X,Y)=\sum_{F\subseteq E}(X-1)^{q(F)-Q}\cdot(Y-1)^{|F|-N+q(F)}.
\]
It is characterized by the following
\emph{contraction-deletion equalities}. For an edge $e\in E$, we let $G\setminus
e$ be the graph obtained from $G$ by \emph{deleting} $e$, and we let $G/e$ be
the graph obtained from $G$ by \emph{contracting} $e$. A \emph{bridge} of $G$
is an edge $e\in E$ such that $G\setminus e$ has more connected components
than $G$. A \emph{loop} is an edge that is only incident to one vertex.
\[
T(G;X,Y)=
\begin{cases}
  1&\text{if }E(G)=\emptyset,\\
  X\cdot T(G\setminus e;X,Y)&\text{if }e\in E(G)\text{ is a bridge},\\
  Y\cdot T(G/e;X,Y)&\text{if }e\in E(G)\text{ is a loop},\\
  T(G\setminus e;X,Y)+T(G/e;X,Y)&\text{if }e\in E(G)\text{is neither a
    loop nor a bridge}. 
\end{cases}
\]
Let $r,s\in\mathbb C$. It can be shown that the partition function
of the $n\times
n$ matrix $A(n,r,s)$ with diagonal entries $r$ and
off-diagonal entries $s$ satisfies similar contraction-deletion
equalities, and this implies that it can be expressed in terms of
the Tutte polynomial as follows:
\begin{equation}
  \label{eq:tutte-part}
  Z_{A(n,r,s)}(G)= s^{M-N+Q}\cdot(r-s)^{N-Q}\cdot n^Q\cdot T\left(G;\frac{r+s\cdot(n-1)}{(r-s)},\frac{r}{s}\right).
  \end{equation}
This implies that for all $x,y\in\mathbb C$ such that
$n:=(x-1)\cdot(y-1)$ is a positive integer, it holds that
\[
  T\left(G;x,y\right)=(y-1)^{Q-N}\cdot n^{-Q}\cdot Z_{A(n,y,1)}(G).
  \]
\end{example}

\begin{example}\label{exa:flow}
  For simplicity, in this example we assume that $G=(V,E)$ is a
  \emph{simple} graph, that is, a graph without loops and without multiple edges. For every positive integer $k$, a \emph{$k$-flow} in
  $G$ is a
  mapping $f:V\to\mathbb Z_k$ (the group of integers modulo $k$) such
  that the following three conditions are
  satisfied:
  \begin{enumerate}
    \item[(i)] $f(v,w)=0$ for all $v,w\in V$ with $vw\not\in E$;
    \item[(ii)] $f(v,w)=-f(w,v)$ for all $v,w\in V$;
    \item[(iii)] $\displaystyle\sum_{\substack{w\in V\text{ with}\\vw\in E}}f(v,w)=0$ for
        all $v\in V$.
  \end{enumerate}
  The $k$-flow $f$ is \emph{nowhere zero} if $f(v,w)\neq 0$ for all
  $vw\in E$. Let $F(G,k)$ be the number of nowhere zero $k$-flows of
  $G$. It can be shown that
  \[
  F(G,k)=(-1)^{M-N+Q}\cdot T(G;0,1-k)=k^{-N}\cdot Z_{A(k,k-1,-1)}(G),
  \]
  where $T$ denotes the Tutte polynomial and $A(k,k-1,1)$ the $k\times k$ matrix with diagonal entries $(k-1)$ and off-diagonal
  entries $-1$.
\end{example}

\subsection{Partition functions in statistical physics}\label{sec:stat-phys}

The term ``partition function'' indicates the fact that the functions
we consider here do also have an origin in statistical physics. A
major aim of this branch of physics is the prediction of
phase transitions in dynamical systems from knowing only the
interactions of their microscopic components. In this context,
partition functions are the central quantities allowing for such a
prediction. As a matter of fact many of these partition functions can
be described in the framework we defined above.

Let us see an example for this connection --- the partition function of the \emph{Ising model}.
\newcommand{\zisi}{Z}
Originally introduced by Ising in 1925 \cite{isi25} this model was developed to describe the phase transitions in ferromagnets. 
For some given graph $G$, the model associates with each vertex $v$ a spin $\vcfg_v$ which may be either $+1$ or $-1$. Then the energy of a state $\vcfg$ is given by the \emph{Hamiltonian} defined by
\begin{equation}\label{eq:hamilt_ising}
 H(\vcfg) = -J\sum_{uv \in E} \vcfg_u\vcfg_v  
\end{equation}
where $-J\vcfg_u\vcfg_v$ is the contribution of the energy of each pair of nearest neighbor particles. Let $T$ denote the temperature of the system and $k$ be Boltzmann's constant, define $\beta = (kT)^{-1}$. Then, for a graph $G=(V,E)$ with $N$ vertices, $M$
edges, and $Q$ connected components, we have
\begin{eqnarray*}
\zisi(G,T) &=& \sum_{\vcfg: V \to \{+1,-1\}} e^{- \beta H(\vcfg)} 
\end{eqnarray*}
We straightforwardly get $\zisi(G,T) = e^{\beta J M} Z_{A}(G)$ for the matrix
\begin{equation}\label{eq:ising_pf_pfad}
A = A(T) = \left(\begin{array}{c c}
           e^{2\beta J} & 1 \\
            1 & e^{2\beta J}
          \end{array}\right).
\end{equation}
An extension of this model to systems with more than two spins is the
\emph{$n$-state Potts model}, whose partition function satisfies
\begin{eqnarray*}
Z_{\textup{Potts}}(G;n,v) &=& \sum_{\vcfg : V \to [n]} \prod_{uv \in E} (1 +
v\cdot \delta_{\vcfg(u),\vcfg(v)}).
\end{eqnarray*}
In fact, for $n=2$ and $v = e^{2\beta J} - 1$ we see
that $\zisi(G,T) = e^{\beta J M} Z_{\textup{Potts}}(G;n,v)$. Moreover, this
model can be seen as a specialization of the Tutte Polynomial. Expanding the
above sum over connected components of $G$, we obtain
$$
Z_{\textup{Potts}}(G;n,v) = \sum_{A \subseteq E} n^{q(A)} v^{|A|}
$$
whence it is not difficult to see that, if $(X-1)(Y-1) \in \Nat$,
$$
T(G;X,Y) =  (X-1)^{-q(E)}(Y-1)^{-N} Z_{\textup{Potts}}(G;(X-1)(Y-1),Y-1).
$$
Since $Z_{\textup{Potts}}(G;n,v) =  Z_{A(n,v+1,1)}(G)$ this relation is actually
a special case of equation \eqref{eq:tutte-part}.
\subsection{Which functions are partition functions?}
\label{sec:whichpart}

In this section, we shall state (and partially prove) precise
algebraic characterizations of partition functions over the real and
complex numbers. 

A graph invariant $f:\aG\to \Ring$ is
\emph{multiplicative} if $f(\emptyset)=1$ and $f(G\cdot H)=f(G)\cdot
f(H)$. Here $\emptyset$ denotes the empty graph and $G\cdot H$ denotes
the disjoint union of the graphs $G$ and $H$. An easy calculation
shows:

\begin{observation}\label{obs:multiplicative}
  All partition functions are multiplicative.
\end{observation}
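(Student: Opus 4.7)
The plan is to unfold the definition in equation~\eqref{eq:part} directly and observe that the sum factorizes over the two connected pieces of the disjoint union. First I would dispatch the empty-graph case: for $G=\emptyset$ the index set $V$ is empty, so there is exactly one configuration $\sigma:\emptyset\to[n]$ (the empty function), and the product over the edges of $\emptyset$ is the empty product, which by our convention equals $1$. Thus $Z_A(\emptyset)=1$, matching the first part of the multiplicativity condition.

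For the second part, let $G=(V_G,E_G)$ and $H=(V_H,E_H)$ and write $G\cdot H=(V_G\cup V_H,\,E_G\cup E_H)$, where both unions are disjoint. The key observation is that a configuration $\sigma:V_G\cup V_H\to[n]$ is the same data as a pair $(\sigma_G,\sigma_H)$ with $\sigma_G:=\sigma\rstr{V_G}$ and $\sigma_H:=\sigma\rstr{V_H}$, so the sum over $\sigma$ splits as a double sum over $\sigma_G$ and $\sigma_H$. Because $E_G$ and $E_H$ are disjoint and every edge of $G\cdot H$ lies entirely inside one of $V_G$ or $V_H$, the product over $vw\in E_G\cup E_H$ factors as the product over $E_G$ (depending only on $\sigma_G$) times the product over $E_H$ (depending only on $\sigma_H$). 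Applying the distributive law to pull the two independent sums apart yields $Z_A(G\cdot H)=Z_A(G)\cdot Z_A(H)$, which is the desired identity.

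There is no real obstacle; the statement is essentially a bookkeeping exercise on the definition~\eqref{eq:part}. The only minor point worth stating carefully is the bijection between configurations on a disjoint union and pairs of configurations on the parts, together with the observation that no edge of $G\cdot H$ has one endpoint in $V_G$ and the other in $V_H$, which is what allows the product over edges to split cleanly. Both conventions about empty sums and products (stated right after Observation~\ref{obs:hom-par}) are used, the product convention for the empty-graph base case and implicitly to handle the degenerate situation where one of $G$, $H$ has no edges.
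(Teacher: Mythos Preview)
Your proof is correct and is exactly the ``easy calculation'' the paper alludes to; the paper itself does not spell out a proof beyond the phrase ``An easy calculation shows''. Your handling of the empty-graph case via the empty sum/product conventions and the factorization of configurations and edge-products over the disjoint union is precisely the intended argument.
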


\noindent
To characterize the class of partition functions over the reals, let
$f:\aG\to\mathbb R$ be a graph invariant.  Consider
the(infinite) real matrix $M=(M_{G,H})_{G,H\in\aG}$ with
entries $M_{G,H}:=f(G\cdot H)$. It follows from the multiplicativity
that if $f$ is a partition function then $M$ has rank $1$ and is
positive semidefinite. (Here an infinite matrix is \emph{positive
  semidefinite} if each finite principal submatrix is positive
semidefinite.) The criterion for a graph invariant being a
partition function is a generalization of this simple criterion. Let
$k\ge 0$. A \emph{$k$-labeled graph} is a graph with $k$ distinguished
vertices. Formally, a $k$-labeled graph is a pair
$(G,\phi)$, where $G\in\aG$ and $\phi:[k]\to V(G)$ is injective.  The class of all
$k$-labeled graphs is denoted by $\aG_k$.  For two
$k$-labeled graphs $(G,\phi),(H,\psi)\in\aG_k$, we let
$(G,\phi)\cdot(H,\psi)$ be the $k$-labeled graph obtained from the
disjoint union of $G$ and $H$ by identifying the labeled vertices
$\phi(i)$ and $\psi(i)$ for all $i\in [k]$ and keeping the labels
where they are. We extend $f$ to $\aG_k$ by letting $f(G,\phi):=f(G)$
and define a matrix 
\[
M(f,k)=\big(M(f,k)_{(G,\phi),(H,\psi)}\big)_{(G,\phi),(H,\psi)\in\mathcal
  G_k}
\]
by letting
$M(f,k)_{(G,\phi),(H,\psi)}=f\big((G,\phi)\cdot(H,\psi)\big)$. Note
that if we identify $0$-labeled graphs with plain graphs, then
$M(f,0)$ is just the matrix $M$ defined above. The matrices $M(f,k)$
for $k\ge 0$ are called the \emph{connection matrices} of
$f$. Connection matrices were first introduced by Freedman, Lov{\'a}sz,
and Schrijver~\cite{frelovschri07} to prove a theorem similar to the
following one (Theorem~\ref{thm:fls} below).

\begin{thm}[Schrijver~\cite{schri09}]\label{thm:char-real}
  Let $f:\aG\to\mathbb R$ be a graph invariant. Then $f$ is a
  partition function if and only if it is multiplicative and all its
  connection matrices are positive semidefinite.
\end{thm}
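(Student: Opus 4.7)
The direction ($\Rightarrow$) is obtained by exhibiting each $M(f,k)$ as a Gram matrix. Suppose $f=Z_A$ for some symmetric $n\times n$ real matrix $A$. Multiplicativity is Observation~\ref{obs:multiplicative}. For every $k$-labeled graph $(G,\phi)$ associate the vector $u_{(G,\phi)}\in\mathbb R^{[n]^k}$ with entries
\begin{equation*}
u_{(G,\phi)}(\tau)\;=\sum_{\substack{\sigma:V(G)\to[n]\\\sigma\circ\phi=\tau}}\;\prod_{vw\in E(G)}A_{\sigma(v),\sigma(w)}.
\end{equation*}
Splitting each configuration of $(G,\phi)\cdot(H,\psi)$ according to its values on the $k$ identified vertices then yields $M(f,k)_{(G,\phi),(H,\psi)}=\sum_{\tau}u_{(G,\phi)}(\tau)u_{(H,\psi)}(\tau)$, so $M(f,k)$ is positive semidefinite.

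For the substantive direction ($\Leftarrow$) I would adopt the quantum-graph machinery of Freedman, Lov\'asz and Schrijver. Let $\mathcal{QG}_k$ be the real vector space of formal $\mathbb R$-linear combinations of $k$-labeled graphs, and extend $f$ linearly. The gluing operation makes $\mathcal{QG}_k$ into a commutative algebra whose unit is the edgeless graph on the label set $[k]$. The hypothesis on $M(f,k)$ says exactly that the bilinear form $\langle x,y\rangle_k:=f(x\cdot y)$ is positive semidefinite on $\mathcal{QG}_k$. Its nullspace $\mathcal N_k=\{x:\langle x,x\rangle_k=0\}$ coincides via Cauchy--Schwarz with $\{x:\langle x,y\rangle_k=0\text{ for all }y\}$, and the Frobenius identity $\langle xz,y\rangle_k=\langle x,zy\rangle_k$ --- immediate from associativity and commutativity of gluing --- shows that $\mathcal N_k$ is an ideal. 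The quotient $\mathcal A_k:=\mathcal{QG}_k/\mathcal N_k$ is therefore a commutative $\mathbb R$-algebra carrying a positive definite inner product in which multiplication is self-adjoint.

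The crucial and hardest step is to show that each $\mathcal A_k$ is \emph{finite-dimensional}. This is precisely where Schrijver's theorem goes beyond the earlier result of Freedman, Lov\'asz and Schrijver, which assumes explicit rank bounds on the connection matrices as hypotheses. I expect this step to require a delicate combinatorial argument using multiplicativity (which immediately forces $\dim\mathcal A_0=1$), the positivity of $\langle\cdot,\cdot\rangle_k$, and a pigeonhole argument on suitable gluings to bound $\dim\mathcal A_k$ intrinsically in terms of $\dim\mathcal A_1$.

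Granted finite-dimensionality, the remainder is structure theory. The Frobenius form together with positive definiteness makes $\mathcal A_k$ semisimple, and the identity $\langle x,x\rangle_k=\langle x^2,1\rangle_k$ rules out any complex factor in its Artin--Wedderburn decomposition (otherwise taking $x$ to be a square root of $-1$ in such a factor would produce $\langle x,x\rangle_k<0$), so $\mathcal A_k$ splits as an orthogonal direct sum $\bigoplus_i\mathbb R\,p_i^{(k)}$ of one-dimensional subalgebras generated by minimal idempotents. Set $n:=\dim\mathcal A_1$ and let $p_1,\dots,p_n\in\mathcal A_1$ be its minimal idempotents, playing the role of the $n$ spins. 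The matrix $A\in\mathbb R^{n\times n}$ is read off by gluing $p_i$ and $p_j$ onto the two labeled endpoints of the $2$-labeled single-edge graph, applying $f$, and normalizing appropriately by the values $\|p_i\|_k^2=\langle p_i,p_i\rangle_k$. The equality $f=Z_A$ finally follows by induction on $|V(G)|+|E(G)|$, replacing each vertex by the resolution of unity $\sum_i p_i/\|p_i\|^2$ in $\mathcal A_1$ and each edge by the matrix entry $A_{\sigma(v),\sigma(w)}$, and using the Frobenius identity to collapse the expansion to formula~\eqref{eq:part}.
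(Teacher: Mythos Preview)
Your forward direction is correct and is essentially the paper's argument: the paper decomposes $M(f,k)=\sum_{\chi:[k]\to[n]}M(Z_{A,\chi},k)$ with each summand of rank~$1$ and positive semidefinite, which is exactly your Gram presentation indexed by $\tau\in[n]^k$.

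For the backward direction you take a genuinely different route from the paper, and the gap you yourself flag is real. The paper does \emph{not} attempt to prove that the quotient algebras $\mathcal A_k$ are finite-dimensional; instead it derives Theorem~\ref{thm:char-real} from the complex characterisation (Theorem~\ref{thm:char-complex}), whose proof goes through Hilbert's Nullstellensatz rather than through semisimple structure theory. Concretely, the paper uses positive semidefiniteness of $M(f,k)$ only to establish two things: first, that $n:=f(K_1)$ is a non-negative integer (by testing $M(f,k)$ against the M\"obius vector $(\mu(P))_{P\in\Pi([k])}$ on the graphs $I_k/P$, which forces $n(n-1)\cdots(n-k+1)\ge 0$ for all $k$); second, that $\sum_{P\in\Pi(G)}\mu(P)f(G/P)=0$ whenever $|V(G)|>n$ (by augmenting that same vector with a coordinate for $(G,\mathrm{id})$ and exploiting positive semidefiniteness again). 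This is precisely the hypothesis of Theorem~\ref{thm:char-complex}, which then yields $f=Z_A$ for some complex $A$; a final argument using $M(f,2)$ on the multigraphs $F_\ell$ shows $A$ is real.

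Your quantum-graph-algebra plan is the Freedman--Lov\'asz--Schrijver strategy, but in their Theorem~\ref{thm:fls} finite-dimensionality of $\mathcal A_k$ is a \emph{hypothesis} (the rank bound $\operatorname{rank} M(f,k)\le q^k$). Removing that hypothesis is the entire content of Schrijver's improvement, and ``a pigeonhole argument on suitable gluings'' is not a proof: you have not even argued that $\dim\mathcal A_1<\infty$, and nothing in multiplicativity plus PSD alone gives an obvious handle on this without passing through something like the M\"obius identity above. So as written, the proposal reduces the theorem to its hardest step and then assumes it.
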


\noindent
We first give the simple proof of the forward direction. A proof
sketch for the backward direction will be given at the end of this subsection. Suppose that $f=Z_A$ for a symmetric matrix $A\in\mathbb
R^{n\times n}$. For all mappings $\chi:=[k]\to[n]$ and $k$-labeled
graphs $(G,\phi)\in\aG_k$ we let
\[
Z_{A,\chi}(G,\phi):=\sum_{\substack{\sigma:V(G)\to[n]\\\sigma(\phi(i))=\chi(i)\text{
    for all }i\in[k]}}\prod_{vw\in
    E(G)}A_{\sigma(v),\sigma(w)}.
\]
Note that for $(H,\psi)\in\aG_k$ we have
\[
Z_{A,\chi}\big((G,\phi)\cdot(H,\psi)\big)=Z_{A,\chi}(G,\phi)\cdot Z_{A,\chi}(H,\psi).
\]
Hence the matrix $M(Z_{A,\chi},k)$ with entries 
$M(Z_{A,\chi},k)_{(G,\phi),(H,\psi)}=Z_{A,\chi}\big((G,\phi)\cdot(H,\psi)\big)$
is positive semidefinite. Furthermore, 
\[
f(G,\phi)=Z_A(G)=\sum_{\chi:[k]\to
    [n]}Z_{A,\chi}(G,\phi),
\]
and thus $M_{f,k}$ is the sum of $n^k$ positive semidefinite matrices,
which implies that it is positive semidefinite. Incidentally, the same
argument shows that the row rank of the $k$th connection matrix
$M(Z_A,k)$ of a partition function of a matrix $A\in\mathbb R^{n\times
  n}$ is at most $n^k$.

Schrijver also obtained a characterization of the class of partition
functions over the complex numbers, which looks surprisingly different
from the one for the real numbers. 
In the following, let $\Ring=\mathbb C$
or $\mathbb C[\vec X]$ for some tuple $\vec X$ of
variables. For a symmetric $n\times n$ matrix $A\in
\Ring^{n\times n}$ we define the ``injective''
partition function $Y_A:\aG\to \Ring$ by
\[
Y_A(G):=\sum_{\substack{\;\tau:V(G)\to[n]\;\\\text{injective}}}\prod_{vw\in
    E(G)}A_{\tau(v),\tau(w)}.
\]
Note that $Y_A(G)=0$ for all $G$ with $|V(G)|>n$. For a graph $G$ and
a partition $P$ of $V(G)$, we let $G/P$ be the graph whose vertex set
consists of the classes of $P$ and whose edge set contains an edge
between the class of $v$ and the class of $w$ for every edge $vw\in
E(G)$. Note that in general $G/P$ will have many loops and multiple
edges. For example, if $P$ has just one class, then $G/P$ will consist
of a single vertex with $|E(G)|$ many loops. We denote the set of
all partitions of a set $V$ by $\Pi(V)$, and for a graph $G$ we let
$\Pi(G):=\Pi\big(V(G)\big)$. For $P,Q\in\Pi(V)$, we write $P\le Q$ if
$P$ refines $Q$. Then we have
\begin{equation}\label{eq:zeta}
Z_A(G)=\sum_{P\in\Pi(G)} Y_A(G/P).
\end{equation}
We can also express $Y_A$ in terms of $Z_A$. For every finite set $V$
there is a unique function $\mu:\Pi(V)\to\mathbb Z$ satisfying the
following equation for all $P\in\Pi(V)$:
\begin{equation}\label{eq:moebius}
\sum_{\substack{Q\in\Pi(V)\\Q\le P}}\mu(Q)=
\begin{cases}
1&\text{if }P=T_V,\\
0&\text{otherwise}.
\end{cases}
\end{equation}
Here $T_V$ denotes the trivial
partition $\{\{v\}\mid v\in V\}$. The function $\mu$ is a restricted
version of the \emph{M\"obius function} of the partially ordered set
  $\Pi(V)$ (for background, see for example
\cite{aig07}). Now it is
easy to see that
\begin{equation}\label{eq:inversion}
Y_A(G)=\sum_{P\in \Pi(G)}\mu(P)\cdot Z_A(G/P).
\end{equation}
We close our short digression on M\"obius inversion
by noting that for all sets $V$ with $|V|=:k$ we have the following
polynomial identity:
\begin{equation}\label{eq:moebius-ff}
\sum_{P\in\Pi(V)}\mu(P)\cdot X^{|P|}=X(X-1)\cdots(X-k+1).
\end{equation}\martin{Beweis angefügt}
To see this, note that it suffices to prove it for all $X\in \Nat$. So
let $X\in\Nat$, and let $A$ be the $(X\times X)$-identity matrix. Furthermore,
let $G$ be the graph with vertex set $V$ an no edges. Then
$Y_A=X(X-1)\cdots(X-k+1)$ and $Z_A(G/P)=X^{|P|}$ for every partition
$P\in\Pi(V)$, and \eqref{eq:moebius-ff} follows from \eqref{eq:inversion}.

Now we are ready to state Schrijver's characterization of the
partition functions over the complex numbers. 

\begin{thm}[Schrijver~\cite{schri09}]\label{thm:char-complex}
  Let $f:\aG\to\mathbb C$ be a graph invariant. Then $f$ is a
  partition function if and only if it is multiplicative and 
  \begin{equation}
    \label{eq:char-complex}
  \sum_{P\in\Pi(G)}\mu(P)\cdot f(G/P)=0
  \end{equation}
  for all $G\in\aG$ with $|V(G)|>|f(K_1)|$.
\end{thm}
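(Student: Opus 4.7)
The forward direction is immediate from what is already in the excerpt. If $f = Z_A$ for a symmetric $A \in \mathbb{C}^{n \times n}$, then $f$ is multiplicative by Observation \ref{obs:multiplicative}. Since $K_1$ has no edges, the product in \eqref{eq:part} is empty and $f(K_1) = n$, so $|f(K_1)| = n$. Identity \eqref{eq:inversion} says $\sum_{P \in \Pi(G)} \mu(P) f(G/P) = Y_A(G)$, and for $|V(G)| > n$ there are no injections $V(G) \to [n]$, so $Y_A(G) = 0$, yielding \eqref{eq:char-complex}.

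For the backward direction, assume $f$ is multiplicative and satisfies \eqref{eq:char-complex}. The first step I would carry out is to show that $c := f(K_1)$ is in fact a non-negative integer equal to $|c|$. Apply \eqref{eq:char-complex} to the edgeless graph $E_k$ on $k$ vertices: each quotient $E_k/P$ is still edgeless on $|P|$ vertices, so multiplicativity gives $f(E_k/P) = c^{|P|}$, and the polynomial identity \eqref{eq:moebius-ff} evaluated at $X = c$ collapses the vanishing condition to $c(c-1)(c-2) \cdots (c-k+1) = 0$ for every $k > |c|$. This forces $c \in \{0, 1, 2, \ldots\}$, and we set $n := c = |f(K_1)|$.

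Next I would define $Y_f(G) := \sum_{P \in \Pi(G)} \mu(P) f(G/P)$, so that by M\"obius inversion $f(G) = \sum_{P \in \Pi(G)} Y_f(G/P)$ and the hypothesis reads $Y_f(G) = 0$ whenever $|V(G)| > n$. The task is reduced to producing a symmetric matrix $A \in \mathbb{C}^{n \times n}$ with $Y_f = Y_A$, since \eqref{eq:zeta} will then give $f = Z_A$. Natural candidates for the entries of $A$ arise from the values of $Y_f$ on two-vertex multigraphs $H_{a,b,c}$ (with $a$ and $b$ loops at the two vertices and $c$ parallel edges between them), as any $A$ satisfying $Y_f = Y_A$ must meet $Y_f(H_{a,b,c}) = \sum_{i \neq j} A_{ii}^a A_{jj}^b A_{ij}^c$; these moment-like identities already pin $A$ down up to relabelling the spins.

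The principal obstacle is actually producing $A$ and verifying $Y_f = Y_A$ on all graphs with at most $n$ vertices. The route I would pursue, following Schrijver, is to form the commutative algebra $\mathcal{A} := \mathbb{C}\aG_n / \ker \langle \cdot,\cdot\rangle$ on $n$-labelled quantum graphs, where $\langle x, y \rangle := Y_f(x \cdot y)$. Multiplicativity together with the vanishing condition on $Y_f$ should make $\mathcal{A}$ finite-dimensional, and a Wedderburn-type decomposition over $\mathbb{C}$ would then supply a family of minimal idempotents; since $S_n$ acts on $\aG_n$ by permuting labels, these idempotents can be indexed by assignments of spins to labels. Pairing them against the $n$-labelled single-loop and single-edge graphs extracts the entries $A_{ii}$ and $A_{ij}$, and a final comparison propagates $Y_f = Y_A$ from $n$-vertex graphs to all graphs via \eqref{eq:zeta} and the vanishing of $Y_f$ beyond $n$ vertices. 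This algebraic step is where all the real work lies; it is both the main technical obstacle and the point at which the characteristic feature of $\mathbb{C}$ (as opposed to $\mathbb{R}$) is exploited.
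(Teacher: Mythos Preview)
Your forward direction and the integrality argument for $n:=f(K_1)$ match the paper's proof exactly.

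For the heart of the backward direction, however, the paper takes a quite different route from what you propose. Rather than building an algebra of $n$-labelled quantum graphs with the bilinear form $\langle x,y\rangle=Y_f(x\cdot y)$ and invoking a Wedderburn decomposition, the paper introduces a \emph{generic} symmetric matrix $X=(X_{\{i,j\}})$ of indeterminates over $\mathbb C[\vec X]$ and considers the algebra homomorphism $Z_X:\mathcal{QG}\to\mathbb C[\vec X]$ on (unlabelled) quantum graphs. The key steps are: (i) the image $Z_X(\mathcal{QG})$ is exactly the ring of polynomials invariant under the $S_n$-action $X_{\{i,j\}}\mapsto X_{\{\pi(i),\pi(j)\}}$; (ii) using M\"obius inversion and the hypothesis, $\ker Z_X\subseteq\ker f$, so $f$ factors as $\hat f\circ Z_X$ for an algebra homomorphism $\hat f:Z_X(\mathcal{QG})\to\mathbb C$; (iii) the ideal $I=\ker\hat f$ has a common zero $\vec A$ by Hilbert's Nullstellensatz (the step from the full polynomial ring back to the invariant subring uses the $S_n$-symmetry), and then $Z_X(G)-f(G)\in I$ for every $G$ gives $Z_A(G)=f(G)$.

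Your proposed route is closer in spirit to the Freedman--Lov\'asz--Schrijver argument for the real case (Theorem~\ref{thm:fls}), but there the positive semidefiniteness of the connection matrices is precisely what makes the quotient algebra semisimple. Over $\mathbb C$ you only have a symmetric bilinear form coming from $Y_f$, and you give no reason why $\mathcal A$ should be free of nilpotents; without that, a Wedderburn decomposition need not produce the idempotents you want. This is the genuine gap in your sketch, and it is exactly the point where the paper's Nullstellensatz argument substitutes for the missing semisimplicity: algebraic closedness of $\mathbb C$ is used to guarantee a common zero of the ideal rather than to split an algebra.
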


\noindent
To understand the condition $|V(G)|>|f(K_1)|$, remember that $K_1$
denotes the graph with one vertex and no edges and note that for every
$n\times n$ matrix $A$ it holds that $Z_A(K_1)=n$. 

\medskip\noindent
\textit{Proof of Theorem~\ref{thm:char-complex} (sketch).}
  The forward direction is almost trivial: Suppose that $f=Z_A$ for a
  symmetric matrix $A\in\mathbb C^{n\times n}$. Then $f$ is
  multiplicative by Observation~\ref{obs:multiplicative}, and we have
  \[
  \sum_{P\in\Pi(G)}\mu(P)\cdot f(G/P)=Y_A(G)=0
  \]
  for all $G$ with $|V(G)|>n=f(K_1)$, where the first equality holds by
  \eqref{eq:inversion} and the second because for $G$ with $|V(G)|>n$ there are no
  injective functions $\sigma:V(G)\to[n]$.

  It is quite surprising that these trivial conditions are sufficient
  to guarantee that $f$ is a partition function. To see that they are,
  let $f:\aG\to\mathbb C$ be a multiplicative graph invariant such
  that \eqref{eq:char-complex} holds for all $G\in\aG$ with
  $|V(G)|>f(K_1)$. 

  We first show that $n:=f(K_1)$ is a non-negative integer. Let
  $k:=\lceil |f(K_1)|\rceil$,
  and let $I_k$ be the graph
  with vertex set $[k]$ and no edges. (Hence $I_k=\emptyset$ if
  $k=0$.) Suppose that $n$ is not a non-negative integer. Then
  $k>|n|$, and by \eqref{eq:char-complex}, the multiplicativity of $f$, and \eqref{eq:moebius-ff} we have
  \[
  0=\sum_{P\in\Pi(I_k)}\mu(P)\cdot
  f(I_k/P)=\sum_{P\in\Pi([k])}\mu(P)\cdot n^{|P|}\\
  =
  n\cdot \big(n-1\big)\cdots \big(n-k+1\big)\neq 0.
  \]
  This is a
  contradiction. 

  A \emph{quantum graph} is a formal linear combination of graphs with
  coefficients from $\mathbb C$, that is, an expression
  $\sum_{i=1}^\ell a_i G_i$, where $\ell\ge0$ and $a_i\in\mathbb C$
  and $G_i\in\aG$ for all $i\in[\ell]$. The class of all quantum
  graphs is denoted by $\aQG$. The quantum graphs obviously form a
  vector space over $\mathbb C$, and by extending the product
  ``disjoint union'' linearly from $\aG$ to $\aQG$, we turn this
  vector space into an algebra. We also extend the function $f$
  linearly from $\aG$ to $\aQG$. Observe that $f$ is an algebra
  homomorphism from $\aQG$ to $\mathbb C$, because it is
  multiplicative.

  For all $i,j\in[n]$ we let $X_{\{i,j\}}$ be a variable, and we let $\vec X$ be the tuple of all these
  variables ordered lexicographically. Furthermore, we let $X$ be the
  $n\times n$ matrix with $X_{i,j}:=X_{j,i}:=X_{\{i,j\}}$. We view $X$ as a matrix over the
  polynomial ring $\mathbb C[\vec X]$. We extend the function
  $Z_X:\aG\to\mathbb C[X]$ linearly from $\aG$ to $\aQG$.
  Then $Z_X$ is an algebra homomorphism. It is not too hard to show
  that the image $Z_X(\aQG)$ consists precisely of all polynomials in
  $\mathbb C[\vec X]$ that are invariant under all permutations 
  $
  X_{\{i,j\}}\mapsto X_{\{\pi(i),\pi(j)\}}
  $
  of the variables for permutations $\pi$ of $[n]$.
Using \eqref{eq:inversion}
  and other properties of the M\"obius inversion, it can be shown that
  the kernel of $Z_X$ is contained in the kernel of $f$. This implies
  that there is an algebra homomorphism $\hat f$ from the image
  $Z_X(\aQG)$ to $\mathbb C$ such that $f=\hat f\circ Z_X$.

  Then
  \[
  I:=\big\{ p\in Z_X(\aQG)\mid \hat f(p)=0\big\}
  \]
  is an ideal in the subalgebra $Z_X(\aQG)\subseteq\mathbb C[\vec
  X]$. We claim that the polynomials in $I$ have a common
  zero. Suppose not. Let $I'$ be the ideal generated by $I$ in
  $\mathbb C[\vec X]$. Then by Hilbert's Nullstellensatz it holds that
  $1\in I'$. Using the fact that the the subalgebra $Z_X(\aQG)$
  consists precisely of all polynomials in $\mathbb C[\vec X]$ that
  are invariant under all permutations $ X_{\{i,j\}}\mapsto
  X_{\{\pi(i),\pi(j)\}}$ for $\pi\in S_n$, one
  can show that actually $1\in I$. But then
  \[
  0=\hat f(1)=\hat f(Z_X(K_0))=f(K_0)=1,
  \]
  which is a contradiction. 

  Thus the polynomials in $I$ have a common zero. Let $\vec
  A=(A_{\{i,j\}}\mid i,j\in[n])$ be such a zero, and let $A\in\mathbb C^{n\times
    n}$ be the corresponding symmetric matrix. Observe that for each graph $G$
  it holds that $Z_X(G)-f(G)\in I$, because $\hat f(Z_X(G))-\hat
  f(f(G))=f(G)-f(G)=0$. Hence $Z_A(G)-f(G)=0$ and thus
  $f(G)=Z_A(G)$. This completes our proof sketch.
\qed

\noindent
Even though Theorems~\ref{thm:char-real} and \ref{thm:char-complex} look
quite different, it is not hard to derive Theorem~\ref{thm:char-real}
from Theorem~\ref{thm:char-complex}. In the remainder of this
subsection, we sketch how this is done.

\medskip\noindent
\textit{Proof of Theorem~\ref{thm:char-real} (sketch).}
  We have already proved the forward direction. For the backward
  direction, let $f:\aG\to\mathbb R$ be a multiplicative graph
  invariant such that all connection matrices of $f$ are positive
  semidefinite. We first prove that $f$ satisfies condition
  \eqref{eq:char-complex}: Let $n:=f(K_1)$ (we do not know yet that
  $n$ is an integer, but it will turn out to be), and let $k>n$ be a non-negative integer. Let $I_k$ be the graph with
  vertex set $[k]$ and no edges, and for every partition $P$ of $[k]$,
  define $\phi_P:[k]\to V(I_k/P)$ to be the canonical projection, that
  is, $\phi_P(i)$ is the class
  of $i$ in the partition $P$. Then $(I_k/P,\phi_P)\in\aG_k$. As
  the $k$th connection matrix $M(f,k)$ is positive semidefinite, we
  have
  \[
  \sum_{P,Q\in\Pi([k])}\mu(P)\cdot\mu(Q)\cdot
  f\big((I_k/P,\phi_P)\cdot(I_k/Q,\phi_Q)\big)\ge 0.
  \]
  For partitions $P,Q\in\Pi([k])$, let $P\vee Q$ be the least upper
  bound of $P$ and $Q$ in the partially ordered set $\Pi([k])$, and
  note that $(I_k/P,\phi_P)\cdot(I_k/Q,\phi_Q)=(I_k/P\vee
  Q,\phi_{P\vee Q})$. Hence by the multiplicativity of $f$ we have
  $f\big((I_k/P,\phi_P)\cdot(I_k/Q,\phi_Q)\big)=n^{|P\vee Q|}$.
  A calculation similar to the one that leads to \eqref{eq:moebius-ff}
  shows that $\sum_{P,Q\in\Pi[k]}\mu(P)\cdot\mu(Q)\cdot X^{|P\vee
    Q|}=X\cdot (X-1)\cdots(X-k+1)$. Hence for all non-negative integers
  $k>n$ we have
  \begin{equation}\label{eq:2801102220}
  0\le\sum_{P,Q\in\Pi([k])}\mu(P)\cdot\mu(Q)\cdot
  f\big((I_k/P,\phi_P)\cdot(I_k/Q,\phi_Q)\big)=n\cdot(n-1)\cdots(n-k+1).  
  \end{equation}
  This is only possible if $n$ is a non-negative integer, in
  which case equality holds. 

  \martin{Das folgende etwas ausführlicher}
  Now let $G$ be an arbitrary graph with $k:=|V(G)|>n$. Without loss of
  generality we may assume that $V(G)=[k]$. Let $\psi$ be the identity on
  $[k]$. Consider the $(|\Pi([k])|+1)\times(|\Pi([k])|+1)$-principal submatrix
  $M_0$ of $M(f,k)$ with rows and columns indexed by the $k$-labeled graphs
  $(I_k/P,\phi_P)$ for all $P\in\Pi([k])$ and $(G,\psi)$. For every
  $x\in\mathbb R$, let $\vec v_x$ be the column vector with entries $\mu(P)$
  for all $P\in\Pi([k])$ followed by $x$. By the positive semi-definiteness of
  $M(f,k)$, we have
  \begin{align*}
  0\le\;&\vec v_x^\top M_0\vec v_x\\
=&\sum_{P,Q\in\Pi([k])}\mu(P)\cdot\mu(Q)\cdot
  f\big((I_k/P,\phi_P)\cdot(I_k/Q,\phi_Q)\big)\\
  &+2x\cdot\sum_{P\in\Pi([k])}\mu(P)\cdot
  f\big((I_k/P,\phi_P)\cdot(G,\psi)\big)+x^2\cdot
f\big((G,\psi)\cdot(G,\psi)\big)\\
  =\;&2x\cdot\sum_{P\in\Pi([k])}\mu(P)\cdot
  f\big((I_k/P,\phi_P)\cdot(G,\psi)\big)+x^2\cdot
f\big((G,\psi)\cdot(G,\psi)\big)\\
  \end{align*}
  This is only possible for all $x\in\mathbb R$ if $\sum_{P\in\Pi([k])}\mu(P)\cdot
  f\big((I_k/P,\phi_P)\cdot(G,\psi)\big)=0$. Note that for every $P\in\Pi([k])$
  it holds that
  $(I_k/P,\phi_P)\cdot(G,\psi)=(G/P,\phi_P)$. Thus 
  \[
  \sum_{P\in\Pi(G)}\mu(P)\cdot f(G/P)=0.
  \]    
  If follows from Theorem~\ref{thm:char-complex} that $f=Z_A$ for some
  matrix $A\in\mathbb C^{n\times n}$.
  
  For every $\ell\ge 0$, let $F_\ell$ be the graph with two vertices
  and $\ell$ parallel edges between these vertices. Exploiting the positive 
  semi definiteness of the principal submatrix of
  $M(f,2)$ with rows and columns indexed by the graphs $F_\ell$ for
  $0\le\ell\le n(n+1)/2$, it is not hard to show that the matrix $A$
  is real.
\qed

\subsection{Generalizations}
\label{sec:part-gen}

\subsubsection*{Vertex weights}
We have mentioned that a partition function $Z_A$ can be viewed as
mapping a graph $G$ to a weighted sum of homomorphisms to the
edge-weighted graph represented by the matrix $A$. We may also put
weights on the vertices of a graph. It will be most convenient to
represent vertex weights on a graph $H$ by a diagonal matrix
$D=(D_{v,w})_{v,w\in V(H)}$, where $D_{v,v}$ is the weight of vertex $v$
and $D_{v,w}=0$ if $v\neq w$. Then we define the \emph{weight} of a
mapping $\sigma$ from $G$ to $H$ to be the product of the weights of
the images of the edges and the images of the vertices. More
abstractly, for every symmetric matrix $A\in \Ring^{n\times n}$ and diagonal
matrix $D\in \Ring^{n\times n}$ we define
a function $Z_{A,D}:\aG\to \Ring$ by
\[
Z_{A,D}(G):=\sum_{\;\sigma:V(G)\to [k]\;}\prod_{vw\in
  E(G)}A_{\sigma(v),\sigma(w)}\cdot\prod_{v\in
  V(G)}D_{\sigma(v),\sigma(v)}.
\]
We call $Z_{A,D}$ a \emph{partition functions with vertex weights}
over $\Ring$. ([Freedman, Lov{\'a}sz, and Schrijver~\cite{frelovschri07} use
the term \emph{homomorphism functions}.) The vertex
weights enable us to get rid of the constant factors in some of the
earlier examples and give smoother formulations. For example:

\begin{example}
  Recall that by Example~\ref{exa:flow}, the number $F(G,k)$ of nowhere-zero
  $k$-flows of an $N$-vertex graph $G$ is $k^{-N}\cdot Z_{A}(G)$ for
  the $k\times k$ matrix $A$ with diagonal entries $(k-1)$ and
  off-diagonal entries $-1$. Let $D$ be the $(k\times k)$-diagonal
  matrix with entries $D_{ii}:=1/k$ for all $i\in[k]$. Then $F(G,k)=Z_{A,D}(G)$.
\end{example}

\noindent
Freedman, Lov{\'a}sz, and Schrijver gave an algebraic characterization of
the class of partition functions with non-negative vertex weights over the reals, again
using connection matrices. However, they only consider graph
invariants defined on the class $\aG'$ of graphs without loops (but
with parallel edges). For a graph invariant $f:\aG'\to\mathbb R$, we
define the $k$th connection matrix $M(f,k)$ as for graph invariants
defined on $\aG$, except that we omit all rows and columns indexed by
graphs with loops. We call a matrix $A\in\mathbb R^{n\times n}$
\emph{non-negative} if all its entries are non-negative.

\begin{thm}[Freedman, Lov{\'a}sz, and Schrijver~\cite{frelovschri07}]
  \label{thm:fls}
  Let $f:\aG'\to\mathbb R$ be a graph invariant. Then the following
  two statements are equivalent:
  \begin{enumerate}
  \item There are a symmetric matrix $A\in\mathbb R^{n\times n}$ and a
    non-negative diagonal matrix $D\in\mathbb R^{n\times n}$ such that $f(G)=Z_{A,D}(G)$ for all $G\in\aG'$.
  \item There is a $q\ge 0$ such that for all $k\ge 0$ the matrix $M(f,k)$ is positive semidefinite
    and has row rank at most $q^k$.
  \end{enumerate}
\end{thm}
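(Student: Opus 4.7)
\textit{Forward direction.} I would adapt the rank-$1$ decomposition from the forward direction of Theorem~\ref{thm:char-real} to the vertex-weighted setting. For each $\chi:[k]\to[n]$, let
\[
Z_{A,D,\chi}(G,\phi) := \sum_{\substack{\sigma:V(G)\to[n]\\\sigma\circ\phi=\chi}} \prod_{vw\in E(G)}A_{\sigma(v),\sigma(w)}\prod_{v\in V(G)}D_{\sigma(v),\sigma(v)}.
\]
Since the labeled vertices are identified (and their vertex weights counted only once) in the gluing $(G,\phi)\cdot(H,\psi)$, a direct manipulation yields
\[
M(f,k)_{(G,\phi),(H,\psi)} \;=\; \sum_{\chi\,:\,\prod_i D_{\chi(i),\chi(i)}>0}\frac{Z_{A,D,\chi}(G,\phi)\cdot Z_{A,D,\chi}(H,\psi)}{\prod_i D_{\chi(i),\chi(i)}},
\]
which expresses $M(f,k)$ as a non-negative combination of at most $n^k$ rank-$1$ PSD matrices (here the non-negativity of $D$ is essential). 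Hence $(1)\Rightarrow(2)$ with $q=n$.

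\textit{Backward direction: algebraic setup.} The idea is to read $A$ and $D$ off the algebra of $1$-labeled quantum graphs modulo the kernel of $f$. The rank-$1$ constraint on $M(f,0)$ combined with PSD forces $M(f,0)=vv^\top$ for a single vector $v$, which immediately yields multiplicativity of $f$; the trivial case $f\equiv 0$ is disposed of by taking empty $A,D$. Next, I form
\[
\mathcal{A} \;:=\; \aQG_1/N,\qquad N \;:=\; \bigl\{x\in\aQG_1 \,:\, f(x\cdot y)=0 \text{ for all }y\in\aQG_1\bigr\},
\]
where $\cdot$ denotes labeled gluing of $1$-labeled quantum graphs. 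Then $\mathcal{A}$ is a commutative $\mathbb{R}$-algebra of dimension at most $q$ (commutativity from the symmetry of gluing; finite-dimensionality from the row-rank bound on $M(f,1)$), and $\langle x,y\rangle_f:=f(x\cdot y)$ descends to a positive-definite inner product on $\mathcal{A}$ by PSD of $M(f,1)$.

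\textit{Splitting the algebra.} Next I argue that $\mathcal{A}\cong\mathbb{R}^n$ as $\mathbb{R}$-algebras for some $n\le q$, with a basis $e_1,\dots,e_n$ of pairwise orthogonal idempotents. Semisimplicity: any nilpotent $x$ would give $\langle x^k,x^k\rangle_f=f(x^{2k})=0$ for large $k$, contradicting positive-definiteness. A finite-dimensional commutative semisimple $\mathbb{R}$-algebra is a product of copies of $\mathbb{R}$ and $\mathbb{C}$; any $\mathbb{C}$-factor would contain a nonzero idempotent $e$ and an element $x$ with $x\cdot x=-e$, giving $\langle x,x\rangle_f=-f(e)=-\langle e,e\rangle_f<0$, another contradiction. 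Hence $\mathcal{A}\cong\mathbb{R}^n$ with an orthogonal idempotent basis.

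\textit{Extracting $A,D$ and the main obstacle.} I set $D_{ii}:=f(e_i)=\langle e_i,e_i\rangle_f\ge 0$, and define $A_{ij}$ via the $2$-labeled single-edge graph $\eta$: gluing $\eta$ onto a pair of $1$-labeled graphs at their labels gives a symmetric bilinear pairing on $\aQG_1$ that descends to $\mathcal{A}$, and $A_{ij}$ is its value on $(e_i,e_j)$ (suitably normalized by $D_{ii}D_{jj}$, with the convention $0$ when these vanish). The remaining task, and in my view the hardest step, is to verify that $f(G)=Z_{A,D}(G)$ for every loopless $G$. This reduces to showing that each level-$k$ quotient $\aQG_k/\mathord\sim$ is exactly the tensor power $\mathcal{A}^{\otimes k}$, so that the natural vertex-by-vertex decomposition of a configuration on $G$ matches, term for term, the idempotent decomposition of $f(G)$. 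This tensor factorization at every level $k$ is the point at which the full hypothesis --- $\dim M(f,k)\le q^k$ uniformly in $k$, not only at $k=1$ --- is used in an essential way.
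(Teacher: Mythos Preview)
The paper does not prove this theorem; it is quoted from \cite{frelovschri07} and followed only by the remark that ``the known proofs of the two theorems are quite different'' from that of Theorem~\ref{thm:char-real}. There is therefore nothing in the paper to compare your attempt against.

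On its own merits, your sketch is, in outline, the original Freedman--Lov\'asz--Schrijver argument. The forward direction is correct as written. For the backward direction, the plan---quotient the algebra of $1$-labeled quantum graphs by the radical of $\langle x,y\rangle_f=f(x\cdot y)$, use positive semidefiniteness to exclude nilpotents and $\mathbb C$-factors so that $\mathcal A\cong\mathbb R^n$, read off $D_{ii}=f(e_i)$ and $A$ from the single-edge pairing on idempotents, and then invoke the rank bound $q^k$ at all levels to force the level-$k$ quotients to be tensor powers of $\mathcal A$---is exactly the FLS strategy, and you have correctly located the hard step.

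Two comments. First, your multiplicativity claim is slightly loose: $M(f,0)$ PSD of rank $\le 1$ gives $M(f,0)=vv^\top$, hence $f(G\cdot H)=v_Gv_H$, but this does not yield $f(G\cdot H)=f(G)f(H)$ unless $f(\emptyset)=v_\emptyset^2\in\{0,1\}$; you must argue (or normalise) this separately, since e.g.\ $f\equiv 2$ satisfies your stated hypotheses at every level yet has $f(\emptyset)\neq 1$. Second, the identification of the level-$k$ quotient with $\mathcal A^{\otimes k}$ is where essentially all the remaining work sits: in FLS one produces an explicit spanning set of size $n^k$ (products of the $e_i$ placed at the $k$ labels) and uses the rank hypothesis to conclude it is a basis, after which $f=Z_{A,D}$ follows by expanding an arbitrary $G$ vertex by vertex. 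Your sketch points at this but does not carry it out, which is where a full proof would have to do real work.
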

Despite the obvious similarity of this theorem with
Theorem~\ref{thm:char-real}, the known proofs of the two theorems are
quite different.

\subsubsection*{Asymmetric matrices and directed graphs}
Of course we can also count homomorphism between directed graphs. We
denote the class of all directed graphs by $\aD$; as undirected graphs
we allow directed graphs to have loops and parallel edges. For
every directed graph $H$, we define the homomorphism counting function
$Z_H:\aD\to \mathbb Z$ by 
letting $Z_H(D)$ be the number of homomorphisms from $D$ to $H$. 
For every square matrix $A\in \Ring^{n\times n}$
we define the function $Z_A:\aD\to \Ring$ by
\[
Z_A(D):=\sum_{\;\sigma:V(D)\to [n]\;}\prod_{(v,w)\in E(D)}A_{\sigma(v),\sigma(w)}.
\]
Note that if $A$ is a symmetric matrix, then $Z_A(D)=Z_A(G)$, where
$G$ is the underlying undirected graph of $G$.

\subsubsection*{Hypergraphs and relational structures}
Recall that a \emph{hypergraph} is a pair $H=(V,E)$ where $V$ is a
finite set and $E\subseteq 2^V$. Elements of $V$ are
called \emph{vertices}, elements of $E$ \emph{hyperedges}. A
hypergraph is $r$-uniform, for some $r\ge 1$, if all its hyperedges have
cardinality $r$. Thus a 2-uniform hypergraph is just a simple graph. A
\emph{homomorphism} from a hypergraph $G$ to a hypergraph $H$ is a
mapping $h:V(G)\to V(H)$ such that $h(e)\in E(H)$ for all $e\in
E(H)$. Of course, for $e=\{v_1,\ldots,v_r\}$ we let
$h(e)=\{h(v_1),\ldots,h(v_r)\}$.\footnote{There is no need to define
  hypergraph homomorphisms as mappings from $V(G)\cup E(G)$ to
  $V(H)\cup E(H)$ because we do not allow parallel hyperedges.}
Hypergraph homomorphism counting functions and partition functions
have only been considered for uniform hypergraphs. The class of all
$r$-uniform hypergraphs is denoted by $\aH_r$. For all $H\in\aH_r$, we
define the homomorphism counting function $Z_H:\aH_r\to\mathbb Z$ in
the obvious way. The natural generalization of partition functions to
$r$-uniform hypergraphs is defined by symmetric functions $A:[n]^r\to
\Ring$. We define $Z_A:\aH_r\to \Ring$ by
\[
Z_A(G):=\sum_{\;\sigma:V(G)\to[n]\;}\prod_{e\in E(G)}A(\sigma(e)),
\]
where for $e=\{v_1,\ldots,v_r\}$ we let
$A(\sigma(e)):=A(\sigma(v_1),\ldots,\sigma(v_r))$. This is
well-defined because $A$ is symmetric.

Let us finally consider homomorphisms between
relational structures. A \emph{(relational) vocabulary} $\sigma$ is a
set of relation symbols; each relation symbol comes with a prescribed
finite arity. A \emph{$\sigma$-structure} $B$ consist of a set $V(B)$,
which we call the \emph{universe} or \emph{vertex set} of $B$, and for
each $r$-ary relation symbol $S\in\sigma$ an $r$-ary relation
$S(B)\subseteq V(B)^r$. Here we assume all structures to be finite,
that is, to have a finite universe and vocabulary. For example, a
simple graph may be viewed as an $\{E\}$-structure $G$, where $E$ is a
binary relation symbol and $E(G)$ is irreflexive and symmetric. An
$r$-uniform hypergraph $H$ may be viewed as an $\{E_r\}$-structure,
where $E_r$ is an $r$-ary relation symbol and $E_r(H)$ is symmetric
and contains only tuples of pairwise distinct elements. For each
vocabulary $\sigma$, we denote the class of all $\sigma$-structures by
$\aS_\sigma$.
For each $\sigma$-structure $B$ we define the
homomorphism counting functions $Z_B:\aS_\sigma\to\mathbb Z$ in the
usual way. 

To generalize partition functions, we consider
\emph{weighted structures}. For a ring $\Ring$ and a vocabulary $\sigma$,
an \emph{$\Ring$-weighted $\sigma$-structure} $B$ consists of a finite set
$V(B)$ and for each $r$-ary $R\in\sigma$ a mapping $R(\Ring):V(B)^r\to
\Ring$. Then we define $Z_B:\aS_\sigma\to \Ring$ by 
\[
Z_B(A):=\sum_{\;\sigma:V(G)\to[n]\;}\prod_{R\in\sigma}\prod_{\substack{(a_1,\ldots,a_r)\in
  V(A)^r,\\
\text{where }r\text{ is the arity of
}R}}R(\Ring)(\sigma(a_1),\ldots,\sigma(a_r)).
\]
Note that this does not only generalize plain partition functions on graphs,
but also partition functions with vertex weights, because we may view a
graph with weights on the vertices and edges as a weighted
$\{E,P\}$-structure, where $E$ is a binary and $P$ a unary relation
symbol.

It is a well-known observation due to Feder and Vardi~\cite{fedvar98}
that \emph{constraint satisfaction problems} may be viewed as
homomorphism problems between relational structures and vice
versa. Thus counting homomorphisms between relational structures
correspond to counting solutions to constraint satisfaction problems.

\subsubsection*{Edge Models}

Partition functions and all their generalizations considered so far
are weighted sums over mappings defined on the vertex set of the graph or
structure. In the context of statistical physics, they are sometimes
called \emph{vertex models}. There is also a notion of \emph{edge
  model}. An edge model is given by a function $F:\mathbb N^n\to \Ring$. Let $G=(V,E)$ be a graph. Given a
``configuration'' $\tau:E\to[n]$, for every vertex $v$ we let
$t(\tau,v):=(t_1,\ldots,t_n)$, where $t_i$ is the number of edges $e$
incident with $v$ such that $\tau(e)=i$. We define a function $\tilde
Z_F:\aG\to \Ring$ by
\[
\tilde Z_F(G):=\sum_{\;\tau:E\to[n]\;}\prod_{v\in V}F\big(t(\tau,v)\big).
\]
Szegedy~\cite{sze07} gave a characterization of the graph invariants over the
reals 
expressible by edge models that is similar to the characterizations of
partition functions (vertex models) given in
Theorems~\ref{thm:char-real} and \ref{thm:fls}. 

\begin{example}
  Let $F:\mathbb N^2\to\mathbb Z$ be defined by $F(i,j):=1$ if $j=1$
  and $F(i,j)=0$ otherwise. Let $G=(V,E)$ be a graph. Observe that for
  every $\tau:E(G)\to\{1,2\}$ and every $v\in V$ it holds that
  $F\big(t(\tau,v)\big)=1$ if and only if there is exactly one edge in
  $\tau^{-1}(2)$ that is incident with $v$. Hence $\prod_{v\in
    V}F\big(t(\tau,v)\big)=1$ if $\tau^{-1}(2)$ is a perfect matching
  of $G$ and $\prod_{v\in
    V}F\big(t(\tau,v)\big)=0$ otherwise. It follows that $\tilde Z_F(G)$ is
  the number of perfect matchings of $G$.

  It can be proved that the function $f=\tilde Z_F$ counting perfect matchings is not
  a partition function, because the connection matrix $M(f,1)$ is not
  positive semi-definite~\cite{frelovschri07}.
\end{example}
%
%
%
Related to the edge models is a class of functions 
based on Valiant's holographic algorithms \cite{val08}:
so-called \emph{holant functions} which have been introduced
by Cai, Lu, and Xia~\cite{cailuxia09}.
A holant function is given by a \emph{signature grid} 
$\Omega = (G, \mathcal F, \pi)$ where $G = (V,E)$ is a graph and, for some 
$n \in \Nat$, the set $\mathcal F$ contains functions $f:[n]^{a_f} \to \Ring$,
each of some arity $a_f$. Further 
$\pi$ maps vertices $v \in V$ to functions $f_v \in \mathcal F$ such that
$d(v) = a_{f_v}$.
Let, for some vertex $v$, denote $E(v)$ the set of edges incident to $v$. The holant function over this signature grid is then defined as
$$
\textup{Holant}_{\Omega} = \sum_{\tau: E \to [n]} \prod_{v \in V} f_v(\tau\vert_{E(v)}).
$$
Note that we assume here implicitly that $E$ has some ordering, therefore 
$f_v(\tau\vert_{E(v)})$ is well-defined.
Edge models are a special case of this framework, since $\tilde Z_F(G) = \textup{Holant}_{\Omega}$ for the signature grid which satisfies   $f_v(\tau\vert_{E(v)}) = F(t(\tau,v))$ for all $v \in V$.
More generally it can be shown that partition functions on relational
structures can be captured by holant functions. 

\section{Complexity}
\label{sec:complexity}

Partition functions tend to be hard to compute. More precisely, they tend to
be hard for the complexity class $\sharpP$ introduced by Valiant in \cite{val79a},
which may be viewed as the
``counting analogue'' of \NP. A counting problem $C$ (that is, a function with
values in the non-negative integers) belongs to $\sharpP$ if and only if there is
a nondeterministic polynomial time algorithm $A$ such that for every instance
$x$ of $C$ it holds that $C(x)$ is the number of accepting computation paths
of $A$ on input $x$.  There is a theory of reducibility and
$\sharpP$-completeness much like the theory of \NP-completeness. (The
preferred reductions in the complexity theory of counting problems are
\emph{Turing reductions}. We exclusively work with Turing reductions in this
article .)  Most \NP-complete decision problems have natural
$\sharpP$-complete counting problems associated with them. For example, the
problem of counting the number of independent sets of a graph and the problem
of counting the number of 3-colorings of a graph are both $\sharpP$-complete.
There is a counting analogue of a well-known Theorem due to Ladner~\cite{lad75} stating
that there are counting problems in $\sharpP$ that are neither $\sharpP$-complete nor in
$\FP$ (the class of all counting, or more generally functional problems solvable in polynomial time);
indeed the counting complexity classes between $\FP$ and $\sharpP$ form a
dense partial order. 

Even though the class of partition functions is very rich, it turns
out that partition functions and also their various generalizations
discussed in Section~\ref{sec:part-gen} exhibit a \emph{complexity
  theoretic dichotomy}: Some partition functions can be computed in
polynomial time, most are $\sharpP$-hard, but there are no partition
functions of intermediate complexity. A first dichotomy theorem for
homomorphism counting functions of graphs was obtained by Dyer and
Greenhill:

\begin{thm}[Dyer and Greenhill~\cite{dyegre00}]\label{thm:dg}
  Let $H$ be a graph without parallel edges. Then $Z_H$ is computable
  in polynomial time if each connected component of $H$ is either a
  complete graph with a loop at every vertex or a complete bipartite
  graph.\footnote{We count the empty graph and the graph $K_1$ with one vertex and no
  edges as bipartite graphs.} Otherwise, $Z_H$ is $\sharpP$-complete.
\end{thm}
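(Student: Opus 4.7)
The plan is to split the statement into the easy (tractability) direction and the hard ($\sharpP$-hardness) direction, and then within the hardness direction to reduce to connected $H$ and carry out a case analysis on the structure of $H$.

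For the tractability direction, I would first exploit multiplicativity (Observation~\ref{obs:multiplicative}). Since every $G$ decomposes into connected components $G_1,\ldots,G_s$ and every homomorphism from a connected $G_i$ to $H$ has image inside one connected component of $H$, we get
\[
Z_H(G) \;=\; \prod_{i=1}^{s}\Bigl(\sum_{j=1}^{t} Z_{H_j}(G_i)\Bigr),
\]
where $H_1,\ldots,H_t$ are the connected components of $H$. Thus it suffices to compute $Z_{H_j}(G_i)$ in polynomial time for each tractable $H_j$. If $H_j$ is a reflexive complete graph $K_n^{\circ}$, every map $V(G_i)\to V(H_j)$ is a homomorphism, so $Z_{H_j}(G_i)=n^{|V(G_i)|}$. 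If $H_j=K_{a,b}$ is a complete bipartite graph, then a connected $G_i$ admits a homomorphism to $H_j$ only if it is bipartite, and in that case with bipartition $(A,B)$ one has $Z_{K_{a,b}}(G_i)=a^{|A|}b^{|B|}+a^{|B|}b^{|A|}$ (checking bipartiteness and reading off the bipartition is polynomial).

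For the hardness direction, I would proceed as follows. First reduce to the case of connected $H$: by the multiplicative formula above and standard techniques, one shows that if \emph{some} component $H'$ of $H$ is outside the tractable list, then $Z_{H'}$ reduces to $Z_H$ (this uses testing connectivity of $G$ and interpolation, summing contributions over components). Then assume $H$ is connected and not a reflexive complete graph and not a complete bipartite graph. Split into the bipartite and the non-bipartite case. The key engine is the spectrum of $A(H)$: for any graph gadget $\Gamma$, the value $Z_H(\Gamma)$ is a polynomial in the eigenvalues of $A(H)$, and in particular $Z_H(C_k)=\mathrm{tr}(A(H)^k)$ for the cycle $C_k$. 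Observing that the tractable graphs are exactly those whose adjacency matrix has at most one nonzero eigenvalue (reflexive $K_n^{\circ}$) or eigenvalues $\pm\sqrt{ab}$ together with zeros ($K_{a,b}$), while any other $H$ has a richer spectrum, is the structural starting point.

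The meat of the argument is then to encode a known $\sharpP$-hard problem into $Z_H$ via a \emph{thickening} or \emph{stretching} operation on edges combined with Lagrange interpolation. Concretely, replace each edge of an input graph $G$ by $\ell$ parallel edges (or by a short path), obtaining $G^{(\ell)}$; then $Z_H(G^{(\ell)})=Z_{H^{(\ell)}}(G)$ where $H^{(\ell)}$ is $H$ with its adjacency matrix entrywise raised to the $\ell$th power (respectively with $A(H)^\ell$). Varying $\ell$ over polynomially many values and inverting a Vandermonde-type system lets one evaluate $Z_{H^*}(G)$ for carefully designed auxiliary graphs $H^*$, and we choose $H^*$ so that $Z_{H^*}$ is known to be hard: for the non-bipartite non-complete case we reduce from counting independent sets or $k$-colorings (Example~\ref{exa:is}, Example~\ref{exa:col}), and for the bipartite non-complete case we reduce from counting independent sets in bipartite graphs. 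The non-triviality of the reduction comes from arguing that the interpolation succeeds, which boils down to the relevant eigenvalues of $A(H)$ being pairwise distinct after a suitable transformation; this is precisely where the assumption that $H$ is not in the tractable list is used.

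The main obstacle is the hardness direction, and within it the case analysis together with the gadget design. In particular, one must show that the gadget family $H^{(\ell)}$ (or a variant using paths, stars, or ``bouquets'' attached to the vertices to realize nontrivial diagonal weights) spans enough degrees of freedom for Vandermonde interpolation, and one must verify that the target problem obtained after interpolation is genuinely $\sharpP$-hard for the chosen $H$. Handling the bipartite sub-case (where the spectrum is symmetric around $0$, which can cause the interpolation system to be degenerate and forces the use of odd-length path gadgets or two-colored diagonals) is the most delicate step, and the one I would expect to write out in greatest detail.
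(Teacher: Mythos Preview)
Your plan follows a route genuinely different from the paper's. The paper does not prove Theorem~\ref{thm:dg} directly: it obtains it as the $0$--$1$ special case of Theorem~\ref{thm:bg}, noting that for a $0$--$1$ adjacency matrix the condition ``every block has rank~$1$'' is exactly ``every component is a reflexive clique or a complete bipartite graph''. The hardness argument therefore runs through the entire machinery of Sections~\ref{sec:it}--\ref{sec:11c}: the Pinning Lemma~\ref{lem:pinning}, the General Conditioning Lemma~\ref{lem:new_gen_cond} (passing to $X$-matrices with a $1$-cell structure), and then the Two/Single $1$-Cell Lemmas. In the $0$--$1$ setting the operative piece is Lemma~\ref{lem:0-1_hardness}, whose proof repeatedly \emph{pins} to smaller and smaller blocks until it reaches the $2\times 2$ matrix encoding independent sets in bipartite graphs; no spectral analysis is used. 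Your sketch is closer in spirit to Dyer and Greenhill's original argument, and that is a legitimate alternative, but the proposal as written has two concrete gaps.

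First, for a $0$--$1$ matrix the $p$-thickening $A^{(p)}$ equals $A$, so thickening gives you no interpolation leverage at all; everything must come from stretching ($A\mapsto A^p$) together with vertex-weight gadgets. You hint at this with ``bouquets'', but you never exhibit a gadget family, and more importantly you never say what the target matrix $H^*$ is or why interpolation lands on it. The sentence ``eigenvalues being pairwise distinct after a suitable transformation'' is exactly the hard step and cannot be left as an observation. Second, the reduction to connected $H$ is not as cheap as ``interpolation, summing contributions over components'': without pinning, there is no obvious way to isolate the contribution of a single component $H'$ from $Z_H$, and the paper spends a nontrivial Pinning Lemma (Lemma~\ref{lem:pinning}, relying on Lov\'asz's Lemma~\ref{lem:lov06}) followed by Component Pinning (Lemma~\ref{lem:comp_pin}) on precisely this issue. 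If you want to stay on the spectral route, you should either import a pinning argument or explain carefully how stretching alone separates components (which requires the component spectra to be suitably disjoint, a statement that needs proof).
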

When thinking about the complexity of partition functions over the
real or complex numbers, we face the problem of which model of
computation to use. There are different models which lead to different
complexity classes. To avoid such issues, we restrict our attention to
algebraic numbers, which can be represented in the standard bit
model.\footnote{In \cite{bulgro05}, the complexity classification of
  partition functions for non-negative real matrices
  (Theorem~\ref{thm:bg} of this article) was stated for arbitrary real
  matrices in one of the standard models of real number
  computation. However, the proof is faulty and can only be made to
  work for real algebraic numbers (cf.\ Section~\ref{subsec:pos_to_X} of this
  article).}  We will discuss the representation of and
computation with algebraic numbers in Section~\ref{subsec:pos_to_X}.
The fields of
real and complex algebraic numbers are denoted by $\Ralg$ and $\Calg$,
respectively. In general, partition functions are no counting
functions (in the sense that their values are no integers) and hence
they do not belong to the complexity class $\sharpP$. For that reason,
in the following results we only state $\sharpP$-hardness and not
completeness. As an easy upper bound, we note that partition functions
over $\Ralg$ and $\Calg$ belong to the complexity class $\FP^ {\sharpP}$
of all functional problems that can be solved by a polynomial time
algorithm with an oracle to a problem in $\sharpP$.

Let us turn to partition functions over the reals. 
We first observe that if $A\in\Ralg^{n\times n}$ is a symmetric matrix of row
rank $1$, then $Z_A$ is easily computable in polynomial time. Indeed,
write $A=\vec a^T\vec a$ for a (row) vector $\vec
a=(a_1,\ldots,a_n)\in\Ralg^n$. Then for every graph $G=(V,E)$,
\begin{equation}\label{eq:comp1}
  Z_A(G)=\sum_{\;\sigma:V\to[n]\;}\prod_{vw\in
    E}a_{\sigma(v)}a_{\sigma(w)}
  =\sum_{\;\sigma:V\to[n]\;}\prod_{v\in V}a_{\sigma(v)}^{\deg(v)}=
  \prod_{v\in V}\sum_{i=1}^na_i^{\deg(v)}.
\end{equation}
Here $\deg(v)$ denotes the degree of a vertex $v$.
The last term in \eqref{eq:comp1}, which only involves a polynomial number of arithmetic
operations, can easily be evaluated in polynomial time. Thus partition
functions of rank-$1$ matrices are easy to compute. It turns out that
all easy partition functions of non-negative real matrices are based on
a ``rank-$1$'' condition. 

Let us call a matrix $A$ \emph{bipartite} if its underlying graph
$G(A)$ is bipartite. After suitable permuting rows and columns, a bipartite matrix has the form
\[
A=\begin{pmatrix}0&B\\B^T&0\end{pmatrix},
\]
where we call $B$ and $B^T$ the \emph{blocks} of $A$.
A similar argument
as the one above shows that if $B$ has row rank $1$ then $Z_A$ is
computable in polynomial time. Note that we only need to compute
$Z_A(G)$ for bipartite $G$, because $Z_A(G)=0$ for non-bipartite
$G$. 

The \emph{connected components} of a matrix $A$ are the principal
submatrices corresponding to the connected
components of the underlying graph $G(A)$. Note that if $A$ is a
matrix with connected components $A_1,\ldots,A_m$, then for every connected graph $G$ it holds that
$Z_A(G)=\sum_{j=1}^mZ_{A_j}(G)$. By the multiplicativity of
partition functions, for a graph $G$ with connected components
$G_1,\ldots,G_\ell$, we thus have
$Z_A(G)=\prod_{i=1}^\ell\sum_{j=1}^mZ_{A_j}(G_i)$. This reduces the
computation of $Z_A$ to the computation of the $Z_{A_j}$.  

\begin{thm}[Bulatov and Grohe~\cite{bulgro05}]\label{thm:bg}
  Let $A\in\Ralg^{n\times n}$ be a symmetric non-negative matrix. Then
  $Z_A$ is computable in polynomial time if all components of $A$ are
  either of row rank $1$ or bipartite and with blocks of row rank $1$.
  Otherwise, $Z_A$ is $\sharpP$-hard.
\end{thm}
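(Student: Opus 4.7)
\medskip\noindent
\textbf{Proof plan for Theorem~\ref{thm:bg}.}
For the tractable direction, I would exploit multiplicativity (Observation~\ref{obs:multiplicative}) and the componentwise formula mentioned right before the theorem: $Z_A(G)=\prod_i\sum_j Z_{A_j}(G_i)$ over connected components $A_j$ of $A$ and $G_i$ of $G$. This leaves two cases per component. A rank-$1$ component factors as $\vec a^T\vec a$ and equation \eqref{eq:comp1} reduces $Z_{A_j}(G_i)$ to $\prod_{v}\sum_\ell a_\ell^{\deg(v)}$, which uses only polynomially many arithmetic operations on algebraic numbers. A bipartite component with rank-$1$ blocks $B=\vec b^T\vec c$ vanishes unless $G_i$ is bipartite, and then summing over the two choices of which color class maps into which side yields an analogous product formula.

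For the hardness direction the plan is to fix a single connected component $A'$ of $A$ that violates the condition (either non-bipartite with row-rank $>1$, or bipartite with blocks of row-rank $>1$) and show $Z_{A'}$ is $\sharpP$-hard; the restriction of $Z_A$ to inputs whose components all map into this component then yields hardness of $Z_A$. The ultimate target is Dyer and Greenhill's dichotomy (Theorem~\ref{thm:dg}): I aim to reduce $Z_{A'}$ to $Z_H$ for some simple graph $H$ whose underlying structure is neither a disjoint union of reflexive complete graphs nor a disjoint union of complete bipartite graphs.

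The technical toolkit consists of three standard gadget operations combined with polynomial interpolation. First, the $k$-thickening $G^{[k]}$ (replace each edge by $k$ parallel edges) satisfies $Z_{A'}(G^{[k]})=Z_{(A')^{[k]}}(G)$, where $(A')^{[k]}$ is the entrywise $k$-th power; sampling $k=1,2,\ldots$ and Vandermonde-inverting over the distinct values of $A'$ lets me extract the partition function of any matrix obtained by restricting $A'$ to a single ``magnitude class'' of entries. Second, the $k$-stretching (replace each edge by a $k$-path) satisfies $Z_{A'}(G_{(k)})=Z_{(A')^k}(G)$, so matrix powers can be simulated. Third, pinning gadgets attached to a distinguished vertex $v$ allow me to weight or fix the spin at $v$, which is what makes the labeled-graph framework of \eqref{eq:char-complex} and its block-matrix consequences useful here. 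Using thickening plus interpolation I would normalize the entries down to a $\{0,1\}$-valued matrix, then use pinning to contract ``twin'' rows (proportional rows of $A'$, which are indistinguishable by $Z_{A'}$), obtaining a canonical simple graph $H$; the structural hypothesis that $A'$ is not rank-$1$ (or not rank-$1$ on blocks, in the bipartite case) persists through both normalizations and ensures $H$ falls outside the Dyer--Greenhill tractable class.

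The main obstacle I expect is the pinning step, which drives everything else. Constructing, for a general non-negative symmetric algebraic $A'$, polynomial-time reductions from pinned partition functions $Z_{A',\chi}$ back to $Z_{A'}$ itself requires a case split between the bipartite and non-bipartite situations (because non-bipartite components with loops, non-bipartite loopless components, and bipartite components behave differently under thickening/stretching asymptotics) and a careful use of large-power asymptotics of $(A')^{[k]}$ and $(A')^k$ to isolate individual entries and individual spins. This is precisely where the restriction to algebraic numbers matters: the arguments need to compare and decide equalities among products and powers of entries of $A'$, and algebraic-number arithmetic makes this effective, whereas a naive real-number model would not.
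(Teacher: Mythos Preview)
Your tractable direction is fine and matches the paper. The hardness direction, however, has a genuine gap: the plan to normalize to a $\{0,1\}$-matrix and then invoke Dyer--Greenhill cannot succeed in general, because the normalization can destroy the rank-$2$ structure.

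Take the simplest offending matrix, $A=\begin{pmatrix}1&2\\2&1\end{pmatrix}$. It is connected, non-bipartite, of row rank $2$, so by the theorem $Z_A$ must be $\sharpP$-hard. Now run your toolkit. Thickening gives $A^{[k]}=\begin{pmatrix}1&2^k\\2^k&1\end{pmatrix}$; interpolation lets you separate configurations by the number of edges sent to the off-diagonal. Restricting to the ``magnitude class $1$'' yields the identity matrix; restricting to the ``magnitude class $2$'' yields $\begin{pmatrix}0&1\\1&0\end{pmatrix}$; replacing every nonzero entry by $1$ yields the all-ones matrix. Every one of these is in the Dyer--Greenhill tractable class, and no amount of stretching (which produces $A^k=\begin{pmatrix}a&b\\b&a\end{pmatrix}$ with $a,b>0$) or twin-contraction changes that. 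So the claim that ``the structural hypothesis \ldots persists through both normalizations and ensures $H$ falls outside the Dyer--Greenhill tractable class'' is simply false here.

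What the paper does instead is keep the magnitude information alive by passing to $X$-matrices (matrices whose entries are powers of an indeterminate $X$), so that $A$ becomes $\begin{pmatrix}1&X\\X&1\end{pmatrix}$. Hardness for positive $X$-matrices of rank $\ge 2$ then splits into two cases according to the pattern of $1$-entries (``$1$-cells''). The case with at least two $1$-cells reduces not to Dyer--Greenhill but to $\#\textsc{MaxCut}$: for $\begin{pmatrix}1&2^\delta\\2^\delta&1\end{pmatrix}$ one has $Z_A(G)=\sum_{\sigma}2^{\delta\cdot e(\sigma^{-1}(1),\sigma^{-1}(2))}$, and the top coefficient in the weight expansion is the number of maximum cuts. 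The case with a single $1$-cell is by far the hardest and occupies an entire section; it requires an intricate analytic argument about roots of polynomial families $C^{[k]}_{ij}(X)$ to force the appearance of a second $1$-cell. Dyer--Greenhill does appear in the paper, but only as a side tool (to handle $\{0,1\}$-matrices whose blocks already contain a zero entry), not as the universal endpoint you envision.

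In short: you need a second base case beyond Dyer--Greenhill (namely $\#\textsc{MaxCut}$), and you need a substantial structural argument to show that every positive rank-$\ge 2$ $X$-matrix eventually reduces to it. Your pinning discussion and the observation about algebraic numbers are on target, but they are the easier half of the proof.
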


\noindent
Note that the theorem is consistent with Theorem~\ref{thm:dg}, the
special case for 0-1-matrices. We have already proved that $Z_A$ is
computable in polynomial time if all components of $A$ are either of
row rank $1$ or bipartite and with blocks of row rank $1$. The much
harder proof that all other cases are $\sharpP$-hard will be given in Sections~\ref{sec:it}--\ref{sec:11c} of
this paper. As a by-product, we also obtain a proof of Theorem~\ref{thm:dg}.

The rest of this section is a survey of further dichotomy results. We
will not prove them. The following two examples show that if we admit
negative entries in our matrices, then the rank-1 condition is no
longer sufficient to explain tractability.

\begin{example}\label{exa:b2}
  Consider the matrix $B=
  \begin{pmatrix}
    1&1\\1&-1
  \end{pmatrix}
  $
  first introduced in Example~\ref{exa:evensubgraphs}. The row rank of
  this matrix is $2$, yet we shall prove that $Z_B$ is computable in
  polynomial time. Let $G=(V,E)$.
  Then
  \begin{align*}
  Z_B(G)=\sum_{\;\sigma:V\to[2]\;}\prod_{vw\in
    E}B_{v,w}&=\sum_{\;\sigma:V\to[2]\;}\prod_{vw\in
    E}(-1)^{(\sigma(v)-1)\cdot(\sigma(w)-1)}\\
  &=\sum_{\;\sigma:V\to[2]\;}(-1)^{\sum_{vw\in E}\sigma(v)\cdot\sigma(w)}.
  \end{align*}
  Hence $Z_B(G)$ is $2^N$ minus twice the number of mappings
  $\sigma:V\to\{0,1\}$ such that $\sum_{vw\in
    E}\sigma(v)\cdot\sigma(w)$ is odd. Here $N$ is the number of
  vertices of $G$. Thus to compute $Z_B$, we need to determine the
  number of solutions of the quadratic equation
  \[
  \sum_{vw\in E}x_vx_w=1
  \]
  over the $2$-element field $\mathbb F_2$. The number of solutions of
  a quadratic equation over $\mathbb F_2$ or any other finite field
  can be computed in polynomial time. This follows easily from
  standard normal forms for quadratic equations (see, for example,
  \cite{lidnie97}, Section~6.2).
\end{example}

\begin{example}
  The \emph{tensor product} of two matrices $A\in \Ring^{m\times n}$ and
  $B\in \Ring^{k\times\ell}$ is the $m\cdot k\times n\cdot\ell$ matrix
  \[
  A\otimes B:=
  \begin{pmatrix}
    A_{1,1}\cdot B&\cdots&A_{1,n}\cdot B\\
    \vdots&&\vdots\\
     A_{m,1}\cdot B&\cdots&A_{m,n}\cdot B
  \end{pmatrix}.
  \]
  It is easy to see that for all square matrices $A,B$ and all
  graphs $G$ it holds that 
  \[
  Z_{A\otimes B}(G)=Z_A(G)\cdot Z_B(G).
  \]
  We can thus use the tensor product to construct new matrices with
  polynomial time computable partition functions. For example, the
  following three $4\times 4$-matrices have polynomial time computable
  partition functions:
  \[
  \left(
  \begin{array}{rrrr}
    1&1&1&1\\
    1&-1&1&-1\\
    1&1&-1&-1\\
    1&-1&-1&1
  \end{array}
  \right)
  \hspace{2cm}
  \left(\begin{array}{rrrr}
    -1&-1&1&1\\
    -1&1&1&-1\\
    1&1&-1&-1\\
    1&-1&-1&1
  \end{array}\right)
  \hspace{2cm}
  \left(\begin{array}{rrrr}
    1&2&1&2\\
    2&4&2&4\\
    1&2&-1&-2\\
    2&4&-2&-4
  \end{array}\right).
\]
\end{example}

Very roughly, all symmetric real matrices with a polynomial time
computable partition function can be formed from matrices of rank $1$
and matrices associated with quadratic equations over $\mathbb
F_2$ (in a way similar to the matrix $B$ of Example~\ref{exa:b2}) by
tensor products and similar constructions. The precise characterization
of such matrices is very complicated, and it makes little sense to state
it explicitly. In the following, we say that a class $\mathcal F$ of
functions \emph{exhibits an $\FP$~--~$\sharpP$-dichotomy} if all functions
in $\mathcal F$ are either in $\FP$ or $\sharpP$-hard. We say that
$\mathcal F$ \emph{exhibits an effective $\FP$~--~$\sharpP$-dichotomy} if
in addition it is decidable if a given function in $\mathcal F$, represented
for example by a matrix, is in $\FP$ or $\sharpP$-hard.

\begin{thm}[Goldberg, Grohe, Jerrum, Thurley~\cite{golgrojerthu09}]
  The class of partition functions of symmetric matrices over the
  reals exhibits an effective $\FP$~--~$\sharpP$-dichotomy.
\end{thm}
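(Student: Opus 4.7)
The plan is to bootstrap from Theorem~\ref{thm:bg} on non-negative matrices and treat the sign pattern of a general real symmetric $A\in\Ralg^{n\times n}$ as a separate layer on top of $|A|$ (the entrywise absolute value). First I would show that if $Z_{|A|}$ is $\sharpP$-hard, then so is $Z_A$. The idea is an interpolation argument: by replacing each edge of an input graph $G$ by a path of odd length $2k+1$, the partition function $Z_A$ of the resulting graph is $Z_{A^{2k+1}}(G)$, and the eigenvalues of $A$ and $|A|$ can be related so that a polynomial number of queries to $Z_A$ on modified graphs suffice to recover $Z_{|A|}(G)$ via a Vandermonde inversion. This reduces the real dichotomy, in the hard cases, to the non-negative dichotomy already established.

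Next, assume $Z_{|A|}$ is tractable, so by Theorem~\ref{thm:bg} every connected component of $|A|$ is either of row rank $1$ or bipartite with blocks of row rank $1$. The task is to decide when the signs on top of this skeleton still yield a tractable function. The model cases are $(i)$ rank-one matrices $A=\vec a^{T}\vec a$, for which the product formula~(\ref{eq:comp1}) works regardless of signs, and $(ii)$ the Hadamard-type matrices built from $B=\left(\begin{smallmatrix}1&1\\1&-1\end{smallmatrix}\right)$ of Example~\ref{exa:b2}, whose partition function reduces to counting solutions of a quadratic form over $\gft$ (and hence lies in $\FP$ by the standard normal-form algorithm). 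Using the multiplicativity $Z_{A\otimes B}=Z_A\cdot Z_B$, I would describe the tractable class as those $A$ that, after permuting rows/columns and restricting to connected components, admit a tensor decomposition as a rank-$1$ non-negative factor times a $\{\pm1\}$-factor encoding an $\gft$-quadratic form; the bipartite case is handled analogously at the block level. Showing that every $A$ of this shape is in $\FP$ then amounts to combining the rank-$1$ product formula with the polynomial-time algorithm for counting zeros of quadratic forms over $\gft$.

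On the hardness side, for $A$ whose signs do not fit into such a decomposition, I would produce explicit gadget reductions. The natural targets are Theorem~\ref{thm:bg} (by recovering $Z_{|A|}$ as above, when possible) and a generic $\sharpP$-hard counting CSP over $\gft$, such as counting satisfying assignments of a system $\sum x_ix_j=b$ that is not diagonalizable in a compatible way with the rank-$1$ skeleton. Gadgets here are built from short paths, cycles, and matching gadgets whose effect on $Z_A$ is a fixed symmetric $2\times 2$ matrix whose eigenvalues I can tune; iterating them lets me simulate auxiliary matrices and isolate the ``inconsistent'' sign that obstructs a tensor decomposition, ultimately encoding a hard quadratic-form counting problem.

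Finally, for effectiveness, one needs to decide, given the algebraic-number entries of $A$, whether $A$ falls in the tractable class. Each ingredient is decidable: bipartiteness and connected components of the support are combinatorial; row rank of a block over $\Ralg$ is polynomial-time via Gaussian elimination in the bit model discussed in Section~\ref{subsec:pos_to_X}; and the existence of a tensor decomposition into a rank-$1$ factor and an $\gft$-quadratic-form factor can be tested by fixing a candidate rank-$1$ vector (determined up to scaling by any row of $A$ of maximal support) and checking that the residual sign matrix is (equivalent to) an $\gft$-Hadamard tensor, which is a finite linear-algebraic test over $\gft$. The main obstacle, as in \cite{golgrojerthu09}, is the case analysis in paragraph three: one must exhaustively match every possible misalignment between the rank-$1$ block structure of $|A|$ and the sign pattern of $A$ with either an explicit $\gft$-quadratic-form decomposition or a gadget witnessing $\sharpP$-hardness, and it is the sheer number of subcases, rather than any single clever reduction, that makes the full proof long.
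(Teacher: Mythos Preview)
The paper does not prove this theorem; it is stated as a cited result from \cite{golgrojerthu09}, and the paper explicitly says of this and the surrounding theorems that ``we will not prove them.'' So there is no proof in the paper to compare your proposal against. That said, since you have sketched an approach, let me point out one concrete gap and one soft spot.

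The concrete gap is in your first step. Replacing each edge by a path of length $2k+1$ yields $Z_{A^{2k+1}}$, whose entries are governed by the eigenvalues of $A$, not of $|A|$; these spectra are unrelated in general (for $B=\left(\begin{smallmatrix}1&1\\1&-1\end{smallmatrix}\right)$ the eigenvalues are $\pm\sqrt2$, while $|B|$ has eigenvalues $0,2$), so no Vandermonde inversion over the stretching parameter recovers $Z_{|A|}$. The conclusion you want, namely that $Z_{|A|}$ hard implies $Z_A$ hard, is salvageable but by a different device: $2$-thickening gives $Z_{A^{(2)}}\Tle Z_A$, and $A^{(2)}=|A|^{(2)}$ entrywise; for non-negative matrices a block has row rank $\ge 2$ iff its entrywise square does, so Theorem~\ref{thm:bg} applied to $|A|^{(2)}$ yields the desired implication. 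Your stretching/eigenvalue argument, as written, does not.

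The soft spot is the remainder of the sketch. Saying that the tractable cases are tensor products of a rank-$1$ factor with an $\mathbb F_2$-quadratic-form factor is the right intuition, and is consistent with what the paper says informally before stating the theorem, but it is far from a proof. In particular, the assertion that detecting such a tensor decomposition is ``a finite linear-algebraic test over $\mathbb F_2$'' hides the real difficulty: one must first normalize $A$ (via twin reduction, component decomposition, bipartite splitting, and suitable scalings) before any such test applies, and the hardness side requires matching every failure of this normal form to a specific gadget. The actual argument in \cite{golgrojerthu09} is a long structural case analysis along these lines; your outline names the landmarks but does not supply the route between them.
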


\noindent
There are two natural ways to generalize this result to complex
matrices. Symmetric complex matrices where studied by Cai, Chen,
Lu~\cite{caichelu09}, and Hermitian matrices by Thurley~\cite{thu09}.

\begin{thm}[Cai, Chen, Lu~\cite{caichelu09}, Thurley~\cite{thu09}]
  \begin{enumerate}
  \item
    The class of partition functions of symmetric matrices over the
  complex numbers exhibits an effective $\FP$~--~$\sharpP$-dichotomy.
  \item
    The class of partition functions of Hermitian matrices exhibits an effective $\FP$~--~$\sharpP$-dichotomy.
  \end{enumerate}
\end{thm}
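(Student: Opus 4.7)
The plan is to adapt the blueprint of Theorem~\ref{thm:bg} and its real extension due to Goldberg, Grohe, Jerrum, and Thurley, first to arbitrary symmetric complex matrices and then again to Hermitian matrices. Both parts share the same overall shape: an explicit description of a class of ``easy'' matrices, together with a proof that every matrix outside this class gives rise to a $\sharpP$-hard evaluation problem, and a verification that membership in the easy class is decidable.

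I would first set up the tractable side, because its precise form dictates what must be proved $\sharpP$-hard. In the symmetric complex setting the easy class consists, up to permutation of rows and columns, diagonal vertex-weight twists, and tensor products, of (i) rank-one matrices, handled in polynomial time via identity~\eqref{eq:comp1}, and (ii) matrices whose entries are roots of unity encoding characters of affine forms over finite abelian groups $\mathbb{Z}_{p^k}$. For (ii) tractability follows from polynomial-time algorithms for counting solutions of quadratic equations over cyclic groups, together with closed-form Gauss-sum evaluations of the resulting character sums. For Hermitian matrices the easy class is slightly richer because conjugation introduces an additional layer of phase twists, but the underlying algorithmic ideas are the same.

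For the hardness direction I would deploy the standard gadget toolkit: \emph{thickening} an edge by $p$ parallel copies (replacing $A$ by the entrywise power $A^{[p]}$), \emph{stretching} an edge to a path of length $p$ (yielding the ordinary matrix power $A^p$, which is especially useful in the Hermitian case where $A$ is diagonalizable with real spectrum), and \emph{pinning}, which fixes the image of a designated vertex to a prescribed spin and reduces $\eval(Z_A)$ to the pinned variant $\evalk$. Once pinning is available, the evaluation of $Z_A$ decomposes along the connected and bipartite components of the underlying graph $H(A)$, and twin rows and columns of $A$ can be identified, so it suffices to treat connected twin-free matrices. The core of the argument is then an ``onion-peel'' through progressively tighter algebraic conditions on such an $A$: each failed condition yields an interpolation gadget that extracts a $\sharpP$-hard invariant (ultimately reducible, via Theorem~\ref{thm:dg}, to counting independent sets), while surviving every condition places $A$ into the tractable class.

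The main obstacle is the control of phases. In the complex setting the entries of $A$ can be roots of unity and cancellations among them must be tracked very carefully; a typical interpolation step requires a family of matrix powers or eigenvalues that is ``sufficiently independent,'' and one must rule out accidental vanishings. The essential tools are the modulus estimates for Gauss sums (which pin down $|A_{i,j}|$ once the group generated by the entries is identified) together with algebraic-independence arguments over $\Calg$. Effectiveness of the dichotomy then follows for free: the structural classification is expressed by a finite list of polynomial equations, inequations, and root-of-unity conditions on the entries of $A$, all of which are decidable for algebraic-number inputs as discussed in Section~\ref{subsec:pos_to_X}.
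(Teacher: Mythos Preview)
The paper does not prove this theorem. It is stated in the survey portion of Section~\ref{sec:complexity}, immediately after the sentence ``The rest of this section is a survey of further dichotomy results. We will not prove them.'' The result is simply attributed to \cite{caichelu09} and \cite{thu09}, and the remainder of the paper (Sections~\ref{sec:it}--\ref{sec:11c}) is devoted entirely to the non-negative real case, Theorem~\ref{thm:bg}. So there is no ``paper's own proof'' to compare your proposal against.

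As for your sketch itself: it is a plausible high-level outline of the general strategy used in \cite{golgrojerthu09,caichelu09,thu09}, but it is far from a proof. The complex and Hermitian dichotomies are substantially more involved than the non-negative case treated here; in particular, the ``onion-peel'' of structural conditions runs to dozens of steps, the tractable class is considerably more intricate than your description suggests (it involves not just rank-one pieces and characters over $\mathbb{Z}_{p^k}$ but a delicate tensor decomposition with compatibility constraints between the modulus and phase parts of $A$), and the interpolation arguments that handle cancellation among roots of unity are the technical heart of those papers and cannot be summarized as ``algebraic-independence arguments over $\Calg$.'' If you want to actually write out a proof, you should consult the cited papers directly; the present paper offers no shortcut.
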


\noindent
Beyond Hermitian matrices, partition functions of arbitrary, not necessarily symmetric matrices are much more difficult to
handle. An $\FP$~--~$\sharpP$-dichotomy follows from Bulatov's
Theorem~\ref{thm:bul} below, but it is difficult to understand how this
dichotomy classifies directed graphs. Dyer, Goldberg and Paterson \cite{dyegolpat07}
proved an effective $\FP$~--~$\sharpP$-dichotomy for the class of
homomorphism counting functions $Z_H$ for directed acyclic graphs $H$;
it is based on a complicated ``rank-1'' condition. 
A very recent result by Cai and Chen \cite{che10}
establishes an effective dichotomy for partition functions $Z_A$ 
on non-negative real-valued (i.e. not necessarily symmetric) matrices $A$.
The complexity of
hypergraph partition functions was studied by Dyer, Goldberg, and
Jerrum~\cite{dyegoljer08}, who proved the following theorem:

\begin{thm}[Dyer, Goldberg, and
Jerrum~\cite{dyegoljer08}]
For every $r$, the class of functions $Z_A$ from the class $\aH_r$ of
$r$-uniform hypergraphs defined by non-negative symmetric functions
$A:[n]^r\to\Ralg$ exhibits an effective $\FP$~--~$\sharpP$-dichotomy.
\end{thm}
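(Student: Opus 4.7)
The plan is to mirror the structure of the matrix case (Theorem~\ref{thm:bg}), identifying the tractable sub-class of symmetric non-negative $A:[n]^r\to\Ralg$ and then reducing every other case to the $\sharpP$-hardness of some matrix partition function via Theorem~\ref{thm:bg} itself. The underlying intuition is that for $r$-uniform hypergraphs with $r\ge 3$ the ``bipartite blocks'' subtlety of Theorem~\ref{thm:bg} should disappear (there is no $2$-colouring obstruction on hyperedges of size $\ge 3$), so the tractable condition collapses to a single product-factorisation criterion on each connected component of the underlying hypergraph of $A$. Concretely, let the underlying hypergraph of $A$ have vertex set $[n]$ and hyperedge set $\{\{i_1,\ldots,i_r\}\mid A(i_1,\ldots,i_r)\ne 0\}$; call a component $S\subseteq[n]$ \emph{product-decomposable} if there exist $c_S\ge 0$ and non-negative weights $a_i$ ($i\in S$) with $A(i_1,\ldots,i_r)=c_S\cdot a_{i_1}\cdots a_{i_r}$ whenever $\{i_1,\ldots,i_r\}\subseteq S$.

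For the tractability direction I would argue exactly as in~\eqref{eq:comp1}: on a connected hypergraph $G_0$, if all $r$ labelled vertices of an edge must lie in a common component $S$, then
\[
Z_A(G_0)=\sum_{S\text{ component of }A}c_S^{|E(G_0)|}\prod_{v\in V(G_0)}\sum_{i\in S}a_i^{\deg(v)},
\]
and I would combine this across the components of the input using the multiplicativity of partition functions (Observation~\ref{obs:multiplicative}). This is manifestly computable in polynomial time in the standard bit model on algebraic numbers, as discussed in the preamble to Theorem~\ref{thm:bg}.

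The hard direction is the real obstacle. My plan is to adapt the two gadget techniques that will be used in Sections~\ref{sec:it}--\ref{sec:11c} for the graph case: \emph{thickening} and \emph{pinning}. Thickening replaces each hyperedge by $t$ parallel copies on the same vertex set, realising the entrywise $t$th power $A^{[t]}$; this preserves polynomial-time reducibility between $Z_A$ and $Z_{A^{[t]}}$ and lets me amplify a largest entry in a row so that only dominant tuples survive. Pinning supplies vertices whose images are constrained to fixed spin values $s_3,\ldots,s_r\in[n]$; applied to $r-2$ of the $r$ coordinates of a single hyperedge, it realises the effective symmetric matrix $B_{\vec s}(i,j):=A(i,j,s_3,\ldots,s_r)$ on a $2$-uniform sub-gadget. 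The reduction $Z_{B_{\vec s}}\le_T Z_A$ follows by wiring the resulting gadget into an arbitrary input graph $G$ in place of its edges, exactly in the spirit of the gadget machinery developed later in this article. Theorem~\ref{thm:bg} is then invoked on $B_{\vec s}$ to conclude $\sharpP$-hardness.

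The main combinatorial obstacle is the closing step: if no component of $A$ is product-decomposable, I must exhibit some pinning vector $\vec s$ for which the resulting $B_{\vec s}$ has a connected component that is neither of row rank $1$ nor bipartite with rank-$1$ blocks, so that Theorem~\ref{thm:bg} applies. This is essentially a linear-algebraic statement about symmetric $r$-tensors over $\Ralg_{\ge 0}$: failure of a rank-$1$ factorisation at the tensor level must be detectable already in some $2$-dimensional restriction. I would prove it by a minimal-counterexample argument on the support of a component of $A$, using that a symmetric non-negative tensor whose every axis-aligned $2$-slice is rank-$1$ (or a bipartite rank-$1$ variant) must itself be product-decomposable. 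Once this lemma is in hand, the dichotomy is complete, and it is effective because product-decomposability of each component reduces to a polynomially-sized system of multiplicative equalities over $\Ralg$, decidable in the algebraic-number model referenced in the discussion before Theorem~\ref{thm:bg}.
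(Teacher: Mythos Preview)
The paper does not prove this theorem. It appears in Section~\ref{sec:complexity} as part of a survey of known dichotomy results, prefaced explicitly by the sentence ``The rest of this section is a survey of further dichotomy results. We will not prove them.'' There is therefore no proof in the paper against which to compare your proposal; the result is simply cited from Dyer, Goldberg, and Jerrum~\cite{dyegoljer08}.

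Viewed as a standalone attempt, your sketch has the right architecture but leaves the decisive step as an unproved assertion. The reduction strategy---pin $r-2$ coordinates of each hyperedge to obtain a symmetric matrix slice $B_{\vec s}$ and invoke Theorem~\ref{thm:bg}---is indeed the natural one. However, the lemma you need, that a symmetric non-negative $r$-tensor all of whose $2$-slices satisfy the Bulatov--Grohe tractability criterion must itself be product-decomposable on each component, is precisely where the work lies, and a ``minimal-counterexample argument on the support'' is not a proof. In particular, your remark that for $r\ge 3$ ``the bipartite blocks subtlety should disappear'' is misleading: each slice $B_{\vec s}$ is an ordinary symmetric matrix and can perfectly well be bipartite with rank-$1$ blocks, so the bipartite alternative of Theorem~\ref{thm:bg} must be handled, not dismissed. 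You also presuppose a pinning lemma for $r$-uniform hypergraphs; the Pinning Lemma~\ref{lem:pinning} proved here covers only the matrix case, and its extension to hypergraphs is one of the technical contributions of~\cite{dyegoljer08}, not something available for free.
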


\noindent
Let us finally turn to homomorphism counting functions for arbitrary
relational structures, or equivalently solution counting functions for
constraint satisfaction problems. Creignou and Hermann~\cite{creher96} proved a
dichotomy for the Boolean case, that is, for homomorphism counting
functions of relational structures with just two elements. Dichotomies
for the weighted Boolean case were proved in \cite{dyegoljer09} for
non-negative real weights, in \cite{buldyegoljr08} for arbitrary real
weights, and in \cite{cailuxia09} for complex weights. Briquel and
Koiran~\cite{brikoi09} study the problem in an algebraic computation
model. The general (unweighted) case was settled by Bulatov:

\begin{thm}[Bulatov~\cite{bul08}]\label{thm:bul}
  The class of homomorphism counting functions for 
relational structures exhibits an $\FP$~--~$\sharpP$-dichotomy.
\end{thm}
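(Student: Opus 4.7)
The plan is to use the algebraic approach to constraint satisfaction. By the Feder--Vardi correspondence mentioned in the text, $Z_B$ is exactly $\#\text{CSP}(B)$, and a counting analogue of Jeavons' theorem tells us that the complexity depends only on the clone $\operatorname{Pol}(B)$ of polymorphisms of $B$ (operations on $V(B)$ that preserve every relation of $B$). One therefore works with the finite algebra $\mathbf A_B = (V(B); \operatorname{Pol}(B))$, and the dichotomy becomes a structural statement about this algebra. After an initial normalisation (passing to a ``core'' of $B$ and augmenting with singleton unary relations via a pinning construction that multiplies the count by a computable constant), one may assume $\mathbf A_B$ is idempotent, so that its subalgebras, congruences, and quotients drive the entire classification.

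The correct dividing line, identified by Bulatov, is \emph{congruence singularity}: every relation pp-definable from $B$ must be a direct product, up to a congruence on each coordinate, of ``rectangles'' with uniform fibres. Equivalently, each relevant two-generated subalgebra of $\mathbf A_B$ must admit a Mal'tsev polymorphism and the congruence lattice of each subalgebra must behave in a uniform, ``tame'' way. On the tractable side one generalises the Bulatov--Dalmau algorithm for Mal'tsev constraints: solutions are carried along as a polynomial-size ``compact representation'' (a basis of the pp-definable solution relation, viewed as a union of cosets of products of congruence classes), the input structure $A$ is processed one atomic formula at a time while updating this representation, and the final homomorphism count is read off as a product of the sizes of the constituent congruence classes. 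Congruence singularity is precisely what guarantees that the intermediate relations retain such polynomially-bounded compact representations throughout.

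The main obstacle, and the heart of Bulatov's argument, is the converse: if congruence singularity fails, then $Z_B$ must be $\sharpP$-hard. Here one applies tame congruence theory to locate, inside a subalgebra of some pp-power of $B$, a two-element trace on which the induced algebra realises a ``bad'' minimal type --- one where rectangularity of some pp-definable relation fails \emph{quantitatively}, producing an irregular pattern of fibre sizes. A sequence of gadget constructions (pp-interpretations combined with pinnings to specific congruence classes) then transfers a known $\sharpP$-hard counting problem into $Z_B$; one natural target is the counting of independent sets or of $H$-colorings for a hard fixed $H$, whose $\sharpP$-hardness is supplied by Theorem~\ref{thm:dg}. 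The delicate point is that pp-interpretations in general preserve counts only up to multiplicities introduced by existential quantification, so one must simultaneously compute, in polynomial time, a multiplicative correction factor; this is achieved by an inductive argument on the congruence lattice that tracks the contribution of each congruence class separately. Combining the algorithmic half with this hardness transfer yields the full dichotomy, and establishes Theorem~\ref{thm:bul}.
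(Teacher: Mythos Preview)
The paper does not actually prove Theorem~\ref{thm:bul}; it is stated in the survey portion of Section~\ref{sec:complexity} and attributed to Bulatov~\cite{bul08}, with a brief remark that the original proof ``uses deep results from universal algebra'' and that Dyer and Richerby later gave a more elementary argument. So there is no in-paper proof to compare against.

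Your proposal is a reasonable high-level outline of the shape of Bulatov's original argument --- polymorphism clones, reduction to the idempotent case, congruence singularity as the dividing line, a Bulatov--Dalmau style compact representation on the tractable side, and tame congruence theory plus gadget reductions on the hard side --- and in that sense it is faithful to the spirit of~\cite{bul08}. But as written it is a sketch, not a proof: the precise definition of congruence singularity, the verification that it is exactly the right condition, the correctness of the counting version of the Mal'tsev algorithm (in particular that fibre sizes multiply as claimed), and the hardness gadgetry are each substantial pieces of work that you gesture at but do not carry out. Given that the paper itself explicitly declines to prove this result and merely surveys it, a full proof here would be out of scope; if the intent is only to indicate where the result comes from, your outline is adequate, but you should present it as a summary of Bulatov's method rather than as a self-contained argument.
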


\noindent
Bulatov's proof uses deep results from universal algebra, and for some time
it was unclear whether his dichotomy is effective. Dyer and Richerby \cite{dyeric10} gave 
an alternative
proof which avoids much of the universal algebra machinery, and they could show that
the dichotomy is decidable in $\NP$ \cite{dyeric11,dyeric10b}.
Extensions to the weighted case for nonnegative weights in $\Q$ have been given by
Bulatov et al. \cite{buldyegoljjr10}, and for nonnegative algebraic weights by
Cai et al. \cite{caichelu10}.\marc{Absatz geaendert!}

\medskip
\noindent In the context of holant functions complexity dichotomies have been obtained 
by Cai, Lu, and Xia \cite{cailuxia08,cailuxia09}.
Besides the result stated above, they considered
holant functions on signature grids $\Omega = (G,\mathcal F,\pi)$, given that 
$\mathcal F$ is a set of symmetric functions satisfying certain additional 
conditions.
A first a dichotomy \cite{cailuxia08} is given for any of boolean 
symmetric functions $\mathcal F$ on planar bipartite $2,3$-regular graphs.
In \cite{cailuxia09} dichotomies are presented, assuming that the class
$\mathcal F$ contains certain unary functions.

\section{An Itinerary of the Proof of Theorem~\ref{thm:bg}}
\label{sec:it}

From now on, we will work exclusively with
partition functions defined on matrices with entries in one of the
rings $\Ralg, \Qu[X], \Qu, \Int[X], \Int$, and we always let $\Ring$
denote one of these rings. For technical
reasons, we will always assume that numbers in $\Ralg$ are given in
\emph{standard representation} in some algebraic extension field
$\Qu(\theta)$. 
That is, we consider numbers in $\Qu(\theta)$ as vectors in a
$d$-dimensional $\Qu$-vectorspace, where $d$ is the degree of $\Qu(\theta)$ over 
$\Qu$.
It is well-known that for any finite set of numbers from
$\Ralg$ we can compute a $\theta$ which constitutes the corresponding
extension field (cf. \cite{coh93} p. 181). For further details see
also the treatment of this issue in \cite{dyegoljer08,thu09}.

By $\deg(f)$ we denote the \emph{degree} of a polynomial $f$.
For two problems $P$ and $Q$ we use $P \Tle Q$ to denote that $P$ is polynomial time Turing reducible to $Q$. Further, we write $P \Tequiv Q$ to denote that $P \Tle Q$ and $Q \Tle P$ holds.


An $m \times n$ matrix $A$ is \sdefi{decomposable}{decomposable
  matrix}\sdefisub{}{matrix}{decomposable}, if there are non-empty
index sets $I \subseteq [m]$, $J \subseteq [n]$ with $(I,J) \neq [m]
\times [n]$ such that $A_{ij} = 0$ for all $(i,j) \in \bar I \times J$
and all $(i,j) \in I \times \bar J$, where $\bar I:=[m]\setminus I$
and $\bar J:=[n]\setminus J$. A matrix is
\sdefi{indecomposable}{indecomposable
  matrix}\sdefisub{}{matrix}{indecomposable} if it is not
decomposable. A \sdefi{block}{block of a matrix}\sdefisub{}{matrix}{block of a}
of $A$ is a maximal indecomposable submatrix.
Let $A$ be an $m \times m$ matrix and $G:=G(A)$ its underlying
graph. Note that every connected component of $G$ that is not
bipartite corresponds to a block of $A$, and every connected component
that is bipartite corresponds to two blocks $B,B^T$ arranged as
in $\left(\begin{array}{c c}
             0 & B \\
             B^T & 0
          \end{array}\right).
$

The proof of Theorem~\ref{thm:bg} falls into two parts, corresponding
to the following two lemmas.

\begin{lemma}[Polynomial Time Solvable Cases]\label{lem:rank1_FP}
Let $A \in \Ring^{m\times m}$ be symmetric. If each block of $A$ has row rank at most $1$ then $\eval(A)$ is polynomial time computable.
\end{lemma}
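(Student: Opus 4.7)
The plan is to reduce the computation of $Z_A(G)$ to a polynomial number of closed-form expressions, each of which is a variant of the rank-$1$ identity \eqref{eq:comp1}.

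First, by the multiplicativity of partition functions (Observation~\ref{obs:multiplicative}), it suffices to compute $Z_A(G')$ for each connected component $G'$ of $G$ and then multiply. Second, writing $A_1,\dots,A_s$ for the connected components of $A$, the identity $Z_A(G')=\sum_{j=1}^{s} Z_{A_j}(G')$ noted in the paragraph preceding the theorem reduces the task further to computing $Z_C(G')$ for each pair consisting of a connected component $C$ of $A$ and a connected graph $G'$.

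For such a pair there are two cases. If the underlying graph of $C$ is not bipartite, then $C$ is itself a single block; by hypothesis it is symmetric of rank at most $1$, so we can write $C=c\,\vec a\vec a^{T}$ for some scalar $c\in\Ring$ and vector $\vec a$. The same telescoping argument as in \eqref{eq:comp1} then yields
\[
Z_C(G')\;=\;c^{|E(G')|}\prod_{v\in V(G')}\sum_{k}a_{k}^{\deg(v)}.
\]
If the underlying graph of $C$ is bipartite, then after permuting indices $C=\left(\begin{smallmatrix}0&B\\ B^{T}&0\end{smallmatrix}\right)$ where the blocks $B,B^{T}$ have rank at most $1$; write $B=\vec u\vec v^{T}$ with $\vec u\in\Ring^{p}$, $\vec v\in\Ring^{q}$. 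Any $\sigma$ of nonzero weight must send adjacent vertices of $G'$ to opposite sides of the bipartition of $C$, so $Z_C(G')=0$ unless $G'$ is itself bipartite; and if $G'$ is bipartite, its bipartition $V(G')=V_1\cup V_2$ is unique up to swapping the two classes (since $G'$ is connected), and a factorization analogous to \eqref{eq:comp1} yields
\[
Z_C(G')\;=\;\prod_{v\in V_1}\!\Bigl(\sum_{i}u_{i}^{\deg(v)}\Bigr)\prod_{w\in V_2}\!\Bigl(\sum_{j}v_{j}^{\deg(w)}\Bigr)\;+\;\bigl(\text{same expression with }V_1,V_2\text{ swapped}\bigr),
\]
where the two terms account for the two orientations of the bipartition.

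Each of these closed-form expressions uses only a polynomial number of arithmetic operations in $\Ring$, and the bipartitions of the connected components of $G$ are computable in polynomial time by breadth-first search. Combined with the standard representation of algebraic numbers discussed at the start of Section~\ref{sec:it}, this gives a polynomial-time algorithm for $Z_A(G)$. The only extra work is producing the rank-$1$ factorizations $C=c\,\vec a\vec a^{T}$ and $B=\vec u\vec v^{T}$, which reduces to locating any nonzero entry of the relevant block and reading off the corresponding row, column, and (in the symmetric case) a scalar. I expect the main, though still modest, bookkeeping obstacle to lie in the bipartite case, namely correctly summing over both orientations of the bipartition and verifying that all degenerate situations (for example, $G'=K_1$) are captured by the uniform formulas.
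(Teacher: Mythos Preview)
Your proposal is correct and follows precisely the approach the paper sketches: the paper explicitly leaves this lemma as an exercise, pointing to the rank-$1$ identity \eqref{eq:comp1}, the bipartite variant, and the component decomposition discussed just before Theorem~\ref{thm:bg}, all of which you carry out faithfully. The only small point worth noting is that the scalar $c$ in your factorization $C=c\,\vec a\vec a^{T}$ may lie in the field of fractions of $\Ring$ rather than in $\Ring$ itself (e.g.\ when $\Ring=\Int$), but this is harmless for polynomial-time computability and the final value $Z_C(G')$ of course lands back in $\Ring$.
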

We leave the proof of this lemma as an exercise for the reader. The
essential ideas of its proof were explained in 
Section~\ref{sec:complexity} before the statement of
Theorem~\ref{thm:bg}.

\begin{lemma}[\#\PP-hard Cases]\label{lem:bg05_hardness_part}
Let $A \in \Ring^{m\times m}$ be symmetric and non-negative. If $A$ contains a block of row rank at least $2$ then $\eval(A)$ is \#\PP-hard.
\end{lemma}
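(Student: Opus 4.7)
The plan is to reduce a problem already known to be $\#\PP$-hard --- specifically, a hard instance of the Dyer--Greenhill dichotomy (Theorem~\ref{thm:dg}) --- to $\eval(A)$. First I would reduce to the indecomposable case: since $Z_A$ is multiplicative over connected components of the input graph, and for connected $G'$ we have $Z_A(G')=\sum_j Z_{A_j}(G')$ where $A_1,\ldots,A_t$ are the blocks of $A$ (paired up as $B,B^T$ for the bipartite connected components), a vertex-thickening gadget together with polynomial interpolation will isolate the contribution $Z_{A^*}$ of the bad block $A^*$ of row rank at least $2$. Thus I may assume that $A$ itself is either indecomposable (non-bipartite case) or of the form $\begin{pmatrix}0&B\\B^T&0\end{pmatrix}$ with $B$ indecomposable (bipartite case), and of row rank at least~$2$.

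Next I would assemble the basic toolbox of gadget reductions. Edge \emph{stretching} --- replacing every edge of $G$ by a path of length $k$ --- yields $Z_{A^k}(G)$ with $A^k$ the ordinary matrix power, and edge \emph{thickening} --- replacing every edge by $k$ parallel edges --- yields $Z_{A^{\circ k}}(G)$ with $A^{\circ k}$ the entrywise power. Vertex \emph{pinning}, obtained by attaching a distinguishing gadget that forces a vertex's spin to a chosen value or to a chosen subset, yields the $\evalk$ variants declared in the preamble and Turing-reduces $\eval$ of the principal submatrices of $A$ to $\eval(A)$ itself. Together these operations place $\eval(A^k)$, $\eval(A^{\circ k})$, and $\eval$ of any principal submatrix of $A$ inside the Turing-equivalence class of $\eval(A)$.

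The engine of the proof is then a spectral and polynomial-interpolation argument. By Perron--Frobenius, the non-negative indecomposable matrix $A$ has a simple largest eigenvalue $\lambda_1$ with strictly positive eigenvector, and the hypothesis $\rank A\ge 2$ guarantees at least one further non-zero eigenvalue. Applying stretching for many values of $k$ and inverting the resulting Vandermonde system isolates the contribution of any prescribed eigenvalue $\lambda_i^k$; combining stretching, thickening, pinning and vertex weighting then produces a matrix $A'$ that is essentially the adjacency matrix of a simple graph $H$ violating the tractability criterion of Theorem~\ref{thm:dg}. That theorem supplies the required $\#\PP$-hardness. The bipartite subcase is handled analogously, with the eigenvalue decomposition of $A$ replaced by the singular-value decomposition of the block $B$ and all inputs restricted to bipartite graphs (since $Z_A$ vanishes otherwise).

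The main obstacle will be executing this final reduction rigorously: I must show that the spectral manipulations really deliver a Dyer--Greenhill-hard target matrix, routing around repeated or algebraically conjugate eigenvalues and --- most delicately --- the bipartite situation where $\pm\lambda_1$ are simultaneously dominant, so that no single stretching separates them. A secondary difficulty is to keep all arithmetic inside $\Ralg$ at polynomial bit-size: the interpolation nodes must be well-separated algebraic numbers, and this may in turn require iterated pinning and thickening just to expose enough spectral structure to make the interpolation possible. These complications are the reason Sections~\ref{sec:it}--\ref{sec:11c} are needed; the bulk of the work will go into the bipartite subcase and into certifying that the matrix $A'$ finally produced really does fall into a hard case of Theorem~\ref{thm:dg}.
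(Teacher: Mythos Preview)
Your plan diverges substantially from the paper's route, and the divergence is precisely at the step you yourself flag as the ``main obstacle.'' The paper does \emph{not} run a spectral/Perron--Frobenius argument to manufacture a $0$--$1$ target for Theorem~\ref{thm:dg}. Instead, after establishing $\evalk(A)\Tequiv\eval(A)$ via the Pinning Lemma~\ref{lem:pinning}, it passes through the General Conditioning Lemma~\ref{lem:new_gen_cond} to an \emph{$X$-matrix} (entries are powers of an indeterminate $X$) satisfying the structural conditions \cond{A}--\cond{C}, in particular with all $1$-entries organized into diagonal \emph{$1$-cells}. The hardness then bifurcates on the number of $1$-cells: two or more $1$-cells are driven to a \#MAXCUT reduction (Lemma~\ref{lem:two-1-cell}), while the single-$1$-cell case is handled by a delicate analysis of roots and multiplicities of the polynomial entries of $C^{[k]}=A^{[k]}(A^{[k]})^T$ (Lemma~\ref{lem:sing_1_cell}). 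The key technical tools are arithmetic---prime filtering, prime elimination, renaming (Lemmas~\ref{lem:prime_elim}--\ref{lem:rename})---together with the $\evalk\Tequiv\cntk$ equivalence, not eigenvalue interpolation. Eigenvalue interpolation does appear, but only peripherally, in Lemma~\ref{lem:DG__eig_interpolate} to strip vertex weights.

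Your spectral outline has a genuine gap at the step ``combining stretching, thickening, pinning and vertex weighting then produces a matrix $A'$ that is essentially the adjacency matrix of a simple graph $H$.'' Interpolating over $k$ in $Z_{A^k}(G)$ separates contributions by eigenvalue, but the resulting coefficients are weighted sums over configurations keyed to eigenvector components, not partition functions of a $0$--$1$ matrix; there is no stated mechanism that collapses this to a Dyer--Greenhill instance. The obstacle you anticipate with repeated or conjugate eigenvalues, and with the bipartite $\pm\lambda_1$ degeneracy, is real and is exactly why the paper abandons the spectral picture in favor of the $X$-matrix/$1$-cell combinatorics. Note also that the paper does not invoke Theorem~\ref{thm:dg} as a black box: it re-derives the needed $0$--$1$ hardness (Lemma~\ref{lem:0-1_hardness}) from counting independent sets, and the final hardness comes from \#MAXCUT, so that Theorem~\ref{thm:dg} is recovered as a by-product rather than assumed.
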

Theorem~\ref{thm:bg} directly follows from
Lemmas~\ref{lem:rank1_FP} and \ref{lem:bg05_hardness_part}. The
remainder of this paper is devoted to a proof of
Lemma~\ref{lem:bg05_hardness_part}.

For technical reasons, we need to introduce an extended version of
partition functions with vertex weights. Several different restricted
flavors of these will be used in the proof to come.  Let $A\in
\Ralg^{m \times m}$ be a symmetric matrix and $D \in \Ralg^{m \times m}$ a
diagonal matrix. Let $G = (V,E)$ be some given graph. Recall that a
\sdef{configuration} is a mapping $\vcfg: V \rightarrow [m]$ which
assigns a spin to \emph{every} vertex of $G$. By contrast, a
\sdef{pinning} of (vertices of) $G$ with respect to $A$ is a mapping
$\vpin: W \rightarrow [m]$ for some subset $W \subseteq V$.  Whenever
the context is clear, we simply speak of pinnings and configurations
without mentioning the matrices $A$,$D$ and the graph $G$ explicitly.
We define the \sdef{partition function} on $G$ and $\vpin$ by
$$
Z_{A,D}(\vpin,G) = \sum_{\vpin \subseteq \vcfg: V \rightarrow [m]} \prod_{uv \in E} A_{\vcfg(u),\vcfg(v)} \prod_{v \in V \setminus{\df(\vpin)}} D_{\vcfg(v),\vcfg(v)}
$$
where $\df(\vpin)$ denotes the domain of $\vpin$.
Note that the sum is over all configurations $\vcfg: V \rightarrow [m]$ which extend the fixed given pinning $\vpin$. In the presence of a pinning $\vpin$ we denote the \sdefi{weight}{weight of a configuration}\sdefisub{}{configuration}{weight of a} of the configuration $\vcfg$ by the following term 
$$
\prod_{uv \in E} A_{\vcfg(u),\vcfg(v)} \prod_{v \in V \setminus \df(\vpin)} D_{\vcfg(v),\vcfg(v)}.
$$
Note that, for technical reasons, the terms $D_{\vcfg(v),\vcfg(v)}$ for $v \in \df(\vpin)$ are excluded from this weight. Whenever $\vpin$ is empty in the sense that $\df(\vpin) = \emptyset$ then we say that it is \sdefi{trivial}{trivial pinning}. In this case, its appearance in the above expression is vacuous. This is analogously true for $D$ if it is the identity matrix. In either of these cases we omit the terms $D$ ($\vpin$, respectively) in the expression. For example if $\df(\vpin) = \emptyset$ and $D = I_m$, then
$$
Z_{A}(G) = \sum_{\vcfg: V \rightarrow [m]} \prod_{uv \in E} A_{\vcfg(u),\vcfg(v)}.
$$
The definitions of $Z_{A,D}(G)$ and $Z_{A}(\vpin,G)$ are analogous. We define $\evalk(A,D)$ as the computational problem of computing $Z_{A,D}(\vpin, G)$ on input $\vpin, G$. Similarly $\eval(A,D)$ restricts the inputs to empty pinnings and $\evalk(A)$ denotes the problem where $D$ is the identity matrix.

We will now explain the overall structure of the proof of
Lemma~\ref{lem:bg05_hardness_part}.
In a first step we will see how we can augment our capabilities so as to fix some vertices of the input graphs, without changing the complexity of the problems under consideration (cf. Lemma \ref{lem:pinning}). Then in the General Conditioning Lemma~\ref{lem:new_gen_cond} we will show, that we can reduce the abundance of non-negative matrices to certain well-structured cases. From these we will show in two steps (Lemmas~\ref{lem:two-1-cell} and \ref{lem:sing_1_cell}) how to obtain \#\PP-hardness.

\begin{lemma}[Pinning Lemma]\label{lem:pinning}
Let $A\in \Ralg^{m\times m}$ be a symmetric non-negative matrix.  Then
$$
  \evalk(A) \Tequiv \eval(A).
$$
\end{lemma}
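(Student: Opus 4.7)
The direction $\eval(A) \Tle \evalk(A)$ is immediate since $Z_A(G) = Z_A(\emptyset, G)$; all of the work lies in showing $\evalk(A) \Tle \eval(A)$, i.e.\ in simulating an arbitrary pinning using only the unpinned partition function. My plan is a two-level reduction: first to the case of pinning a single vertex to a single spin, then to a sequence of Vandermonde interpolations based on pendant gadgets.

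For the first reduction, given a pinning $\vpin$ with domain $W=\{v_1,\ldots,v_k\}$, I would merge all vertices in $\vpin^{-1}(i)$ into a single vertex $w_i$ without altering $Z_A(\vpin,G)$, yielding an equivalent pinning whose domain has at most $m$ elements mapped to pairwise distinct spins. A straightforward induction on $|W|$ then shows that an oracle for single-vertex pinning $Z_A^{v\mapsto i}(G) := Z_A(\{v\mapsto i\},G)$ suffices: to pin $s$ vertices, apply the single-vertex procedure to $w_1$, producing a Vandermonde system whose entries are multi-pinning queries on instances with only $s-1$ pinned vertices, which the inductive hypothesis can handle.

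For the single-pinning step I would use a classical gadget-interpolation argument. For each $k\ge 0$, let $G^{(k)}$ be $G$ with $k$ fresh pendant vertices attached to the distinguished vertex $v$. A direct expansion gives
\[
Z_A(G^{(k)}) \;=\; \sum_{i=1}^{m} r_i^k \cdot Z_A^{v\mapsto i}(G),
\]
where $r_i := \sum_{j=1}^{m} A_{ij}$ is the $i$-th row sum. Querying the oracle on $G^{(0)},\ldots,G^{(m-1)}$ yields an $m\times m$ Vandermonde system in the unknowns $Z_A^{v\mapsto i}(G)$, which is solvable whenever the $r_i$ are pairwise distinct and nonzero. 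Non-negativity of $A$ makes the zero-row-sum case easy: $r_i=0$ forces the entire $i$-th row and column to vanish, so spin $i$ contributes only on isolated vertices of $G$ and can be peeled off by restricting to the principal submatrix of non-zero rows together with a direct calculation for the isolated vertices.

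The technical heart of the argument, and the step I expect to be the main obstacle, is the case of coinciding positive row sums, where the Vandermonde system only recovers the partial sums $\sum_{i\in I} Z_A^{v\mapsto i}(G)$ over equivalence classes $I$ of the row-sum relation. I would refine these classes by iterating the interpolation with richer gadgets attached at $v$---longer paths, small stars, or $t$-fold edge thickenings---whose coefficient vectors have $i$-th entry of the form $(A^t\mathbf{1})_i$ or $\sum_\ell A_{i\ell}^t$. Any pair of spins that survives every such refinement has identical rows, i.e.\ is a pair of \emph{twin spins}, and for twins the swap of $i$ and $i'$ is a weight-preserving involution on configurations, so that $Z_A^{v\mapsto i}(G) = Z_A^{v\mapsto i'}(G)$; hence the partial sum divided by class size yields each individual value. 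Verifying that this refinement terminates within a polynomial number of rounds, and that a polynomial-time-constructible family of gadgets separates all non-twin spins, is the delicate combinatorial content of the pinning lemma and is exactly where the care of the Bulatov--Grohe construction is needed.
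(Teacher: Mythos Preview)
Your overall plan---gadget interpolation followed by refinement until classes can no longer be split---has the right shape, but the claim ``any pair of spins that survives every such refinement has identical rows'' is false, and this is where the argument breaks. For the specific gadget families you name (with coefficients $(A^t\mathbf 1)_i$ and $\sum_\ell A_{i\ell}^t$), take $A$ to be the adjacency matrix of any $3$-regular simple graph with trivial automorphism group, for instance the Frucht graph: every spin has row sum $3$, walk value $(A^t\mathbf 1)_i=3^t$, and row multiset $\{0,\ldots,0,1,1,1\}$, so none of your gadgets separate any two spins; yet no two rows are equal, there are no nontrivial automorphisms, and hence the pinned values $Z_A^{v\mapsto i}(G)$ genuinely differ for suitable $G$, so dividing a class sum by its size gives the wrong answer. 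Even if you enlarge the gadget pool to \emph{all} $1$-labelled graphs, the stable classes are not twin classes but \emph{automorphism orbits}: for $A=\bigl(\begin{smallmatrix}1&2\\2&1\end{smallmatrix}\bigr)$ the two spins are inseparable by every gadget yet have distinct rows. In that situation your division trick happens to give the right number, but to justify it you need to know that inseparable spins lie in the same automorphism orbit---a nontrivial theorem of Lov\'asz, not a consequence of the gadget calculus you sketch.

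The paper's proof closes exactly this gap. It first applies a Twin Reduction (Lemma~\ref{lem:twin_red}) to pass to a twin-free matrix $\twres A$ with positive diagonal vertex weights $D^{\twres A}$, showing $\evalk(A)\Tequiv\evalk(\twres A,D^{\twres A})$. On a twin-free pair $(A,D)$ it then invokes Lov\'asz's lemma (Lemma~\ref{lem:lov06}): if two pinnings $\phi,\psi$ satisfy $Z_{A,D}(\phi,G)=Z_{A,D}(\psi,G)$ for all $k$-labelled $G$, then they differ by an automorphism of $(A,D)$. This both identifies the correct stable equivalence (automorphism orbits of pinnings, handled all at once rather than one spin at a time) and guarantees, for any two inequivalent pinnings, the existence of a separating graph $\Gamma$---so the interpolation step always makes progress without any hand-built gadget family.
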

A \sdef{$1$-cell} in a matrix $A \in \Ring^{m \times m}$ is a submatrix $A_{IJ}$ such that $A_{ij} = 1$ for all $(i,j) \in I \times J$ and $A_{ij} \neq 1$ for all $(i,j) \in (\bar I \times J) \cup (I \times \bar J)$.
For a number or an indeterminate $X$ an \sdef{$X$-matrix} is a matrix whose entries are powers of $X$.
\paragraph*{General Conditions.} For a matrix $A \in \Ring^{m \times m}$ we define conditions
\begin{condition}{(A)} $A$ is symmetric positive and has $\rank{A} \ge 2.$\end{condition}
\begin{condition}{(B)} $A$ is an $X$-matrix for an indeterminate $X$.\end{condition}
\begin{condition}{(C)} There is a $k \ge 2$, numbers $1=m_0 < \ldots
  <m_k = m+1$ and $I_{i} = [m_{i-1}, m_i -1]$ for all $i\in [k]$ such
  that $A_{I_iI_i}$ is a $1$-cell for every $i \in
  [k-1]$. The matrix $A_{I_kI_k}$ may or may not be a $1$-cell. Furthermore,
  all $1$-entries are contained in one of these
  $1$-cells.
\end{condition}

\begin{lemma}[General Conditioning Lemma]\label{lem:new_gen_cond}
Let $A \in \Ring^{m\times m}$ be a non-negative symmetric matrix which contains a block of rank at least $2$.
Then there is a $\Int[X]$-matrix $A'$ satisfying conditions \cond{A}--\cond{C} such that
$$
  \evalk(A') \Tle \evalk(A).
$$
\end{lemma}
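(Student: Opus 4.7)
My plan is to produce $A'$ in three stages, corresponding to the three conditions.
First, using the Pinning Lemma (Lemma~\ref{lem:pinning}) I can move freely between $\eval(A)$ and $\evalk(A)$.
Let $B$ be a block of $A$ of rank at least $2$.
Pinning one vertex of the input graph to a spin in the index set of $B$ forces every positive-weight configuration to use only spins of $B$, since any edge crossing into a different block would encounter a $0$-entry by indecomposability; this yields $\evalk(B) \Tle \evalk(A)$.
To address condition \cond{A}, I would then replace $B$ by a matrix power $B^{s}$ for suitable $s$: when $B$ is square and non-bipartite this is realized by the standard edge-subdivision gadget---replace each input edge by a path of length $s$ and sum out the internal vertices---while if $B$ is rectangular arising from a bipartite component I would work with $BB^{T}$ via an analogous subdivision gadget.
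For $s$ large enough (which exists because the underlying graph of $B$ is connected), every entry of $B^{s}$ is positive, and the rank is preserved since nonzero eigenvalues of a symmetric matrix stay nonzero under powers.

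Next, writing $C$ for the positive rank-at-least-$2$ matrix produced above, let $a_{1}, \ldots, a_{r}$ be its distinct positive algebraic entries.
For condition \cond{B} I express each $a_{i}$ as a rational power $\alpha^{e_{i}/q}$ of a common base $\alpha > 0$, clearing denominators to a uniform $q$, and define the $\Int[X]$-matrix $A''$ with entries $X^{e_{ij}}$ in the positions of the corresponding $a_{ij}$.
Then the specialization of $A''$ at $X = \alpha^{t/q}$ equals the entry-wise $t$-th power of $C$, and the partition function of that entry-wise power is computable from $Z_{C}$ by a $t$-fold edge-thickening of the input graph (each edge replaced by $t$ parallel copies).
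Running this for $t = 1, 2, \ldots, T$ with $T$ exceeding the degree in $X$ of the polynomial $Z_{A''}(\vpin, G)$ gives the polynomial's value at $T$ distinct points, and Lagrange interpolation recovers $Z_{A''}(\vpin, G)$, yielding $\evalk(A'') \Tle \evalk(C)$.
Since the specialization of $A''$ at $X = \alpha^{1/q}$ is $C$, the rank of $A''$ as a polynomial matrix is at least $\rank{C} \ge 2$, reconfirming condition \cond{A}.

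Finally, for condition \cond{C} I apply a simultaneous row-column permutation of $A''$---which is merely a renaming of spins and does not change the partition function---that groups together indices whose rows agree on every $1$-entry position.
By symmetry of $A''$, the $1$-entries of the permuted matrix then form diagonal blocks $I_{i} \times I_{i}$, each a $1$-cell, with any leftover indices dumped into the final class $I_{k}$ permitted by \cond{C}.
I expect the main obstacle to be Stage~2: when the entries $a_{i}$ are multiplicatively independent in $\Ralg^{+}$ no single common base $\alpha$ exists, so I would first have to use pinning to partition the spin set into multiplicatively coherent pieces and handle each piece separately before reassembling the output matrix.
Carefully propagating the positivity and rank-at-least-$2$ conditions through the composition of pinning, matrix powers, edge subdivision, edge thickening, and polynomial interpolation is the most delicate part of the argument.
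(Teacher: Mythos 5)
Your three-stage plan has the right shape at a high level — pin to a component, make entries positive via matrix powers, convert to an $X$-matrix, then arrange the $1$-cells — but two of the three stages contain genuine gaps that the paper addresses with different machinery.

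For Stage~2, you yourself flag the problem: the positive algebraic entries of $C$ need not have a common multiplicative base $\alpha$, and your fallback of ``partition the spin set into multiplicatively coherent pieces'' does not actually work, because pinning gives you components (blocks of zeros off the component), not a decomposition of a \emph{positive} matrix into pieces with separately coherent entries. The paper's route is quite different and is the heart of the General Conditioning Lemma: it first uses the Arithmetical Structure Lemma (Lemma~\ref{lem:alg_to_w_alg_red}), which exploits an effective representation system for the multiplicative group generated by the entries to replace all algebraic entries by integers while preserving both the reduction and the rank-$\ge 2$ condition. Then it uses the Prime Rank Lemma to find a single prime (or irreducible polynomial) $p$ such that the ``$p$-part'' $A|_p$ still has a block of rank $\ge 2$, the Prime Filter Lemma to reduce to $A|_p$, and the Renaming Lemma to replace $p$ by $X$. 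There is no common-base step anywhere; the isolation of a single prime is what makes condition \cond{B} achievable without any multiplicative dependence assumption.

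For Stage~3, the claim that a simultaneous row-column permutation suffices to place the $1$-cells on the diagonal is false. In a symmetric positive $X$-matrix the $1$-cells $A_{IJ}$ need not be principal submatrices: consider a $2\times 2$ matrix with $X$ on the diagonal and $1$ off the diagonal, whose $1$-cells are $A_{\{1\}\{2\}}$ and $A_{\{2\}\{1\}}$ — no permutation can make the diagonal entries equal $1$. The paper handles this by passing to $C^* = C^2|_X$ (via $2$-stretching followed by the Prime Filter Lemma), and shows that for every $1$-cell $C_{KL}$ of $C$ the \emph{principal} submatrix $C^*_{KK}$ is a $1$-cell of $C^*$; condition \cond{C} then follows by permutation. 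The paper also needs, and you omit, the case where not all $1$-entries of the $X$-matrix lie in $1$-cells: in that case Lemma~\ref{lem:all_in_1_cells} shows $\evalk$ is already $\#\PP$-hard, and $A'$ can be chosen arbitrarily. Without this case distinction, \cond{C} cannot be achieved at all, since it requires every $1$-entry to be inside a designated $1$-cell.
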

If a matrix $A$ satisfies the General Conditions two different cases can occur which we will treat separately in the following.
The first case is the existence of at least two $1$-cells. 

\begin{lemma}[Two $1$-Cell Lemma]\label{lem:two-1-cell}
Let $A \in \Int[X]^{m \times m}$ be a positive symmetric matrix containing at least two $1$-cells. Then
$\evalk(A)$ is $\#\PP$-hard.
\end{lemma}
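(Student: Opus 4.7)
My plan is to reduce from the $\sharpP$-hard problem of computing the cut polynomial of a graph, captured by the $2\times 2$ partition function
\[
Z_{B_b}(G) \;=\; \sum_{\sigma\colon V(G)\to\{1,2\}} X^{\,b\cdot c(\sigma)},
\qquad B_b := \begin{pmatrix} 1 & X^b \\ X^b & 1 \end{pmatrix},
\]
where $c(\sigma)$ denotes the number of bichromatic edges. Since the leading coefficient of $Z_{B_b}(G)$, viewed as a polynomial in $X$, equals twice the number of maximum cuts of $G$ -- a $\sharpP$-hard quantity -- it suffices to reduce the evaluation of $Z_{B_b}$ to $\evalk(A)$ for a suitable integer $b\ge 1$.

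Pick representatives $s\in I_s$ and $t\in I_t$ from two diagonal $1$-cells $A_{I_sI_s}$ and $A_{I_tI_t}$ of $A$, and let $b$ be such that $A_{s,t}=X^b$. Since $s$ and $t$ lie in distinct $1$-cells, $A_{s,t}\neq 1$, so $b\ge 1$. Given a graph $G$ as input to $Z_{B_b}$, I would build a family of graphs $G'_1,G'_2,\dots$ (with pinnings) by attaching, to each vertex $v\in V(G)$, a gadget composed of auxiliary vertices pinned to $s$ and $t$ and connected to $v$ via edges (possibly thickened or via short paths). The crucial structural property is the $1$-cell nature of $A_{I_sI_s}$ and $A_{I_tI_t}$: all their entries are $1$, so any two spins in the same $1$-cell are indistinguishable with respect to within-cell edges. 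Consequently, a configuration $\sigma\colon V\to[m]$ taking values only in $I_s\cup I_t$ collapses, up to the combinatorial factor $|I_s|^{|\sigma^{-1}(I_s)|}|I_t|^{|\sigma^{-1}(I_t)|}$, to a ``binary'' configuration $\bar\sigma\colon V\to\{s,t\}$, which matches exactly the form of $Z_{B_b}$.

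The principal obstacle is isolating this $I_s\cup I_t$-restricted contribution from the ``parasitic'' contribution of configurations using spins outside $I_s\cup I_t$. The key leverage is that all off-$1$-cell entries of $A$ are strictly positive powers of $X$, so parasitic contributions carry strictly higher powers of $X$ than the clean contributions. By varying the gadget parameters (e.g., edge thicknesses or numbers of $s$- and $t$-pinned pendants) to produce distinct $X$-weightings of each vertex, the quantities $Z_A(G'_i)$ form an $\Int[X]$-linear system in the various partial-configuration contributions. Standard polynomial interpolation then recovers the clean contribution, and dividing by the polynomial-time-computable combinatorial prefactor yields $Z_{B_b}(G)$.

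I expect the main technical difficulty to lie in engineering the interpolation step so that the gadget constructions produce enough linearly independent relations over $\Int[X]$ to separate all of the ``partial'' contributions. This requires careful bookkeeping of the exponents $b_{ij}$ occurring in $A$ and, when $|I_s|$ or $|I_t|$ exceeds $1$, a careful exploitation of the internal symmetries of the $1$-cells; conditions \cond{B} and \cond{C} are exactly what make such a degree-based isolation feasible.
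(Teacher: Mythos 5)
Your high-level strategy — isolate configurations that map into the two $1$-cells via degree filtering, collapse the result to a $2\times 2$ partition function of the form $\begin{pmatrix}1&X^b\\X^b&1\end{pmatrix}$, and extract the leading coefficient to reduce from \#MAXCUT — is exactly the spirit of the paper's argument (Lemmas~\ref{lem:two-1-cell-1}--\ref{lem:two-1-cell-3} and \ref{lem:hardness_base}). However, there are three concrete gaps.

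First, the hypothesis of the Two $1$-Cell Lemma does not include conditions \cond{B} and \cond{C}: the matrix need not be an $X$-matrix and the two $1$-cells need not sit on the diagonal. Your degree-based isolation and your choice of diagonal representatives $s,t$ presuppose both. The paper has to earn this (Lemma~\ref{lem:two-1-cell-1}): it prime-filters and renames to get an $X$-matrix, and then forms $A^2\vert_X$ to move the $1$-cells onto the diagonal. Your plan tacitly assumes these have already been done.

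Second, even with \cond{A}--\cond{C} and \cond{T1C} in hand, the collapse to a binary configuration requires that every entry of the off-diagonal cell $A_{I_sI_t}$ be the same power $X^b$. The $1$-cell condition constrains only $A_{I_sI_s}$ and $A_{I_tI_t}$; it places no uniformity requirement on $A_{I_sI_t}$. Without that uniformity, configurations confined to $I_s\cup I_t$ do not aggregate into a $2\times 2$ partition function, and the ``interpolation step'' you invoke has nothing clean to interpolate toward. The paper closes this gap by reducing to a \emph{cell matrix} (Lemma~\ref{lem:two-1-cell-2}): iterating $A\mapsto A^2\vert_X$ forces every cell to become constant. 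Your proposal mentions ``careful exploitation of the internal symmetries of the $1$-cells'' but does not identify this as the needed reduction, and without it the argument breaks.

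Third, the combinatorial factor $|I_s|^{|\sigma^{-1}(I_s)|}|I_t|^{|\sigma^{-1}(I_t)|}$ is not a global constant: it depends on the binary configuration $\bar\sigma$. What the filtered sum actually equals is $Z_{B_b,D}(G)$ for the diagonal vertex-weight matrix $D=\diag(|I_s|,|I_t|)$, not $|I_s|^{?}|I_t|^{?}\cdot Z_{B_b}(G)$. So ``dividing by the polynomial-time-computable combinatorial prefactor'' is not a valid move. Removing the vertex weights to get $\eval(B_b)\Tle\eval(B_b,D)$ is a genuine technical step (the paper uses twin reduction plus the Dyer--Greenhill gadget in Lemmas~\ref{lem:twin_red} and \ref{lem:dg_omit_vertexweights}, culminating in Lemma~\ref{lem:hardness_base}); your proposal omits it entirely. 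Until all three of these issues are addressed, the reduction to \#MAXCUT does not go through.
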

The second case is then the existence of only one $1$-cell. The proof of this case is much more involved than the first one.

\begin{lemma}[Single $1$-Cell Lemma]\label{lem:sing_1_cell}
Let $A \in \Int[X]^{m \times m}$ be a matrix which satisfies conditions \cond{A} -- \cond{C} and has exactly one $1$-cell. Then $\evalk(A)$ is $\#\PP$-hard.
\end{lemma}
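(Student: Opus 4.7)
By condition \cond{C} combined with the assumption of exactly one $1$-cell, we must have $k=2$, so $[m]$ partitions as $I_1 \dot\cup I_2$ with $A_{I_1 I_1}$ the unique all-ones block and every remaining entry of $A$ equal to $X^{c}$ for some $c \ge 1$. Condition \cond{A} forces $\rank A \ge 2$, which therefore must be witnessed either across the off-diagonal block $A_{I_1 I_2}$ or within $A_{I_2 I_2}$, since $A_{I_1 I_1}$ alone has rank $1$. I would first normalize by twin reduction: collapse any pair of indices in $I_2$ whose rows of $A$ coincide, which preserves $\evalk$-complexity up to straightforward bookkeeping and ensures the $I_2$ block carries genuine rank.

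\textbf{Interpolation step.} My strategy is to exploit the $X$-matrix hypothesis \cond{B} by polynomial interpolation. For any input $(\vpin, G)$, $Z_A(\vpin, G)$ is a polynomial in $X$ whose degree is bounded linearly in the input size. Edge-thickening, which replaces each edge by $t$ parallel copies, has the effect of transforming $A$ entry-wise to $A^{(t)}$ with $(A^{(t)})_{ij} = A_{ij}^{t}$; running $\evalk(A)$ on the thickened graphs for polynomially many values of $t$ and applying Vandermonde inversion lets me recover each coefficient of $Z_A(\vpin, G)$ viewed as a polynomial in $X$. The coefficient of $X^d$ is a weighted count of configurations whose total exponent sum over the edges equals $d$, i.e.\ those using exactly $d$ "units" of $I_1 \leftrightarrow I_2$ or $I_2 \leftrightarrow I_2$ interaction.

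\textbf{Reduction to a known hard case.} With coefficient access in hand, I would combine it with the Pinning Lemma~\ref{lem:pinning}: pin a well-chosen set of vertices to elements of $I_2$, then extract a specific low-degree coefficient of the resulting polynomial in $X$. The aim is to reduce $\evalk(A)$ either to $\evalk(A')$ for a matrix $A'$ that contains at least two $1$-cells, allowing an appeal to Lemma~\ref{lem:two-1-cell}, or to an unweighted $0/1$-matrix $B$ representing a graph $H$ that is neither a disjoint union of complete graphs with all loops nor a disjoint union of complete bipartite graphs, so that Theorem~\ref{thm:dg} applies. Concretely, fixing boundary vertices to $I_2$ turns the $I_1$ region into a free \emph{distinguisher} that records how many neighbors of each free vertex are pinned; the corresponding bipartite sub-pattern between $I_1$ and $I_2$ inherits the rank-$\geq 2$ structure and produces the hardness instance.

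\textbf{Main obstacle.} The essential difficulty is that the unique $1$-cell $A_{I_1 I_1}$ is a trivial free region: configurations mapping most of $V(G)$ into $I_1$ contribute uniformly (every in-$I_1$ edge contributes a factor $1$), and this massive $|I_1|^{|V|}$-type mass can swamp the combinatorially meaningful contributions coming from $I_2$. Designing pinning gadgets that force sufficiently many vertices into $I_2$ while preserving enough symmetry for the interpolation to cleanly separate coefficients is the delicate core of the argument. In particular, one must ensure that the coefficient ultimately isolated does not collapse to a rank-$1$ summation (which would be tractable by Lemma~\ref{lem:rank1_FP}) but instead genuinely encodes the rank-$\geq 2$ structure across the $I_1$--$I_2$ interface. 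I expect this coupling between pinning, thickening-based interpolation, and a careful accounting of which edge patterns survive in the chosen coefficient to constitute the technical heart of the proof.
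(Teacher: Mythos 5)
Your proposal correctly pins down the setup (exactly one $1$-cell forces $k=2$ in \cond{C}, so $[m]=I_1\,\dot\cup\,I_2$ with $A_{I_1I_1}$ the all-ones block and all other entries positive powers of $X$), correctly identifies the target (reduce to a matrix with at least two $1$-cells and invoke Lemma~\ref{lem:two-1-cell}), and correctly isolates the central obstacle (the massive uniform contribution of configurations landing in $I_1$). However, the actual construction that makes this work is missing, and what you sketch in its place would not get you there.

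The first concrete gap: you propose twin reduction to normalize $I_2$ and then ``pin a well-chosen set of vertices to $I_2$.'' But twin reduction does not give you the normal form the argument actually needs. The paper first establishes the \cond{S1C} conditions via Lemma~\ref{lem:57}, and that lemma is itself non-trivial: \cond{S1C--B} (the first $r-1$ rows are identical) is obtained by a recursive appeal to the Symmetrized $1$-Row Filter Lemma~\ref{lem:symm_1-row_filter} and the General Conditioning Lemma~\ref{lem:new_gen_cond}, possibly hitting the Two $1$-Cell case along the way; \cond{S1C--C} (a degree-domination property of row~$1$) is achieved by a self-loop gadget $A'_{ij}=A_{ij}(A_{ii}A_{jj})^t$, not by pinning or interpolation. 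None of this is captured by ``normalize by twin reduction.''

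The deeper gap is the technical core. Your plan is ``pin boundary vertices to $I_2$, extract a low-degree coefficient of $Z_A(\vpin,G)$ as a polynomial in $X$, and argue the surviving pattern is hard.'' Coefficient access via thickening and Vandermonde is indeed available (this is exactly $\cntk(A)\Tequiv\evalk(A)$, Lemma~\ref{lem:cntk_eq_evalk}), but the coefficients of $Z_A(\vpin,G)$ are not themselves partition functions of smaller matrices, and there is no clear route from ``coefficient of $X^d$'' to ``$\evalk(A')$ for $A'$ with two $1$-cells.'' What the paper actually does is entirely different: it builds the family $A^{[k]}_{ij}=A_{ij}(A_{1j})^{k-1}$ and $C^{[k]}=A^{[k]}(A^{[k]})^T$, realized by a gadget that attaches each original vertex to a pinned apex via length-$2$ paths with $2k-2$ parallel edges (Lemma~\ref{lem:ck_reduction}); it then studies, for roots $\lambda$ of the diagonal polynomial $C^{[1]}_{11}\in\Int[X]$, the multiplicities $\mult(\lambda^{1/k},C^{[k]}_{ij})$ as $k$ varies. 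The Single-$1$-Cell Technical Core Lemma~\ref{lem:technical_core} shows, via an analytic argument — writing $\lambda=e^\alpha$, differentiating $f_\lambda(z)=\sum_i\lambda^{2a_i+c_iz}$ at $0$, invoking the Mean Value Theorem to pass zeros of $f_\lambda(1/k)$ to zeros of all derivatives, and solving a Vandermonde system in the distinct exponent-differences $\hat c_\nu$ — that linear independence of rows $1$ and $j$ forces a strict multiplicity drop $\mult(\lambda,C^{[1]}_{11})>m(\lambda,j)$. That drop is precisely what, after filtering by an irreducible factor $p_\lambda$ via the Prime Filter Lemma~\ref{lem:prime_filter} and dividing out the common power $p_\lambda^s$, produces a second $1$-cell. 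This multiplicity-of-roots mechanism is the engine of the proof, and there is no trace of it in your proposal; the passage you flag as ``the delicate core of the argument'' is, in effect, the entire content of Sections~\ref{sec:11c}.1--\ref{sec:11c}.2, which you leave as a hope rather than an argument.
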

Once these results have been derived, it will be easy to prove the main result.

\begin{proof}[of Lemma \ref{lem:bg05_hardness_part}]
Let $A \in \Ralg^{m \times m}$ be a non-negative symmetric matrix which contains a block of rank at least $2$. 
By the General Conditioning Lemma~\ref{lem:new_gen_cond} there is a matrix $A'$ satisfying conditions 
\cond{A}--\cond C such that $\evalk(A') \Tle \evalk(A)$. 
If $A'$ contains a single $1$-cell then $\evalk(A')$ is $\#\PP$-hard by Lemma~\ref{lem:sing_1_cell}. 
Otherwise, it is $\#\PP$-hard by Lemma~\ref{lem:two-1-cell}. In both cases
this proves $\#\PP$-hardness of $\eval(A)$ by means of Lemma~\ref{lem:pinning}.
\end{proof}

\section{Evaluation and counting}
\label{sec:ec}
Let $A$ be an $m \times m$ matrix. Then, for every $q \in \Qu$ we define the matrix $A^{(q)}$ by
$$
A^{(q)}_{ij} = \left\{ \begin{array}{l l}
                        (A_{ij})^q &, \text{ if } A_{ij} \neq 0 \\
                          0 &, \text{ otherwise.}
                       \end{array}\right.
$$
The following lemma provides two basic reductions which form basic building blocks of many hardness proofs.
\begin{lemma}\label{lem:basic_reductions}
Let $A \in \Ring^{m\times m}$ the following is true for every $p \in \Nat$ 
\begin{description}
\item[\textbf{($p$-thickening)}]  $\evalk(A^{(p)}) \Tle \evalk(A).$
\item[\textbf{($p$-stretching)}]  $\evalk(A^{p}) \Tle \evalk(A).$
\end{description}
\end{lemma}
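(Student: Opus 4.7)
The plan is to handle both reductions by simple graph gadgets: for $p$-thickening, replace each edge with $p$ parallel edges; for $p$-stretching, subdivide each edge by inserting $p-1$ new internal vertices so that it becomes a path of length $p$. In both cases the pinning $\vpin$ is left untouched, since it refers only to original vertices that survive in the new graph.

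For $p$-thickening, given an input $(\vpin,G)$ for $\evalk(A^{(p)})$, let $G^{(p)}$ be the graph on the same vertex set in which every edge of $G$ is replaced by $p$ parallel edges. A configuration $\vcfg\colon V(G)\to[m]$ extending $\vpin$ is simultaneously a configuration on $G^{(p)}$ extending $\vpin$, and conversely. Its weight in $G^{(p)}$ with respect to $A$ is
\[
\prod_{uv\in E(G)} A_{\vcfg(u),\vcfg(v)}^{\,p}
=\prod_{uv\in E(G)} A^{(p)}_{\vcfg(u),\vcfg(v)},
\]
since $0^p=0$ and $a^p=A^{(p)}_{\ldots}$ when $a\neq 0$. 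Summing over extensions of $\vpin$ gives $Z_{A}(\vpin,G^{(p)})=Z_{A^{(p)}}(\vpin,G)$, and the construction of $G^{(p)}$ is polynomial time, so $\evalk(A^{(p)})\Tle\evalk(A)$.

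For $p$-stretching, given an input $(\vpin,G)$ for $\evalk(A^p)$, let $G^p$ be the graph obtained from $G$ by subdividing each edge $uv$ into a path $u=x_0,x_1,\ldots,x_p=v$ of length $p$ using $p-1$ fresh internal vertices; the pinning $\vpin$ is viewed unchanged as a partial assignment on the original vertices of $G$, which are a subset of $V(G^p)$. Since we are working with $\evalk(\cdot)$, the vertex weight matrix is the identity, so internal vertices contribute no vertex-weight factors. An extension $\vcfg\colon V(G^p)\to[m]$ of $\vpin$ decomposes as the choice of a spin $\vcfg(u)$ for every original vertex $u$ (extending $\vpin$) together with independent choices of spins $i_1,\ldots,i_{p-1}$ on the internal vertices of each subdivided edge. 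Summing first over the internal spins of a single edge yields
\[
\sum_{i_1,\ldots,i_{p-1}\in[m]} A_{\vcfg(u),i_1}A_{i_1,i_2}\cdots A_{i_{p-1},\vcfg(v)}
=(A^p)_{\vcfg(u),\vcfg(v)},
\]
which is exactly the edge weight contributed by $uv$ in $G$ under $A^p$. Factoring the product over all original edges and then summing over extensions of $\vpin$ gives $Z_{A}(\vpin,G^p)=Z_{A^p}(\vpin,G)$, and again $G^p$ is polynomial-time constructible from $G$, proving $\evalk(A^p)\Tle\evalk(A)$.

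There is no real obstacle: the argument is just the standard interchange of sum and product on a product structure, and the only point to watch is that because $\evalk$ uses the identity vertex-weight matrix, the internal vertices introduced in the stretching gadget carry weight $1$ and do not interfere; likewise the pinning only constrains vertices that remain in the gadget construction, so everything transports cleanly.
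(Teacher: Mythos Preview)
Your proof is correct and follows exactly the same approach as the paper: replace each edge by $p$ parallel copies for thickening and by a length-$p$ path for stretching, then use the identities $Z_{A^{(p)}}(\vpin,G)=Z_A(\vpin,G^{(p)})$ and $Z_{A^p}(\vpin,G)=Z_A(\vpin,G_p)$. The paper simply states these identities as ``easily verifiable'' without spelling out the details you provide.
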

\begin{proof}
Let $G = (V,E)$ be a graph and $\vpin$ a pinning. Let the \emph{$p$-thickening} $G^{(p)}$ of $G$ be 
the graph obtained from $G$ by replacing each edge by $p$ many parallel edges. 
The \emph{$p$-stretching} $G_{p}$ of $G$ is 
obtained from $G$ by replacing each edge by a path of length $p$.
The reductions follow from the (easily verifiable) identities
$ Z_{A^{(p)}}(\vpin,G) = Z_{A}(\vpin,G^{(p)})$ and
$ Z_{A^p}(\vpin,G) = Z_{A}(\vpin,G_{p})$.
\end{proof}

\begin{lemma}\label{lem:a_square_prop} 
Let $A \in \Ring^{m \times n}$ be non-negative. If $A$ contains a block of rank $\ge 2$ then so does $A A^T$.
\end{lemma}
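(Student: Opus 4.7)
The plan is to build a block of $AA^T$ of rank at least $2$ directly from a given block $B = A_{IJ}$ of $A$ with $\mathrm{rank}(B) \ge 2$. I would show in turn that the principal submatrix of $AA^T$ indexed by $I$ is exactly $BB^T$, that this submatrix is isolated from the rest of $AA^T$, that it is itself indecomposable, and that its rank matches that of $B$.

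First I would exploit the block structure: by definition of a block, $A$ has only zero entries on $I \times \bar J$ and on $\bar I \times J$. Computing $(AA^T)_{ii'} = \sum_k A_{ik}A_{i'k}$, for $i,i' \in I$ the only surviving terms are those with $k \in J$, and they sum to $(BB^T)_{ii'}$; for $i \in I$ and $i' \in \bar I$, every term vanishes (the terms with $k \in J$ because $A_{i'k}=0$, and the terms with $k \in \bar J$ because $A_{ik}=0$). So $(AA^T)_{II} = BB^T$ is an isolated submatrix of $AA^T$, and hence the blocks of $AA^T$ sitting inside $I \times I$ are exactly the blocks of $BB^T$.

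Next I would verify that $BB^T$ itself is indecomposable. Nonnegativity of $B$ prevents cancellation: $(BB^T)_{ii'} > 0$ iff some $k \in J$ has $B_{ik}, B_{i'k} > 0$, i.e., iff $i$ and $i'$ share a common neighbor in the bipartite support graph of $B$ on $I \cup J$. Indecomposability of $B$ is exactly connectedness of this bipartite graph, so an alternating path $i_0, j_0, i_1, j_1, \ldots, i_t$ between two row-vertices yields a walk $i_0, i_1, \ldots, i_t$ in the support graph of $BB^T$, which is therefore connected on $I$. Moreover, $|I| \ge 2$ (as $\mathrm{rank}(B) \ge 2$) and indecomposability of $B$ force every row of $B$ to be nonzero (an all-zero row would split off in the decomposition), so $(BB^T)_{ii} > 0$ for all $i \in I$. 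These two facts together preclude both kinds of decomposition of $BB^T$: an asymmetric one with row/column index sets $I' \neq J'$ would force $(BB^T)_{ii} = 0$ for some $i \in I' \bigtriangleup J'$, and a symmetric one with $I' = J'$ would disconnect the support graph of $BB^T$.

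Finally, $\mathrm{rank}(BB^T) = \mathrm{rank}(B)$ by the standard real-valued identity $\ker(BB^T) = \ker(B^T)$, which follows from $v^T BB^T v = \| B^T v \|_2^2 \ge 0$ with equality iff $B^T v = 0$. Combining everything, $BB^T$ is a block of $AA^T$ of rank at least $2$, as required. The main subtlety is the indecomposability step, because the paper's definition of ``decomposable'' permits the row and column index sets to differ; nonnegativity of the diagonal of $BB^T$ is precisely what handles that asymmetric case.
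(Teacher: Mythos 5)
Your proof is correct, but it takes a genuinely different route from the paper's. You argue globally: you identify the principal submatrix $(AA^T)_{II}$ exactly as $BB^T$, verify it is isolated from the rest of $AA^T$ by the zero pattern of a block, prove $BB^T$ is itself indecomposable (handling both the asymmetric and symmetric cases of the definition), and then use $\ker(BB^T)=\ker(B^T)$ to conclude $\mathrm{rank}(BB^T)=\mathrm{rank}(B)\geq 2$. The paper instead argues locally: it takes two linearly independent rows $A\row{i},A\row{i'}$ of the block, uses connectedness of the block to build an alternating path $i=i_1,\ldots,i_k=i'$, observes that linear independence of the endpoints forces some consecutive pair $A\row{i_\mu},A\row{i_{\mu+1}}$ to be independent, and then applies Cauchy--Schwarz to the $2\times 2$ principal submatrix of $AA^T$ on $\{i_\mu,i_{\mu+1}\}$ to show it has nonzero determinant and strictly positive entries -- which already witnesses a block of $AA^T$ of rank at least $2$, without identifying the block or computing its rank. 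Your approach extracts more structural information (you pin down the block as $BB^T$ and show its rank equals that of $B$), at the cost of the extra connectivity and indecomposability arguments; the paper's approach is shorter and never needs the full picture. Both ultimately rest on the positive semidefiniteness of $BB^T$, deployed as Cauchy--Schwarz in one case and as the kernel identity in the other, and both implicitly assume the entries sit inside an ordered field so that these inner-product arguments make sense, which is fine for the rings the paper works with.
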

\begin{proof}
Let $A_{IJ}$ be a block of rank at least $2$ in $A$ and define $A' = AA^T$. We claim that $A'_{II}$ contains a block of rank at least $2$, which clearly implies the statement of the lemma. 

For $i, i' \in I$ let $A\row i$ and $A\row {i'}$ be two linearly independent rows.
Since $A_{IJ}$ is a block, there are indices $i = i_1,\ldots,i_k = i'$ in $I$ and $j_1,\ldots j_{k-1} \in J$ such that $A_{i_\nu j_{\nu}} \neq 0$ and $A_{i_{\nu+1} j_{\nu}} \neq 0$ for all $\nu \in [k-1]$. Linear independence of $A\row {i}$ and $A\row {i'}$ implies that there is a $\mu \in [k-1]$ such that also $A\row{i_{\mu}}$ and $A\row{i_{\mu+1}}$ are linearly independent.
We claim that the submatrix
$$
\left(\begin{array}{l l}
       A'_{i_{\mu}, i_{\mu}} & A'_{i_{\mu}, i_{\mu+1}} \\
       A'_{i_{\mu+1}, i_{\mu}} & A'_{i_{\mu+1}, i_{\mu+1}}
      \end{array}\right)
=
\left(\begin{array}{l l}
       \scalp{A\row{i_{\mu}}, A\row{i_{\mu}}} & \scalp{A\row{i_{\mu}}, A\row{i_{\mu+1}}} \\
       \scalp{A\row{i_{\mu+1}}, A\row{i_{\mu}}} & \scalp{A\row{i_{\mu+1}}, A\row{i_{\mu+1}}}
      \end{array}\right)
$$
is a witness for the existence of a block of rank at least two in $A'$. To see this note first that, by definition, all entries of this submatrix are positive, it thus remains to show that this submatrix has rank $2$. Assume, for contradiction, that it has zero determinant, that is
$$
\scalp{A\row{i_{\mu}}, A\row{i_{\mu}}}\scalp{A\row{i_{\mu+1}}, A\row{i_{\mu+1}}} = \scalp{A\row{i_{\mu+1}}, A\row{i_{\mu}}}^2.
$$
The Cauchy Schwarz inequality therefore implies linear dependence of $A\row{i_{\mu}}$ and $A\row{i_{\mu+1}}$. Contradiction.
\end{proof}
For $A$ an $m \times m$ matrix and $\pi : [m] \rightarrow [m]$ a permutation, define \idxsymb{$A_{\pi\pi}$} by
$$
(A_{\pi\pi})_{ij} = A_{\pi(i)\pi(j)} \text{ for all } i,j \in [m].
$$
The \emph{Permutability Principle} states that for any evaluation problem on some matrix $A$, we may assume any simultaneous permutation of the rows and columns of this matrix. Its proof is straightforward.

\begin{lemma}[Permutability Principle] \label{lem:principle_permute}
Let $A,D \in \Ring^{m \times m}$ and $\pi : [m] \rightarrow [m]$ a permutation. Then $\evalk(A,D) \Tequiv \evalk(A_{\pi\pi},D_{\pi\pi})$.
\end{lemma}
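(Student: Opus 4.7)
The plan is to exhibit a weight-preserving bijection between configurations under $(A,D)$ and configurations under $(A_{\pi\pi}, D_{\pi\pi})$, which converts any instance of one problem into an instance of the other with the same answer. Since $\pi$ is itself a permutation of $[m]$, the two reductions will be symmetric, so it suffices to describe one direction and invoke it twice (once with $\pi$, once with $\pi^{-1}$).

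Given an instance $(\phi, G)$ of $\evalk(A,D)$, where $\phi: W \to [m]$ is a pinning with $W \subseteq V(G)$, I would construct the instance $(\phi', G)$ of $\evalk(A_{\pi\pi}, D_{\pi\pi})$ with the same graph $G$ and the pinning $\phi' := \pi^{-1}\circ \phi$. The map $\sigma \mapsto \pi^{-1}\circ\sigma$ is a bijection between configurations $\sigma:V(G)\to[m]$ extending $\phi$ and configurations $\tau:V(G)\to[m]$ extending $\phi'$. Writing $\sigma = \pi\circ\tau$, one computes
\[
A_{\sigma(u),\sigma(v)} = A_{\pi(\tau(u)),\pi(\tau(v))} = (A_{\pi\pi})_{\tau(u),\tau(v)},
\]
and similarly $D_{\sigma(v),\sigma(v)} = (D_{\pi\pi})_{\tau(v),\tau(v)}$ for every $v \in V(G)\setminus \df(\phi)$. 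Multiplying these equalities over all edges of $G$ and all unpinned vertices shows that the weight of $\sigma$ under $(A,D)$ equals the weight of $\tau = \pi^{-1}\circ\sigma$ under $(A_{\pi\pi},D_{\pi\pi})$. Summing over all extensions yields $Z_{A,D}(\phi,G) = Z_{A_{\pi\pi},D_{\pi\pi}}(\phi',G)$, which gives $\evalk(A,D) \Tle \evalk(A_{\pi\pi},D_{\pi\pi})$.

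For the reverse direction, I would apply the same construction to the permutation $\pi^{-1}$ and the matrices $(A_{\pi\pi}, D_{\pi\pi})$. Since $(A_{\pi\pi})_{\pi^{-1}\pi^{-1}} = A$ and analogously for $D$, this yields $\evalk(A_{\pi\pi}, D_{\pi\pi}) \Tle \evalk(A,D)$. The construction of $\phi'$ from $\phi$ (and vice versa) takes polynomial time, so both reductions are polynomial time Turing reductions. There is no genuine obstacle here; the only point requiring a moment's care is to transport the pinning correctly via $\pi^{-1}$ rather than $\pi$ so that the induced bijection on configurations indeed maps $\phi$-extensions to $\phi'$-extensions.
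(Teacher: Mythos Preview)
Your proof is correct and is exactly the straightforward argument the paper has in mind; the paper does not spell out a proof at all, merely remarking that it is straightforward. The only small thing worth noting explicitly is that $\df(\phi') = \df(\phi)$, so the set of unpinned vertices is the same in both instances, which you use implicitly when matching the vertex-weight products.
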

We will make extensive use of interpolation; the following simple
lemma is one instance.

\begin{lemma} \label{lem:interpolate}
For some fixed $\theta \in \Ralg$ let $x_1, \ldots, x_n \in \Q(\theta)$ be pairwise different and non-negative reals. Let $b_1,\ldots,b_n \in \Q(\theta)$ be arbitrary such that
$$
 b_j = \sum_{i=1}^n c_i x^j_i \text{ for all } j \in [n].
$$
Then the coefficients $c_1,\ldots, c_n$ are uniquely determined and can be computed in polynomial time.
\end{lemma}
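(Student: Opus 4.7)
The plan is to recognize the given system as a Vandermonde-type linear system and apply standard linear algebra. First, I would rewrite the equations $b_j = \sum_{i=1}^n c_i x_i^j$ for $j \in [n]$ as a matrix equation $V \vec c = \vec b$, where $V$ is the $n \times n$ matrix over $\Q(\theta)$ with entries $V_{j,i} = x_i^j$, and $\vec c, \vec b$ are the column vectors of the $c_i$ and $b_j$ respectively.

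Next I would factor $V = W \cdot D$, where $D = \diag(x_1, \ldots, x_n)$ and $W$ is the classical Vandermonde matrix with $W_{j,i} = x_i^{j-1}$. The well-known Vandermonde determinant formula gives $\det W = \prod_{1 \le i < k \le n}(x_k - x_i)$, which is nonzero because the $x_i$ are pairwise distinct. Since the $x_i$ are pairwise distinct non-negative reals, at most one of them equals zero; the degenerate case $x_i = 0$ for some $i$ would mean $c_i$ drops out of every equation (as $j \ge 1$), so for the uniqueness claim to hold we may assume all $x_i$ are strictly positive. Then $\det D = \prod_i x_i \ne 0$, so $V$ is invertible and $\vec c = V^{-1} \vec b$ is the unique solution.

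For the polynomial-time claim, I would invoke the fact, discussed at the start of Section~\ref{sec:it}, that arithmetic in the fixed algebraic extension field $\Q(\theta)$ can be performed in polynomial time in the bit-length of the operands. Since the entries of $V$ and $\vec b$ lie in $\Q(\theta)$ and have bit-lengths polynomial in the input size, Gaussian elimination recovers $\vec c$ in $O(n^3)$ field operations, and hence in polynomial total time. Alternatively, one could write down the entries of $V^{-1}$ in closed form via a Lagrange-style interpolation formula and then compute $V^{-1} \vec b$ directly.

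There really is no hard step here: the only thing to watch is that intermediate entries in Gaussian elimination do not blow up in bit-length, which is handled by the standing representation of algebraic numbers in the fixed extension field $\Q(\theta)$ and, if desired, by using a Bareiss-style fraction-free elimination. Beyond that, the result is a routine Vandermonde interpolation, so I expect no genuine obstacle.
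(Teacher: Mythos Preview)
Your approach is correct and is the standard Vandermonde argument; the paper in fact states this lemma without proof, treating it as routine, so there is nothing to compare against. Your observation about the edge case $x_i=0$ is valid---as stated, the lemma does not uniquely determine the corresponding $c_i$---but in every application in the paper either all $x_i$ are positive or the coefficient attached to $x_i=0$ is irrelevant to what is being computed, so this is a harmless imprecision rather than a gap in your argument.
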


\subsection{The Equivalence of $\evalk(A)$ and $\cntk(A)$}
\newcommand{\svcfg}{\textup{CFG}}
Let $A \in \Ring^{m \times m}$ be a matrix, $G=(V,E)$ a graph and $\vpin$ a pinning.
We define a set of \sdefis{potential weights}{potential weights}{$\wset_A(G)$}
\begin{equation}\label{eq:define_wset}
 \wset_{A}(G) := \left\lbrace \prod_{i,j \in [m]} A^{m_{ij}}_{ij} \,\mid \, \sum_{i,j \in [m]} m_{ij} = \vert E \vert,\;  \text{ and } m_{ij} \ge 0, \text{ for all } i,j \in [m]  \right\rbrace.
\end{equation}
For every $w \in \Ring$ define the value
$$
N_{A}(G,\vpin,w) := \left\vert \left\lbrace\vcfg: V \rightarrow [m]\;\left|\; \vert \, \vpin \subseteq \vcfg,\; w = \prod_{uv \in E} A_{\vcfg(u)\vcfg(v)}\right.\right\rbrace \right\vert.
$$ 
By $\cntk(A)$ we denote the problem of computing $N_{A}(G,\vpin,w)$ for given $G, \vpin$ and $w \in \Ring$. In analogy to the evaluation problems we write $\cnt(A)$ for the subproblem restricted to instances with trivial pinnings. It turns out that these problems are computationally equivalent to the evaluation problems of partition functions.

\begin{lemma}\label{lem:cntk_eq_evalk}
For every matrix $A \in \Ring^{m\times m}$ we have 
$$\evalk(A) \Tequiv \cntk(A) \quad \text{ and } \quad \eval(A) \Tequiv \cnt(A).
$$ 
\end{lemma}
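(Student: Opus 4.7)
The plan is to prove $\evalk(A) \Tequiv \cntk(A)$ by establishing each direction separately; the equivalence $\eval(A) \Tequiv \cnt(A)$ then falls out by restricting to trivial pinnings. Both reductions rest on the elementary identity
$$
Z_A(\vpin, G) = \sum_{w \in \wset_A(G)} w \cdot N_A(G, \vpin, w),
$$
obtained by grouping configurations according to their edge-weight product, together with the observation that $\wset_A(G)$ has size at most $\binom{|E|+m^2-1}{m^2-1}$. Since $A$ is fixed and $m$ is therefore a constant, this bound is $O(|E|^{m^2-1})$, so $\wset_A(G)$ can be enumerated in polynomial time.

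For the easier direction $\evalk(A) \Tle \cntk(A)$, I would simply enumerate $\wset_A(G)$, make one oracle call per weight $w$ to obtain $N_A(G, \vpin, w)$, and assemble $Z_A(\vpin, G)$ via the identity above.

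For the harder direction $\cntk(A) \Tle \evalk(A)$, my plan is to interpolate using the $p$-thickening reduction from Lemma~\ref{lem:basic_reductions}. Writing $w_1,\ldots,w_\ell$ for the pairwise distinct elements of $\wset_A(G)$ and $n_i := N_A(G, \vpin, w_i)$, the key point is that $t$-thickening raises each configuration's edge-weight to the $t$-th power, so
$$
Z_A(\vpin, G^{(t)}) = \sum_{i=1}^{\ell} n_i \cdot w_i^t \qquad (t = 1, 2, \ldots, \ell).
$$
Querying the $\evalk(A)$ oracle on each of the polynomially-sized graphs $G^{(t)}$ yields a linear system in $n_1,\ldots,n_\ell$ whose coefficient matrix is Vandermonde in the pairwise distinct quantities $w_i$, hence invertible over the fraction field of $\Ring$. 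Solving it in polynomial time recovers all of the counts $n_i$.

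The main subtlety I expect is the interpolation step in the polynomial-ring case: Lemma~\ref{lem:interpolate} as stated handles only pairwise distinct non-negative reals in some $\Qu(\theta)$, but for $\Ring \in \{\Qu[X], \Int[X]\}$ the weights $w_i$ are polynomials rather than reals. The natural workaround is to carry out Vandermonde inversion over the fraction field $\Qu(X)$; this is valid because $\Ring$ is an integral domain, and it remains polynomial time because each weight polynomial has degree $O(|E|)$ and there are only polynomially many of them. As a backup, one could specialize $X$ to a sufficiently large positive integer so that the distinct polynomial weights become distinct positive reals, reducing to Lemma~\ref{lem:interpolate} verbatim.
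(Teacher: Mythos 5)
Your plan is correct and follows the paper's strategy essentially verbatim: the easy direction via the identity $Z_A(\vpin,G)=\sum_{w\in\wset_A(G)}w\cdot N_A(G,\vpin,w)$ together with the polynomial bound on $|\wset_A(G)|$, and the hard direction via $t$-thickening and inversion of the resulting Vandermonde system. The only point of divergence is the polynomial-ring case: you propose to invert the system directly over the fraction field $\Qu(X)$, whereas the paper bounds the degree of the Vandermonde determinant $\det(X)$, observes that some integer $a$ of polynomially bounded size has $\det(a)\neq 0$, and then specializes $X\mapsto a$ to land back in the numerical setting of Lemma~\ref{lem:interpolate} (this is exactly your stated backup). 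Both are fine; yours trades locating a good specialization point for polynomial-time linear algebra over $\Qu[X]$, which is unproblematic given the degree bounds. One small correction to your backup phrasing: taking $X$ ``sufficiently large'' does not, in general, make the specialized weights \emph{positive} (the weight polynomials need not have positive leading coefficients, since $A$ may have negative entries); what is actually needed, and what the paper arranges, is merely that $\det(a)\neq 0$, i.e., that the specialized weights be pairwise distinct and nonzero, and that $a$ have polynomial bit-length.
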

\begin{proof}
Let $G = (V,E)$ be a graph and $\vpin$ a pinning. We have
\begin{equation*}
Z_{A}(\vpin, G) = \sum_{\vpin \subseteq \vcfg : V \rightarrow [m]} \prod_{uv \in E} A_{\vcfg(u)\vcfg(v)} = \sum_{w \in \wset_{A}(G)} w \cdot N_{A}(G,\vpin,w). 
\end{equation*}
As the cardinality of $\wset_{A}(G)$ is polynomial in the size of $G$ this proves the reducibilities
$$
\evalk(A) \Tle \cntk(A)  \quad \text{ and } \quad \eval(A) \Tle \cnt(A).
$$
For the backward direction let $G^{(t)}$ denote the graph obtained
from $G$ by replacing each edge by $t$ parallel edges. We have
$$
Z_{A}(\vpin, G^{(t)}) =  \sum_{\vpin \subseteq \vcfg: V \rightarrow [m]}\left(\prod_{uv \in E} A_{\vcfg(u)\vcfg(v)}\right)^t \\
 = \sum_{w \in \wset_{A}(G)} w^t \cdot N_{A}(G,\vpin ,w).
$$
Using an $\evalk(A)$ oracle, we can evaluate this for $t = 1, \ldots, |\wset_{A}(G)|$. 
Therefore, if $\Ring$ is one of $\Ralg,\Q, \Int$, the values $N_{A}(G,\vpin, w)$ can be recovered in polynomial time by Lemma~\ref{lem:interpolate}.

If $\Ring$ is one of $\Int[X]$ or $\Qu[X]$, then let $f_t(X)$ denote $Z_{A}(\vpin, G^{(t)})$. We obtain the following system of equations, for $t = 1, \ldots, |\wset_{A}(G)|$,
\begin{equation}\label{eq:0105092143}
f_t(X) = \sum_{w \in \wset_{A}(G)} w^t(X) \cdot N_{A}(G,\vpin ,w).
\end{equation}
Let $\det(X)$ denote the determinant of this system of equations. W.l.o.g., all the $w(X) \in \wset_A(G)$ are non-zero polynomials, $\det(X)$, as it is a Vandermonde determinant in these $w(X)$, is itself a non-zero polynomial of the form
$$
\det(X) = \prod_{w \in \wset_A(G)} w(X) \cdot \prod_{w \neq w' \in \wset_A(G)} (w(X) - w'(X)).
$$
Let $\delta = 1 + \max\{\deg w(X) \mid w \in \wset_A(G) \}$ and observe that each $w \in \wset_A(G)$ has at most $\delta-1$ roots. Further, each of the terms $w(X) - w'(X)$ of $\det(X)$ has degree at most $\delta - 1$ and thus the degree of $\det(X)$ is strictly smaller than $\delta' := \binom{\vert \wset_A(G) \vert + 1}{2} \cdot \delta$. Hence there is an integer $a \le \delta'$ such that $\det(a)$ is non-zero.
By equation \eqref{eq:0105092143} we obtain an invertible system of equations
\begin{equation}\label{eq:0105092144}
f_t(a) = \sum_{w \in \m W_{A}(G)} w^t(a) \cdot N_{A}(G,\vpin ,w).
\end{equation}
The coefficients $N_{A}(G,\vpin ,w)$ can be now obtained in polynomial time by Lemma~\ref{lem:interpolate}.
This finishes the proof of $\cntk(A) \Tequiv \evalk(A)$.
The proof for $\cnt(A) \Tequiv \eval(A)$ also follows by the argument just presented, as the given pinnings remain unaffected.
\end{proof}

\subsection{Dealing with Vertex Weights}

\begin{lemma}[Theorem 3.2 in \cite{dyegre00}]\label{lem:dg_omit_vertexweights}
Let $A \in \Ralg^{m \times m}$ be a symmetric matrix with non-negative entries such that every pair of rows in $A$ is linearly independent. Let $D \in \Ralg^{m \times m}$ be a diagonal matrix of positive vertex weights. Then $$\evalk(A) \Tle \evalk(A,D) \text{ and } \eval(A) \Tle \eval(A,D).$$ 
\end{lemma}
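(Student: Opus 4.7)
The plan is to reduce computing $Z_A(\phi,G)$ to a polynomial interpolation problem, where the evaluation points are produced by attaching pendant gadgets to the input graph. The role of the hypothesis (pairwise linear independence of rows of $A$) is to guarantee that enough evaluation points are available to invert the resulting linear system.

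For each $\vec k=(k_1,\dots,k_m)\in\Nat^m$, I would construct a graph $G[\vec k]$ from $G$ by attaching at every non-pinned vertex $v\in V(G)\setminus\df(\phi)$ exactly $k_j$ fresh pendant vertices joined to $v$ by a single edge and pinned to colour $j$, for every $j\in[m]$; let $\phi[\vec k]$ denote the resulting extended pinning. Summing out the pendants and grouping by the colour profile $\vec n=(n_1,\dots,n_m)$ of the non-pinned vertices of $V(G)$, a direct calculation yields
\[
Z_{A,D}(\phi[\vec k], G[\vec k]) \;=\; \sum_{\vec n} T(\vec n) \prod_{i=1}^m \Bigl(D_{ii}\prod_{j=1}^m A_{ij}^{k_j}\Bigr)^{n_i},
\]
where $T(\vec n)=\sum_{\sigma\supseteq\phi,\,\mathrm{profile}(\sigma)=\vec n}\prod_{uv\in E(G)}A_{\sigma(u),\sigma(v)}$. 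Since $Z_A(\phi,G)=\sum_{\vec n}T(\vec n)$, it suffices to recover each $T(\vec n)$ from oracle queries and sum.

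View the identity above as evaluating a fixed homogeneous polynomial of degree $N:=|V(G)\setminus\df(\phi)|$ in variables $y_1,\dots,y_m$ at the point $\vec y(\vec k)=(D_{ii}\prod_j A_{ij}^{k_j})_i$, where the coefficient of the monomial $\prod_i y_i^{n_i}$ is precisely $T(\vec n)$. By querying the oracle on polynomially many carefully chosen $\vec k$'s we obtain polynomially many evaluations of this polynomial; a generalized Vandermonde argument in the spirit of Lemma~\ref{lem:interpolate} then recovers every $T(\vec n)$ in polynomial time, provided the chosen evaluation points separate the monomials $\prod_i y_i^{n_i}$ with $\sum_i n_i=N$.

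The main obstacle is precisely this invertibility: we must ensure that as $\vec k$ varies, the tuples $\vec y(\vec k)$ produce distinct values of $\prod_i y_i^{n_i}$ for distinct profiles $\vec n$ of the same total size $N$. This is exactly where the hypothesis that every pair of rows of $A$ is linearly independent is used: for any $i\neq i'$ the ratio $A_{ij}/A_{i'j}$ is non-constant in $j$, and one argues that the multiplicative subgroup of $\Ralg_{>0}$-tuples generated by the column vectors $(A_{1j},\dots,A_{mj})$ together with $(D_{ii})_i$ is rich enough to separate the relevant profiles. If the naive pendant construction above is insufficient in pathological cases, one enlarges the gadget family (e.g.\ by adding $p$-thickenings of the pendant edges as in Lemma~\ref{lem:basic_reductions}, giving effective weights of the form $D_{ii}\prod_{j,p}A_{ij}^{p\,k_{j,p}}$), which yields additional evaluation points. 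Once the $T(\vec n)$ are in hand, summing recovers $Z_A(\phi,G)$; restricting the construction to the trivial pinning gives the reduction $\eval(A)\Tle\eval(A,D)$.
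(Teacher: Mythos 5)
Your construction takes a genuinely different route from the paper's: the paper replaces every vertex of $G$ by a cycle of copies joined by gadgets $T_{p,r}$ whose effective edge weight is a power of the non-singular matrix $C=(ADA)^{(p)}$ supplied by Lemma~\ref{lem:make_non-sing}, and then applies the eigenvalue interpolation of Lemma~\ref{lem:DG__eig_interpolate} to extract the $r=0$ case, which forces all copies of a vertex to share a spin and cancels the vertex weights. Your approach instead attaches pinned pendant vertices and interpolates over colour profiles. It has a genuine gap, and the gap is exactly at the point you flag as the ``main obstacle''.

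The problem is that $A$ is only required to be non-negative, so it may contain zero entries, and then the pendant evaluation points $y_i(\vec k)=D_{ii}\prod_j A_{ij}^{k_j}$ are far too degenerate. Take
$A=\left(\begin{smallmatrix}0&a\\ a&a\end{smallmatrix}\right)$ with $a>0$ and $D=\diag(d_1,d_2)$ with $d_1\ne d_2$: the two rows of $A$ are linearly independent, so the hypothesis holds. A pendant pinned to colour~$1$ simply annihilates colour~$1$ at the host vertex (since $A_{11}=0$), while a pendant pinned to colour~$2$ contributes the same factor $a$ to both colours and thus distinguishes nothing. As a result every oracle query you can form factors, up to known scalars, through just the two numbers $Z_{A,D}(\phi,G)=\sum_{\vec n}T(\vec n)\,d_1^{n_1}d_2^{n_2}$ and $T(0,N)$, and for $N\ge 2$ this leaves the target $\sum_{\vec n}T(\vec n)=Z_A(\phi,G)$ underdetermined. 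The fix you suggest, $p$-thickening the pendant edges, does not help: $A^{(p)}$ has the same zero/non-zero pattern as $A$, so the counterexample is unchanged. What is actually needed is a length-two path gadget (with thickening) attached to the pinned vertex, whose effective weight is $(ADA)^{(p)}_{ij}$ rather than $A_{ij}$; this matrix is strictly positive with pairwise independent rows (Lemma~\ref{lem:a_to_ada}), and it is precisely the object around which the paper's proof is built. There is also a secondary issue even for strictly positive $A$: distinct profiles $\vec n$ can be confounded by the substitution (e.g.\ $A_{ij}=2^{ij}$), so one cannot literally ``recover every $T(\vec n)$''---only sums over confusion classes---and that requires a further argument that you do not give; but the zero-entry obstacle is already fatal for the pendant construction as stated.
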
 
After some preparation, we will prove this lemma in Section~\ref{sec:proof_omit_vv}.
\subsubsection{Some Technical Tools}
The following is a lemma from \cite{dyegre00} (Lemma~3.4).
\begin{lemma}\label{lem:DG__eig_interpolate}
Let $A\in \Ralg^{m \times m}$ be symmetric and non-singular, $G =(V,E)$ a graph, $\vpin$ a pinning, and $F \subseteq E$.
If we know the values
\begin{equation}\label{eq:3012091445}
f_r(G) = \sum_{\vpin \subseteq \vcfg: V \rightarrow [m]} c_A(\sigma) \prod_{uv \in F} A^{r}_{\vcfg(u) \vcfg(v)}
\end{equation}
for all $r \in [(|F| + 1)^{m^2}]$, where $c_A$ is a function depending on $A$ but not on $r$. Then we can evaluate
$$
\sum_{\vpin \subseteq \vcfg: V \rightarrow [m]} c_A(\sigma) \prod_{uv \in F} (I_m)_{\vcfg(u) \vcfg(v)}
$$
in polynomial time.
\end{lemma}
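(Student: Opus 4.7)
The plan is to diagonalize $A$ and reduce the problem to a Vandermonde-type interpolation in the products of eigenvalues of $A$. Since $A$ is real symmetric, write $A = QDQ^T$ where $Q$ is orthogonal and $D=\mathrm{diag}(\lambda_1,\ldots,\lambda_m)$; non-singularity of $A$ gives $\lambda_k\neq 0$ for all $k$. Then
\[
(A^r)_{ij}=\sum_{k=1}^m \lambda_k^r\, Q_{ik}Q_{jk},\qquad (I_m)_{ij}=\sum_{k=1}^m Q_{ik}Q_{jk}.
\]
Substituting the first identity into $f_r(G)$, expanding the product over $F$, and interchanging sums yields
\[
f_r(G)=\sum_{\kappa:F\to[m]} \Bigl(\prod_{uv\in F}\lambda_{\kappa(uv)}\Bigr)^{\!r}\cdot h(\kappa),\qquad h(\kappa):=\sum_{\vpin\subseteq\vcfg} c_A(\vcfg)\prod_{uv\in F}Q_{\vcfg(u),\kappa(uv)}\,Q_{\vcfg(v),\kappa(uv)},
\]
where $h(\kappa)$ does not depend on $r$. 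The factor $\prod_{uv}\lambda_{\kappa(uv)}$ depends on $\kappa$ only through the multiplicity vector $\vec n(\kappa)=(|\kappa^{-1}(k)|)_{k=1}^m$, so I group accordingly: setting $W_{\vec n}:=\prod_k\lambda_k^{n_k}$ and $S_{\vec n}:=\sum_{\vec n(\kappa)=\vec n}h(\kappa)$, we obtain $f_r(G)=\sum_{\vec n} S_{\vec n}\,W_{\vec n}^{\,r}$. Running exactly the same expansion on the target expression (with the identity $(I_m)_{ij}=\sum_k Q_{ik}Q_{jk}$ in place of $(A^r)_{ij}$) shows that the quantity we want to compute is precisely $T=\sum_{\vec n}S_{\vec n}$.

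Now collapse to distinct values. Let $W_1,\ldots,W_s$ be the distinct values among the $W_{\vec n}$'s, and put $\tilde S_j:=\sum_{\vec n:W_{\vec n}=W_j}S_{\vec n}$; then $f_r(G)=\sum_{j=1}^s \tilde S_j W_j^{\,r}$ and $T=\sum_{j=1}^s \tilde S_j$. The number of multi-indices $\vec n$ is $\binom{|F|+m-1}{m-1}$, which is well below $(|F|+1)^{m^2}$, so $s\le(|F|+1)^{m^2}$. Because the $\lambda_k$ are non-zero, every $W_j$ is non-zero, and the $W_j$ are pairwise distinct by construction; hence the coefficient matrix $(W_j^{\,r})_{r,j}$ is a non-singular scaled Vandermonde, and from the $(|F|+1)^{m^2}$ given values of $f_r(G)$ we can recover $\tilde S_1,\ldots,\tilde S_s$ in polynomial time by Lemma~\ref{lem:interpolate} (or, equivalently, by direct Vandermonde inversion). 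Summing gives $T$.

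The main subtlety I expect is that Lemma~\ref{lem:interpolate} as stated assumes non-negative $x_i$, whereas the $W_j$ can be negative when $A$ has negative eigenvalues; and the $\lambda_k$ may live in an algebraic extension of $\Q(\theta)$ rather than in $\Q(\theta)$ itself. Both issues are benign: Vandermonde invertibility requires only distinctness (and, after factoring out $\mathrm{diag}(W_j)$, non-zeroness), so the linear system is solvable over $\Ralg$ independently of sign; and since the matrix $A$ has fixed constant size, the eigendecomposition and the finite list of distinct $W_j$'s are precomputed once in a suitable extension field and do not affect polynomial-time bounds measured in $|V(G)|+|E(G)|$.
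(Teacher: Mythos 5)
Your proof follows essentially the same route as the paper: orthogonally diagonalize $A = QDQ^T$, substitute $(A^r)_{ij}=\sum_k Q_{ik}Q_{jk}\lambda_k^r$, collect terms into $f_r(G)=\sum_w c_w w^r$ over the polynomially many products of eigenvalues, recover the $c_w$ by Vandermonde interpolation, and output $f_0(G)=\sum_w c_w$. The only difference is that you explicitly flag (and correctly dismiss) the two subtleties the paper glosses over --- the eigenvalues and hence the $w$'s may be negative, and may live in a field extension of $\Q(\theta)$ --- neither of which affects Vandermonde invertibility or polynomial-time solvability since $m$ is fixed.
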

\begin{proof}
As $A$ is symmetric and non-singular, there is an orthogonal matrix $P$ such that $P^T A P =: D$ is a diagonal matrix with non-zero diagonal.
From $A = P D P^T$ we have $A^r = P D^{r} P^T$ and thus every entry of $A^r$ satisfies
$A^r_{ij} = (P D^{r} P^T)_{ij} = \sum_{\mu = 1}^m P_{\vcfg(u)\mu} P_{ \vcfg(v) \mu} (D_{\mu\mu})^r$.
Hence equation \eqref{eq:3012091445} can be rewritten as
\begin{eqnarray*}
f_r(G) 
       &=& \sum_{\vpin \subseteq \vcfg: V \rightarrow [m]} c_A(\sigma) \prod_{uv \in F} \sum_{\mu = 1}^m P_{\vcfg(u)\mu} P_{ \vcfg(v) \mu} (D_{\mu\mu})^r 
\end{eqnarray*}
Define the set
$$
\mathcal{W} = \left\{ \prod_{i=1}^m (D_{ii})^{\alpha_i} \mid 0 \le \alpha_i \text{ for all } i \in [m], \; \sum_{i=1}^m \alpha_i = |F|\right\}
$$
which can be constructed in polynomial time. We rewrite
$$
f_r(G) = \sum_{w \in \mathcal{W}} c_w w^r.
$$
for unknown coefficients $c_w$. By interpolation (cf. Lemma~\ref{lem:interpolate}), we can recover these coefficients in polynomial time and can thus calculate $f_0(G) = \sum_{w \in \mathcal{W}} c_w$. We have
\begin{eqnarray*}
f_0(G) &=& \sum_{\vpin \subseteq \vcfg: V \rightarrow [m]} c_A(\sigma) \prod_{uv \in F} \sum_{\mu = 1}^m P_{\vcfg(u)\mu} P_{ \vcfg(v) \mu} (D_{\mu\mu})^0 \\
       &=& \sum_{\vpin \subseteq \vcfg: V \rightarrow [m]} c_A(\sigma) \prod_{uv \in F} I_{\vcfg(u)\vcfg(v)}
\end{eqnarray*}
which finishes the proof.
\end{proof}
The following two lemmas are restatements of those in \cite{dyegre00} (see Lemma~3.6, 3.7 and Theorem~3.1).

\begin{lemma}\label{lem:a_to_ada}
Let $A \in \Real^{m\times m}$ be a symmetric matrix in which every pair of distinct rows is linearly independent. Let $D \in \Real^{m \times m}$ be a diagonal matrix of positive vertex weights. Then every pair of rows in $ADA$ is linearly independent.
Furthermore there is an $0 < \epsilon < 1$ such that for all $i\neq j$
$$
|(ADA)_{ij}| \le \epsilon \sqrt{(ADA)_{ii} (ADA)_{jj}}
$$
\end{lemma}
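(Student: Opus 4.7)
The plan is to exploit the factorization $ADA = M M^{T}$, where $M := A D^{1/2}$ and $D^{1/2}$ is the diagonal matrix with entries $\sqrt{D_{kk}} > 0$. Since $A$ is symmetric this yields $(ADA)_{ij} = \langle u_i,u_j\rangle$, where $u_i$ denotes the $i$-th row of $M$; explicitly, $(u_i)_k = A_{ik}\sqrt{D_{kk}}$. The key observation is that rescaling the coordinates of a vector by fixed positive numbers preserves all scalar proportionality relations, so $u_i$ and $u_j$ are linearly dependent if and only if the $i$-th and $j$-th rows of $A$ are. By hypothesis the rows of $A$ are pairwise linearly independent, hence so are the $u_i$. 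In particular, no $u_i$ is the zero vector, so $(ADA)_{ii} = \|u_i\|^{2} > 0$ for every $i$.

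Next I would invoke the Cauchy--Schwarz inequality. For $i\neq j$, the pairwise linear independence of $u_i,u_j$ yields the \emph{strict} form
$$
|(ADA)_{ij}| \;=\; |\langle u_i,u_j\rangle| \;<\; \|u_i\|\cdot\|u_j\| \;=\; \sqrt{(ADA)_{ii}(ADA)_{jj}}.
$$
Since $m$ is finite, there are only finitely many ordered pairs $i\neq j$, and each ratio $|(ADA)_{ij}|/\sqrt{(ADA)_{ii}(ADA)_{jj}}$ is a nonnegative real strictly less than $1$. Taking $\epsilon$ to be the maximum of these ratios produces the desired constant in $[0,1)$, which we may replace by $\max(\epsilon,1/2)$ if we also insist on $\epsilon>0$.

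Finally, the linear independence of pairs of rows of $B := ADA$ will fall out of the strict inequality just established. Suppose, for contradiction, that rows $i$ and $j$ of $B$ were linearly dependent for some $i\neq j$. Since $B_{ii},B_{jj}>0$ both rows are nonzero, so row $i$ equals $c\cdot(\text{row } j)$ for some $c\neq 0$. Reading off the $i$-th and $j$-th entries of both rows, and using $B^{T}=B$, gives $B_{ii}=cB_{ij}$ and $B_{ij}=cB_{jj}$, hence $B_{ij}^{2}=B_{ii}B_{jj}$, contradicting the strict Cauchy--Schwarz bound. I do not expect any real obstacle here; the only substantive step is recognizing the factorization $ADA=(AD^{1/2})(AD^{1/2})^{T}$ and observing that positive diagonal rescaling preserves pairwise linear (in)dependence.
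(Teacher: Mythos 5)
Your proof is correct and follows essentially the same route as the paper: factor $ADA = QQ^{T}$ with $Q = AD^{1/2}$, observe that positive diagonal rescaling preserves pairwise linear independence of rows, and apply the strict Cauchy--Schwarz inequality, from which both the $\epsilon$ bound (by finiteness) and the pairwise independence of rows of $ADA$ follow. The only difference is presentational --- the paper infers row independence from the nonzero determinant of the $2\times2$ principal submatrix, while you spell out the equivalent contradiction argument; both are the same observation.
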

\begin{proof}
Define $Q = AD^{(1/2)}$. We have $ADA = AD^{(1/2)} D^{(1/2)}A^T = QQ^T$. That is $(ADA)_{ij} = \scalp{Q\row i, Q\row j}$ every pair of rows in $Q$ is linearly independent as it is linearly independent in $A$. Hence, by the Cauchy-Schwarz inequality
$$
\scalp{Q\row i, Q\row j} < \sqrt{\scalp{Q\row i, Q\row i}\scalp{Q\row j, Q\row j}}
$$
which implies that the corresponding $2\times 2$ submatrix of $ADA$ defined by $i$ and $j$ has non-zero determinant.
The existence of $\epsilon$ follows.
\end{proof}

\begin{lemma}\label{lem:make_non-sing}
Let $A\in \Real^{m \times m}$ be a symmetric non-negative matrix in which every pair of distinct rows is linearly independent. Let $D\in \Real^{m \times m}$ be a diagonal matrix of positive vertex weights. Then there is a $p \in \Nat$ such that the matrix
$$
(ADA)^{(p)}
$$
is non-singular.
\end{lemma}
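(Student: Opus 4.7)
}

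The plan is to set $B := ADA$ and exploit the Cauchy--Schwarz-type bound on the off-diagonal entries of $B$ provided by Lemma~\ref{lem:a_to_ada}. Because every pair of distinct rows of $A$ is linearly independent, in particular no row of $A$ is the zero vector. Combined with the positivity of the diagonal of $D$ and the non-negativity of $A$, this yields
\[
B_{ii} \;=\; \sum_{k=1}^{m} D_{kk}\, A_{ik}^{\,2} \;>\; 0 \qquad \text{for every } i\in[m].
\]
So $B$ has strictly positive diagonal, and Lemma~\ref{lem:a_to_ada} supplies an $\epsilon\in(0,1)$ with $|B_{ij}|\le \epsilon\sqrt{B_{ii}B_{jj}}$ for all $i\neq j$.

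The next step is to normalize. For each $p\in\mathbb{N}$ let $E_p$ be the invertible diagonal matrix with $(E_p)_{ii}=B_{ii}^{p/2}$, and set
\[
\widetilde{C}_p \;:=\; E_p^{-1}\, B^{(p)}\, E_p^{-1}.
\]
Then $(\widetilde{C}_p)_{ii}=1$ and $(\widetilde{C}_p)_{ij}=\bigl(B_{ij}/\sqrt{B_{ii}B_{jj}}\bigr)^{p}$, so by Lemma~\ref{lem:a_to_ada} the off-diagonal entries of $\widetilde{C}_p$ are bounded in absolute value by $\epsilon^{p}$. Writing $\widetilde{C}_p = I_m + R_p$, the matrix $R_p$ has zero diagonal and satisfies $\|R_p\|_{\infty}\le (m-1)\epsilon^{p}$.

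Since $\epsilon<1$, choosing $p$ large enough that $(m-1)\epsilon^{p}<1$ makes $I_m+R_p$ strictly diagonally dominant (equivalently, its Neumann series converges), hence invertible. Therefore $B^{(p)} = E_p\,(I_m+R_p)\,E_p$ is invertible as well, which is the desired conclusion.

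The argument is essentially routine once Lemma~\ref{lem:a_to_ada} is in hand; the only mild subtlety is guaranteeing that the diagonal of $B$ is strictly positive so that the normalization by $E_p$ is well defined, and this is what the linear-independence hypothesis on the rows of $A$ buys us. If one preferred to avoid Neumann-series language and stay closer to the interpolation style of the rest of the paper, an equivalent route is to expand $\det(B^{(p)}) = \sum_{\pi\in S_m}\operatorname{sgn}(\pi)\prod_i B_{i,\pi(i)}^{\,p}$ and observe, via the same Cauchy--Schwarz bound, that every $\pi\ne\mathrm{id}$ contributes a term of magnitude at most $\epsilon^{2p}\prod_i B_{ii}^{p}$ (because any non-identity permutation moves at least two indices), so the identity term dominates once $p$ is sufficiently large.
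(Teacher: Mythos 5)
Your proof is correct, and your main argument follows a genuinely different route from the paper. The paper works directly with the Leibniz expansion of $\det\bigl((ADA)^{(p)}\bigr)$, uses the Cauchy--Schwarz bound from Lemma~\ref{lem:a_to_ada} to estimate each non-identity term by $\epsilon^{t(\pi)p}\prod_i (ADA)_{ii}^{p}$ (with $t(\pi)\ge 2$), and concludes that the identity term dominates for $p$ large. You instead conjugate by the diagonal matrix $E_p$ to reduce to a matrix with unit diagonal and off-diagonal entries bounded by $\epsilon^{p}$, then invoke strict diagonal dominance (or the Neumann series) to get invertibility; since $E_p$ is invertible this pulls back to $(ADA)^{(p)}$. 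Both arguments are short and elementary; yours replaces the explicit determinant computation with a normalization step and a standard sufficient condition for invertibility, which some readers may find cleaner, while the paper's computation is self-contained and does not require appealing to diagonal dominance. Your final remark correctly identifies the determinant-expansion alternative as essentially the paper's proof (and your stated bound $\epsilon^{2p}$ per non-identity term is in fact slightly sharper than the $\epsilon^p$ the paper uses, though both suffice). One small point worth making explicit in either route: the strict positivity of the diagonal $(ADA)_{ii}$, which you correctly derive from no row of $A$ being zero together with $D$ having positive diagonal, is needed both to define $E_p$ and for the Cauchy--Schwarz bound to have any content.
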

\begin{proof}
Let $A' = ADA$ and consider the determinant
$$
\det(A') = \sum_{\pi \in S_m} \textup{sgn}(\pi) \prod_{i=1}^m A'_{i\pi(i)}
$$
where $S_m$ is the set of permutations of $[m]$. For some $\pi \in S_m$ define $t(\pi) = |\{i \mid \pi(i) \neq i\}|$. Let $\epsilon$ be as in Lemma~\ref{lem:a_to_ada}. Then
\begin{equation}\label{eq:2606091840}
\prod_{i=1}^m| A'_{i\pi(i)}| \le \epsilon^{t(\pi)} \prod_{i=1}^m \sqrt{A'_{ii}}  \prod_{i=1}^m \sqrt{ A'_{\pi(i)\pi(i)}} = \epsilon^{t(\pi)} \prod_{i=1}^m A'_{ii}. 
\end{equation}
Let $\idfunc$ denote the trivial permutation. Then
\begin{eqnarray*}
\det((A')^{(p)}) &\ge& \left(\prod_{i=1}^m A'_{ii}\right)^p - \sum_{\pi \in S_m\setminus\{\idfunc\}} \left(\prod_{i=1}^m A'_{i\pi(i)}\right)^p.
\end{eqnarray*}
By equation \eqref{eq:2606091840}, we have
\begin{eqnarray*}
  m!\epsilon^p\left(\prod_{i=1}^m A'_{ii}\right)^p &\ge& \sum_{\pi \in S_m\setminus\{\idfunc\}} \left(\prod_{i=1}^m A'_{i\pi(i)}\right)^p
\end{eqnarray*}
and hence, as $0 < \epsilon < 1$, the matrix $(ADA)^{(p)}$ is non-singular for large enough $p$.
\end{proof}

\subsubsection{The proof of Lemma \ref{lem:dg_omit_vertexweights}.}\label{sec:proof_omit_vv}
By Lemma~\ref{lem:make_non-sing} there is a $p \in \Nat$ such that $(ADA)^{(p)}$ is non-singular. We will fix such a $p$ for the rest of the proof.

Let $G,\vpin$ be an instance of $\evalk(A)$ with $G = (V,E)$ a graph. Construct from $G$ a graph $G'$ as follows.
Define
\begin{eqnarray*}
V'  &=& \{ v_0,\ldots,v_{d-1} \mid v\in V, d= \vdeg_G(v)\}
\end{eqnarray*}
that is, for each vertex $v \in V$ we introduce $d(v)$ new vertices. 
Let $\vpin'$ be the pinning which for every $v \in \df(\vpin)$ satisfies $\vpin'(v_0) = \vpin(v)$.
Let $E'''$ be the set which contains, for each $v\in V$ the cycle on $\{v_0,\ldots, v_{\vdeg_G(v)-1}\}$, i.e. $\{v_0v_1, \ldots, v_{\vdeg_G(v)-1}v_0\} \subseteq
E'''$. Let $E''$ be the set of edges, such that each edge from $E$ incident with $v$ is connected to exactly one of the $v_i$. Define $E' = E'' \cup E'''$ and denote by $G'_{p,r}$ the graph obtained from $G'$ by replacing each edge in $E'''$ with a distinct copy of the graph $T_{p,r}$ to be defined next. 

\begin{figure}
\begin{center}
\begin{minipage}{0.75\textwidth}
\input{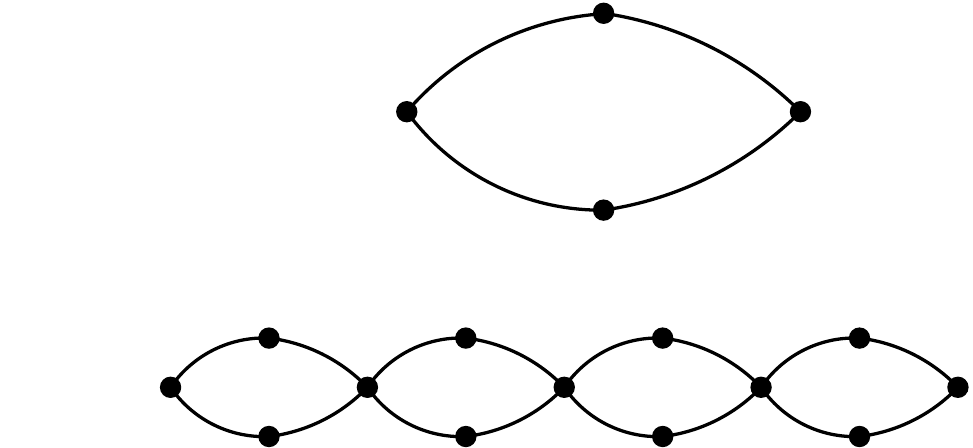_t}
\end{minipage}
\end{center}
\caption{The graphs $T_2$ and $T_{2,4}$.
}
\label{fig:bounded_deg}
\end{figure}
To conveniently define the graph $T_{p,r}$ we will first describe another graph $T_p$. This graph consists of two distinguished vertices $a$ and $b$ connected by $p$ many length $2$ paths from $a$ to $b$. Then $T_{p,r}$ is the series composition of $r$ many copies of $T_p$. The construction is illustrated in Figure~\ref{fig:bounded_deg}.
We call $a$ and $b$ the "start" and "end" vertex of $T_p$ and $T_{p,r}$ has start and end vertices induced by the start and end vertices of the series composition of $T_p$. 

For a graph $H$ with designated ``start'' and ``end'' vertex we denote by
$Z_{A,D}(i,j; H)$ the partition function $Z_{A,D}(\phi, H)$ for $\phi$ such that it pins the start vertex of $H$ to $i$ and the end vertex to $j$. Note that, by definition, the vertex weights of the pinned vertices do not occur in this term.

\begin{claim}
Let $C = (ADA)^{(p)}$, then with $X = D^{(1/2)}$, we have for all $i,j \in [m]$ and $r \in \Nat$,
\begin{equation}\label{eq:2303091755}
Z_{A,D}(i,j; T_{p,r}) = (X_{ii}X_{jj})^{-1}((XCX)^r)_{ij}.
\end{equation}
\end{claim}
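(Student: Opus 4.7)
The plan is a direct computation in three steps.

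First, I would compute $Z_{A,D}(i,j; T_p)$. The graph $T_p$ has two endpoints $a,b$ together with $p$ ``middle'' vertices $u_1,\ldots,u_p$, each connected only to $a$ and to $b$. Pinning $a\mapsto i$ and $b\mapsto j$, the middle vertices contribute independently, and each $u_k$ is unpinned so it carries a $D$-factor. Thus
\[
Z_{A,D}(i,j;T_p)=\prod_{k=1}^{p}\sum_{s\in[m]}A_{i,s}D_{s,s}A_{s,j}=(ADA)_{i,j}^{\,p}=C_{i,j},
\]
using the definition $C=(ADA)^{(p)}$ (and noting that $(ADA)_{i,j}\ge 0$, so no zero-entry issues arise in taking the pointwise power).

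Second, I would treat $T_{p,r}$ as the series composition of $r$ copies of $T_p$. Label the consecutive ``gluing'' vertices $v_0,v_1,\ldots,v_r$, with $v_0$ the start and $v_r$ the end of $T_{p,r}$. Pin $v_0\mapsto i$, $v_r\mapsto j$. The unpinned vertices fall into two groups: the internal gluing vertices $v_1,\ldots,v_{r-1}$ (each contributing a factor $D_{\sigma(v_k),\sigma(v_k)}$) and the middle vertices of each individual $T_p$-block (whose sum is already captured by step one). Summing out the middle vertices first, using the convention that endpoints of a $T_p$-block are treated as pinned so their $D$-factors are \emph{not} counted inside the block, yields
\[
Z_{A,D}(i,j;T_{p,r})=\sum_{\sigma_1,\ldots,\sigma_{r-1}\in[m]}\;C_{i,\sigma_1}D_{\sigma_1,\sigma_1}C_{\sigma_1,\sigma_2}D_{\sigma_2,\sigma_2}\cdots D_{\sigma_{r-1},\sigma_{r-1}}C_{\sigma_{r-1},j}=\bigl((CD)^{r-1}C\bigr)_{i,j}.
\]

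Third, I would convert this into the symmetric form $(XCX)^r$. Since $X$ is diagonal with $X^2=D$, each adjacent pair $X\cdot X$ inside the expansion $(XCX)(XCX)\cdots(XCX)$ collapses to $D$, leaving one $X$ on the far left and one on the far right:
\[
(XCX)^r=X\,(CD)^{r-1}C\,X,
\]
so $\bigl((XCX)^r\bigr)_{i,j}=X_{ii}\bigl((CD)^{r-1}C\bigr)_{i,j}X_{jj}$, which upon multiplying by $(X_{ii}X_{jj})^{-1}$ gives exactly~\eqref{eq:2303091755}.

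The only subtle point is the bookkeeping of vertex weights: one must remember that, by definition of $Z_{A,D}(i,j;H)$, the $D$-factors at the pinned endpoints are excluded, so when two $T_p$-blocks are glued at an interior vertex $v_k$ that vertex contributes exactly one $D$-factor in the overall product, not zero and not two. Once this convention is respected, the computation above is routine.
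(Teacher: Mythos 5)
Your proposal is correct and follows essentially the same route as the paper: first compute $Z_{A,D}(i,j;T_p)=C_{ij}$ by summing out the $p$ middle vertices, then chain $r$ copies together to get $\bigl((CD)^{r-1}C\bigr)_{ij}$, and finally insert $X$-factors using $X^2=D$. The only cosmetic difference is that you carry out the last step as a matrix identity $(XCX)^r = X(CD)^{r-1}CX$, whereas the paper rewrites the same thing entry by entry inside the sum; the bookkeeping of $D$-factors at the gluing vertices, which you flag as the one subtle point, is handled identically.
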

\begin{clproof}
Straightforwardly,
\begin{eqnarray*}
Z_{A,D}(i,j; T_p) &=& \left(\sum_{k=1}^m A_{ik}A_{kj}D_{kk}\right)^p = (ADA)^{(p)}_{ij} = C_{ij}
\end{eqnarray*}
and therefore
\begin{eqnarray*}
Z_{A,D}(i,j; T_{p,r}) &=& \sum_{\substack{\vcfg:[r+1] \rightarrow [m] \\ \vcfg(1)=i,\,\vcfg(r+1)=j}}
\prod_{k=1}^r Z_{A,D}(\vcfg(k),\vcfg(k+1);T_p) \prod_{k=2}^r D_{\vcfg(k),\vcfg(k)} \\
&=& \sum_{\substack{\vcfg:[r+1] \rightarrow [m] \\ \vcfg(1)=i,\,\vcfg(r+1)=j}}
\prod_{k=1}^r C_{\vcfg(k),\vcfg(k+1)} \prod_{k=2}^r (X_{\vcfg(k),\vcfg(k)})^2\\
&=& (X_{ii}X_{jj})^{-1} \sum_{\substack{\vcfg:[r+1] \rightarrow [m] \\ \vcfg(1)=i,\,\vcfg(r+1)=j}} \prod_{k=1}^r X_{\vcfg(k),\vcfg(k)} C_{\vcfg(k),\vcfg(k+1)} X_{\vcfg(k+1),\vcfg(k+1)} 
\end{eqnarray*}
By inspection, the last line equals the right hand side of equation \eqref{eq:2303091755} --- as claimed.
\end{clproof}
Using the expression just obtained for $Z_{A,D}(i,j; T_{p,r})$, we will now rewrite $Z_{A,D}(\vpin',G'_{p,r})$. 
To do so, we will, for all $v \in V$, count the indices of the vertices $v_{0}, \ldots, v_{\vdeg_G(v) - 1}$ modulo $\vdeg_G(v)$. In particular $v_{\vdeg_G(v)} = v_0$. 
Define $\gamma= \prod_{v\in \df(\vpin')} D_{\vpin'(v),\vpin'(v)}$. We have for all $r \in \Nat$,
\begin{eqnarray*}
Z_{A,D}(\vpin',G'_{p,r}) = \gamma^{-1} \sum_{\vpin' \subseteq \vcfg:V' \rightarrow [m]} \prod_{uv \in E''} A_{\vcfg(u)\vcfg(v)} \prod_{v\in V'} D_{\vcfg(v),\vcfg(v)} \prod_{v\in V} \prod_{i=0}^{\vdeg_G(v) - 1} Z_{A,D}(\sigma(v_i),\sigma(v_{i+1});T_{p,r}).
\end{eqnarray*}
As the vertices in $V'$ are grouped according to the vertices in $V$ we have
$$
\prod_{v\in V'} D_{\vcfg(v),\vcfg(v)} = \prod_{v\in V}\prod_{i=0}^{\vdeg_G(v)-1} D_{\vcfg(v_i),\vcfg(v_i)}.
$$
Using equation \eqref{eq:2303091755}, we further see that for each $\vcfg : V \rightarrow [m]$,
the expression
$$
\prod_{v\in V'} D_{\vcfg(v),\vcfg(v)} \prod_{v\in V} \prod_{i=0}^{\vdeg_G(v) - 1} Z_{A,D}(\vcfg(v_i),\vcfg(v_{i+1});T_{p,r})
$$
turns into
\begin{eqnarray*}
\prod_{v\in V} \prod_{i=0}^{\vdeg_G(v) - 1} D_{\vcfg(v_i),\vcfg(v_i)} \dfrac{((XCX)^r)_{\vcfg(v_i),\vcfg(v_{i+1})}}{X_{\vcfg(v_i),\vcfg(v_i)}X_{\vcfg(v_{i+1}),\vcfg(v_{i+1})}}
&=&
\prod_{v\in V} \prod_{i=0}^{\vdeg_G(v) - 1} ((XCX)^r)_{\vcfg(v_i),\vcfg(v_{i+1})}
\end{eqnarray*}
Thus
\begin{eqnarray*}
Z_{A,D}(\vpin',G'_{p,r}) &=& \gamma^{-1}\sum_{\vpin'\subseteq\vcfg:V' \rightarrow [m]} \prod_{uv \in E''} A_{\vcfg(u)\vcfg(v)} \prod_{v\in V} \prod_{i=0}^{\vdeg_G(v) - 1} 
((XCX)^r)_{\sigma(v_{i})\sigma(v_{i+1})}\\
&=&  \gamma^{-1}\sum_{\vpin'\subseteq\vcfg:V' \rightarrow [m]} \prod_{uv \in E''} A_{\vcfg(u)\vcfg(v)} \prod_{uv\in E'''} ((XCX)^r)_{\sigma(u)\sigma(v)}
\end{eqnarray*}
Given the $\eval(A,D)$ oracle, we can, in polynomial time, evaluate this expression for every $r$ which is polynomial in the size of $G$. Therefore Lemma~\ref{lem:DG__eig_interpolate} implies that we can compute the value
\begin{eqnarray*}
Z &=&  \gamma^{-1}\sum_{\vpin'\subseteq\vcfg:V' \rightarrow [m]} \prod_{uv \in E''} A_{\vcfg(u)\vcfg(v)} \prod_{uv\in E'''} I_{\sigma(u)\sigma(v)} 
\end{eqnarray*}
in polynomial time.
The proof follows, if we can show that $\gamma \cdot Z = Z_{A}(\vpin,G)$. To see this, note that, for every configuration $\vcfg:V' \rightarrow [m]$ the corresponding weight $\prod_{uv \in E''} A_{\vcfg(u)\vcfg(v)} \prod_{uv\in E'''} I_{\sigma(u)\sigma(v)}$ in the above expression is zero unless the following holds:
For all $v \in V$ we have $\vcfg(v_0) = \ldots = \vcfg(v_{d-1})$ for $d = d_G(v)$.
for such a configuration, define a configuration $\vcfg' : V \rightarrow [m]$ such that $\vcfg(v) = \vcfg(v_0)$ for every $v \in V$.
It follows from the construction of $G'$ that then 
$$
\prod_{uv \in E''} A_{\vcfg(u),\vcfg(v)} = \prod_{uv \in E} A_{\vcfg'(u),\vcfg'(v)}.
$$
Since every configuration $\vcfg':V \rightarrow [m]$ arises this way, we have $\gamma \cdot Z = Z_{A}(\vpin,G)$.
This finishes the proof of $\evalk(A) \Tle \evalk(A,D)$. Since the proof is correct also for empty input pinnings $\vpin$, this also proves $\eval(A) \Tle \eval(A,D)$.

\section{The Pinning Lemma}\label{sec:pin}
In this section we shall give the proof of the Pinning Lemma~\ref{lem:pinning}. Before we do this, however, we will introduce a technical reduction which will also be used later.

\subsection{The Twin Reduction Lemma}\label{sec:twin_resolution}
For a symmetric $m\times m$ matrix $A$, we say that two rows $A \row i$ and $A \row j$ are \sdefi{twins}{twin}, if $A \row i = A \row j$. A matrix $A$ is \sdef{twin-free} if $A\row i \neq A\row j$ for all row indices $i \neq j$.
The concept of twins induces an equivalence relation on the rows of $A$. Let $I_1,\ldots ,I_k$ be the equivalence classes of this relation. The \sdef{twin resolvent} of $A$ is the $k \times k$ matrix $\twres{A}$, defined by
$$
\twres{A}_{i,j} = A_{\mu,\nu} \text{ for some } \mu \in I_{i} \text{ and } \nu \in I_{j}.
$$
We say that $\twres A$ is obtained from $A$ by \sdef{twin reduction}.
Further, we define the \sdef{twin resolution mapping} $\tau: [m] \rightarrow
[k]$ of $A$ in such a way that for all $\mu \in [m]$ we have $\mu \in I_{\tau(\mu)}$. That is, $\tau$ maps every $\mu \in [m]$ to the class $I_j$ it is contained in. Hence
\begin{equation}
 \twres A_{\tau(i),\tau(j)} = A_{i,j} \text{ for all } i,j \in [m].
\end{equation}
To use twin reductions in the context of partition functions we need to clarify the effect which twin reduction on a matrix $A$ induces on the corresponding diagonal matrix of vertex weights $D$. We will see that the diagonal $k \times k$ matrix $D^{\twres A}$ defined in the following captures this effect
 \begin{equation}
     D^{\twres A}_{i,i} = \sum_{\nu \in I_i} D_{\nu,\nu} \text{ for all } i \in [k].
 \end{equation}

\begin{lemma}[Twin Reduction Lemma]\label{lem:twin_red}
Let $A\in \Ralg^{m\times m}$ be non-negative and symmetric and $D \in \Ralg^{m\times m}$ a diagonal matrix of positive vertex weights. Let $\twres{A}$ be the twin resolvent of $A$. Then 
$$ \eval(\twres{A},D^{\twres{A}}) \Tequiv \eval(A,D) \;\text{ and }\; \evalk(\twres{A},D^{\twres{A}}) \Tequiv \evalk(A,D).$$
\end{lemma}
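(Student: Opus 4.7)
The whole lemma should reduce to a single clean identity:
\[
Z_{A,D}(\phi,G) \;=\; Z_{\twres{A},D^{\twres{A}}}(\tau\circ\phi,G)
\]
for every graph $G$ and every pinning $\phi\colon W\to[m]$, where $\tau\colon[m]\to[k]$ is the twin resolution mapping. Once this is established, the four claimed reductions are one-call reductions and carry no complexity at all.

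My plan is to prove the identity by grouping configurations according to twin classes. Expanding the left-hand side,
\[
Z_{A,D}(\phi,G) \;=\; \sum_{\substack{\sigma:V\to[m]\\ \sigma\supseteq\phi}}\prod_{uv\in E}A_{\sigma(u),\sigma(v)}\prod_{v\in V\setminus W}D_{\sigma(v),\sigma(v)},
\]
I would introduce $\sigma':=\tau\circ\sigma\colon V\to[k]$ and group by the value of $\sigma'$. Two observations make the sum collapse. First, since $A_{i,j}=\twres{A}_{\tau(i),\tau(j)}$, the edge-weight product depends only on $\sigma'$; pulling it out leaves $\prod_{uv\in E}\twres{A}_{\sigma'(u),\sigma'(v)}$. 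Second, the inner sum over lifts of $\sigma'$ to $\sigma$ consistent with the fixed pinning $\phi$ factorizes over the unpinned vertices, and for each $v\in V\setminus W$ contributes $\sum_{\mu\in I_{\sigma'(v)}}D_{\mu,\mu}=D^{\twres{A}}_{\sigma'(v),\sigma'(v)}$. (For $v\in W$ the spin is frozen to $\phi(v)$, so no $D$ factor appears and $\sigma'(v)$ is forced to be $\tau(\phi(v))$.) Assembling these pieces reproduces exactly $Z_{\twres{A},D^{\twres{A}}}(\tau\circ\phi,G)$.

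From this identity both directions of the pinned reduction are immediate. For $\evalk(A,D)\Tle\evalk(\twres{A},D^{\twres{A}})$, given $(G,\phi)$ simply feed $(G,\tau\circ\phi)$ to the oracle. For the converse $\evalk(\twres{A},D^{\twres{A}})\Tle\evalk(A,D)$, given $(G,\phi')$, choose any lift $\phi\colon W\to[m]$ with $\phi(v)\in I_{\phi'(v)}$ (the classes $I_j$ are non-empty, so a lift always exists, and the identity shows the result does not depend on the choice) and query the oracle on $(G,\phi)$. Specializing to $\phi=\emptyset$ handles the $\eval$ versions.

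There is really no obstacle here beyond careful bookkeeping with the pinning convention, in particular remembering that the factors $D_{\sigma(v),\sigma(v)}$ for $v\in\df(\phi)$ are excluded from the weight in both $Z_{A,D}(\phi,G)$ and $Z_{\twres{A},D^{\twres{A}}}(\tau\circ\phi,G)$, which is exactly what makes the vertex-weight factorization $\sum_{\mu\in I_{\sigma'(v)}}D_{\mu,\mu}=D^{\twres{A}}_{\sigma'(v),\sigma'(v)}$ apply uniformly to the remaining vertices. The positivity of $D$ and non-negativity of $A$ are not actually needed for the combinatorial identity; they matter only insofar as the statement is phrased in that setting.
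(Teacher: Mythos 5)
Your proposal is correct and takes essentially the same approach as the paper: both establish the identity $Z_{A,D}(\phi,G)=Z_{\twres{A},D^{\twres{A}}}(\tau\circ\phi,G)$ by grouping configurations according to their image under $\tau$, factoring the edge weights via $A_{i,j}=\twres{A}_{\tau(i),\tau(j)}$, and collapsing the lift-sum over unpinned vertices to $\prod_{v}D^{\twres{A}}_{\sigma'(v),\sigma'(v)}$. You are in fact slightly more explicit than the paper about the reverse reduction, noting that a lift of the pinning always exists and that the identity shows independence of the choice, where the paper simply remarks that the identity "witnesses the claimed reducibilities."
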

\begin{proof}
Let $\tau$ be the twin-resolution mapping of $A$. Let $G = (V,E)$ be a graph and $\vpin$ a pinning. Define $V' = V \setminus \df(\vpin)$. By the definition of $\tau$ we have
\begin{eqnarray*}
Z_{A,D}(\phi,G) &=& \sum_{\phi \subseteq \vcfg:V \rightarrow [m]} \prod_{uv \in E} A_{\vcfg(u),\vcfg(v)} \prod_{v \in V'} D_{\vcfg(v),\vcfg(v)} \\
&=& \sum_{\phi \subseteq \vcfg:V \rightarrow [m]} \prod_{uv \in E} \twres A_{\tau\circ\vcfg(u),\tau \circ \vcfg(v)} \prod_{v \in V'} D_{\vcfg(v),\vcfg(v)}.
\end{eqnarray*}
For all configurations $\vcfg: V \rightarrow [m]$ we have $\tau \circ \vcfg : V \rightarrow [k]$. Hence, we can partition the configurations $\vcfg$ into classes according to their images under concatenation with $\tau$ and obtain
\begin{eqnarray*}
Z_{A,D}(\vpin,G) &=& \sum_{\vcfg':V \rightarrow [k]}\sum_{\substack{\vpin \subseteq \vcfg:V \rightarrow [m] \\ \tau \circ \vcfg = \vcfg'}} \prod_{uv \in E} \twres A_{\vcfg'(u), \vcfg'(v)} \prod_{v \in V'} D_{\vcfg(v), \vcfg(v)} \\
           &=& \sum_{\tau\circ \vpin \subseteq \vcfg':V \rightarrow [k]} \prod_{uv \in E} \twres A_{\vcfg'(u), \vcfg'(v)} \cdot \Delta(\vcfg').
\end{eqnarray*}
Here, $\Delta(\vcfg')$ is defined by
$$
\Delta(\vcfg') = \sum_{\substack{\phi \subseteq\vcfg:V \rightarrow [m] \\ \tau \circ \vcfg = \vcfg'}} \prod_{v \in V'} D_{\vcfg(v), \vcfg(v)}.
$$
Fix some $\vcfg': V \rightarrow [k]$, we will argue that
\begin{equation}\label{eq:0205091539}
 \Delta(\vcfg') = \prod_{v \in V'} D^{\twres A}_{\vcfg'(v), \vcfg'(v)}
\end{equation}
For every configuration $\vcfg: V \rightarrow [m]$ we have $\tau
\circ \vcfg = \vcfg'$ if, and only if, ${\vcfg'}^{-1}(\{i\}) =
\vcfg^{-1}(I_i)$ for all $i \in [k]$. Define, for each $i \in [k]$, the set $V_i := {\vcfg'}^{-1}(\{i\})$ and the mapping $\vpin_i := \vpin\upharpoonright_{\df(\vpin) \cap V_i}$. Then
\begin{eqnarray*}
\Delta(\vcfg')
&=&\sum_{\substack{ \vpin \subseteq  \vcfg: V \rightarrow [m] \\ \forall \; i \in [k]:\;\vcfg(V_i) \subseteq I_i }} \prod_{v \in V'} D_{\vcfg(v),\vcfg(v)} 
\end{eqnarray*}
Define $V'_i := V_i \setminus \df(\vpin_i)$, then
\begin{eqnarray*}
\Delta(\vcfg')
&=& \prod_{i=1}^k \sum_{\phi_i \subseteq \vcfg_i:V_i \rightarrow I_i} \prod_{v \in V'_i}  D_{\vcfg_i(v), \vcfg_i(v)}\\
 &=&\prod_{i=1}^k \prod_{v \in V'_i} \sum_{\nu \in I_i}   D_{\nu, \nu} \\
&=&\prod_{v \in V'} D^{\twres A}_{\vcfg'(v), \vcfg'(v)}\\
\end{eqnarray*}
This proves equation \eqref{eq:0205091539}. Therefore,
\begin{equation*}
Z_{A,D}(\phi, G) = \sum_{\tau\circ \phi \subseteq \vcfg':V \rightarrow [k]} \prod_{uv \in E} \twres A_{\vcfg'(u), \vcfg'(v)} \prod_{v \in V'} D^{\twres A}_{\vcfg'(v), \vcfg'(v)}
= Z_{\twres A,D^{\twres A}}(\tau \circ \phi, G).
\end{equation*}
This witnesses the claimed reducibilities.
\end{proof}

\subsection{Proof of the Pinning Lemma}

We shall prove the Pinning Lemma \ref{lem:pinning} now.
The proof of this Lemma relies on a result of Lov\'asz \cite{lov06}. To state this result we need some
further preparation. Let $A,D$ be $m \times m$ matrices and $A',D'$ be $n \times n$ matrices such that 
$D$ and $D'$ are diagonal. The pairs $(A,D)$ and $(A',D')$ are \sdef{isomorphic}, if there is a 
bijection $\alpha: [m] \to [n]$ such that $A_{ij} = A'_{\alpha(i)\alpha(j)}$ for all $i,j \in [m]$ and
$D_{ii} = D'_{\alpha(i),\alpha(i)}$ for all $i \in [m]$. An \sdef{automorphism} is an isomorphism of 
$(A,D)$ with itself.

We will moreover need to consider pinnings a bit differently than before. Rather than defining a pinning 
$\vpin$ for some given graph $G$, it will be convenient in the following to fix pinnings and consider 
graphs which are compatible with these. To define this more formally, we fix $\vpin:[k] \rightarrow [m]$ 
to denote our pinning. A \sdef{$k$-labeled graph} $G=(V,E)$ is then a graph whose vertex set satisfies $V \supseteq [k]$.
In this way $\vpin$ is compatible with every $k$-labeled graph.

\begin{lemma}[Lemma 2.4 in \cite{lov06}]\label{lem:lov06}
Let $A \in \Real^{m \times m}$ be a non-negative symmetric and twin-free matrix and $D \in \Real^{m \times m}$ a diagonal matrix of positive vertex weights. Let $\phi,\psi$ be pinnings. If $Z_{A,D}(\phi,G) = Z_{A,D}(\psi,G)$ for all $k$-labeled graphs $G$, then there is an automorphism $\alpha$ of $(A,D)$ such that $\phi = \alpha(\psi)$.
\end{lemma}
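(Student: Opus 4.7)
The strategy is to extract, from the equality $Z_{A,D}(\phi,G)=Z_{A,D}(\psi,G)$ for all $k$-labeled graphs $G$, enough combinatorial data to construct the desired automorphism $\alpha$. The two tools I would rely on are \emph{gadget substitutions} (which turn the hypothesis into concrete identities on $A$ and $D$) and the twin-freeness of $A$ together with the positivity of $D$ (which together rule out accidental coincidences).

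First I would specialize to $k$-labeled graphs $G$ whose vertex set is exactly $[k]$, so that there are no free vertices and $D$ plays no role. The hypothesis becomes
\[
\prod_{ij\in E(G)}A_{\phi(i),\phi(j)}\;=\;\prod_{ij\in E(G)}A_{\psi(i),\psi(j)}
\]
for every possible edge set $E(G)$, which (by choosing $G$ so as to isolate one edge at a time) yields $A_{\phi(i),\phi(j)}=A_{\psi(i),\psi(j)}$ for all $i,j\in[k]$. Next I would attach \emph{gadgets}: for any graph $H$ with distinguished vertices $u_1,\ldots,u_r$ and labels $(i_1,\ldots,i_r)\in[k]^r$, let $G(H)$ be the $k$-labeled graph obtained from $[k]$ by gluing $u_s$ to label $i_s$. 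A direct expansion gives
\[
Z_{A,D}(\phi,G(H))=\sum_{\substack{\tau:V(H)\to[m]\\ \tau(u_s)=\phi(i_s)}}\prod_{xy\in E(H)}A_{\tau(x),\tau(y)}\prod_{v\notin\{u_1,\ldots,u_r\}}D_{\tau(v),\tau(v)},
\]
and the analogous expression for $\psi$. Equality for all such $H$ therefore means the $r$-pinned partition functions at $(\phi(i_1),\ldots,\phi(i_r))$ and at $(\psi(i_1),\ldots,\psi(i_r))$ coincide on every such $H$.

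The decisive step is to show that a shared "pinning-profile" forces the existence of an automorphism of $(A,D)$. Using $r=1$ gadgets alone, one should prove that for each $i$ the vertices $\phi(i)$ and $\psi(i)$ lie in the same orbit under $\mathrm{Aut}(A,D)$. I would establish this via an iterative colour refinement on $[m]$: start with the colouring $i\mapsto D_{i,i}$, and refine by the multiset $(A_{i,j},\mathrm{col}(j))_{j\in[m]}$. Twin-freeness of $A$ seeds the refinement (distinct rows cannot collapse onto the same stable colour trivially), and positivity of $D$ combined with a Vandermonde-style interpolation as in Lemma~\ref{lem:interpolate} lets one recover each stable colour from a polynomial number of gadget evaluations; a standard symmetry argument then identifies the stable classes with orbits of $\mathrm{Aut}(A,D)$.

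The main obstacle is the multi-label compatibility: different labels may a priori select different automorphisms, and one must assemble them into a single $\alpha$. This is where the multi-label gadgets ($r\geq 2$) are essential. Taking, for example, $H$ to be a path of length $t$ joining two distinguished vertices couples $\phi(i_1)$ and $\phi(i_2)$ through the entries of $(AD)^{t-1}A$, and more elaborate $H$ produce joint moments that pin down the image of $\phi$ as an ordered tuple up to the diagonal action of $\mathrm{Aut}(A,D)$. The per-label orbit choices must therefore be made consistently, and the resulting compatible selection of orbit representatives assembles into the desired global automorphism $\alpha$ satisfying $\alpha\circ\psi=\phi$.
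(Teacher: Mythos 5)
Note that the paper does not prove this lemma; it is cited verbatim from \cite{lov06} and invoked as a black box in the proof of Lemma~\ref{lem:pinning_twin_free}, so there is no in-paper proof to compare against. Your opening reductions are sound --- restricting to $G$ with $V(G)=[k]$ gives $A_{\phi(i),\phi(j)}=A_{\psi(i),\psi(j)}$, and the gadget-expansion identity for $Z_{A,D}(\phi,G(H))$ is correct --- but the decisive step has a genuine gap. You claim that colour refinement on $[m]$, seeded by $i\mapsto D_{i,i}$ and iterated via the multisets $\{(A_{i,j},\mathrm{col}(j))\}_j$, terminates on the orbit partition of $\mathrm{Aut}(A,D)$, and that a ``standard symmetry argument'' certifies this. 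That is false in general: the stable partition of one-dimensional Weisfeiler--Leman refinement is strictly coarser than the orbit partition. If $A$ is the adjacency matrix of a strongly regular graph with trivial automorphism group and $D=I$, refinement halts at once with all of $[m]$ in a single class, while the orbit partition is discrete. Twin-freeness of $A$ only forbids identical rows, which is far weaker than forcing a discrete stable colouring, and positivity of $D$ does not repair this. Your later appeal to ``joint moments'' from multi-label gadgets to pin down $\phi$ ``up to the diagonal $\mathrm{Aut}(A,D)$-action'' is, on inspection, a restatement of the lemma rather than a proof of it.

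The missing idea is to use the hypothesis in a way colour refinement cannot. Colour refinement captures only what tree-shaped gadgets deliver, whereas you are granted equality of $Z_{A,D}(\phi,\cdot)$ and $Z_{A,D}(\psi,\cdot)$ on \emph{all} $k$-labeled graphs. The argument of \cite{lov06} exploits this by constructing, for each $\chi\in[m]^k$, a quantum $k$-labeled graph $p_\chi$ --- obtained by Vandermonde-style interpolation (in the spirit of Lemma~\ref{lem:interpolate}; this is exactly where twin-freeness of $A$ together with $D>0$ supply the distinctness needed to invert the system) against the vectors $(A\row i, D_{ii})$ --- with the property that $Z_{A,D}(\phi,p_\chi)\neq 0$ precisely when $\chi$ lies in the $\mathrm{Aut}(A,D)$-orbit of $\phi$. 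Evaluating the hypothesized equality on $p_\phi$ then places $\psi$ in that orbit. Since this construction treats all $k$ coordinates jointly, the multi-label compatibility problem you flag is dissolved at the source rather than patched afterwards.
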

Utilizing this result, we can now prove the pinning result first for twin-free matrices.

\begin{lemma}\label{lem:pinning_twin_free}
Let $A\in \Ralg^{m\times m}$ be non-negative, symmetric, and twin-free and $D \in \Ralg^{m\times m}$ a diagonal matrix of positive vertex weights.  Then
$$
  \evalk(A,D) \Tequiv \eval(A,D).
$$
\end{lemma}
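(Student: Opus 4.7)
The direction $\eval(A,D) \Tle \evalk(A,D)$ is immediate, since $\eval(A,D)$ is the subproblem of $\evalk(A,D)$ restricted to trivial pinnings. For the converse, I will prove by induction on $k := |\df(\vpin)|$ that an instance $(\vpin, G)$ of $\evalk(A,D)$ is solvable in polynomial time using an $\eval(A,D)$ oracle. The base $k=0$ is the oracle itself. For the inductive step, fix $v \in \df(\vpin)$, set $i^* := \vpin(v)$ and $\vpin' := \vpin \setminus \{(v, i^*)\}$, and write $\phi_i := \vpin' \cup \{(v,i)\}$; the goal is to compute $Z_{A,D}(\phi_{i^*}, G)$.

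The workhorse is the following multiplicative identity. Regard $G$ as a $k$-labeled graph with labels $\df(\vpin)$. For any $k$-labeled gadget $H$, let $G_H$ be the graph obtained by identifying corresponding labels. Splitting configurations on $G_H$ according to the spin at $v$, and accounting for the fact that $v$ is unpinned in $G_H$ with respect to $\vpin'$ but pinned in $G$ and in $H$ with respect to $\phi_i$, one obtains
\[
Z_{A,D}(\vpin', G_H) \;=\; \sum_{i=1}^m D_{ii}\, Z_{A,D}(\phi_i, G)\, Z_{A,D}(\phi_i, H).
\]
By the induction hypothesis, the left-hand side is computable for every such $H$ in polynomial time, since $|\df(\vpin')| = k-1$.

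Next, group $[m]$ by the action of the stabilizer $\mathrm{Stab}(\vpin') := \{\alpha \in \mathrm{Aut}(A,D) : \alpha(\vpin'(l)) = \vpin'(l)\text{ for all }l \in \df(\vpin')\}$, yielding orbits $\Omega_1, \ldots, \Omega_r$. For $i,j \in \Omega_s$ an $\alpha \in \mathrm{Stab}(\vpin')$ with $\alpha(i)=j$ satisfies $\alpha \circ \phi_i = \phi_j$, hence by automorphism invariance $Z_{A,D}(\phi_i, X) = Z_{A,D}(\phi_j, X)$ for every $k$-labeled graph $X$, and also $D_{ii}=D_{jj}$. Denote these common values by $F_s(X)$ and $d_s$; the identity becomes
\[
Z_{A,D}(\vpin', G_H) \;=\; \sum_{s=1}^r d_s\,|\Omega_s|\, F_s(G)\, F_s(H).
\]
Crucially, Lemma~\ref{lem:lov06} (applied in the contrapositive) implies that $F_1, \ldots, F_r$ are \emph{pairwise distinct} functions on $k$-labeled graphs, because pinnings $\phi_i$ for $i$ in different $\Omega_s$ are not related by any automorphism of $(A,D)$.

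It remains to construct a single gadget $H^*$ at which the $r$ scalars $F_1(H^*), \ldots, F_r(H^*)$ are pairwise distinct. Given such $H^*$, letting $(H^*)^{(t)}$ denote the label-identified union of $t$ disjoint copies of $H^*$, the multiplicativity under label identification gives $F_s\big((H^*)^{(t)}\big) = F_s(H^*)^t$, so plugging $H = (H^*)^{(t)}$ for $t = 0, 1, \ldots, r-1$ produces a linear system in the unknowns $F_s(G)$ whose coefficient matrix is, up to the known nonzero scalars $d_s|\Omega_s|$, a Vandermonde matrix in the distinct values $F_s(H^*)$, hence invertible; solving it via Lemma~\ref{lem:interpolate} yields $F_{s^*}(G) = Z_{A,D}(\vpin, G)$, where $s^*$ is the orbit containing $i^*$. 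The main obstacle is the construction of $H^*$ itself: Lemma~\ref{lem:lov06} only supplies pairwise separating gadgets $H_{s,s'}$, one per orbit pair. I would fuse them using the law $F_s(H_1 \cdot H_2) = F_s(H_1)\,F_s(H_2)$ under label-identified disjoint union, iteratively composing copies of the $H_{s,s'}$ with suitably chosen multiplicities; each ``bad'' coincidence $F_s(H)=F_{s'}(H)$ is a nontrivial polynomial equation in the multiplicities whose non-triviality is witnessed by $H_{s,s'}$, and since we work over characteristic zero, multiplicities bounded polynomially in $m$ suffice to simultaneously avoid all such coincidences, giving the required $H^*$.
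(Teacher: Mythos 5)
Your high-level strategy --- unpin one vertex, group spins into orbits of the stabilizer of the remaining pinning, separate orbits via Lov\'asz's lemma, and solve a Vandermonde system --- is in the right spirit and close to the paper's. But there is a genuine gap in the complexity accounting. You induct on $k=|\df(\vpin)|$, and each inductive step reduces one instance with $k$ pinned vertices to $r\le m$ instances $(\vpin', G_{(H^*)^{(t)}})$ with $k-1$ pinned vertices, each solved recursively. A pinning may have domain as large as $|V(G)|$, so the recursion tree can have on the order of $m^{|\df(\vpin)|}$ leaves, and the number of $\eval(A,D)$ oracle calls is exponential whenever $|\df(\vpin)|=\omega(\log |V(G)|)$. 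The paper's proof avoids this with a preliminary \emph{collapsing} step that you omit: merge all vertices pinned to the same spin $i$ into a single vertex labeled $i$, obtaining a $k$-labeled multigraph $\hat G$ with $k\le m$ (a constant depending only on $A$) and with $Z_{A,D}(\vpin,G)=Z_{A,D}(\idfunc_{[k]},\hat G)$ --- this identity holds precisely because pinned vertices contribute no vertex weight. After collapsing, the paper treats all $\le m$ labeled vertices simultaneously by expanding the unpinned value $Z_{A,D}(G')=\sum_{\psi:[k]\to[m]}c_\psi\, Z_{A,D}(\psi,G')$ and making only direct $\eval(A,D)$ calls, with no recursion on the number of pinned vertices. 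You should insert the collapsing step before your induction; with it the depth and branching become constant and the argument is polynomial.

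A second, less fundamental issue is the construction of the universal separating gadget $H^*$, which, as you flag yourself, is the hard part. Lov\'asz's lemma only supplies pairwise separators $H_{s,s'}$, and your fusion argument is underspecified: for $H=\prod_{u<u'} H_{u,u'}^{n_{u,u'}}$ the bad coincidence $F_s(H)=F_{s'}(H)$ is a \emph{multiplicative} relation among real numbers, not a polynomial equation in the multiplicities $n_{u,u'}$, so the appeal to characteristic zero does not apply directly; moreover zeros among the values $F_s(H_{u,u'})$ need separate treatment. The paper sidesteps the need for a universal separator by iterative refinement: starting from the full index set, each round uses a \emph{single} pairwise separator $\Gamma$, partitions the surviving indices by the value of $Z_{A,D}(\psi_i,\Gamma)$, recovers the corresponding partial sums by a Vandermonde interpolation, and retains only the part containing the target class. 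After collapsing there are constantly many classes, so this terminates in constantly many rounds. Adopting this scheme would let you drop the construction of $H^*$ entirely.
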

\begin{proof}
As $\eval(A,D) \Tle \evalk(A,D)$ holds trivially we only need to prove $\evalk(A,D) \Tle \eval(A,D)$.

Let $G=(V,E)$ and a pinning $\vpin$ be an instance of $\evalk(A,D)$. 
By appropriate permutation of the rows/columns of $A$ and $D$ (cf. Lemma~\ref{lem:principle_permute}) we may assume that $[k] = \textup{img } \vpin \subseteq [m]$ for some $k \le m$.
Let $\hat{G} = (\hat V, \hat E)$ be the graph obtained from $G$ by collapsing the the sets $\vpin^{-1}(i)$ for all $i \in [k]$. Formally, define a map
$$
\gamma(v) = \left\lbrace \begin{array}{l l}
                          i &, v \in \vpin^{-1}(i) \text{ for some } i \in [k] \\
                          v &, \text{ otherwise}
                         \end{array}\right.
$$
Then $\hat G$ is a $k$-labeled  multigraph (with possibly some self-loops) defined by 
\begin{eqnarray*}
\hat{V} & = & [k] \;\dot\cup\; (V \setminus \df(\vpin))\\
\hat{E} & = & \{ \gamma(u)\gamma(v) \mid uv \in E\}.
\end{eqnarray*}
Recall that in partition functions of the form $Z_{A,D}(\vpin,G)$ vertices pinned by $\vpin$
do not contribute any vertex weights.
Hence, $Z_{A,D}(\vpin,G) = Z_{A,D}(\idfunc_{[k]},\hat G)$ where $\idfunc_{[k]}$ denotes the identity map on $[k]$.
Call two mappings $\chi,\psi: [k] \rightarrow [m]$ \emph{equivalent} if there is an automorphism $\alpha$ of $(A,D)$ such that $\chi = \alpha\circ \psi$. Partition the mappings $\psi:[k]\rightarrow [m]$ into equivalence classes $I_1, \ldots, I_c$ according to this definition and for all $i \in [c]$ fix some $\psi_i \in I_i$.
Assume furthermore, that $\psi_1 = \idfunc_{[k]}$.
Clearly for any two $\chi, \psi$ from the same equivalence class, we have $Z_{A,D}(\chi,F) = Z_{A,D}(\psi,F)$ for every graph $F$. Therefore, for every graph $G'$,
\begin{equation}\label{eq:std_sum_fixation}
Z_{A,D}(G') = \sum_{i = 1}^c Z_{A,D}(\psi_i,G')\cdot \left(\sum_{\psi \in I_i} \prod_{v \in \df(\psi)} D_{\psi(v)\psi(v)}\right) 
\end{equation}
Define, for each $i \in [c]$ the value $c_i = \left(\sum_{\psi \in I_i} \prod_{v \in \df(\psi)} D_{\psi(v)\psi(v)} \right)$.
We claim the following
\begin{claim}\label{cl:2605091505}
Let $I \subseteq [c]$ be a set of cardinality at least $2$ such that $1 \in I$. Assume that we can, for every $k$-labeled graph $G'$, compute the value
\begin{equation}\label{eq:260320091543}
\sum_{i \in I} c_i\cdot Z_{A,D}(\psi_i,G').
\end{equation}
Then there is a proper subset $I' \subset I$ which contains $1$ such that we can compute, for every $k$-labeled graph $G''$, the value
\begin{equation}\label{eq:2605090815}
\sum_{i \in I'} c_i\cdot Z_{A,D}(\psi_i,G'').
\end{equation}
\end{claim}
\bigskip
This claim will allow us to finish the proof. To see this, note first that by equation \eqref{eq:std_sum_fixation} we can compute the value \eqref{eq:260320091543} for $I = [c]$ and $G' = \hat G$. Thus after at most $c$ iterations of Claim~\ref{cl:2605091505} we arrive at $c_1\cdot Z_{A,D}(\psi_1,\hat G)$. Further, $c_1$ is effectively computable in time depending only on $D$ and therefore we can compute $Z_{A,D}(\idfunc_{[k]},\hat G) = Z_{A,D}(\vpin,G)$. This proves the reducibility $\evalk(A,D) \Tle \eval(A,D)$. 

\paragraph*{Proof Of Claim~\ref{cl:2605091505}.} Assume that we can compute the value given in \eqref{eq:260320091543}.
Lemma~\ref{lem:lov06} implies that for every pair $i\neq j \in I$ there is a $k$-labeled graph $\Gamma$ such that 
\begin{equation}\label{eq:distinguish_pf}
Z_{A,D}(\psi_i,\Gamma) \neq Z_{A,D}(\psi_j,\Gamma).
\end{equation}
Fix such a pair $i\neq j \in I$ and a graph $\Gamma$ satisfying this equation. Note that this graph can be computed effectively in time depending only on $A,D$ and $\psi_i,\psi_j$. Let $G^s$ denote the graph obtained from $G$ by iterating $s$ times the $k$-labeled product of $G$ with itself. 
We can thus compute
\begin{equation}\label{eq:2603091557}
\sum_{i \in I} c_i\cdot Z_{A,D}(\psi_i,G'\Gamma^s) = \sum_{i \in I}^c c_i Z_{A,D}(\psi_i,G')\cdot Z_{A,D}(\psi_i,\Gamma)^s. 
\end{equation}
Partition $I$ into classes $J_0,\ldots ,J_z$ such that for every $\nu \in [0,z]$ we have $i',j' \in J_\nu$ if, and only if, $Z_{A,D}(\psi_{i'},\Gamma) = Z_{A,D}(\psi_{j'},\Gamma)$.
Since one of these sets $J_\nu$ contains $1$ and all of these are proper subsets of $I$, it remains to show that we can compute, for each $\nu \in [0,z]$, the value 
$$
\sum_{i' \in J_\nu} c_i Z_{A,D}(\psi_{i'},G').
$$
To prove this, define $x_\nu : = Z_{A,D}(\psi_{i'},\Gamma)$ for each $\nu \in [z]$ and an $i' \in J_{\nu}$. Equation \eqref{eq:2603091557} implies that we can compute
$$
\sum_{\nu = 0}^z x_\nu^s \left(\sum_{i' \in J_\nu} c_{i'} Z_{A,D}(\psi_{i'},G')\right). 
$$
One of the values $x_\nu$ might be zero. Assume therefore w.l.o.g. that $x_0 = 0$, then evaluating the above for $s= 1, \ldots, z$ yields a system of linear equations, which by Lemma~\ref{lem:interpolate} can be solved in polynomial time such that we can recover the values $\sum_{i' \in J_\nu} c_{i'} Z_{A,D}(\psi_{i'},G')$ for each $\nu \ge 1$. 
Using equation \eqref{eq:260320091543} we can thus also compute the value
\[
\left(\sum_{i \in I} c_i\cdot Z_{A,D}(\psi_i,G')\right) - \sum_{\nu = 1}^z \left(\sum_{i' \in J_\nu} c_{i'} Z_{A,D}(\psi_{i'},G')\right) = \sum_{i' \in J_0} c_{i'} Z_{A,D}(\psi_{i'},G'). 
\]
\end{proof}
The proof of the Pinning Lemma now follows easily from Lemmas~\ref{lem:twin_red} and \ref{lem:pinning_twin_free}.

\begin{proof}[of the Pinning Lemma \ref{lem:pinning}]
Fix the twin resolvent $A' = \twres A$ of $A$ and let $D' = D^{\twres A}$. 
It suffices to show
\begin{equation} \label{eq:intro_pin}
  \evalk(A',D') \Tequiv \eval(A',D').
\end{equation}
To see this note that by the Twin Reduction Lemma~\ref{lem:twin_red} we then have the chain of reductions
$$\evalk(A,D) \Tequiv \evalk(A',D') \Tequiv \eval(A',D') \Tequiv \eval(A,D).$$
The proof of equation \eqref{eq:intro_pin} follows from Lemma~\ref{lem:pinning_twin_free}.
\end{proof}

\section{The General Conditioning Lemma}
\label{sec:gc}

\subsection{Dealing with $\{0,1\}$ Matrices}

As a technical prerequisite, we need a part of the result of \cite{dyegre00} (Theorem 1.1 in there).
We call a block \emph{non-trivial} if it contains a non-zero entry.

\begin{lemma}[\#H-Coloring Lemma]\label{lem:0-1_hardness}
Let $A$ be a symmetric connected and bipartite $\{0,1\}$-matrix with underlying non-trivial block $B$. If $B$ contains a zero entry then the problem $\evalk(A)$ is $\#\PP$-hard.
\end{lemma}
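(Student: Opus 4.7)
My plan is to prove the lemma by a gadget reduction from \#IS (counting independent sets), which is \#\PP-complete. Since Lemma~\ref{lem:pinning} is not yet applied inside this proof (we work directly with $\evalk(A)$, which natively supports pinning), the availability of pinning will do most of the heavy lifting.

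\textbf{Step 1 (Extracting a hardness witness).} Using Lemma~\ref{lem:principle_permute}, I would permute so that $A = \begin{pmatrix} 0 & B \\ B^T & 0 \end{pmatrix}$ and $B \in \{0,1\}^{p\times q}$ is indecomposable. Since $B$ has a zero entry and the bipartite graph $H$ with biadjacency matrix $B$ is connected, a short path-chasing argument produces row indices $i_1\neq i_2$ and column indices $j_1\neq j_2$ with
\[
B_{i_1 j_1} = B_{i_2 j_1} = B_{i_2 j_2} = 1,\qquad B_{i_1 j_2} = 0.
\]
Concretely: fix a zero $B_{i j_2}=0$ and take a shortest path in $H$ from the row-vertex $i$ to the column-vertex $j_2$; its last two steps (a column $j_1$ adjacent to $i$ via $B_{i j_1}=1$ together with a row $i_2\neq i$ adjacent to both $j_1$ and $j_2$) exhibit the claimed $2{\times}2$ submatrix $\begin{pmatrix}1&0\\1&1\end{pmatrix}$.

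\textbf{Step 2 (Gadget design).} Given an input graph $G=(V,E)$ for \#IS, I would construct a bipartite graph $G'$ together with a pinning $\phi$. For each $v\in V$ introduce a left-vertex $\widetilde v$ (intended to be assigned spin $i_1$ if ``$v\notin S$'' and spin $i_2$ if ``$v\in S$''). Pin auxiliary right-vertices to $j_1$ and $j_2$ via $\phi$, and connect $\widetilde v$ to these to \emph{restrict its effective spin set} to $\{i_1,i_2\}$: attaching $\widetilde v$ to a $j_1$-pinned neighbor contributes a factor $B_{*,j_1}$ (nonzero only for spins in $\{i_1,i_2\}$ within the witness), and attaching it to a $j_2$-pinned neighbor contributes $B_{*,j_2}$ (zero at $i_1$, one at $i_2$)\textemdash the second kind of edge should be controllable through additional gadgetry so it can be switched on or off. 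For each edge $vw\in E$, construct an edge-gadget between $\widetilde v$ and $\widetilde w$ whose contribution is $0$ exactly when both carry spin $i_2$ and a fixed nonzero constant otherwise; the zero entry $B_{i_1 j_2}=0$ together with the three surrounding ones provides exactly the asymmetry needed to build such an indicator (e.g.\ by a length-two path through a pinned $j_2$-vertex, forcing at least one endpoint to be $i_1$).

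\textbf{Step 3 (Counting).} If the gadget contributes weight $w_0$ per vertex and $w_1$ per edge in a valid configuration, then
\[
Z_A(\phi,G') \;=\; w_0^{|V|}\cdot w_1^{|E|}\cdot \#\mathrm{IS}(G),
\]
with $w_0,w_1$ computable in polynomial time from $B$ and the fixed pattern. As \#IS is \#\PP-complete, this yields \#\PP-hardness of $\evalk(A)$. The main obstacle is the precise combinatorial design of the edge-gadget: it must realise the indicator function of ``not both in $S$'' cleanly, with identical multiplicative weight across the three remaining cases, using only the four-entry pattern in $B$ and pinning. If a single fixed gadget cannot achieve exact uniformity of the multiplicative factor, I would fall back on stretching (Lemma~\ref{lem:basic_reductions}) combined with interpolation (Lemma~\ref{lem:interpolate}) to separate contributions by the number of ``both-$i_2$'' edges and recover $\#\mathrm{IS}(G)$ from polynomially many oracle queries.
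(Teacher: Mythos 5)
Your high-level instinct is right (the hardness of the $2\times 2$ pattern $\bigl(\begin{smallmatrix}1&1\\1&0\end{smallmatrix}\bigr)$, realised via counting independent sets in bipartite graphs, is exactly the terminal case of the paper's proof), but the way you try to get there has a genuine gap.

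The problem is in Step 2. Pinning an auxiliary neighbor of $\widetilde v$ to column $j_1$ restricts the spin of $\widetilde v$ to $\{\,k : B_{k,j_1}=1\,\}$, which is the \emph{entire support of column $j_1$}. You want it to be exactly $\{i_1,i_2\}$, but for a general bipartite $\{0,1\}$-matrix the support of $j_1$ may contain many more rows, and there is no reason any column (or any single gadget built from pinning to columns) should cut out precisely $\{i_1,i_2\}$. Pinning a second neighbor to $j_2$ only makes things worse: it restricts to $\{\,k : B_{k,j_1}=B_{k,j_2}=1\,\}$, which deliberately excludes $i_1$. Your parenthetical ``nonzero only for spins in $\{i_1,i_2\}$ within the witness'' quietly confines the analysis to the $2\times 2$ submatrix, but the partition function sums over all $m$ spins, not just the two you chose. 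This is not a detail that ``additional gadgetry'' fixes in one shot; it is the central combinatorial difficulty of the lemma, and the fallback to stretching and interpolation does not help either, because you have no parameterised family of gadgets that separates the unwanted spins' contributions.

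What the paper actually does is reduce the problem of controlling the spin set to a recursion on the size of $B$. From the four-index witness $B_{ik}=B_{il}=B_{jk}=1$, $B_{jl}=0$ it passes to the submatrix $B_{IJ}$ supported on the $1$-entries of row $i$ and column $k$, by pinning two apex vertices and connecting them to all of one side each (so every configuration of nonzero weight lands in the submatrix). It then applies the Twin Reduction Lemma~\ref{lem:twin_red} and Lemma~\ref{lem:dg_omit_vertexweights} to remove duplicated rows and the resulting vertex weights, and repeats. Each round strictly shrinks the block while preserving the ``powerful'' property, and after finitely many rounds it arrives at the $2\times 2$ block $\bigl(\begin{smallmatrix}1&1\\1&0\end{smallmatrix}\bigr)$, for which $Z_{A'}(\vpin,G)$ equals $2^c$ times the number of independent sets of the bipartite input graph $G$. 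Your proposal would need to incorporate this iterated block-shrinking (or some equivalent mechanism) before the edge-gadget analysis in Step 3 becomes sound; as written, Step 3's factorisation $Z_A(\phi,G') = w_0^{|V|}w_1^{|E|}\cdot \#\mathrm{IS}(G)$ is not established.
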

\begin{proof}
We will start with the following claim which captures the main reduction. Call a matrix $A$ \emph{powerful} if it is a symmetric connected and bipartite $\{0,1\}$-matrix whose underlying block $B$ contains a zero entry.

\begin{claim}\label{cl:2912091747}
Let $A$ be a  powerful matrix with underlying $m\times n$ block $B$. If either $n > 2$ or $m > 2$ then there is a powerful matrix $A'$ with underlying $m'\times n'$ block $B'$ such that $2 \le m' \le m$ and $2 \le n' \le n$, at least one of these inequalities is strict and
$$
\evalk(A') \Tle \evalk(A).
$$
\end{claim}
Before we prove the claim, let us see how it helps in proving the lemma. Let $A$ be as in the statement of the lemma. Iterating Claim~\ref{cl:2912091747} for a finite number of steps we arrive at $\evalk(A') \Tle \evalk(A)$ such that the block $B'$ underlying $A'$ is a $2\times 2$ matrix of the form (up to permutation of rows/columns)
$$
\left(\begin{array}{c c}
          1 & 1 \\
          1 & 0
      \end{array}\right).
$$
For every empty pinning $\vpin$ and bipartite graph $G$ with $c$ connected components $Z_{A'}(\vpin,G)$ equals $2^c$ times the number of independent sets of $G$. This straightforwardly gives rise to a reduction from the problem of counting independent sets in bipartite graphs which is well-known to be $\#\PP$-hard (see \cite{probal83}).

\paragraph*{Proof of Claim~\ref{cl:2912091747}.}
As $B$ is a block with a zero entry, there are indices $i,j,k,l$ such that $B_{ik} = B_{il} = B_{jk} = 1$ and $B_{jl} = 0$. Fix these indices and let $I = \{\nu \mid B_{i\nu} = 1\}$ and $J = \{\mu \mid B_{\mu k} = 1\}$ be the sets of indices of $1$-entries in row $i$ and column $k$. Let $A^*$ be the connected bipartite matrix with underlying block $B_{IJ}$. We will show that
\begin{equation}\label{eq:2912091918}
\evalk(A^*) \Tle \evalk(A).
\end{equation}
To see this, let $G,\vpin$ be an instance of $\evalk(A^*)$. We shall consider
only the case here that $G =(U \cup W,E)$ is connected and bipartite with
bipartition $U,W$ and that $\vpin$ pins a vertex $a \in U$ to a row of
$B_{IJ}$. All other cases follow similarly.

Define $G'$ as the graph obtained from $G$ by adding two new vertices $u'$ and $w'$ and connect every vertex of $U$ by an edge to $w'$ and every vertex of $W$ by a edge to $u'$. Let $\vpin'$ be the pinning obtained from $\vpin$ by adding $u' \mapsto i$ and $w' \mapsto k$. We have $Z_{A^*}(\phi, G) = Z_{A}(\phi', G')$ which yields a reduction witnessing equation \eqref{eq:2912091918}.

Before we proceed, we need another

\begin{claim}\label{cl:2912091802}
Let $A^+$ be a  powerful matrix with underlying $m^+\times n^+$ block $B^+$. There is a twin-free powerful matrix $A''$ with underlying $m''\times n''$ block $B''$ such that $2 \le m'' \le m^+$ and $2 \le n'' \le n^+$ and
$$
\evalk(A'') \Tle \evalk(A^+).
$$
\end{claim}
\begin{clproof}
This is a straightforward combination of the Twin Reduction Lemma \ref{lem:twin_red} and Lemma \ref{lem:dg_omit_vertexweights}.
\end{clproof}

Combining equation \eqref{eq:2912091918} and Claim~\ref{cl:2912091802} we arrive at $\evalk(A'') \Tle \evalk(A)$ for a powerful twin-free matrix $A''$. The block $B''$ underlying $A''$ has some dimension $m''\times n''$ such that $2\le m'' \le m$ and $2 \le n'' \le n$. Further, up to permutation of rows/columns, this block has the following form
$$
B'' = \left( \begin{array}{c c c c}
1 & 1 & \ldots & 1\\
1 & 0 & \ldots & * \\
\vdots &  & \ddots & \\
1 & * & \ldots & 
\end{array}\right)
$$
To prove Claim~\ref{cl:2912091747} it suffices to devise a reduction witnessing
\begin{equation}\label{eq:3012091327}
\evalk(A') \Tle \evalk(A'') 
\end{equation}
for a powerful matrix $A'$ such that the block $B'$ underlying $A'$ has either fewer rows or columns than $B''$.

To devise such a reduction, assume that $B''$ has at least three rows (the case that $B''$ has at least three 
columns is symmetric). As $B''$ is twin-free, there must be entries $B''_{2a} \neq B''_{3a}$. Assume w.l.o.g. 
that $B''_{2a} = 0$ and $B''_{3a} = 1$. As $B''$ is twin-free we further have that $B''_{3b} = 0$ for some $b$.
Let $K = \{\kappa \mid B''_{3\kappa} = 1\}$ be the indices of non-zero entries of $B'' \row 3$.
Define $B' = B''_{*,K}$ and let $A'$ be the connected bipartite matrix with underlying block $B'$. 
It remains to devise the desired reduction.

Let $G,\vpin$ be an instance of $\evalk(A')$. As before, we shall consider
only the case that $G =(U \cup W,E)$ is connected and bipartite with
bipartition $U,W$ and that $\vpin$ pins a vertex $a \in U$ to a row of
$B'$. All other cases follow similarly.

Define $G'$ as the graph obtained from $G$ by adding one new vertex $u'$ and connect every vertex of $W$ by a edge to $u'$. Let $\vpin'$ be the pinning obtained from $\vpin$ by adding $u' \mapsto 3$. This yields a reduction witnessing equation \eqref{eq:3012091327} and hence finishes the proof of Claim~\ref{cl:2912091747}.
\end{proof}

\subsection{From General Matrices to Positive Matrices}
In the first step of the proof of the General Conditioning Lemma \ref{lem:new_gen_cond} we will see how to restrict attention to matrices with positive entries.
\begin{lemma}[The Lemma of the Positive Witness]\label{lem:red_to_conn_zero-free}
Let $A \in \Ring^{m \times m}$ be a symmetric non-negative matrix containing a block of row rank at least $2$. Then there is an  $\Ring$-matrix $A'$ satisfying condition \cond A, that is $A'$ is positive symmetric of row rank at least $2$, such that
$$
  \evalk(A') \Tle \evalk(A).
$$
\end{lemma}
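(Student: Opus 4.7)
My plan is to reduce the problem in three stages: isolate a connected component of $G(A)$ containing a block of rank $\geq 2$, square the matrix to kill bipartiteness, and then raise the result to a sufficiently high power to make every entry positive while preserving the rank.

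First, using the Pinning Lemma~\ref{lem:pinning}, I obtain the ability to pin vertices. Let $V \subseteq [m]$ be the vertex set of the connected component of the underlying graph $G(A)$ that contains the block $B$ of rank $\geq 2$, and let $\hat A := A_{VV}$. Given an instance $(\phi, G)$ of $\evalk(\hat A)$, I process each connected component $C$ of $G$ separately, using multiplicativity of the partition function over components. For each $C$: if $\phi$ already pins some vertex of $C$ (necessarily into $V$), then $Z_{\hat A}(\phi|_C, C) = Z_A(\phi|_C, C)$, since any nonzero-weight configuration extending $\phi|_C$ must map all of $C$ into the same connected component of $G(A)$, namely $V$; otherwise, I pick any $v \in V(C)$ and compute $Z_{\hat A}(\phi|_C, C) = \sum_{s \in V} Z_A\big(\phi|_C \cup \{v \mapsto s\}, C\big)$. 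Multiplying over components gives a polynomial-time Turing reduction $\evalk(\hat A) \Tle \evalk(A)$.

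Second, $\hat A$ is either $B$ itself (when the isolated component is non-bipartite) or has the form $\begin{pmatrix}0 & B\\B^T & 0\end{pmatrix}$ (when bipartite). In the bipartite case I apply the $2$-stretching reduction of Lemma~\ref{lem:basic_reductions} to obtain $\evalk(\hat A^2) \Tle \evalk(\hat A)$, where $\hat A^2 = \begin{pmatrix}BB^T & 0\\0 & B^TB\end{pmatrix}$ is block-diagonal. By Lemma~\ref{lem:a_square_prop}, $BB^T$ contains a block of rank $\geq 2$; since the bipartite graph of $B$ is connected (as $B$ is a block), any two row indices of $B$ are joined by a length-$2$ walk through $B$, so the underlying graph of $BB^T$ is connected, and since $BB^T$ has positive diagonal it is non-bipartite. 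A second block-isolation applied to $\hat A^2$ then yields $\evalk(BB^T) \Tle \evalk(A)$.

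Third, I now have an indecomposable non-bipartite symmetric non-negative matrix $C$ of rank $\geq 2$---either $C = \hat A$ in the non-bipartite case or $C = BB^T$ in the bipartite case. By a standard fact about connected non-bipartite graphs, there is some $p \in \Nat$ such that between every pair of vertices of $G(C)$ there is a walk of length exactly $p$; consequently $C^p$ has strictly positive entries. Symmetry of $C$ implies $\rank{(C^p)} = \rank{C} \geq 2$, since its non-zero eigenvalues raised to the $p$-th power remain non-zero. Finally, $p$-stretching gives $\evalk(C^p) \Tle \evalk(C)$. Setting $A' := C^p$ produces a positive symmetric $\Ring$-matrix of rank $\geq 2$ satisfying condition~\cond{A}.

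The main delicate point is the block-isolation reduction: I must verify carefully that pinning a single vertex into $V$ forces every nonzero-weight extension to stay within $V$, which follows from the component decomposition of $G(A)$ together with the requirement that edge weights be non-zero. The remaining ingredients---the basic reductions of Lemma~\ref{lem:basic_reductions}, the rank-preservation property of Lemma~\ref{lem:a_square_prop}, and the standard fact that some power of a connected non-bipartite non-negative matrix is entrywise positive---are routine.
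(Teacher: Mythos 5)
Your proof is correct and reaches the same conclusion, but the way you eliminate zero entries is genuinely different from the paper's argument, so the comparison is worth spelling out.

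The paper's proof first $2$-stretches to obtain $\evalk(A^2)\Tle\evalk(A)$, uses Lemma~\ref{lem:a_square_prop} to find a component $BB^T$ of $A^2$ containing a block of rank~$\ge 2$, and then isolates it via Component Pinning (Lemma~\ref{lem:comp_pin}). At that point it does a \emph{case split}: if $BB^T$ is already entrywise positive, take $A'=BB^T$; if $BB^T$ still has zeros, then it has a block with a zero entry, so the Zero-Free Block Lemma~\ref{lem:zero_free_blocks} says $\evalk(BB^T)$ is $\sharpP$-hard, and the required reduction $\evalk(A')\Tle\evalk(BB^T)$ is automatic for any $A'$ satisfying \cond{A} (since $\evalk(A')\in\FP^{\sharpP}$). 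Your approach instead does the component isolation first, squares only in the bipartite case, and then, rather than invoking $\sharpP$-hardness, observes that the resulting $C$ is a connected non-bipartite non-negative symmetric matrix, hence \emph{primitive} in the Perron--Frobenius sense: some $C^p$ is entrywise positive. You then note $\rank(C^p)=\rank(C)$ by symmetry (or, for the polynomial rings, by generic specialization to a positive real $X$) and apply $p$-stretching. This avoids the Zero-Free Block Lemma and the \#H-Coloring machinery behind it entirely; the tradeoff is that you import a small amount of Perron--Frobenius theory. Both proofs are valid; yours is more self-contained at the cost of being slightly longer, and it never produces the trivial ``$\sharpP$-hard so anything reduces'' step.

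Two small points. First, your statement that in a connected bipartite graph ``any two row indices of $B$ are joined by a length-$2$ walk'' is false as written (rows can be at even distance $>2$); what you actually need, and what does hold, is that the graph of $BB^T$ is connected because any even-length walk between row vertices decomposes into a chain of length-$2$ hops. Second, you invoke the Pinning Lemma~\ref{lem:pinning} to ``obtain the ability to pin,'' but this is unnecessary: $\evalk$ already allows pinnings by definition, and the step you actually carry out is exactly Component Pinning (Lemma~\ref{lem:comp_pin}). Neither slip affects the correctness of your argument.
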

The proof of this lemma relies on the elimination of zero entries. We do this in two steps, first (in the next Lemma) we pin to components. Afterwards, we will see how to eliminate zero entries within a component.

\begin{lemma}[Component Pinning]\label{lem:comp_pin}
Let $A \in \Ring^{m\times m}$ be a symmetric matrix.
Then for each component $C$ of $A$ we have
$$ \evalk(C) \Tle \evalk(A).$$
\end{lemma}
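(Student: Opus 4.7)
The approach rests on a single structural observation: if $C$ is a component of $A$, then the index set $S \subseteq [m]$ corresponding to $V(C)$ is separated from its complement in the sense that $A_{ij} = 0$ whenever exactly one of $i,j$ lies in $S$. Consequently, for any \emph{connected} graph $G$ and any configuration $\sigma: V(G) \to [m]$, either $\sigma(V(G)) \subseteq S$, or $\sigma(V(G)) \subseteq [m] \setminus S$, or else the weight $\prod_{uv \in E(G)} A_{\sigma(u),\sigma(v)}$ vanishes because some edge is mapped across the ``cut''. The plan is to exploit this to simulate configurations into $C$ by configurations into $A$ that are anchored to $S$ by the pinning.

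Concretely, given an instance $(G,\phi)$ of $\evalk(C)$ with $\phi: W \to V(C)$, I first use multiplicativity: if $G$ has connected components $G_1, \ldots, G_t$ with restricted pinnings $\phi_j := \phi\rstr{V(G_j)\cap W}$, then
\[
Z_C(\phi, G) = \prod_{j=1}^t Z_C(\phi_j, G_j),
\]
so it suffices to compute each factor. This reduces the problem to the case that $G$ is connected.

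Now identify $V(C)$ with $S \subseteq [m]$ so that $\phi$ can also be regarded as a pinning into $[m]$. If $\phi_j$ is non-trivial on a connected component $G_j$, then by the observation above, every configuration $\sigma \supseteq \phi_j$ with nonzero weight in $Z_A(\phi_j, G_j)$ must send all of $V(G_j)$ into $S$; and conversely, on such configurations the $A$-weights coincide with the $C$-weights. Hence $Z_C(\phi_j, G_j) = Z_A(\phi_j, G_j)$, which is a single oracle call to $\evalk(A)$. If $\phi_j$ is trivial, pick an arbitrary vertex $v \in V(G_j)$ and write
\[
Z_C(G_j) \;=\; \sum_{s \in S} Z_C(v \mapsto s, G_j) \;=\; \sum_{s \in S} Z_A(v \mapsto s, G_j),
\]
using the previous case; this is $|S| \le m$ oracle calls.

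The entire reduction makes polynomially many (in fact at most $m \cdot t \le m \cdot |V(G)|$) calls to the $\evalk(A)$ oracle, and the post-processing is just a product of $t$ integers, so the reduction is polynomial-time. There is no real obstacle here; the only thing to verify carefully is that the set $S$ is ``$A$-isolated'' in the sense that $A_{ij}=0$ for $i \in S$, $j \notin S$, which is immediate from the definition of $C$ as a connected component of the underlying graph $G(A)$.
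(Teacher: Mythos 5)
Your proof is correct and follows essentially the same route as the paper: reduce to connected input components via multiplicativity of the partition function, then observe that once some vertex of a connected component is pinned to the index set $S$ of $C$, every nonzero-weight configuration sends the whole component into $S$, so $Z_C$ and $Z_A$ agree; for unpinned components, sum over all choices of spin in $S$ for one vertex. The only (cosmetic) difference is that the paper always adds one extra pinned vertex, whereas you distinguish the already-pinned case from the trivially-pinned case, but the underlying argument is the same.
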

\begin{proof}
Let $G,\vpin$ be the input to $\evalk(C)$ for some component $C$ of some order $m' \times m'$. Let $G_1, \ldots, G_\ell$ be the components of $G$ and $\vpin_1,\ldots,\vpin_\ell$ the corresponding restrictions of $\vpin$. Since
$$
Z_C(\vpin, G) = \prod_{i=1}^\ell Z_C(\vpin_i, G_i)
$$
we may assume w.l.o.g that $G$ is connected which we shall do in the following.
Let $v$ be a vertex in $G$ which is not pinned by $\vpin$. Define, for each $i \in [m]$, $\vpin_i$ as the pinning obtained from $\vpin$ by adding $v \mapsto i$. We have
$$
Z_C(\vpin, G) = \sum_{i = 1}^m Z_C(\vpin_i,G).
$$
Thus it suffices to compute $Z_C(\vpin_i,G)$ which can be obtained straightforwardly as these values equal $Z_A(\vpin_i,G)$.
\end{proof}

%

\begin{lemma}[Zero-Free Block Lemma]\label{lem:zero_free_blocks}
Let $A \in \Ring^{m\times m}$ be a symmetric matrix.
Then either $\evalk(A)$ is $\#\PP$-hard or no block of $A$ contains zero entries.
\end{lemma}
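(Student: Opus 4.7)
The plan is to reduce the hypothesis ``some block of $A$ has a zero entry'' to the hypothesis of the $\#$H-Coloring Lemma~\ref{lem:0-1_hardness}. Suppose a block of $A$ contains a zero entry, and let $C$ be the connected component of (the graph of) $A$ containing this block. By the Component Pinning Lemma~\ref{lem:comp_pin}, $\evalk(C) \Tle \evalk(A)$, so it suffices to prove $\evalk(C)$ is $\#\PP$-hard.

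Next I would pass to the $\{0,1\}$-support matrix $B_C$ of $C$, defined by $(B_C)_{ij}=1$ iff $C_{ij}\neq 0$. Since $B_C$ and $C$ have the same support graph, they have the same block structure, and the block of $B_C$ corresponding to the chosen block of $C$ still contains a zero entry. The key reduction is $\evalk(B_C)\Tle\evalk(C)$: a configuration $\sigma$ extending a pinning $\vpin$ contributes $1$ to $Z_{B_C}(\vpin,G)$ exactly when every edge of $G$ is mapped to a non-zero entry of $C$, so
\[
Z_{B_C}(\vpin,G)=\sum_{w\in\wset_C(G),\;w\neq 0} N_C(G,\vpin,w).
\]
The set $\wset_C(G)$ has polynomial size in $|G|$, and by Lemma~\ref{lem:cntk_eq_evalk} the values $N_C(G,\vpin,w)$ can all be computed from an $\evalk(C)$ oracle. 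Hence $\evalk(B_C)\Tle\evalk(C)$, and it remains to prove that $\evalk(B_C)$ is $\#\PP$-hard.

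Here I would split into two cases according to whether $B_C$ (equivalently $C$) is bipartite. If $B_C$ is bipartite, then after a simultaneous row/column permutation $B_C=\bigl(\begin{smallmatrix}0 & K\\ K^T & 0\end{smallmatrix}\bigr)$ where $K$ is an underlying non-trivial block, and $K$ contains a zero entry by construction. The $\#$H-Coloring Lemma~\ref{lem:0-1_hardness} directly yields that $\evalk(B_C)$ is $\#\PP$-hard. If $B_C$ is non-bipartite, $B_C$ is itself a single block. Form the \emph{bipartite double} $D=\bigl(\begin{smallmatrix}0 & B_C\\ B_C & 0\end{smallmatrix}\bigr)$. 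Because $B_C$ is connected and non-bipartite, the graph of $D$ is its bipartite double cover, which is connected and bipartite; its underlying block is $B_C$ itself, which contains a zero entry. Lemma~\ref{lem:0-1_hardness} thus gives $\#\PP$-hardness of $\evalk(D)$.

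The main non-routine step, and the place where care is needed, is the reduction $\evalk(D)\Tle \evalk(B_C)$ in the non-bipartite case. Given an input $(G,\vpin)$ for $\evalk(D)$, any configuration $\sigma:V(G)\to[2m]$ with non-zero weight must send each edge between the ``left'' copy $[m]$ and the ``right'' copy $\{m+1,\dots,2m\}$ of $B_C$; this forces $G$ to be bipartite and forces the bipartition of each connected component to be consistent with $\vpin$. If $G$ is not bipartite, or if $\vpin$ conflicts with any forced bipartition, output $0$. Otherwise, let $c$ be the number of connected components of $G$ containing no pinned vertex, and define $\vpin^\ast$ by reducing every value in the image of $\vpin$ modulo $m$. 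Each of the $2^c$ orientations of the free components puts a canonical bijection between valid extensions $\sigma\supseteq\vpin$ of $D$-type and extensions $\tau\supseteq\vpin^\ast$ of $B_C$-type that preserves edge weights, so
\[
Z_D(\vpin,G)=2^c\cdot Z_{B_C}(\vpin^\ast,G),
\]
which is computable from one call to an $\evalk(B_C)$ oracle. Chaining the reductions yields $\evalk(D)\Tle\evalk(B_C)\Tle\evalk(C)\Tle\evalk(A)$ and hence $\#\PP$-hardness of $\evalk(A)$, completing the proof.
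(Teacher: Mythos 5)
Your proof is correct and follows essentially the same route as the paper: pass to the component containing the bad block, drop to the $\{0,1\}$-support via the $\cntk$/$\evalk$ equivalence, and apply the $\#$H-Coloring Lemma directly in the bipartite case or via the bipartite double in the non-bipartite case. You perform component pinning and the $\{0,1\}$-filtering in the opposite order, and you spell out the non-bipartite reduction for disconnected $G$ (the $2^c$ factor) where the paper waves at the connected case only, but this is the same argument made a bit more carefully.
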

\begin{proof} 
Let $A''$ be the matrix obtained from $A$ by replacing each non-zero entry by $1$. 
Let $C$ be a component of $A''$ whose underlying block $B$ contains a zero entry.
For every given graph $G$ and pinning $\vpin$ each configuration $\vcfg$ which contributes a non-zero weight to $Z_A(\vpin,G)$ contributes a weight $1$ to $Z_{A''}(\vpin,G)$. 
Thus by $\evalk(A) \Tequiv \cntk(A)$ 
(cf. Lemma~\ref{lem:cntk_eq_evalk}) and component pinning 
(cf. Lemma \ref{lem:comp_pin}) it is easy to see that
$$
\evalk(C) \Tle \evalk(A'') \Tle \evalk(A).
$$
If $C$ is bipartite, the result follows from the \#H-Coloring Lemma~\ref{lem:0-1_hardness}.

If $C$ is not bipartite, we need a bit of additional work. 
Note that in this case $C = B$. Let $A'$ be the connected bipartite $2r \times 2r$ matrix with 
underlying block $B$. If we can show that
$$
\evalk(A') \Tle \evalk(B)
$$
then the result follows from the \#H-Coloring Lemma~\ref{lem:0-1_hardness}.

Let $G, \vpin$ be an instance of $\evalk(A')$. For simplicity, assume that $G = (U \cup W,E)$ is connected and bipartite. If $\vpin$ is empty, then we have $Z_{A'}(G) = 2\cdot Z_{B}(G)$.
Otherwise, $Z_{A'}(\vpin,G) = 0$ unless $\vpin$ maps all elements of $\df(\vpin) \cap U$ to $[r]$ and all entries of $\df(\vpin) \cap W$ to $[r+1,2r]$ or vice versa. Since both cases are symmetric, 
we consider only the first one. Let $\vpin': \df(\vpin) \to [2r]$ be the pinning which agrees with $\vpin$ on $\df(\vpin) \cap U$ and for all $w \in \df(\vpin)\cap W$ satisfies $\vpin'(w) = \vpin(w) - r$. By inspection we have
\[
Z_{A'}(\vpin, G) = Z_B(\vpin',G'). 
\]
\end{proof}

\begin{proof}[of Lemma \ref{lem:red_to_conn_zero-free}]
Let $A \in \Ring^{m \times m}$ contain a block $B$ of rank at least $2$. By $2$-stretching (cf. Lemma~\ref{lem:basic_reductions}) we have $\evalk(A^2)\Tle \evalk(A)$ and Lemma~\ref{lem:a_square_prop} guarantees that $A^2$ contains a component $BB^T$ which has rank at least $2$. By component pinning (Lemma~\ref{lem:comp_pin}) we have $\evalk(BB^T) \Tle \evalk(A^2)$. If $BB^T$ contains no zero entries, we let $A' = BB^T$. 

Otherwise let $A' \in \Nat^{2 \times 2}$ be a matrix satisfying the conditions of the lemma. As $BB^T$ contains a zero entry, Lemma~\ref{lem:zero_free_blocks} implies that $\evalk(BB^T)$ is $\#\PP$-hard and thus $\evalk(A')\Tle  \evalk(BB^T)$ finishing the proof.
\end{proof}

\subsection{From Positive Matrices to X-matrices}\label{subsec:pos_to_X}
Let $A$ be a matrix which satisfies condition \cond A. We will now see, how to obtain a matrix $A'$ which additionally satisfies \cond B. That is, we will prove the following lemma.
\begin{lemma}[$X$-Lemma] \label{lem:X-Lemma}
Let $A \in \Ring^{m \times m}$ be a matrix satisfying condition \cond A.
Then there is an $\Ring$-matrix $A'$ satisfying conditions \cond A and \cond B such that
$$
\evalk(A') \Tle \evalk(A).
$$
\end{lemma}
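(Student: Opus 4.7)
}
The strategy is to encode each entry of $A$ as a monomial in a single indeterminate $X$ by exploiting the finitely generated multiplicative structure of the positive algebraic entries of $A$.

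First I would build the algebraic basis. Because the nonzero entries of $A$ lie in $\Ralg_{>0}$ and are algebraic, the multiplicative subgroup $\Gamma \subseteq \Ralg_{>0}$ they generate is finitely generated and torsion-free, hence a free abelian group of some rank $r$. Using standard algorithms for computing multiplicative relations among algebraic numbers presented in a common extension field $\Qu(\theta)$ (cf.\ the discussion at the start of Section~\ref{sec:it}), one computes a basis $b_1, \ldots, b_r \in \Ralg_{>0}$ of $\Gamma$ together with integer exponents $c_{ij,k}$ such that $A_{ij} = \prod_{k=1}^r b_k^{c_{ij,k}}$, where $c_{ij,k} = c_{ji,k}$ by symmetry of $A$.

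Next I would construct $A'$. Choose positive integers $N_1, \ldots, N_r$ growing geometrically (for instance $N_k = M^k$ for a sufficiently large $M$) and a constant $C \in \Nat$ so that $e_{ij} := C + \sum_{k=1}^r N_k c_{ij,k} \ge 0$ for every $i,j$, and define $A' \in \Int[X]^{m\times m}$ by $A'_{ij} := X^{e_{ij}}$. By construction $A'$ is symmetric, each entry is a nonzero monomial in $X$, and $A'$ is an $X$-matrix, which verifies \cond{B} and the positivity part of \cond{A}. The rank~$\ge 2$ part of \cond{A} follows because two $\Ralg$-independent rows of $A$ force the integer vectors $(c_{ij,k} - c_{i'j,k})_k$ to vary with $j$, and geometric growth of the $N_k$ prevents cancellation in $\sum_k N_k (c_{ij,k}-c_{i'j,k})$, so the corresponding rows of $A'$ are linearly independent over $\Qu(X)$.

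The reduction $\evalk(A') \Tle \evalk(A)$ is then by interpolation through counting weights. Any configuration $\vcfg$ extending a given pinning $\vpin$ on $G=(V,E)$ contributes weight
\[
\prod_{uv \in E} A_{\vcfg(u)\vcfg(v)} = \prod_{k=1}^r b_k^{M_k(\vcfg)}
\qquad\text{and}\qquad
\prod_{uv \in E} A'_{\vcfg(u)\vcfg(v)} = X^{C|E| + \sum_k N_k M_k(\vcfg)},
\]
where $M_k(\vcfg) := \sum_{uv \in E} c_{\vcfg(u)\vcfg(v),k}$. Multiplicative independence of $b_1,\ldots,b_r$ ensures that the element $w = \prod_k b_k^{M_k(\vcfg)} \in \wset_A(G)$ uniquely determines the tuple $(M_k(\vcfg))_k$, and hence the integer exponent of $X$ that $\vcfg$ contributes to $Z_{A'}(\vpin, G)$. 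By Lemma~\ref{lem:cntk_eq_evalk}, an $\evalk(A)$ oracle yields in polynomial time the counts $N_A(G,\vpin,w)$ for every $w \in \wset_A(G)$; summing the corresponding monomials in $X$ with these multiplicities then produces $Z_{A'}(\vpin,G) \in \Int[X]$ in polynomial time.

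The main obstacle is the very first step: finite generation of $\Gamma$, and the effective computation of the basis $\{b_k\}$ and the exponents $c_{ij,k}$, relies crucially on the entries of $A$ being algebraic. For arbitrary real entries $\Gamma$ may fail to be finitely generated, which is precisely the gap in \cite{bulgro05} flagged in the footnote to Theorem~\ref{thm:bg}. Closing this gap is the raison d'\^etre of the present subsection and requires invoking standard, but non-trivial, algorithms from computational algebraic number theory.
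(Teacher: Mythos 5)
Your proof is correct in its core approach for $\Ring = \Ralg$ (and the subcases $\Qu$, $\Int$), and it is a genuinely different and somewhat more direct route than the one the paper takes. The paper first passes from $\Ralg$ to $\Int$ via the Arithmetical Structure Lemma~\ref{lem:alg_to_w_alg_red} (replacing a multiplicative basis $\{b_k\}$ by distinct primes), then uses the Prime Rank Lemma~\ref{lem:prime_rank} to find a \emph{single} prime $p$ whose $p$-part retains rank $\ge 2$, the Prime Filter Lemma~\ref{lem:prime_filter} to project onto that prime's contribution, and finally the Renaming Lemma~\ref{lem:rename} to substitute $p \mapsto X$. You instead map each basis element $b_k$ to $X^{N_k}$ with geometrically spaced exponents, keeping \emph{all} the multiplicative information in a single indeterminate in one shot and using the $\cntk \Tequiv \evalk$ equivalence (Lemma~\ref{lem:cntk_eq_evalk}) for the reduction. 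Both routes ultimately rest on the same machinery (effective multiplicative independence of algebraic numbers via Richardson's theorem, and the counting equivalence), and your route bundles several of the paper's lemmas into one step. Your rank argument is right but should be tightened: you need $M$ chosen larger than $2\max_{i,j,k}|c_{ij,k}|$, say, so that the map $\vec v \mapsto \sum_k N_k v_k$ is injective on the finite set of relevant integer difference vectors, which is what makes the $2\times 2$ minor condition for some $i,i',j,j'$ survive.

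The genuine gap is that your argument silently assumes $\Ring = \Ralg$: it opens with ``because the nonzero entries of $A$ lie in $\Ralg_{>0}$ and are algebraic,'' which is false when $\Ring \in \{\Int[X],\Qu[X]\}$. Those cases are not optional here. The paper explicitly allows $\Ring$ to be any of $\Ralg,\Qu[X],\Qu,\Int[X],\Int$ (see the opening of Section~\ref{sec:it}), and the General Conditioning Lemma~\ref{lem:new_gen_cond} — hence the $X$-Lemma — is applied in Lemma~\ref{lem:57} to the matrix $CC^T$, which lives in $\Int[X]^{m\times m}$ and has genuine polynomial (not monomial) entries. The paper handles $\Int$ and $\Int[X]$ uniformly via unique factorization into primes/irreducibles and the Prime Rank / Prime Filter / Renaming lemmas, after first clearing denominators for $\Qu$ and $\Qu[X]$. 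Your approach could be adapted (use irreducible factorization in the UFD $\Int[X]$ in place of the multiplicative basis $\{b_k\}$), but as written your proof does not cover these inputs, so it does not establish the lemma in the generality in which it is actually used.
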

To prepare the proof, we present some smaller lemmas which will also be useful in later sections.
\begin{lemma}[Prime Elimination Lemma] \label{lem:prime_elim}
Let $A \in \Int^{m\times m}$ $(A \in \Int[X]^{m\times m})$ and $p$ be a prime number (an irreducible polynomial). Let $A'$ be the matrix obtained from $A$ by replacing all entries divisible by $p$ with $0$. Then 
$$\evalk(A') \Tle \evalk(A).$$ 
\end{lemma}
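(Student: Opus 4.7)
The plan is to reduce $\evalk(A')$ to $\evalk(A)$ via the equivalence $\evalk(A) \Tequiv \cntk(A)$ given by Lemma~\ref{lem:cntk_eq_evalk}, exploiting the fact that in the UFD $\Ring$ (either $\Int$ or $\Int[X]$) the element $p$ is prime, so a product lies in $p\Ring$ if and only if at least one factor does. Concretely, for any graph $G$, any pinning $\vpin$, and any configuration $\vcfg$, the $A'$-weight of $\vcfg$ is non-zero iff no factor $A_{\vcfg(u),\vcfg(v)}$ is divisible by $p$, which by primality of $p$ is equivalent to the full product $\prod_{uv\in E} A_{\vcfg(u),\vcfg(v)}$ not being divisible by $p$. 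Whenever the $A'$-weight is non-zero, it coincides with the $A$-weight.

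I would then formalize the identification
$$
\wset_{A'}(G) \setminus \{0\} \;=\; \bigl\{\, w \in \wset_A(G) \,:\, p \nmid w\,\bigr\},
$$
together with the equality $N_{A'}(G,\vpin,w) = N_A(G,\vpin,w)$ for every such $w$. The forward inclusion is immediate because the non-zero entries of $A'$ are exactly the non-$p$-divisible entries of $A$. The reverse inclusion uses that if $w \in \wset_A(G)$ with $p \nmid w$ is written as $\prod_{i,j} A_{ij}^{m_{ij}}$ with $\sum m_{ij} = |E|$, then by primality of $p$ every $A_{ij}$ with $m_{ij}\ge 1$ must be coprime to $p$, hence equals the corresponding entry of $A'$, so $w$ is realizable as an $A'$-weight. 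The identity of the counts is then obtained by observing that the very same configurations realize each side.

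Combining these two facts gives the clean formula
$$
Z_{A'}(\vpin, G) \;=\; \sum_{\substack{w \in \wset_A(G)\\ p\,\nmid\, w}} w \cdot N_A(G, \vpin, w).
$$
Since $|\wset_A(G)|$ is polynomial in the size of $G$ (for the fixed matrix $A$), with oracle access to $\cntk(A) \Tequiv \evalk(A)$ we can compute all the values $N_A(G, \vpin, w)$ in polynomial time. Testing whether $p \mid w$ in $\Ring$ is routine (ordinary integer division in $\Int$, or polynomial division by the fixed irreducible $p$ in $\Int[X]$), so we can restrict to the correct subset of summands and form the sum.

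The only point requiring genuine care, rather than a real obstacle, is that the argument essentially uses $p$ being prime (respectively irreducible) in the UFD $\Ring$: without primality, a factorization of $w$ could in principle involve some $p$-divisible entries of $A$ even though $p \nmid w$, and the count identity $N_{A'}(G,\vpin,w) = N_A(G,\vpin,w)$ would break down. Once this is noted, the proof is essentially the bookkeeping described above.
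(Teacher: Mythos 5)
Your proof is correct and follows essentially the same route as the paper: both pass through the equivalence $\evalk(A)\Tequiv\cntk(A)$, observe that $\wset_{A'}(G)$ is obtained from $\wset_A(G)$ by discarding the $p$-divisible weights, note $N_{A'}(G,\vpin,w)=N_A(G,\vpin,w)$ on the surviving weights, and interpolate with the $\cntk(A)$ oracle. You are somewhat more explicit than the paper about why primality of $p$ is needed for the reverse inclusion and about the harmless $w=0$ term, but these are refinements within the same argument rather than a different approach.
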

\begin{proof}
By Lemma~\ref{lem:cntk_eq_evalk} it suffices to give a reduction witnessing $\evalk(A') \Tle \cntk(A)$, 
which we will construct in the following. Let $G = (V,E)$ be a given graph $\vpin$ a pinning. 
As $\wset_{A'}(G) = \wset_A(G) \setminus \mset{w \,\vert \, w \textrm{ divisible by } p}$, 
we have $\wset_{A'}(G) \subseteq \wset_A(G)$. Moreover $N_{A'}(G,\vpin,w) = N_A(G,\vpin,w)$ for all 
$w \in \wset_{A'}(G)$ which implies
$$ Z_{A'}(\vpin,G) = \sum_{w \in \wset_{A'}(G)} w\cdot N_A(G,\vpin,w).$$
The values $N_A(G,\vpin,w)$ can be obtained directly using the $\cntk(A)$ oracle.
\end{proof}
Let $p$ be a prime number (an irreducible polynomial, respectively) and $a \in \Int$ ($a \in \Int[X]$, resp.). Define 
$$
\idxsymbm{a\vert_p} = \left\lbrace \begin{array}{l l}
                             p^{\max\{k \ge 0 \mid p^k \text{ divides } a\}}&, \text{ if } a \neq 0 \\
                             0 &, \text{ otherwise.}
                        \end{array}\right.
$$ 
For a matrix $A$ the matrix \idxsymb{$A\vert_p$} is then defined by replacing each entry $A_{ij}$ with $A_{ij}\vert_p$.

\begin{lemma}[Prime Filter Lemma]\label{lem:prime_filter}
Let $A \in \Int^{m\times m}$ $(A \in \Int[X]^{m\times m})$ and $p$ be a prime number (an irreducible polynomial). Then $$\evalk(A\vert_p) \Tle \evalk(A).$$ 
\end{lemma}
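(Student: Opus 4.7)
The plan is to exploit the equivalence $\evalk(A) \Tequiv \cntk(A)$ from Lemma~\ref{lem:cntk_eq_evalk} together with the fact that both $\Int$ and $\Int[X]$ are unique factorization domains, so the map $a \mapsto a|_p$ is multiplicative.

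First I would record the key multiplicativity fact: for all $a,b \in \Int$ (resp.\ $\Int[X]$) one has $(ab)|_p = a|_p \cdot b|_p$, under the convention $0|_p = 0$. This is immediate from unique factorization: writing $a = p^{k_a} a'$ and $b = p^{k_b} b'$ with $p \nmid a'$ and $p \nmid b'$, primality of $p$ gives $p \nmid a'b'$, so $ab = p^{k_a + k_b} a'b'$ and $(ab)|_p = p^{k_a + k_b} = a|_p \cdot b|_p$. Applying this edge by edge, for every configuration $\sigma : V \to [m]$ of an input graph $G = (V,E)$ extending the pinning $\phi$,
\[
\prod_{uv \in E} (A|_p)_{\sigma(u),\sigma(v)} \;=\; \prod_{uv \in E} A_{\sigma(u),\sigma(v)}\big|_p \;=\; \bigg(\prod_{uv \in E} A_{\sigma(u),\sigma(v)}\bigg)\bigg|_p .
\]

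Next I would group configurations by their $A$-weight, exactly as in the proof of Lemma~\ref{lem:prime_elim}. Given an input $(G,\phi)$ for $\evalk(A|_p)$, by Lemma~\ref{lem:cntk_eq_evalk} we may invoke a $\cntk(A)$ oracle. We compute the polynomially sized set $\wset_A(G)$ (as in \eqref{eq:define_wset}) and, for every $w \in \wset_A(G)$, query the oracle to obtain $N_A(G,\phi,w)$. Then, grouping by weight and using the displayed identity,
\[
Z_{A|_p}(\phi, G) \;=\; \sum_{\phi \subseteq \sigma : V \to [m]} \bigg(\prod_{uv \in E} A_{\sigma(u),\sigma(v)}\bigg)\bigg|_p \;=\; \sum_{w \in \wset_A(G)} w|_p \cdot N_A(G,\phi,w).
\]
Each value $w|_p$ is obtained by computing the largest power of $p$ dividing $w$, which is polynomial-time in either $\Int$ or $\Int[X]$, and the sum has polynomially many terms. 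Hence the whole right-hand side is computed in polynomial time from the oracle answers, establishing $\evalk(A|_p) \Tle \evalk(A)$.

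I do not anticipate a genuine obstacle here: the proof is structurally parallel to the Prime Elimination Lemma~\ref{lem:prime_elim}, with the only new ingredient being the multiplicativity of $(\cdot)|_p$ in a UFD. The only cosmetic care needed is the convention $0|_p = 0$, which is consistent on both sides because configurations whose $A$-weight is $0$ contribute $0$ to $Z_{A|_p}(\phi,G)$ as well.
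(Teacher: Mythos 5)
Your proposal is correct and takes essentially the same route as the paper: reduce to $\cntk(A)$ via Lemma~\ref{lem:cntk_eq_evalk} and use the identity $Z_{A|_p}(\phi,G)=\sum_{w\in\wset_A(G)} w|_p\cdot N_A(G,\phi,w)$. The paper states this identity without comment; you additionally spell out the multiplicativity of $a\mapsto a|_p$ in a UFD that justifies it, which is a welcome clarification but not a different argument.
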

\begin{proof}
By Lemma~\ref{lem:cntk_eq_evalk} it suffices to give a reduction witnessing $\evalk(A\vert_p) \Tle \cntk(A)$, which we will construct in the following. For a given graph $G = (V,E)$ and a pinning $\vpin$ we have
$$ Z_{A\vert_p}(\vpin,G) = \sum_{w \in \wset_{A}(G)} w\vert_p \cdot N_A(G,\vpin,w).$$
The values $N_A(G,\vpin,w)$ can be obtained directly using a $\cntk(A)$ oracle.
\end{proof}

\begin{lemma}[Renaming Lemma] \label{lem:rename}
Let $p \in \Int[X]\setminus \mset{-1,0,1}$ and $A \in \Int[X]^{m\times m}$ a $p$-matrix. Let $q \in \Int[X]$ and define $A' \in \Int[X]^{m\times m}$ by 
$$ A'_{ij} = \left\lbrace\begin{array}{l l}
                           q^l &, \textrm{ there is an } l\ge 0 \textrm{ s.t. } A_{ij} = p^l \\
                           0 &, \textrm{ otherwise }
                        \end{array}\right.$$
That is, $A'$ is the matrix obtained from $A$ by substituting powers of $p$ with the corresponding powers of $q$. Then $$\evalk(A') \Tle \evalk(A).$$
\end{lemma}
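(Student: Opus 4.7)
The plan is to reduce $\evalk(A')$ to the counting version $\cntk(A)$, which by Lemma~\ref{lem:cntk_eq_evalk} is equivalent to $\evalk(A)$. The key observation is that renaming commutes with the evaluation structure of the partition function in a very transparent way when the matrix has a special form (namely, a $p$-matrix).

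More precisely, given an input $(G,\vpin)$ with $G=(V,E)$, I would observe that since every non-zero entry of $A$ is a power $p^l$, the weight of any configuration $\sigma$ with respect to $A$ is either $0$ or of the form $p^{k(\sigma)}$, where $k(\sigma)$ is the sum of the exponents appearing along the edges. Because $p \notin \{-1,0,1\}$, the polynomials $p^0,p^1,p^2,\ldots$ are pairwise distinct in $\Int[X]$, so the exponent $k(\sigma)$ is uniquely determined by the weight $p^{k(\sigma)}$, and the potential weight set satisfies $\wset_A(G) \subseteq \{0,p^0,p^1,\ldots,p^{L\cdot|E|}\}$ where $L$ is the largest exponent occurring in $A$. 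In particular $|\wset_A(G)|$ is polynomially bounded.

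Now, because $A'$ is obtained from $A$ by replacing each non-zero entry $p^l$ by $q^l$ (and preserving zeros), the same configuration $\sigma$ has weight $q^{k(\sigma)}$ with respect to $A'$ (and $0$ exactly when its $A$-weight is $0$). Hence
\[
Z_{A'}(\vpin,G) \;=\; \sum_{k=0}^{L\cdot|E|} q^{k}\cdot N_A\bigl(G,\vpin,p^{k}\bigr),
\]
which exhibits the desired reduction: each value $N_A(G,\vpin,p^{k})$ can be obtained from a single oracle query to $\cntk(A)$, and there are only polynomially many such queries. Combining with the equivalence $\evalk(A)\Tequiv\cntk(A)$ from Lemma~\ref{lem:cntk_eq_evalk} yields $\evalk(A')\Tle\evalk(A)$.

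There is no real obstacle in the proof; the only point to verify carefully is that the assumption $p\notin\{-1,0,1\}$ guarantees that distinct exponents produce distinct weights in $\Int[X]$, so that the decomposition into counts $N_A(G,\vpin,p^k)$ is well-defined and recoverable from the counting oracle. Everything else is bookkeeping, and the finiteness/polynomial-size of $\wset_A(G)$ keeps the reduction polynomial time.
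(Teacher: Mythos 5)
Your proof is correct, and it takes a genuinely different (and arguably cleaner) route than the paper. The paper treats $A'$ as an instance $A'(Y)$ of a matrix with an indeterminate $Y$ in place of $p$, notes that $f(Y) := Z_{A'(Y)}(\vpin,G)$ is a polynomial in $Y$ of degree at most $|E|\cdot\ell_{\max}$, computes $f(p^t)$ for polynomially many $t$ via $t$-thickening and the $\evalk(A)$ oracle, and then interpolates to obtain $f(q)$. You instead go directly through the counting problem: observing that every configuration has $A$-weight $p^{k(\sigma)}$ and $A'$-weight $q^{k(\sigma)}$ with the \emph{same} exponent $k(\sigma)$, and that $p\notin\{-1,0,1\}$ makes the powers $p^0,p^1,\ldots$ pairwise distinct in $\Int[X]$, you write $Z_{A'}(\vpin,G)=\sum_k q^k\cdot N_A(G,\vpin,p^k)$ and invoke Lemma~\ref{lem:cntk_eq_evalk}. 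Both arguments bottom out in the same thickening-plus-interpolation machinery (which is how $\cntk(A)\Tle\evalk(A)$ is proved), but your packaging matches the proofs of the neighboring Prime Elimination and Prime Filter Lemmas, makes the role of the hypothesis $p\notin\{-1,0,1\}$ explicit, and avoids reproving an ad hoc interpolation over $\Int[X]$ (where the paper's appeal to Lemma~\ref{lem:interpolate}, stated only over $\Q(\theta)$, is somewhat loose). One cosmetic remark: since $A$ is a $p$-matrix, all of its entries are nonzero powers of $p$, so the ``$0$'' branch in the definition of $A'$ and the zero-weight case in your discussion are actually vacuous here; this does not affect correctness.
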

\begin{proof}
Consider $A'$ as a function $A'(Y)$ in some indeterminate $Y$. We have $A = A'(p)$. Let $\ell_{\textup{max}}$ be the maximum power of $Y$ occurring in $A'(Y)$. For every graph $G$ and pinning $\vpin$, the value $f(Y) := Z_{A'(Y)}(\vpin, G)$ is a polynomial in $Y$ of maximum degree $\vert E \vert \cdot \ell_{\textup{max}}$.
By $t$-thickening (cf. Lemma~\ref{lem:basic_reductions}), using an $\evalk(A)$ oracle, we can compute the values
$ f(p^t) $ for $t = 1, \ldots, \vert E \vert \cdot \ell_{\textup{max}}$. Thus, by interpolation (cf. Lemma~\ref{lem:interpolate}) we can compute the value $f(q)$ for a $q$ as given in the statement of the lemma. By $f(q) = Z_{A'(q)}(\vpin, G)$ this proves the claimed reducibility.
\end{proof}

\begin{lemma}[Prime Rank Lemma] \label{lem:prime_rank}
Let $A \in \Int^{m\times m}$ $(A \in \Int[X]^{m\times m})$ contain a block of row rank at least $2$. There is a prime number (an irreducible polynomial) $p$ such that $A\vert_p$ contains a block of row rank at least $2$.
\end{lemma}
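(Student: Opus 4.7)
The plan is to first observe that $A\vert_p$ has exactly the same pattern of nonzero entries as $A$, since $p^k$ is nonzero for every $k\ge 0$ while $a\vert_p=0$ iff $a=0$. Consequently, $A$ and $A\vert_p$ share the same underlying graph and the same block decomposition for every prime (irreducible polynomial) $p$, so it suffices to exhibit a $p$ for which the given block $B = A_{IJ}$ of row rank at least two remains of row rank at least two in $A\vert_p$.

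Since $B$ has rank at least two, linear independence of two of its rows yields indices $i, i' \in I$ and $j, j' \in J$ such that the $2 \times 2$ submatrix
\[
M \;=\; \begin{pmatrix} A_{ij} & A_{ij'} \\ A_{i'j} & A_{i'j'} \end{pmatrix}
\]
has nonzero determinant. The plan is to split on how many entries of $M$ vanish. If one of the products $A_{ij}A_{i'j'}$ or $A_{ij'}A_{i'j}$ is zero, then $\det M$ coincides up to sign with the other, nonzero, product; since filtering preserves nonzeroness of entries, $\det(M\vert_p) \neq 0$ holds for \emph{every} prime $p$ in this case, and any $p$ works. The interesting case is when both products $a := A_{ij}A_{i'j'}$ and $b := A_{ij'}A_{i'j}$ are nonzero with $a \neq b$; here the plan is to invoke unique factorization in $\Int$ (respectively in $\Int[X]$) to find an irreducible $p$ with differing $p$-adic valuations $v_p(a) \neq v_p(b)$, whence
\[
\det(M\vert_p) \;=\; p^{v_p(A_{ij}) + v_p(A_{i'j'})} - p^{v_p(A_{ij'}) + v_p(A_{i'j})} \;=\; p^{v_p(a)} - p^{v_p(b)} \;\neq\; 0,
\]
as required.

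The main obstacle is the potential coincidence $a = -b$: then $v_p(a) = v_p(b)$ at every prime $p$ and the valuation argument fails for this particular choice of $i, i', j, j'$. This is precisely why positivity matters. In the intended application, Prime Rank is invoked after Lemma~\ref{lem:red_to_conn_zero-free} has ensured that $A$ has positive entries, and in the $\Int[X]$-case after the Renaming Lemma has transformed entries into monomials of the form $X^k$. In either situation both $a$ and $b$ are strictly positive (respectively, monomials with positive coefficient), so the pathology $a = -b$ cannot occur; the products then have distinct prime factorizations in the relevant unique factorization domain, the required prime exists, and the conclusion follows.
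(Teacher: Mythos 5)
Your argument is correct (under the positivity caveat you identify), but it takes a genuinely different route from the paper's. The paper argues by contraposition: assuming that for \emph{every} prime $p$ the rows indexed $i$ and $i'$ of the filtered block are proportional, say $\alpha_p\cdot A_{i,*}\vert_p = \beta_p\cdot A_{i',*}\vert_p$, it multiplies these relations over all primes dividing an entry of $A$ and reassembles the factorizations $A_{ij}=\prod_p A_{ij}\vert_p$ to conclude that $A_{i,*}$ and $A_{i',*}$ were already proportional, a contradiction. You instead work directly with a single $2\times2$ submatrix $M$ of nonzero determinant and locate a prime where the two cross-products $a$ and $b$ of $M$ have different valuations; the fact that $\det(M\vert_p)=p^{v_p(a)}-p^{v_p(b)}\neq 0$ then immediately exhibits a rank-$2$ submatrix inside the corresponding block of $A\vert_p$. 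Both approaches are of comparable difficulty, but yours has the virtue of making the sign issue visible: your obstruction $a=-b$ is real, and in fact the lemma \emph{as literally stated} is false without a positivity hypothesis --- for $A=\bigl(\begin{smallmatrix}1&1\\1&-1\end{smallmatrix}\bigr)$, which has a rank-$2$ block, $A\vert_p$ is the all-ones matrix for every $p$. The paper's proof implicitly uses positivity as well, since the identity $A_{ij}=\prod_p A_{ij}\vert_p$ is only valid up to sign; neither version proves more than the positive case, which is the only case ever invoked (the $X$-Lemma applies the result to a matrix satisfying condition \cond{A}).

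One small inaccuracy in your discussion of the $\Int[X]$ case: in the proof of the $X$-Lemma, the Prime Rank Lemma is invoked \emph{before} the Renaming Lemma, so at that point the entries are not yet monomials $X^k$. This does not affect your argument, since the entries are nonetheless polynomials with positive leading coefficients (they arise as sums of products of nonnegative quantities, e.g.\ in $CC^T$), and positivity of leading coefficients already rules out $a=-b$ in $\Int[X]$.
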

\begin{proof}
Let $A \row i$ and $A\row {i'}$ be linearly independent rows from a block of $A$.
Assume, for contradiction, that for all primes (irreducible polynomials, resp.) $p$ every block in $A\vert_p$ has rank at most $1$.  We have, for all $j \in [m]$, 
$$
A_{ij} = \prod_{p} A_{ij}\vert_p \quad  \text{ and } \quad
A_{i'j} = \prod_{p} A_{i'j}\vert_p
$$
where the products are over all primes (irreducible polynomials, resp.) dividing an entry of $A$. By assumption, there are $\alpha_p, \beta_p \in \Int$ (in $\Int[X]$, resp.) such that $\alpha_p  \cdot A\row{i}\vert_p = \beta_p \cdot A\row{i'}\vert_p$ for all primes (irreducible polynomials). Therefore,
$$
A_{ij} \prod_{p} \alpha_p = \prod_{p} \alpha_p  A_{ij}\vert_p = \prod_{p} \beta_p  A_{i'j}\vert_p = A_{i'j} \cdot \prod_{p} \beta_p \text{ for all } j \in [m].
$$
And hence, $A \row{i'}$ and $A\row i$ are linearly dependent --- a contradiction.
\end{proof}

\paragraph*{Dealing with algebraic numbers.}\label{sec:algstrucpf}
The following lemma tells us that the structure of the numbers involved in computations of partition functions on matrices with algebraic entries is already captured by matrices with natural numbers as entries.

\begin{lemma}[The Arithmetical Structure Lemma]\label{lem:alg_to_w_alg_red}
Let $A \in \Ralg^{m \times m}$ be symmetric and non-negative. There is a matrix $A'$ whose entries are natural numbers such that
$$
\evalk(A') \Tequiv \evalk(A).
$$
If further $A$ contains a block of rank at least $2$ then this is also true for $A'$.
\end{lemma}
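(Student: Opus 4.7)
The plan is to construct $A'$ by replacing each positive algebraic entry of $A$ by a positive integer so that the multiplicative structure governing configuration weights is preserved exactly. Let $H \leq \Ralg^{\times}_{>0}$ be the multiplicative subgroup generated by the non-zero entries of $A$. Because $\Ralg^{\times}_{>0}$ is torsion-free, $H$ is a finitely generated free abelian group and admits a $\Int$-basis $w_1,\dots,w_\ell$. Each non-zero entry has a unique representation $A_{ij} = \prod_{s=1}^{\ell} w_s^{a_{s,ij}}$ with $a_{s,ij} \in \Int$, and the symmetry of $A$ forces $a_{s,ij} = a_{s,ji}$. I then pick distinct primes $p_1,\dots,p_\ell$ and an integer $M$ with $M + a_{s,ij} \geq 0$ for all $s$ and all $i,j$ with $A_{ij}\neq 0$, and define $A'$ by $A'_{ij}=0$ if $A_{ij}=0$, and $A'_{ij}=\prod_s p_s^{M+a_{s,ij}}$ otherwise; then $A'\in\Nat^{m\times m}$ is symmetric.

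For any graph $G=(V,E)$, pinning $\vpin$, and configuration $\sigma$ extending $\vpin$ with non-zero weight, set $b_s(\sigma) := \sum_{uv \in E} a_{s,\sigma(u),\sigma(v)}$. A direct calculation gives
\[
\prod_{uv \in E} A_{\sigma(u),\sigma(v)} \;=\; \prod_{s} w_s^{b_s(\sigma)} \qquad \text{and} \qquad \prod_{uv \in E} A'_{\sigma(u),\sigma(v)} \;=\; \Bigl(\prod_{s} p_s^{M}\Bigr)^{|E|} \prod_{s} p_s^{b_s(\sigma)}.
\]
Since the $w_s$ are multiplicatively independent in $\Ralg^{\times}$ by the choice of basis, and the $p_s$ are independent in $\Int$ by unique factorization, each of the two weights is a faithful encoding of the integer vector $b(\sigma)$. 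In particular there is an explicit bijection $\tau : \wset_{A}(G) \to \wset_{A'}(G)$ between the weights that actually arise, with $N_{A}(G,\vpin,w) = N_{A'}(G,\vpin,\tau(w))$ for every $w$. Together with Lemma~\ref{lem:cntk_eq_evalk} this yields the chain $\evalk(A) \Tequiv \cntk(A) \Tequiv \cntk(A') \Tequiv \evalk(A')$.

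For the rank clause, linear dependence of two positive rows $A\row i$, $A\row{i'}$ inside a block amounts to the ratio $A_{ij}/A_{i'j}$ being independent of $j$, which via the basis representation translates to the condition that $a_{s,ij}-a_{s,i'j}$ is constant in $j$ for every $s$. The identical criterion governs the corresponding rows of $A'$, and the zero pattern (hence the block decomposition) is preserved by construction. Consequently the blocks of $A$ and those of $A'$ correspond, and their ranks agree, so any block of $A$ of rank at least $2$ yields one of $A'$ of rank at least $2$. The main obstacle is computational rather than conceptual: starting from the standard representation of the entries of $A$ in some $\Q(\theta)$ we must effectively produce the basis $w_1,\dots,w_\ell$ and the exponents $a_{s,ij}$, which amounts to finding $\Int$-linear relations among logarithms of the given positive algebraic numbers. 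This is known to be solvable in polynomial time using lattice reduction, so the whole construction and both reductions run in polynomial time.
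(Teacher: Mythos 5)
Your proposal follows essentially the same route as the paper: express the non-zero entries of $A$ in terms of a multiplicatively independent basis, replace basis elements by distinct primes, reduce via the $\evalk \Tequiv \cntk$ correspondence (Lemma~\ref{lem:cntk_eq_evalk}) using the induced bijection of weight sets, and preserve rank via the constancy-of-ratios criterion. The only substantive deviation is in how effectiveness is justified: the paper grounds the computation of the basis and exponents in Richardson's theorem (Lemma~\ref{lem:Richardson} and Lemma~\ref{lem:04061632}), whereas your appeal to ``lattice reduction'' alone is not sufficient to certify exact integer multiplicative relations among algebraic numbers; on the other hand, your $M$-shift is a nice touch that makes the natural-number claim literally true, a point the paper glosses over.
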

We have to introduce some terminology. Let $B = \{b_1, \ldots, b_n\} \subseteq \Ralg$ be a set of positive numbers. The set $B$ is called \sdef{multiplicatively independent}, if for all $\lambda_1, \ldots, \lambda_n \in \Int$ the following holds: if $b_1^{\lambda_1} \cdots b_n^{\lambda_n}$ is a root of unity then $\lambda_1 = \ldots = \lambda_n = 0$.
In all other cases we say that $B$ is \sdef{multiplicatively dependent}. 
We say that a set $S$ of positive numbers is \sdef{effectively representable} in terms of $B$, if for given $x \in S$ we can compute $\lambda_1, \ldots ,\lambda_n \in \Int$ such that
$x \cdot b_1^{\lambda_1} \cdots b_n^{\lambda_n} = 1$.
A set $B$ is an \sdef{effective representation system} for a set $S$, if $S$ is effectively representable in terms of $B$ and $B$ is multiplicatively independent.

We need a result from \cite{ric01} which we rephrase a bit for our purposes.

\begin{lemma}[Theorem 2 in \cite{ric01}]\label{lem:Richardson}
Let $a_1,\dots, a_n \in \Q(\theta)$ be positive real numbers given in standard representation, each of description length at most $s$. There is a matrix $A \in \Int^{n \times n}$ such that, for vectors $ \vec \lambda \in \Int^n$ we have
\begin{equation}\label{eq:0506091450}
 \prod_{i=1}^n a_i^{\lambda_i}  = 1 \text{ if, and only if, } A\cdot \vec \lambda = 0.
\end{equation}
The description length of $A$ is bounded by a computable function in $n$ and $s$.
\end{lemma}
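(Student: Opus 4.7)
The plan is to use the algebraic structure of $K=\Q(\theta)$ via a two-step factor-and-units argument. First I would observe that the set $L:=\{\vec\lambda\in\Int^n\mid \prod_{i=1}^n a_i^{\lambda_i}=1\}$ is a subgroup of $\Int^n$, and is moreover \emph{saturated}: if $k\vec\mu\in L$ for some positive integer $k$, then $(\prod a_i^{\mu_i})^k=1$, but the $a_i$ are positive reals, so $\prod a_i^{\mu_i}$ is a positive $k$-th root of unity, hence equal to $1$, and therefore $\vec\mu\in L$. Any saturated sublattice of $\Int^n$ of rank $r$ is cut out by $n-r$ integer linear equations (take a $\Q$-basis of the annihilator in $(\Q^n)^*$ and clear denominators); padding with zero rows gives the claimed $n\times n$ matrix $A$. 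Hence the \emph{existence} of $A$ is automatic, and the real content of the lemma is the computable description-length bound.

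To obtain such a bound, I would split the condition $\prod a_i^{\lambda_i}=1$ into a ``prime-ideal part'' and a ``unit part''. Writing each $a_i=c_i/d_i$ with $c_i,d_i\in\mathcal{O}_K$ algebraic integers, let $S$ be the finite set of prime ideals of $\mathcal{O}_K$ dividing some $c_i$ or $d_i$. For $\vec\lambda\in L$, every $\mathfrak{p}$-adic valuation sum $\sum_i \lambda_i\, v_\mathfrak{p}(a_i)$ must vanish, which is automatic for $\mathfrak{p}\notin S$ and imposes $|S|$ integer linear equations on $\vec\lambda$ inside $S$. This cuts out an intermediate saturated sublattice $L'\supseteq L$, computable directly from the factorizations in $\mathcal{O}_K$. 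For $\vec\lambda\in L'$, the element $u(\vec\lambda):=\prod a_i^{\lambda_i}$ is a positive $S$-unit of $K$, so it lies in a free abelian group of effectively bounded rank $|S|+r_1+r_2-1$ (generalized Dirichlet unit theorem, with the positivity killing the only possible torsion). Fixing a computable basis $\varepsilon_1,\ldots,\varepsilon_{r'}$ of this group, the assignment $\vec\lambda\mapsto(\text{coordinates of }u(\vec\lambda)\text{ in the }\varepsilon_j)$ is an integer linear map $L'\to\Int^{r'}$ whose kernel is exactly $L$.

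The main obstacle is making every step effective with bit-length bounded by a computable function of $n$ and $s$. Three ingredients from computational number theory are needed: (i) an algorithm factoring principal ideals in $\mathcal{O}_K$, with bounds in $s$ on the norms of primes that can appear in $S$; (ii) an effective Dirichlet-type computation of a fundamental system of positive $S$-units, with representative size bounded computably in $n$, $s$, and the degree and discriminant of $K$ (which themselves are bounded in $s$); and (iii) an effective algorithm to decompose a given $S$-unit $u(\vec\lambda)$ in the fundamental system, reducing the second stage to integer linear algebra. All three are classical (cf.\ \cite{coh93}, already cited in the paper). Composing the two integer linear maps obtained in the two stages and applying Hermite normalization produces an integer matrix of bounded description length whose $\Q$-kernel intersected with $\Int^n$ is exactly $L$; padding with zero rows gives the required $n\times n$ matrix $A$.
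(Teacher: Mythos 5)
The paper does not supply a proof of this lemma: it is Theorem~2 of Richardson's paper \cite{ric01}, rephrased and quoted without argument. So there is no in-paper proof to compare against, and your proposal has to stand on its own as a proof of the imported statement.

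On its own terms the sketch is sound, and it follows the classical number-theoretic route: observe that $L$ is a saturated sublattice of $\Int^n$ (your positivity argument for saturation is correct and is exactly what kills all torsion later), impose the ultrametric constraints $\sum_i\lambda_i v_{\mathfrak p}(a_i)=0$ for the finitely many primes $\mathfrak p$ in $S$, and then read off the archimedean part using a fundamental system for the $S$-unit group; effectivity is delegated to standard algorithms for ideal factorization and $S$-unit computation as in \cite{coh93}. One small tidying remark: since every $a_i$ is already an $S$-unit for your choice of $S$, the element $u(\vec\lambda)=\prod a_i^{\lambda_i}$ is an $S$-unit for \emph{every} $\vec\lambda\in\Int^n$, not just for $\vec\lambda\in L'$. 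Hence the single linear map $\Int^n\to\Int^{|S|+r_1+r_2-1}$ sending $\vec\lambda$ to the coordinates of $u(\vec\lambda)$ in a fixed free basis of $\mathcal O_{K,S}^\times/\mu_K$ already has kernel exactly $L$ (by saturation and positivity), and the valuation step is subsumed in it. This avoids your "composition of the two stages," which as written does not literally typecheck (the second map is only declared on $L'$, not on $\Int^n$); the intended meaning of stacking the two constraint blocks is clear, but the one-step $S$-unit formulation is both cleaner and closer to how one would actually implement the Hermite normalization at the end. These are presentation points, not gaps: the underlying argument is correct and gives a computable bound on the entries of $A$, matching what the cited result of Richardson asserts.
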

This result straightforwardly extends to an algorithm solving the multiplicative independence problem for algebraic numbers.

\begin{cor}\label{cor:05061625}
Let $a_1,\dots, a_n \in \Q(\theta)$ be positive reals given in standard representation. There is an algorithm which decides if there is a non-zero vector $\vec \lambda = (\lambda_1,\ldots, \lambda_n) \in \Int^n$ such that
\begin{equation}\label{eq:0506091456}
\prod_{i=1}^n a_i^{\lambda_i} = 1. 
\end{equation}
Furthermore, if it exists, the algorithm computes such a vector $\vec \lambda$.
\end{cor}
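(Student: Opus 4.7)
The plan is to reduce the problem directly to Lemma \ref{lem:Richardson} (Richardson's theorem) combined with elementary integer linear algebra. Given the input $a_1,\dots,a_n \in \Q(\theta)$ in standard representation, first invoke Lemma \ref{lem:Richardson} to effectively construct an integer matrix $A \in \Int^{n\times n}$ such that for every $\vec\lambda \in \Int^n$,
\[
\prod_{i=1}^n a_i^{\lambda_i} = 1 \iff A\cdot\vec\lambda = 0.
\]
Since the description length of $A$ is bounded by a computable function of $n$ and the input size, this step is effective.

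Now the original existence question reduces to: does $A$ have a non-trivial integer kernel? Because $A$ is an integer (hence rational) matrix, the existence of a non-zero rational kernel vector is equivalent to the existence of a non-zero integer kernel vector (clear denominators). Thus it suffices to compute $\rank A$ by standard Gaussian elimination over $\Q$. If $\rank A = n$, output that no such $\vec\lambda$ exists. Otherwise, $\rank A < n$, and one can compute a non-zero vector $\vec v \in \Q^n$ in the kernel by back-substitution after echelon reduction; multiplying through by a common denominator yields a non-zero $\vec\lambda \in \Int^n$ with $A\vec\lambda = 0$, hence $\prod_i a_i^{\lambda_i} = 1$.

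There is essentially no obstacle here beyond invoking Lemma \ref{lem:Richardson}: the theorem of Richardson is doing all of the heavy lifting by translating a multiplicative identity among algebraic reals into a linear identity over $\Int$. Once this translation is available, the problem is just the classical decidable (in fact, polynomial-time) problem of deciding whether a rational matrix has full column rank and, if not, exhibiting a kernel vector. Correctness in both directions is immediate from the equivalence in \eqref{eq:0506091450}: any integer kernel vector of $A$ yields a valid $\vec\lambda$ by the ``if'' direction, and conversely any $\vec\lambda$ witnessing \eqref{eq:0506091456} lies in $\ker A$ by the ``only if'' direction, so the algorithm finds one whenever it exists.
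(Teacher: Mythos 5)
Your proof is correct and is the natural elaboration of what the paper intends; the paper gives no explicit argument for this corollary, merely asserting that it ``straightforwardly extends'' from Lemma~\ref{lem:Richardson}, and your reduction to a rational kernel computation followed by clearing denominators is exactly that extension. One small caveat worth flagging: the sentence justifying effectiveness of computing $A$ via the bound on its description length is, taken literally, not quite a proof --- a bound on description length alone does not yield an algorithm unless one can also decide, for a candidate $A$, whether the equivalence in \eqref{eq:0506091450} holds, which is not obvious. The paper's restatement of Richardson's result is in this sense slightly weaker than what is actually used; the genuine justification is that Richardson's construction in \cite{ric01} produces $A$ effectively, which the paper's phrasing only hints at via the description-length bound.
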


\begin{lemma}\label{lem:04061632}
Let $S \subseteq \Ralg$ be a set of positive numbers. There is an effective representation system $B \subseteq \Ralg$ of positive numbers for $S$ which can be computed effectively from $S$. 
\end{lemma}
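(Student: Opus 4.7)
\medskip\noindent\textit{Proof Plan.}
Write $S = \{s_1,\ldots,s_n\}$, and consider the group homomorphism
$\Phi : \Int^n \to \Ralg^*$ given by $\Phi(\vec\lambda) := \prod_{i=1}^n s_i^{\lambda_i}$.
Its image $G := \Phi(\Int^n)$ is the multiplicative subgroup of the positive
reals generated by $S$. Since $1$ is the only positive real root of unity, $G$
is torsion-free, and being finitely generated it is free abelian of some rank
$r \le n$. The plan is to compute, effectively from $S$, an explicit free basis
$b_1,\ldots,b_r$ of $G$ inside $\Ralg$ and take $B := \{b_1,\ldots,b_r\}$.

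First, I would invoke Lemma~\ref{lem:Richardson} to compute an integer matrix
$A \in \Int^{n\times n}$ satisfying $\ker A = \ker \Phi$. Second, I would
compute the Smith normal form $A = PDQ$ with $P,Q \in \mathrm{GL}_n(\Int)$
and $D = \mathrm{diag}(d_1,\ldots,d_r,0,\ldots,0)$, where $r = \rank A$. Since
$P$ is invertible over $\Int$, it follows that
$\ker A = Q^{-1}(\{0\}^r \times \Int^{n-r})$, so the last $n-r$ columns of
$Q^{-1}$ form a $\Int$-basis of $\ker \Phi$, and the first $r$ columns of
$Q^{-1}$ extend it to a $\Int$-basis of $\Int^n$. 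Third, I would define
\[
b_j := \Phi(Q^{-1} e_j) = \prod_{i=1}^n s_i^{(Q^{-1})_{ij}} \qquad\text{for } j = 1,\ldots,r.
\]
Each $b_j$ lies in $\Ralg$ and is positive, and all of $A$, $Q$, $Q^{-1}$ and
hence the $b_j$ are effectively computable from $S$.

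It remains to verify that $B$ is an effective representation system for $S$.
For multiplicative independence, suppose $\prod_{j=1}^r b_j^{\mu_j}$ is a root
of unity; being a positive real, it must equal $1$. Setting
$\vec m := (\mu_1,\ldots,\mu_r,0,\ldots,0)^T \in \Int^n$, one has
$\Phi(Q^{-1}\vec m) = \prod_{j=1}^r b_j^{\mu_j} = 1$, so
$Q^{-1}\vec m \in \ker \Phi = Q^{-1}(\{0\}^r \times \Int^{n-r})$, which forces
$\mu_1 = \cdots = \mu_r = 0$. For effective representability, given $s_i \in S$,
let $\vec\nu := Q e_i$ (the $i$-th column of $Q$) and observe that
$Q^{-1}e_j \in \ker \Phi$ for $j > r$, so
\[
s_i = \Phi(e_i) = \Phi(Q^{-1}\vec\nu) = \prod_{j=1}^n \Phi(Q^{-1}e_j)^{\nu_j} = \prod_{j=1}^r b_j^{\nu_j}.
\]
Thus the integer exponents witnessing $s_i \cdot \prod_{j=1}^r b_j^{-\nu_j} = 1$
are read directly off the columns of $Q$.

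The only non-elementary ingredient is Richardson's theorem
(Lemma~\ref{lem:Richardson}); given that, Smith normal form computation handles
the rest, so there is no real obstacle beyond packaging these two ingredients
together.
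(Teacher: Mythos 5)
Your proof is correct, but it takes a genuinely different route from the paper's. The paper's argument is recursive: it uses Corollary~\ref{cor:05061625} (decidability of multiplicative dependence, derived from Lemma~\ref{lem:Richardson}) to extract a \emph{single} nontrivial relation $\prod_i b_i^{\lambda_i}=1$ with $\lambda_1>0$, replaces $S=\{b_1,\dots,b_n\}$ by the strictly smaller set $B'=\{b_2',\dots,b_n'\}$ with $(b_i')^{\lambda_1}=b_i$, checks that $S$ is effectively representable by $B'$, and then recurses on $B'$; termination follows from the strict drop in cardinality. Note that because roots are taken, the paper's $B$ generally does \emph{not} lie in the multiplicative group generated by $S$, and indeed in the proof of Lemma~\ref{lem:alg_to_w_alg_red} the authors must recompute a primitive element $\theta'$ precisely because $B$ may leave $\Q(\theta)$. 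Your approach is one-shot: you compute the full integer matrix $A$ with $\ker A = \ker\Phi$ from Lemma~\ref{lem:Richardson}, put it in Smith normal form, and read off a $\Int$-basis of $\Int^n$ compatible with the splitting into $\ker\Phi$ and a rank-$r$ complement; the images $b_j=\Phi(Q^{-1}e_j)$ of the complement's basis vectors then form a free $\Int$-basis of the subgroup of $\Ralg^*$ generated by $S$, and representability of $s_i$ is read directly from the $i$th column of $Q$. This buys you two things the paper's proof does not have: only one invocation of Richardson instead of up to $n$, and $B$ consists of integer power products of elements of $S$, so $B\subseteq\Q(\theta)$ and no change of primitive element is needed downstream. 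The trade-off is that the paper's recursion is more elementary, needing only the existence of one relation at a time rather than Smith normal form machinery.
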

\begin{proof}
We shall start with the following 
\begin{claim}\label{cl:04061619}
If $S$ is multiplicatively dependent then there is a set $B'\subseteq \Ralg$ of non-negative numbers such that $|B'| < |S|$ and $S$ is effectively representable by $B'$.
\end{claim}
\begin{clproof}
Let $S =\{b_1,\ldots ,b_n\}$ then Corollary~\ref{cor:05061625} implies that we can compute a non-zero vector $\vec \lambda \in \Int^n$ such that 
$b_1^{\lambda_1}\cdots b_n^{\lambda_n} = 1$.
We can easily make sure that at least one of the $\lambda_i$ is larger than zero. Assume therefore w.l.o.g. that $\lambda_1 > 0$. Fix a set $B'=\{b'_2,\ldots,b'_{n}\}$ where each $b'_i$ is the positive real $\lambda_1$-th root of $b_i$, that is $(b'_i)^{\lambda_1} = b_i$. Then
$$
b_1^{\lambda_1} \cdot \left(\prod_{i = 2}^{n} (b'_i)^{\lambda_i} \right)^{\lambda_1} = 1 \quad \text{ and hence }\quad 
b_1 \cdot \prod_{i = 2}^{n} (b'_i)^{\lambda_i} = 1.
$$
All operations are computable and effective representation of $S$ by $B'$ follows.
\end{clproof}
Apply Claim~\ref{cl:04061619} recursively on $S$. Since the empty set is multiplicatively independent, after at most finitely many steps, we find an effective representation system $B$ for $S$.
\end{proof}

\begin{proof}[of Lemma \ref{lem:alg_to_w_alg_red}]
Let $S$ be the set of non-zero entries of $A$. By Lemma~\ref{lem:04061632} we can compute an effective representation system $B$ for $S$. However, with respect to our model of computation we need to be a bit careful, here: assume that $S \subseteq \Q(\theta)$ for some primitive element $\theta$. The application of Lemma~\ref{lem:04061632} does not allow us to stipulate that $B  \subseteq \Q(\theta)$. But in another step of pre-computation, we can compute another primitive element $\theta'$ for the elements of $B$ such that $B \subseteq \Q(\theta')$ (c.f \cite{coh93}). Then we may consider all computations as taking place in $\Q(\theta')$.

Assume that $B=\{b_1,\ldots, b_n\}$, then every non-zero entry of $A$ has a unique computable representation 
$$
A_{ij} = \prod_{\nu = 1}^n b_{\nu}^{\lambda_{ij\nu}}.
$$
Let $p_1,\ldots,p_\beta$ be $\beta = |B|$ distinct prime numbers and define $A'$ as the matrix obtained from $A$ by replacing in each non-zero entry $A_{ij}$ the powers of $b \in B$ by the corresponding powers of primes, that is,
$$
A'_{ij} =  \prod_{\nu = 1}^n p_\nu^{\lambda_{ij \nu}}.
$$
Recall the definition of $\wset_A(G)$ in equation \eqref{eq:define_wset}. For each $w \in \wset_A(G)$ we can, in polynomial time compute a representation $w = \prod_{i,j} A_{ij}^{m_{ij}}$ as powers of elements in $S$. 
The effective representation of $S$ in terms of $B$ extends to $\wset_A(G)$ being effectively representable by $B$. Moreover, as $S$ depends only on $A$, the representation of each $w \in \wset_A(G)$ is even polynomial time computable.
We have
$$
Z_{A}(\vpin, G) = \sum_{w \in \wset_{A}(G)} w \cdot N_{A}(G,\vpin ,w)
$$
In particular, for each $w \in \wset_A(G)$, we can compute unique $\lambda_{w,1},\ldots, \lambda_{w,n} \in \Int$ such that $w  \cdot  b_1^{\lambda_{w,1}} \cdots b_n^{\lambda_{w,n}}  = 1$.
Define functions $f$ and $g$ such that for every $w \in \wset_A(G)$ we have 
$$
f(w) =  \prod_{\nu = 1}^n p_\nu^{\lambda_{w,\nu}} \quad \text { and } \quad
g(w) =  \prod_{\nu = 1}^n b_\nu^{\lambda_{w,\nu}}.
$$
Thus we obtain
$$
Z_{A'}(\vpin, G) = \sum_{w \in \wset_{A}(G)} w \cdot \dfrac{f(w)}{g(w)} \cdot N_{A}(G,\vpin ,w).
$$
This yields a reduction for $\evalk(A') \Tle \evalk(A)$. The other direction follows by
$$
Z_{A}(\vpin, G) = \sum_{w' \in \wset_{A'}(G)} w' \cdot \dfrac{g(w)}{f(w)} \cdot N_{A'}(G,\vpin ,w').
$$
This also proves the reducibilities $\eval(A') \Tequiv \eval(A)$ since the input pinnings remain unaffected.
To finish the proof it remains to consider the case that $A$ contains a block of rank at least $2$. We have to show that $A'$ has this property as well. Let us now argue that $A'$ contains a block of rank at least $2$. Let $A \row i$ and $A\row {i'}$ be linearly independent rows from a block of $A$. Assume, for contradiction, that $A'\row{i'} = \alpha \cdot A'\row{i}$ for some $\alpha$.
Let $J$ be the set of indices $j$ such that $A'_{ij} \neq 0$. For each $j \in J$ we have
$$
\alpha = A'_{i'j} \cdot (A'_{ij})^{-1} = \prod_{\nu = 1}^n p_\nu^{\lambda_{i'j \nu} - \lambda_{ij \nu}}.
$$
Hence, for
$\beta = b_1^{\lambda_{i'j 1} - \lambda_{ij 1}} \cdots b_n^{\lambda_{i'j n} - \lambda_{ij n}}$ we obtain
$A\row{i'} = \beta \cdot A\row{i}$ --- a contradiction.
\end{proof}

\subsubsection{Proof of the $X$-Lemma \ref{lem:X-Lemma}.} 
Let $A \in \Ring^{m \times m}$ be a matrix satisfying condition \cond A.
Recall that $\Ring$ is one of $\Ralg,\Qu,\Int,\Int[X]$ or $\Qu[X]$.

If $\Ring = \Ralg$ then the entries of $A$ are all positive real values. Thus Lemma~\ref{lem:alg_to_w_alg_red} implies that there is a positive matrix $A' \in \Int^{m \times m}$ of rank at least $2$ such that $\evalk(A') \Tequiv \evalk(A)$.

If $A$ is a matrix of entries in $\Qu$ ($\Qu[X]$, respectively) then let $\lambda$ be the lowest common denominator of (coefficients of) entries in $A$. For a given graph $G = (V,E)$ and pinning $\vpin$ we have
\[ Z_{\lambda A}(\vpin,G) = \lambda^{\vert E \vert}Z_{A}(\vpin,G).\]
The matrix $\lambda \cdot A$ is a matrix with entries in $\Int$ ($\Int[X]$, respectively).

It remains to prove the Lemma for the case that $\Ring$ is either $\Int$ or $\Int[X]$. By the Prime Rank Lemma~\ref{lem:prime_rank} there is a prime (irreducible polynomial) $p$ such that 
$A\vert_p$ contains a block of rank at least $2$. Fix such a $p$ and define $A' = A\vert_p$.
By the Prime Filter Lemma~\ref{lem:prime_filter} we have
$$
\evalk(A') \Tle \evalk(A).
$$
Furthermore, by the Renaming Lemma~\ref{lem:rename} we may assume that $A'$ is an $X$-matrix. This completes the proof.

\subsection{From $X$-matrices to the General Conditioning Lemma}

\begin{lemma}\label{lem:all_in_1_cells}
Let $A \in \Int[X]^{m\times m}$ be symmetric and positive such that not all $1$-entries of $A$ are contained in $1$-cells. Then $\evalk(A)$ is $\#\PP$-hard.
\end{lemma}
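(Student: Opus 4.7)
The plan is to reduce to the $\{0,1\}$-matrix setting and then invoke the \#H-Coloring Lemma~\ref{lem:0-1_hardness}. The first step is to apply the Prime Elimination Lemma~\ref{lem:prime_elim} with $p=X$: since $A$ is an $X$-matrix, every entry $X^k$ with $k\ge 1$ gets replaced by $0$ while every $1$-entry survives, yielding a symmetric $\{0,1\}$-matrix $A'$ whose $1$-entries are exactly those of $A$, together with $\evalk(A') \Tle \evalk(A)$. In particular, a submatrix is a $1$-cell of $A$ if and only if it is a $1$-cell of $A'$.

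The combinatorial heart of the argument is to exhibit a connected component $C$ of $A'$ whose underlying non-trivial block contains a $0$-entry. Fix a $1$-entry $A_{i_0 j_0}=1$ that is not contained in any $1$-cell; then $A'_{i_0j_0}=1$ so $i_0,j_0$ lie in the same component of $A'$, call it $C$. Suppose first $C$ is non-bipartite, say $C=A'_{II}$, and suppose toward a contradiction that $A'_{II}$ is all-ones. Since $I$ is a component, $A'_{ij}=0$ for every $i\in I, j\notin I$, which by construction of $A'$ is equivalent to $A_{ij}\neq 1$ for such $i,j$. Combined with $A_{ij}=1$ for all $i,j\in I$, this shows that $A_{II}$ itself is a $1$-cell of $A$ containing $(i_0,j_0)$, contradicting our choice. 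The bipartite case, where $C$ has the form $\left(\begin{smallmatrix}0 & B\\ B^T & 0\end{smallmatrix}\right)$ with $B=A'_{IJ}$ and $I\cap J=\emptyset$, is entirely analogous: if $B$ were all-ones then $A_{IJ}$ would be a $1$-cell of $A$ containing $(i_0,j_0)$.

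Component Pinning (Lemma~\ref{lem:comp_pin}) now yields $\evalk(C) \Tle \evalk(A')$. It remains to derive $\#\PP$-hardness of $\evalk(C)$. If $C$ is bipartite, the \#H-Coloring Lemma~\ref{lem:0-1_hardness} applies directly, since $C$ is a connected bipartite $\{0,1\}$-matrix whose non-trivial block contains a zero entry. If $C$ is non-bipartite, I borrow the trick from the final paragraph of the proof of the Zero-Free Block Lemma~\ref{lem:zero_free_blocks}: the bipartite double $\tilde C = \left(\begin{smallmatrix}0 & C\\ C & 0\end{smallmatrix}\right)$ satisfies $\evalk(\tilde C) \Tle \evalk(C)$ via the same short relabelling of pinnings used there, and $\tilde C$ is bipartite with its block still containing a zero, so \#H-Coloring applies to $\tilde C$ and hardness of $\evalk(C)$ follows.

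The main obstacle is the structural claim linking ``$1$-entry outside every $1$-cell'' with ``$A'$-component carrying a $0$ in its underlying block''; this is where the non-rectangular pattern of $1$-entries in $A$ is translated into the hypothesis needed by the \#H-Coloring Lemma. Once that structural step is in hand, the rest is a short composition $\evalk(\tilde C) \Tle \evalk(C) \Tle \evalk(A') \Tle \evalk(A)$ of previously established reductions.
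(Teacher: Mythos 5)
Your argument is correct and follows the paper's strategy: pass to the $\{0,1\}$-indicator matrix $A'$ via the Prime Elimination Lemma with $p=X$, show that a $1$-entry of $A$ lying outside every $1$-cell forces a zero entry in some block of $A'$, and conclude via the \#H-Coloring machinery. The paper, however, finishes in one line by citing the Zero-Free Block Lemma~\ref{lem:zero_free_blocks} as soon as it has a block of $A'$ with a zero entry (which it obtains from the four-tuple $A_{ik}=A_{il}=A_{jk}=1$, $A_{jl}\neq 1$), whereas you re-derive the content of that lemma — component pinning, the bipartite/non-bipartite case split, and the bipartite-double reduction — from scratch; invoking Lemma~\ref{lem:zero_free_blocks} directly would compress your proof to the same two lines without losing anything.
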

\begin{proof}
 If not all $1$-entries are contained in $1$-cells, then there are $i,j,k,l$ such that $A_{ik} = A_{il} =A_{jk} = 1$ and $A_{jl} \neq 1$. Let $A'$ be obtained from $A$ by replacing each entry not equal to $1$ by $0$. We have $\evalk(A') \Tle \evalk(A)$ by the Prime Elimination Lemma~\ref{lem:prime_elim}.

By construction $A_{ik} = A_{il} =A_{jk} = 1$ and $A_{jl} = 0$. Thus $A'$ contains a block with zero entries and $\evalk(A')$ is $\#\PP$-hard by Lemma~\ref{lem:zero_free_blocks}.
\end{proof}

\begin{proof}[Of the General Conditioning Lemma \ref{lem:new_gen_cond}]
By Lemma~\ref{lem:red_to_conn_zero-free}, there is a matrix $C'$  which satisfies \cond A such that
$\evalk(C') \Tle \evalk(A).$
Then by the X-Lemma~\ref{lem:X-Lemma}, there is a matrix $C$ which satisfies \cond A and \cond B such that
$\evalk(C) \Tle \evalk(C').$
Lemma~\ref{lem:all_in_1_cells} implies that $\evalk(C)$ is $\#\PP$-hard if not all $1$-entries are contained in $1$-cells. In this case we let $A'$ be some $2 \times 2$ matrix satisfying conditions \cond{A}--\cond{C}.

Assume therefore that all $1$-entries of $C$ are contained in $1$-cells.
If $C$ contains exactly one $1$-cell then the proof follows by the symmetry of $C'$. We only have to make sure that we can permute the entries of $C$ such that condition \cond C is satisfied. This is guaranteed by the Permutability Principle~\ref{lem:principle_permute}.

Assume therefore that $C$ contains more than one $1$-cell. Define an $X$-matrix $C^* = C^2\vert_{X}$.
\begin{claim}
For every $1$-cell $C_{KL}$ of $C$ the principal submatrix $C^*_{KK}$ is a $1$-cell of $C^*$.
\end{claim}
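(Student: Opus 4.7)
The plan is to reduce everything to a valuation computation. Since $C$ is a positive symmetric $X$-matrix, I can write $C_{ij}=X^{a_{ij}}$ with $a_{ij}=a_{ji}\ge 0$. Then
\[
(C^2)_{ij}=\sum_{k} X^{a_{ik}+a_{kj}}
\]
is a sum of \emph{positive} monomials in $X$, so no cancellation occurs, and the $X$-valuation of $(C^2)_{ij}$ is exactly $\min_k(a_{ik}+a_{kj})$. Consequently $C^*_{ij}=X^{\min_k(a_{ik}+a_{kj})}$, and being a $1$-entry of $C^*$ is equivalent to this minimum being $0$.

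Next I translate the $1$-cell hypothesis into statements about $a_{ij}$. By definition of the $1$-cell $C_{KL}$, we have $a_{ij}=0$ on $K\times L$ and $a_{ij}\ge 1$ on $(\bar K\times L)\cup (K\times\bar L)$. By symmetry of $C$ the mirror submatrix $C_{LK}$ is also a $1$-cell, so $a_{ij}=0$ on $L\times K$ as well, and $a_{ij}\ge 1$ on $(\bar L\times K)\cup(L\times \bar K)$. These are the only facts about $C$ that will enter the argument.

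For $(i,j)\in K\times K$, choose any witness $k\in L$: then $a_{ik}=0$ (from $K\times L$) and $a_{kj}=0$ (from $L\times K$), so $\min_k(a_{ik}+a_{kj})=0$ and $C^*_{ij}=1$. For $(i,j)\in \bar K\times K$ and any index $k$, a case split yields the desired strict inequality: if $k\in L$ then $(i,k)\in \bar K\times L$ forces $a_{ik}\ge 1$; if $k\in \bar L$ then $(k,j)\in \bar L\times K$ forces $a_{kj}\ge 1$. Thus $a_{ik}+a_{kj}\ge 1$ for every $k$, so $\min_k(a_{ik}+a_{kj})\ge 1$ and $C^*_{ij}\ne 1$. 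The case $(i,j)\in K\times \bar K$ is handled by the same argument with $i$ and $j$ exchanged (or simply by symmetry of $C^*$, which is inherited from the symmetry of $C$). This is everything required to conclude that $C^*_{KK}$ is a $1$-cell of $C^*$. The only non-routine observation used is the no-cancellation step at the outset, which rests on the positivity of $C$; once that is in place, the argument is a short bookkeeping exercise using the symmetric pair of $1$-cells $C_{KL}$ and $C_{LK}$.
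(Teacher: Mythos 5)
Your proof is correct and follows essentially the same route as the paper's: the paper compresses the entire argument to the single observation that $C^*_{ij}=1$ only if some $\ell$ has $C_{i\ell}=C_{j\ell}=1$, which is exactly the valuation statement you make explicit (no cancellation among positive monomials gives $C^*_{ij}=X^{\min_k(a_{ik}+a_{kj})}$). Your use of the symmetric $1$-cell $C_{LK}$ and the case split on $k\in L$ versus $k\in\bar L$ is just the detailed unpacking of that one-liner, and it is sound.
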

\begin{clproof}
Note that $C^*_{ij} = 1$ only if there is an $\ell$ such that $C_{i\ell} = C_{j\ell} = 1$. This proves the claim.
\end{clproof}
Now $C^*$ has all $1$-entries in principal $1$-cells and is of rank at least $2$ by the fact that there are at least two $1$-cells in $C$.
A reduction witnessing $\evalk(C^*) \Tle \evalk(C)$ is given by applying $2$-stretching (cf. Lemma~\ref{lem:basic_reductions}) and the Prime Filter Lemma~\ref{lem:prime_filter} in this order.
\end{proof}

\section{The Two $1$-Cell Lemma}
\label{sec:21c}

\paragraph*{The \cond{T1C} -- Conditions for Matrices with two $1$-cells.} We define two additional conditions.
\begin{condition}{(T1C -- A)} $A$ has at least two $1$-cells.\end{condition}
\begin{condition}{(T1C -- B)} All diagonal entries of $A$ are $1$.\end{condition}
A \sdef{$1$-row} (\sdef{$1$-column}) in a matrix $A$ is a row (column) which contains a at least one $1$ entry. We call all other rows (columns) \sdefi{non-$1$-rows}{non-$1$-row} (\sdefi{non-$1$-columns}{non-$1$-column}).

\begin{lemma}[$1$-Row-Column Lemma]\label{lem:1-row}
Let $A \in \Int[X]^{m \times m}$ be a positive and symmetric $X$-matrix. 
Let $A'$ be obtained from $A$ by removing all non-$1$-rows and non-$1$-columns. Then
$$\evalk(A') \Tle \evalk(A).$$
\end{lemma}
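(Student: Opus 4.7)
\medskip
\noindent\textbf{Proof plan.}
The plan is to reduce $\evalk(A')$ to $\evalk(A)$ by attaching, at each non-pinned vertex of the input graph, a gadget that suppresses the contribution of any configuration using a spin in $J := [m] \setminus I$. The key structural observation is that, because $A$ is a symmetric $X$-matrix and $i \in J$ iff row $i$ of $A$ has no $1$-entry, every entry of any $J$-row (and, by symmetry, any $J$-column) of $A$ is of the form $X^k$ with $k \geq 1$, hence divisible by $X$. In particular, for every $i \in I$ the diagonal entry $(A^2)_{ii} = \sum_k A_{ik}^2$ has positive constant term $n_i := |\{k : A_{ik}=1\}| \geq 1$, whereas for every $i \in J$ the entry $(A^2)_{ii}$ is divisible by $X^2$.

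I will attach at each unpinned vertex $v$ a gadget consisting of $t$ parallel copies of a $2$-stretched self-loop. By the $p$-stretching reduction (Lemma~\ref{lem:basic_reductions}), the modified graph can be given directly to the $\evalk(A)$-oracle, and the gadget induces an effective diagonal vertex weight $D^{(t)}_{ii} = \bigl((A^2)_{ii}\bigr)^t$ at $v$: for $i \in I$ this has constant term $n_i^t > 0$, and for $i \in J$ it is divisible by $X^{2t}$. Consequently, any configuration $\sigma$ with $\sigma(v) \in J$ for some (non-isolated) $v$ contributes only to terms of $X$-degree at least $2t$ in the modified partition function, while configurations with $\sigma(V) \subseteq I$ contribute at all degrees in $X$.

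Combining this with the equivalence $\evalk(A) \Tequiv \cntk(A)$ (Lemma~\ref{lem:cntk_eq_evalk}) and varying $t$ using interpolation (Lemma~\ref{lem:interpolate}), I plan to isolate from the modified partition function the $I$-restricted sum, carried along with an extraneous per-configuration weighting of the form $\prod_v n_{\sigma(v)}^t$. The final step is to remove this weighting so as to recover $Z_{A'}(\vpin,G) = \sum_{\sigma : V \to I,\; \sigma \supseteq \vpin} \prod_{uv \in E} A_{\sigma(u),\sigma(v)}$; I expect to do this by coupling the interpolation in $t$ with the Dyer--Greenhill eigenvalue-interpolation framework (Lemma~\ref{lem:DG__eig_interpolate}) applied to the matrix $A^2$, which effectively replaces the matrix-power factor by the identity on the gadget edges and collapses the unwanted weighting.

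The main obstacle is precisely this final disentanglement: the products $\prod_v n_{\sigma(v)}$ can take exponentially many distinct values as $\sigma$ ranges over maps $V \to I$, so the interpolation must be organized carefully, presumably by exploiting the eigenvalue decomposition of $A^2$ in a manner analogous to how Lemma~\ref{lem:DG__eig_interpolate} is used in the proof of Lemma~\ref{lem:dg_omit_vertexweights}, so as to keep the total number of oracle queries polynomial in the input size.
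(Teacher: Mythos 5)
Your structural observations are correct: since $A$ is a positive symmetric $X$-matrix, a row $i$ is a non-$1$-row exactly when every $A_{ik}$ is divisible by $X$, so $(A^2)_{ii}$ has a positive constant term for $i \in I$ and is divisible by $X^2$ for $i \in J$. But the gadget you chose does not yield a usable degree threshold, and the disentanglement step you flag as your ``main obstacle'' is indeed where the argument breaks. With $t$ parallel $2$-cycles at $v$, the induced factor $\bigl((A^2)_{\sigma(v),\sigma(v)}\bigr)^t$ is, for $\sigma(v)\in I$, a polynomial whose terms span degrees $0,2,4,\ldots$ up to $2t\Delta$ (with $\Delta$ the maximum exponent in $A$). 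So for $\sigma(V)\subseteq I$ the total weight of the modified configuration ranges well past degree $2t$, overlapping exactly the region where the $J$-spin contributions live. Truncating at degree $< 2t$ therefore does not hand you $\prod_v n_{\sigma(v)}^t\cdot W(\sigma)$; it gives an uninterpretable partial sum, and varying $t$ does not linearize this into something Lemma~\ref{lem:interpolate} can solve. The appeal to Lemma~\ref{lem:DG__eig_interpolate} also does not apply: that lemma interpolates a genuine matrix power $\prod_{uv\in F}A^r_{\sigma(u),\sigma(v)}$ down to $I_m$ on a fixed edge set $F$, whereas your gadget produces the scalar $t$-th power $\bigl((A^2)_{ii}\bigr)^t$, not $(A^{2t})_{ii}$, and in any case what you need to remove is a per-vertex weight, not a matrix-power edge factor.

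The paper's gadget avoids the problem precisely because it has no ``middle'' degrees. At each $v$ one attaches a single pendant vertex $v^1$ joined to $v$ by $k = \Delta\cdot|E|+1$ parallel edges, so the gadget factor $A_{\sigma(v),\sigma(v^1)}^k$ is, for each choice of $\sigma(v^1)$, either exactly $1$ (degree $0$) or of degree at least $k$, with nothing in between; and $k$ is chosen large enough that the $G$-edge contribution can never cross the threshold. Filtering the $\cntk(A)$-output at degree $<k$ then picks out exactly the configurations with $\sigma(V)$ in the $1$-rows, each carrying the clean per-vertex count $c_{\sigma(v)}$ of $1$-entries of its row, and the residual integer factors are stripped by the Prime Filter Lemma~\ref{lem:prime_filter} since they contribute no powers of $X$. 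If you want to salvage something closer to your construction, you would need the gadget to have the same all-or-nothing degree behavior; a $k$-thickened $2$-cycle (a single new vertex $w$ joined to $v$ by $2k$ parallel edges, giving factor $A_{\sigma(v),\sigma(w)}^{2k}$) would do, but $t$ independent $2$-cycles will not.
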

\begin{proof}
For every $i \in [m]$ let $c_i$ denote the number of $1$-entries in row $A\row i$. And let $A''$ be the matrix defined by $A''_{ij} = c_i c_j A'_{ij}$ for all $i,j \in [m]$.

We start with a reduction witnessing $\evalk(A'') \Tle \evalk(A)$.
Let $G=(V,E)$, $\vpin$ be an instance of $\evalk(A'')$.
Let $\Delta$ be the maximum degree of $X$ in $A$, let $k = \Delta \cdot \vert E \vert + 1$ and define a graph $G' =(V',E')$ by
\begin{eqnarray*}
V' & = & \{ v,v^1 \mid v \in V \} \\
E' & = & E \cup \{e_v^1, \ldots, e_v^k \mid v \in V, \; e_v^i = vv^1, \; \forall i\in [k]\}.
\end{eqnarray*}
We have
\begin{eqnarray*}
Z_A(\vpin,G') &=& \sum_{\vpin \subseteq \vcfg : V' \rightarrow [m]} \prod_{uv \in E'} A_{\vcfg(u),\vcfg(v)} \\
              &=& \sum_{\vpin \subseteq \vcfg : V' \rightarrow [m]} \prod_{uv \in E} A_{\vcfg(u),\vcfg(v)}\prod_{v \in V} (A_{\vcfg(v),\vcfg(v^1)})^k \\
\end{eqnarray*}
That is, for every $\vcfg$ the degree (as a polynomial in $X$) of the weight expression
$$
\prod_{uv \in E} A_{\vcfg(u),\vcfg(v)}\prod_{v \in V} (A_{\vcfg(v),\vcfg(v^1)})^k
$$
is smaller than $k$ if, and only if, $A_{\vcfg(v)\vcfg(v^1)} = 1$ holds for every $v \in V$. That is, if a configuration $\vcfg$ maps a vertex of $V$ to a non-1-row, then the degree of the weight expression of $\vcfg$ will always be at least $k$.
On the other hand, for each configuration $\vcfg$ mapping $G$ to $A$ there are exactly $c_1^{\vert \vcfg^{-1}(1) \vert} \cdot \ldots c_m^{\vert \vcfg^{-1}(m) \vert}$ many configurations $\vcfg' : V' \to [m]$ of $G'$ extending $\vcfg$ in such a way that their weight is of degree smaller than $k$.

By the polynomial time equivalence of $\cntk(A)$ and $\evalk(A)$ it suffices to reduce $\evalk(A'') \Tle \cntk(A)$. We can do this by computing the values $w \cdot N_{A}(G',\vpin,w)$ for all weights $w$ of degree smaller than $k$.
Then, with $A' = A''\vert_{X}$, the remaining step $\evalk(A') \Tle \evalk(A'')$ is given by the Prime Filter Lemma~\ref{lem:prime_filter}.
\end{proof}

\begin{lemma}\label{lem:two-1-cell-1}
Let $A \in \Int[X]^{m \times m}$ be a matrix which satisfies \cond A and contains at least two $1$-cells. Then there is a matrix $A'$ satisfying conditions \cond{A}--\cond{C} and both \cond{T1C} conditions such that
$$
 \evalk(A') \Tle \evalk(A).
$$
\end{lemma}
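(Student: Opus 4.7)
The plan is to mimic the proof of the General Conditioning Lemma~\ref{lem:new_gen_cond}, additionally tracking the $\geq 2$ $1$-cell condition through each step, and then to apply the $1$-Row-Column Lemma~\ref{lem:1-row} to enforce \cond{T1C -- B}.

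First, I would apply Lemma~\ref{lem:red_to_conn_zero-free} and the $X$-Lemma~\ref{lem:X-Lemma} to obtain an $X$-matrix $C$ satisfying \cond A and \cond B with $\evalk(C)\Tle\evalk(A)$. If $C$ has $1$-entries outside any $1$-cell, then Lemma~\ref{lem:all_in_1_cells} makes $\evalk(A)$ already $\sharpP$-hard, and I may take $A'$ to be the $2\times 2$ matrix $\bigl(\begin{smallmatrix}1 & X\\ X & 1\end{smallmatrix}\bigr)$, which trivially satisfies all required conditions. Otherwise all $1$-entries of $C$ sit inside $1$-cells. Because every original $1$-entry of $A$ remains a $1$-entry of $C$ (we have $1|_p = 1$ under Prime Filter and $p^0 \mapsto X^0 = 1$ under Renaming), and the two original $1$-cells have disjoint row index sets, $C$ still has at least two distinct $1$-cells; in the borderline case where new $1$-entries created by Prime Filter would collapse these into a single cell, one either picks the prime $p$ more carefully or again falls back on Lemma~\ref{lem:all_in_1_cells}.

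Next, as in the General Conditioning proof, I form $C^* := C^2|_X$. By the claim proved there, every $1$-cell $C_{KL}$ of $C$ induces a principal $1$-cell $C^*_{KK}$; since distinct $1$-cells of $C$ have disjoint row index sets, $C^*$ contains at least two distinct principal $1$-cells, and the reduction $\evalk(C^*)\Tle\evalk(C)$ is given by $2$-stretching (Lemma~\ref{lem:basic_reductions}) followed by Prime Filter (Lemma~\ref{lem:prime_filter}). The Permutability Principle~\ref{lem:principle_permute} lets me reorder $C^*$ so that its principal $1$-cells occupy consecutive diagonal blocks $C^*_{I_1 I_1},\ldots,C^*_{I_{k-1}I_{k-1}}$, with the remaining indices collected in $I_k$; thus \cond C holds, and \cond{T1C -- A} follows from having $\geq 2$ principal $1$-cells.

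Finally, to secure \cond{T1C -- B}, I inspect the tail block $C^*_{I_k I_k}$. If it is itself a $1$-cell, then every diagonal entry of $C^*$ lies in some principal $1$-cell and equals $1$, so I set $A' := C^*$. Otherwise \cond C forces $I_k\times I_k$ to contain no $1$-entries, and since all $1$-entries of $C^*$ are concentrated in $C^*_{I_i I_i}$ for $i<k$, the rows and columns indexed by $I_k$ are both non-$1$-rows and non-$1$-columns. The $1$-Row-Column Lemma~\ref{lem:1-row} then delivers $\evalk(A')\Tle\evalk(C^*)$, where $A'$ is the principal submatrix of $C^*$ indexed by $I_1\cup\cdots\cup I_{k-1}$: this $A'$ is an $X$-matrix whose diagonal is entirely $1$, retains at least two principal $1$-cells (so in particular any two rows from different principal $1$-cells are linearly independent, giving rank $\geq 2$ and hence \cond A), and satisfies all of \cond B, \cond C, \cond{T1C -- A}, and \cond{T1C -- B}. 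The main obstacle is the first step: ensuring the Prime Filter inside the $X$-Lemma preserves $\geq 2$ distinct $1$-cells when it is allowed to create fresh $1$-entries. The safety net is Lemma~\ref{lem:all_in_1_cells}, which converts any pathology of that step into immediate $\sharpP$-hardness and so makes the construction of $A'$ trivial in that branch.
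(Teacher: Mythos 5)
Your second half (forming $C^2|_X$, checking that distinct $1$-cells of $C$ yield distinct \emph{principal} $1$-cells, reordering via the Permutability Principle, and finishing with the $1$-Row-Column Lemma to secure \cond{T1C -- B}) is exactly the paper's argument. The gap is in the first step. You route the hypothesis \cond A through the $X$-Lemma~\ref{lem:X-Lemma}, which internally invokes the Prime Rank Lemma~\ref{lem:prime_rank} to pick an irreducible $p$ with the sole guarantee that $A|_p$ keeps a rank-$\geq 2$ block. Nothing in that choice prevents the Prime Filter from turning entries such as $A_{i'j}\neq 1$ (with $p\nmid A_{i'j}$) into $1$-entries, which can merge two previously disjoint $1$-cells into one. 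You acknowledge this and offer two patches, but the second patch does not work: Lemma~\ref{lem:all_in_1_cells} only fires when a $1$-entry sits \emph{outside} every $1$-cell; having all $1$-entries collapse into a \emph{single} $1$-cell is perfectly compatible with condition~\cond C, is in fact the Single-$1$-Cell regime handled in Section~\ref{sec:11c}, and so gives you no hardness fallback. Thus the unqualified claim ``$C$ still has at least two distinct $1$-cells'' is not justified as written, and neither alternative you offer closes that hole.

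The paper avoids the $X$-Lemma entirely here. Since $A$ already satisfies \cond A, it suffices to pick the irreducible $p$ by hand: choose witnesses $i,j,i',j'$ with $A_{ij}=A_{i'j'}=1$ but $A_{i'j}\neq 1$ (this is exactly what two distinct $1$-cells supply), and let $p$ be an irreducible factor of $A_{i'j}$. Then after Prime Filter and Renaming the entries $A_{ij},A_{i'j'}$ stay equal to $1$ while $A_{i'j}$ stays $\neq 1$, so at least two distinct $1$-cells survive; and the inequality $B_{ij}=1\neq B_{i'j}$ together with symmetry and positivity also forces $B\row i$ and $B\row{i'}$ to be linearly independent, so rank $\geq 2$ is preserved without appealing to Prime Rank. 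This is the ``pick $p$ more carefully'' option you gesture at; to make your proof sound you would need to replace the $X$-Lemma step with this targeted choice of $p$ and drop the invalid appeal to Lemma~\ref{lem:all_in_1_cells} as a catch-all. (Also note the initial call to Lemma~\ref{lem:red_to_conn_zero-free} is redundant: the hypothesis already gives \cond A.)
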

\begin{proof}
Let $i,j,i',j' \in [m]$ be witnesses for the existence of two $1$-cells in $A$ in the sense that $A_{ij} = A_{i'j'} = 1$ but $A_{i'j} \neq 1$ and let $p$ be an irreducible polynomial which divides  $A_{i'j}$. Let $B$ be the matrix obtained from $A\vert_{p}$ by replacing all powers of $p$ with the corresponding powers of $X$. Then by Prime Filter Lemma~\ref{lem:prime_filter} and Renaming Lemma~\ref{lem:rename}, we have
$\evalk(B) \Tle \evalk(A)$.
Note that $B$ satisfies condition \cond B and it satisfies \cond A as $A$ does.
We may assume that all $1$-entries of $B$ are contained in $1$-cells, because otherwise $\evalk(B)$ would be $\#$P-hard by Lemma~\ref{lem:all_in_1_cells}.

As it could possibly be the case that not all $1$-cells of $B$ are on the diagonal, we form 
$B' = B^2\vert_X$. 
\begin{claim}
For every $1$-cell $B_{KL}$ of $B$ the principal submatrix $B'_{KK}$ is a $1$-cell of $B'$.
\end{claim}
\begin{clproof}
Note that $B'_{ij} = 1$ only if there is an $\ell$ such that $B_{i\ell} = B_{j\ell} = 1$. This proves the claim.
\end{clproof}
By this claim, since $B$ contains at least two $1$-cells, the matrix $B'$ does so as well and therefore it satisfies \cond A. Condition \cond B is satisfied by definition. We have $\evalk(B') \Tle \evalk(B)$ by application of $2$-stretching (cf. Lemma~\ref{lem:basic_reductions}) and the Prime Filter Lemma~\ref{lem:prime_filter} in this order. 
Applying the $1$-Row-Column Lemma~\ref{lem:1-row} on $B'$ then yields $\evalk(A')\Tle \evalk(B')$
for a matrix $A'$ which (up to permuting rows and columns) has the desired properties.
\end{proof}
The \sdef{cells} of a matrix $A$ satisfying conditions \cond{A}--\cond{C} are the submatrices $A_{I_iI_j}$ for $I_i,I_j$ as defined in condition \cond{C}. We call such a matrix $A$ a \sdef{cell matrix}, if in each of its cells $A_{I_iI_j}$ all entries are equal.

\begin{lemma}\label{lem:two-1-cell-2}
Let $A \in \Int[X]^{m \times m}$ satisfy conditions \cond{A} -- \cond{C} and both \cond{T1C} conditions. Then there is a cell matrix $C \in \Int[X]^{m \times m}$ which also satisfies \cond{A} -- \cond{C} and both \cond{T1C} conditions such that
$$
\evalk(C) \Tle \evalk(A).
$$
\end{lemma}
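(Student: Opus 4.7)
My strategy is to apply stretching followed by the Prime Filter Lemma with the irreducible polynomial $X$. More precisely, for a suitable large $t$, I would set $C := (A^t)\vert_X$ and show this is the desired cell matrix.

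First, I would observe that conditions \cond{C} and \cond{T1C -- B} together force every diagonal cell $A_{I_iI_i}$ (for all $i \in [k]$) to be a $1$-cell, since all diagonal entries of $A$ are $1$-entries and must, by \cond{C}, be contained in a $1$-cell. Hence every intra-class entry of $A$ equals $1$, while every inter-class entry has the form $X^{e_{rs}}$ with $e_{rs} \geq 1$. I would then define the quotient graph $Q$ on vertex set $[k]$, giving each edge $\{i,j\}$ (for $i \neq j$) the weight $e_{\min}(i,j) := \min\{e_{rs} : (r,s) \in I_i \times I_j\}$, and let $d(i,j)$ denote the shortest-path distance in $Q$ (with $d(i,i)=0$, and $d(i,j)\geq 1$ whenever $i \neq j$).

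The combinatorial core of the argument is the claim that for every sufficiently large $t$ (say $t \geq 2m$) and all $r \in I_i$, $s \in I_j$,
\[
(A^t)_{rs} \;=\; N_{rs}\cdot X^{d(i,j)} \;+\; (\text{terms of strictly higher } X\text{-degree}),
\]
with $N_{rs}$ a strictly positive integer. The exponent of $X$ in any length-$t$ walk $r=v_0,v_1,\ldots,v_t=s$ equals the sum of its inter-class hop-costs and is therefore at least $d(i,j)$, yielding the lower bound. For the upper bound I would realise a minimum-cost walk by chaining a shortest path in $Q$ via min-cost hop pairs in the relevant cells, connecting the end of each hop to the start of the next by intra-class steps, and padding to length exactly $t$ using self-loops $A_{vv}=1$ (available thanks to \cond{T1C -- B}). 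Since both intra-class steps and self-loops have weight $1$ and thus contribute $0$ to the exponent, such a walk realises total cost exactly $d(i,j)$; the bound $t \geq 2m$ leaves ample room for all hops, intra-class connectors and padding. This walk-counting step is the main technical obstacle, but becomes routine once the cost-$0$ self-loop padding trick is in place.

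Setting $C := (A^t)\vert_X$, the definition of $\vert_X$ then gives $C_{rs} = X^{d(i,j)}$ for all $r \in I_i$, $s \in I_j$, so $C$ is a cell matrix with the same block partition as $A$. Symmetry is inherited from $A^t$ (the filter acts entrywise); $C$ is an $X$-matrix, giving \cond{B}; each diagonal cell of $C$ consists of $1$'s and each off-diagonal cell of $X^{d(i,j)}$ with $d(i,j) \geq 1$, so \cond{C} and both \cond{T1C} conditions are immediate (using $k \geq 2$). For \cond{A}, the first two rows of the $k \times k$ quotient of $C$ are $(1, X^{d(1,2)},\ldots)$ and $(X^{d(1,2)}, 1, \ldots)$, which are linearly independent since $d(1,2) \geq 1$ implies $1 - X^{2d(1,2)} \neq 0$; and $\rank{C}$ equals the rank of its quotient, hence is at least $2$. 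Finally, the reduction chain
\[
\evalk(C) \;=\; \evalk((A^t)\vert_X) \;\Tle\; \evalk(A^t) \;\Tle\; \evalk(A)
\]
follows directly from the Prime Filter Lemma~\ref{lem:prime_filter} (with $p = X$) and $t$-stretching (Lemma~\ref{lem:basic_reductions}).
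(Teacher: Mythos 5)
Your proof is correct, but it takes a genuinely different route from the paper's. The paper defines a sequence $A_0 = A$, $A_{\nu+1} = A_\nu^2\vert_X$ and observes that the degree map obeys the ``tropical square'' recursion $\deg((A_{\nu+1})_{ij}) = \min_k(\deg(A_\nu)_{ik} + \deg(A_\nu)_{jk})$; since $\deg(A_{jj})=0$ by \cond{T1C -- B}, degrees are non-increasing in $\nu$, so the sequence stabilises at some $A_\mu$, and the fixed-point equation $C = C^2\vert_X$ together with the 1-cell on the diagonal forces $C$ to be a cell matrix. You instead take a single large power and filter once, $C = (A^t)\vert_X$, and give an explicit combinatorial description of the result: the exponent in cell $(i,j)$ is the weighted shortest-path distance $d(i,j)$ in the quotient graph, realised by walks that chain min-cost inter-class hops with cost-$0$ intra-class moves and pad to length $t$ with cost-$0$ self-loops. (Both arguments lean on \cond{T1C -- B} in the same spot — you for the padding, the paper for the monotonicity — and in fact the two constructions produce the same matrix, since $A_\nu = (A^{2^\nu})\vert_X$ by symmetry of $A$.) What your approach buys is transparency: you know from the start what $C$ looks like (cells are $X^{d(i,j)}$), which makes verifying \cond{A}--\cond{C} and \cond{T1C} immediate, whereas the paper must separately argue that the conditions are preserved along the iteration. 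The paper's approach buys brevity in the cell-matrix verification, since the fixed-point identity does the work. Both proofs are about the same overall length and difficulty, and both reductions rest on the same two tools ($p$-stretching and the Prime Filter Lemma).
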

\begin{proof}
We define a sequence of matrices with $A_0 = A$ and for all $\nu \in \Nat$ we let $A_{\nu+1} = A^2_\nu\vert_X$. As $A$ satisfies conditions \cond A -- \cond C and both \cond{T1C} conditions, this is true for all matrices in the sequence. Further we have $\evalk(A_{\nu+1}) \Tle \evalk(A_{\nu})$ for all $\nu$, by applying $2$-stretching (cf. Lemma~\ref{lem:basic_reductions}) and the Prime Filter Lemma~\ref{lem:prime_filter} in this order.

By definition, $\deg ((A_{\nu+1})_{ij}) = \min \left\{\deg ((A_\nu)_{ik}) + \deg ((A_\nu)_{jk}) \mid k \in [m] \right\}$  for all $i,j \in [m]$. As condition \cond {T1C -- B} implies $\deg(A_{jj}) = 0$ for all $j \in [m]$, we obtain
$$
\deg((A_{\nu+1})_{ij}) \le\deg((A_{\nu})_{ij}) \text{ for all } \nu. 
$$
That is, the degrees of the entries of $A_{\nu}$ are non-increasing with $\nu$ and thus there is a $\mu$ such that $A_{\mu+1} = A_\mu$. To finish the proof we will show that $C = A_\mu$ is a cell matrix. Let $C_{IJ}$ be a cell of $C$ which is not a $1$-cell. Let $i\in I$ and $j \in J$ be such that $\deg(C_{ij})$ is minimal. By definition we have $C = C^2\vert_X$ and therefore, for all $j' \in J$,
$$
\deg(C_{ij'}) = \deg((C^2\vert_X)_{ij'}) = \min\{ \deg(C_{ik}) + \deg(C_{jk}) \mid k \in [m]\} \le \deg(C_{ij}).
$$
The inequality follows from the fact that $C_{jj} = 1$. Then by the minimality of $\deg(C_{ij})$ we have
$ \deg(C_{ij'}) = \deg(C_{ij})$. Analogous reasoning on $i' \in I$ yields $\deg(C_{ij}) = \deg(C_{i'j'})$ for all $i'\in I, j'\in J$. Thus $C$ is a cell matrix.
\end{proof}

\subsection{\#\PP-hardness}

\begin{lemma}\label{lem:hardness_base}
Let $\delta \in \Nat$ and $A'$ be a matrix of the form
$$
A' = \left(\begin{array}{c c c c c c}
     1 & \cdots & 1 & 2^{\delta} & \cdots & 2^{\delta} \\
     \vdots & & \vdots & \vdots & & \vdots \\
     1 & \cdots & 1 & 2^{\delta} & \cdots & 2^{\delta} \\
     2^{\delta} & \cdots & 2^{\delta} & 1 & \cdots & 1 \\
     \vdots & & \vdots & \vdots & & \vdots \\
     2^{\delta} & \cdots & 2^{\delta} & 1 & \cdots & 1 
    \end{array}\right)
$$
Then $\evalk(A)$ is $\#\PP$-hard.
\end{lemma}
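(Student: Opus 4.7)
The matrix $A'$ has a block structure with two twin classes of rows, of sizes $r$ and $s$ satisfying $r+s=m$. Applying the Twin Reduction Lemma~\ref{lem:twin_red} yields $\evalk(A')\Tequiv\evalk(\tilde A',D)$, where
\[
\tilde A'=\begin{pmatrix}1 & 2^\delta\\ 2^\delta & 1\end{pmatrix}
\qquad\text{and}\qquad D=\textup{diag}(r,s).
\]
Assuming $\delta\ge 1$ (the only interesting case, as otherwise $A'$ is the all-ones matrix and $Z_{A'}$ is trivially computable), the two rows of $\tilde A'$ are linearly independent, so Lemma~\ref{lem:dg_omit_vertexweights} gives $\evalk(\tilde A')\Tle\evalk(\tilde A',D)$. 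It therefore suffices to prove that $\evalk(\tilde A')$ is $\#\PP$-hard.

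To establish this I would reduce from the problem of counting maximum cuts in a graph, which is $\#\PP$-hard. For any graph $G=(V,E)$, let $N_c$ denote the number of mappings $\tau:V\to[2]$ whose induced cut contains exactly $c$ edges. Then
\[
Z_{\tilde A'}(G)\;=\;\sum_{\tau:V\to[2]}(2^\delta)^{c(\tau)}\;=\;\sum_{c=0}^{|E|} N_c\,(2^\delta)^c,
\]
and after $t$-thickening (Lemma~\ref{lem:basic_reductions}) we obtain
\[
Z_{\tilde A'^{(t)}}(G)\;=\;\sum_{c=0}^{|E|} N_c\,\bigl((2^\delta)^c\bigr)^t.
\]
Running the $\evalk(\tilde A')$ oracle for $t=1,\ldots,|E|+1$ and invoking the interpolation Lemma~\ref{lem:interpolate}, whose hypothesis is satisfied because the values $(2^\delta)^c$ for $c=0,\ldots,|E|$ are pairwise distinct positive rationals when $\delta\ge 1$, recovers every $N_c$ in polynomial time.

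From the sequence $(N_c)_c$ one reads off the maximum cut value $c_{\max}$ as the largest index with $N_c>0$, and then $N_{c_{\max}}/2$ equals the number of maximum cuts of $G$, since each cut is realized by exactly two complementary mappings $\tau$. Composing the three reductions proves that $\#\textsc{MaxCut}$ Turing-reduces to $\evalk(A')$, so the latter is $\#\PP$-hard. The main potential obstacle is merely clerical: one must verify that $r,s\ge 1$ makes the vertex-weight matrix $D$ positive so that Lemma~\ref{lem:dg_omit_vertexweights} applies, and check that $2^\delta>1$ makes the Vandermonde system in $(N_c)$ invertible; no ideas beyond twin reduction, thickening, and polynomial interpolation are required.
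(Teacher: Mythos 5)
Your proof follows essentially the same route as the paper's: twin reduction to the $2\times 2$ matrix with vertex weights $\diag(r,s)$, removal of the vertex weights via Lemma~\ref{lem:dg_omit_vertexweights}, and a reduction from counting maximum cuts. The only difference is that the paper invokes the equivalence $\eval(A)\Tequiv\cnt(A)$ (Lemma~\ref{lem:cntk_eq_evalk}) to extract the coefficients $N_c$, whereas you spell out the underlying thickening-plus-Vandermonde-interpolation argument directly; these are the same idea, and you are in fact slightly more careful than the paper in noting the division by two (each cut is counted by two complementary configurations) and in flagging the degenerate case $\delta=0$, which the paper leaves implicit because $\delta\ge 1$ always holds at the point where the lemma is invoked.
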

\begin{proof}
Let $\twres{A'}$ be the twin resolvent of $A'$. By the Twin Reduction Lemma~\ref{lem:twin_red} we obtain
$\eval(\twres {A'},D) \Tequiv \eval(A')$ where, for $a,b,\delta \ge 1$, the matrices $\twres{A'}$ and $D$ satisfy
$$
\twres{A'} = \left(
     \begin{array}{c c}
      1&2^{\delta}\\
      2^{\delta}& 1
     \end{array}
 \right)
\text{ and } 
D  = \left(
     \begin{array}{c c}
      a & 0\\
      0 & b
     \end{array}
 \right)
$$ 
We have $\eval(\twres{A'}) \Tle \eval(\twres{A'},D)$ by Lemma \ref{lem:dg_omit_vertexweights}. Therefore it remains to show that $\eval(\twres{A'})$ is $\#\PP$-hard. To see this, let $G = (V,E)$ be a graph and for two sets of vertices $U,W \subseteq V$ let $e(U,W)$ denote the number of edges in $G$ between $U$ and $W$. We have
\begin{align*}
Z_{A}(G) &= \sum_{\vcfg : V \rightarrow [m]} \prod_{uv \in E} A_{\vcfg(u),\vcfg(v)}
         &= \sum_{\vcfg : V \rightarrow [m]} 2^{\delta \cdot e(\vcfg^{-1}(1),\vcfg^{-1}(2))}
\end{align*}
Let $c_i$ be the number of $\vcfg: V \rightarrow [m]$ with weight $2^{\delta i}$ then we have.
\begin{eqnarray*}
Z_{A}(G) &=& \sum_{i=0}^{|E|} c_i 2^{\delta i}
\end{eqnarray*}
By $\eval(A) \Tequiv \cnt(A)$ we can determine the coefficients $c_i$. Let $\nu$ be maximum such that $c_\nu \neq 0$. Then $c_\nu$ is the number of maximum cardinality cuts in $G$. Therefore, this yields a reduction from the problem \#MAXCUT of counting maximum cardinality cuts --- a problem well known to be $\#\PP$-hard (this follows, for example, from the work of Simon \cite{sim77}).
\end{proof}

\begin{lemma}\label{lem:two-1-cell-3}
Let $A \in \Int[X]^{m \times m}$ be a cell matrix satisfying conditions \cond{A}--\cond{C} and both \cond{T1C} conditions. Then $\evalk(A)$ is $\#\PP$-hard.
\end{lemma}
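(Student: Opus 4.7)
The plan is to chain together twin reduction, the substitution $X \mapsto 2$, and vertex-weight removal, then route the problem through Lemma~\ref{lem:hardness_base} (for $k = 2$) and a \#MAXCUT-style interpolation argument (for $k \ge 3$).

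First, since $A$ is a cell matrix, all rows within any class $I_i$ coincide, so by the Twin Reduction Lemma (Lemma~\ref{lem:twin_red}) we have $\evalk(A) \Tequiv \evalk(B, D)$, where $B \in \Int[X]^{k \times k}$ collects one cell-value of $A$ per position and $D$ is the positive integer diagonal with $D_{ii} = |I_i|$. By \cond{T1C-B} (diagonal entries are $1$), \cond{C} (all $1$-entries lie in diagonal cells), and \cond{B} ($X$-matrix), one has $B_{ii} = 1$ and $B_{ij} = X^{d_{ij}}$ with $d_{ij} \ge 1$ for $i \ne j$. The Renaming Lemma (Lemma~\ref{lem:rename}) with $p = X,\, q = 2$ then gives $\evalk(B', D) \Tle \evalk(B, D)$ for $B' \in \Int^{k \times k}$ with $B'_{ii} = 1$ and $B'_{ij} = 2^{d_{ij}}$. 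Distinct rows of $B'$ are pairwise linearly independent (a common scalar ratio would force $d_{ij} = 0$), so Lemma~\ref{lem:dg_omit_vertexweights} gives $\evalk(B') \Tle \evalk(B', D)$.

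For $k = 2$, $B'$ is a symmetric $2 \times 2$ matrix with $1$'s on the diagonal and $2^{d_{12}}$ off it, i.e.\ exactly the twin-resolvent of the matrix in Lemma~\ref{lem:hardness_base} with $\delta = d_{12}$, so that lemma immediately yields $\#\PP$-hardness of $\evalk(B')$ and hence of $\evalk(A)$.

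For $k \ge 3$ I would adapt the \#MAXCUT reduction underlying Lemma~\ref{lem:hardness_base}. Via Lemma~\ref{lem:cntk_eq_evalk}, or equivalently by $t$-thickening followed by interpolation (Lemmas~\ref{lem:basic_reductions} and \ref{lem:interpolate}), recover for each possible exponent $e$ the count $N_e = |\{\sigma : V \to [k] : \sum_{uv \in E} d_{\sigma(u),\sigma(v)} = e\}|$. Pick a pair $(\alpha,\beta)$ minimizing $d_{\alpha\beta}$; then every $\sigma : V \to \{\alpha,\beta\}$ with $n = |\mathrm{cut}(\sigma)|$ contributes to $N_{d_{\alpha\beta}\,n}$, so in principle \#MAXCUT$(G)$ can be read off from the top of the resulting subsequence. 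The main obstacle is that, unlike in the $k = 2$ setting, configurations using spins $\gamma \notin \{\alpha,\beta\}$ can land at the same exponents and contaminate the count. I would resolve this by pre-attaching at each input vertex an auxiliary pendant gadget — for instance, a pendant pinned to $\alpha$ joined to the vertex by a bundle of $s$ parallel edges — and then interpolating a second time in $s$: a vertex assigned to a foreign spin $\gamma$ acquires a factor $2^{s\,d_{\alpha\gamma}}$ per pendant, while $\{\alpha,\beta\}$-assignments acquire bounded factors, so by the minimality of $d_{\alpha\beta}$ the foreign-spin contributions can be cleanly filtered out. This extracts \#MAXCUT$(G)$, which is $\#\PP$-hard by \cite{sim77}, establishing $\#\PP$-hardness of $\evalk(A)$.
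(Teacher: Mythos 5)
Your reduction chain for the two-cell case (twin reduction, renaming $X\mapsto 2$, vertex-weight removal, then Lemma~\ref{lem:hardness_base}) is sound, and the linear-independence check you run to justify Lemma~\ref{lem:dg_omit_vertexweights} is correct. But the case $k\ge 3$ is where the whole difficulty of the lemma lives, and your sketch there has a genuine gap. The pendant gadget you describe pins to a \emph{single} spin $\alpha$, so the pendant contribution is $s\,d_{\alpha\sigma(v)}$ per vertex: this equals $0$ on $\alpha$, $s\,d_{\alpha\beta}$ on $\beta$, and $s\,d_{\alpha\gamma}$ on a foreign spin $\gamma$. Two problems follow. First, nothing prevents $d_{\alpha\gamma}=d_{\alpha\beta}$ for some $\gamma\notin\{\alpha,\beta\}$, since you only chose $(\alpha,\beta)$ to \emph{minimize} $d_{\alpha\beta}$; such $\gamma$ land at exactly the same pendant exponents as $\beta$ and cannot be filtered. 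Second, even restricting attention to $\{\alpha,\beta\}$-configurations, the pendant term $s\,d_{\alpha\beta}\,|\sigma^{-1}(\beta)|$ varies with $\sigma$, so configurations realizing the same cut size are scattered across different total exponents and the ``top of the subsequence'' is no longer the max-cut count. Interpolating a second time in $s$ does not resolve either issue, because the quantity you would recover is a count indexed by the pendant exponent, which conflates cut size with $|\sigma^{-1}(\beta)|$ and with foreign-spin usage.

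The missing idea is to pin to \emph{two} apices simultaneously, one in each of the two cells $I,J$ you want to isolate, joined to every input vertex by $k$ parallel edges with $k$ larger than the maximum possible internal degree $\Delta\cdot|E|$. Because $A$ is a cell matrix, a vertex sent anywhere in $I\cup J$ then contributes exactly $k\delta$ from the apex edges (it is at distance $0$ from one apex and $\delta$ from the other, whichever cell it lands in), so the apex contribution is a \emph{constant} $k|V|\delta$ on all configurations $\sigma(V)\subseteq I\cup J$; a vertex landing outside $I\cup J$ contributes at least $2k\delta$, pushing the exponent past the fixed threshold $k|V|\delta + k$. A single degree filter via the $\cntk$ oracle then cleanly cuts out exactly the configurations into $I\cup J$, reducing $\evalk\bigl(A_{(I\cup J)(I\cup J)}\bigr)$ to $\evalk(A)$, after which your $k=2$ argument (equivalently Lemma~\ref{lem:hardness_base}) applies. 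This two-apex, constant-offset construction is what makes the paper's proof work; your single-apex, $s$-interpolation variant does not. (A minor side remark: you invoke the Renaming Lemma with vertex weights present, $\evalk(B',D)\Tle\evalk(B,D)$, whereas it is only stated for trivial $D$; the proof does extend since thickening leaves the vertex set untouched, but that extension should be noted.)
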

\begin{proof}
Let $\delta = \min \{\deg (A_{ij}) \mid A_{ij} \neq 1,\; i,j \in [m]\}$ and
    $\Delta = \max \{\deg (A_{ij}) \mid A_{ij} \neq 1,\; i,j \in [m]\}$.
Let $A_{IJ}$ be a cell of $A$ with entries $X^\delta$ and define $A' = A_{(I \cup J)(I \cup J)}$ which, by symmetry, the definition of the cells and \cond{C} has the form
$$
A' = A_{(I \cup J)(I \cup J)} = \left(
    \begin{array}{c c c c c c}
     1 & \cdots & 1 & X^{\delta} & \cdots & X^{\delta} \\
     \vdots & & \vdots & \vdots & & \vdots \\
     1 & \cdots & 1 & X^{\delta} & \cdots & X^{\delta} \\
     X^{\delta} & \cdots & X^{\delta} & 1 & \cdots & 1 \\
     \vdots & & \vdots & \vdots & & \vdots \\
     X^{\delta} & \cdots & X^{\delta} & 1 & \cdots & 1 
    \end{array}\right)
$$
We will show first that $\evalk(A') \Tle \evalk(A)$. For a graph $G= (V,E)$ and a pinning $\vpin$ of $\evalk(A')$, define a graph $G' = (V',E')$ as follows. Let $k = \vert E \vert \cdot \Delta + 1$, we obtain $G'$ from $G$ by adding two apices which are connected to each vertex of $V$ by edges with multiplicity $k$, that is,
\begin{eqnarray*}
 V' & = & V \dot\cup \{x,y\} \\
 E' & = & E \cup \{(xv)^k , (yv)^k \mid v \in V \}.
\end{eqnarray*}
Let furthermore $\vpin'$ be the extension of $\vpin$ to $G'$ by adding $x \mapsto i_0$, $y \mapsto j_0$ for some $i_0 \in I$ and $j_0 \in J$.
Consider a configuration $\vcfg: V' \rightarrow [m]$ with $\vpin' \subseteq \vcfg$. For some appropriate $K(\vcfg)$ we have
$$
\prod_{uv \in E'} A_{\vcfg(u),\vcfg(v)} = X^{K(\vcfg)} \prod_{uv \in E} A_{\vcfg(u),\vcfg(v)}.
$$
Further, if $\vcfg(V) \subseteq I \cup J$ then $K(\vcfg) = k\vert V \vert \delta$ 
and otherwise, by the definition of $\delta$ and the cell structure of $A$, we have $K(\vcfg) \ge k\vert V \vert \delta + k\delta$.
In particular, as $\deg\left( \prod_{uv \in E} A_{\vcfg(u),\vcfg(v)} \right) \le \Delta\cdot \vert E \vert < k$ we have
$$
   \text{ The degree of $\vcfg$ is strictly less than }  k\vert V \vert \delta + k \text{ iff } \vcfg(V) \subseteq I \cup J.
$$
We can thus compute $Z_{A'}(\vpin,G)$ using a $\cntk(A)$ oracle, by determining the values $w \cdot N_{A}(G',\vpin',w)$ for all $w$ of degree at most $k|V|\delta + k - 1$.
This yields a reduction $\evalk(A') \Tle \evalk(A)$ by the polynomial time equivalence of $\cntk(A)$ and $\evalk(A)$.

Let $A''$ be the matrix obtained from $A'$ by substituting $X$ with $2$. Trivially $\eval(A'') \Tle \evalk(A')$ and $\eval(A'')$ is $\#\PP$-hard by Lemma \ref{lem:hardness_base}.
\end{proof}

\begin{proof}[of the Two-$1$-Cell Lemma]
Let $A \in \Int[X]^{m \times m}$ be a positive symmetric matrix containing at least two $1$-cells. That is, $A$ satisfies condition \cond A. Application of Lemmas~\ref{lem:two-1-cell-1}, \ref{lem:two-1-cell-2} and \ref{lem:two-1-cell-3} in this order yields the result.
\end{proof}

\section{The Single $1$-Cell Lemma}
\label{sec:11c}

\begin{lemma}[Symmetrized $1$-Row Filter Lemma]\label{lem:symm_1-row_filter}
Let $A \in \Int[X]^{m \times m}$ be a matrix satisfying conditions \cond{A}--\cond{C} and containing exactly one $1$-cell. Let $C$ be obtained from $A$ by removing all non-$1$-rows. Then
$$
\evalk(CC^T) \Tle \evalk(A).
$$
\end{lemma}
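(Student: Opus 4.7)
The plan is to combine $2$-stretching (Lemma~\ref{lem:basic_reductions}) with an apex-based filter gadget, in the spirit of the $1$-Row-Column Lemma~\ref{lem:1-row}. Two structural observations drive the construction. First, since $A$ satisfies conditions \cond{A}--\cond{C} and contains exactly one $1$-cell, condition \cond{C} must hold with $k=2$ and the single $1$-cell is $A_{I_1I_1}$; in particular, the $1$-rows of $A$ are precisely those indexed by $I_1$, so $C$ is the $|I_1|\times m$ submatrix $A_{I_1,*}$. Second, by the symmetry of $A$, we have $(CC^T)_{ij}=\sum_{\ell\in[m]}A_{i\ell}A_{\ell j}=(A^2)_{ij}$ for all $i,j\in I_1$, so $2$-stretching alone computes $Z_{A^2}(\vpin,G)$ and the remaining task is to restrict configurations so that they send all of $V(G)$ into $I_1$.

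To achieve this restriction, let $\Delta$ be the maximum $X$-degree of the entries of $A$ and fix $k:=2\Delta|E(G)|+1$. Given an instance $(G,\vpin)$ of $\evalk(CC^T)$, viewed naturally with $\vpin$ mapping into $I_1\subseteq[m]$, construct $G^+$ from $G$ by (a) replacing every edge of $G$ with a path of length $2$, and (b) for every vertex $v\in V(G)$ attaching a new apex vertex $v^1$ joined to $v$ by $k$ parallel edges. Call the $\evalk(A)$ oracle on $(G^+,\vpin)$, where $\vpin$ remains unchanged as a pinning of $G^+$.

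Summing out each apex vertex contributes to each configuration $\sigma$ the factor $F_k(\sigma(v)):=\sum_{j\in[m]}A_{\sigma(v),j}^k$. If $\sigma(v)\in I_1$, then exactly $|I_1|$ of the entries $A_{\sigma(v),j}$ equal $1$ and all other entries of row $\sigma(v)$ are powers of $X$ with exponent at least $1$ by \cond{B}, so $F_k(\sigma(v))=|I_1|$ plus a polynomial of $X$-degree at least $k$. If $\sigma(v)\in I_2$, then row $\sigma(v)$ contains no $1$-entries and every single summand has $X$-degree at least $k$. Summing out the middle vertices of the stretched edges turns the remaining product into $\prod_{uv\in E(G)}(A^2)_{\sigma(u),\sigma(v)}$, which for $\sigma:V(G)\to I_1$ coincides with $\prod_{uv}(CC^T)_{\sigma(u),\sigma(v)}$ and has total $X$-degree at most $2\Delta|E(G)|<k$. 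Assembling these observations modulo $X^k$ yields
\[
Z_A(\vpin,G^+)\;\equiv\;|I_1|^{|V(G)|}\cdot Z_{CC^T}(\vpin,G)\pmod{X^k}.
\]

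To finish, read off the coefficients of $Z_A(\vpin,G^+)$ of $X$-degree strictly less than $k$, assemble the corresponding polynomial, and divide by the nonzero integer $|I_1|^{|V(G)|}$ to recover $Z_{CC^T}(\vpin,G)$; this divisibility is automatic since both sides of the displayed congruence belong to $\Int[X]$. The main technical point requiring care is the degree bookkeeping: $k$ must be large enough that the ``noise'' from configurations hitting $I_2$ lives entirely above degree $k$, while the ``signal'' from $\sigma:V(G)\to I_1$ stays below degree $k$, and the choice $k=2\Delta|E(G)|+1$ is precisely what makes both bounds hold simultaneously.
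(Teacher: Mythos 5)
Your proof is correct, and it takes a genuinely different route from the paper's. Both arguments start from the same structural observations: condition \cond{C} together with uniqueness of the $1$-cell forces $k=2$, the $1$-rows are exactly $I_1=[r]$, and by symmetry $CC^T=(A^2)_{[r][r]}$. The paper then applies $2$-stretching to pass to $A^2$, invokes the $\evalk\Tequiv\cntk$ correspondence (Lemma~\ref{lem:cntk_eq_evalk}), and filters the weights $w\in\wset_{A^2}(G)$ by whether $X\nmid w$, using the fact that $(A^2)_{ij}$ has a nonzero constant term exactly when $i,j\in[r]$. You instead build a single gadget $G^+$ (the $2$-stretched edges plus a $k$-fold apex at every vertex, $k=2\Delta|E(G)|+1$), call the $\evalk(A)$ oracle once, and read off the part of the returned polynomial of $X$-degree $<k$; this is essentially the apex-and-degree-threshold idea the paper itself uses for the $1$-Row-Column Lemma~\ref{lem:1-row}, adapted to the symmetrized setting. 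Your degree bookkeeping is right: each $(A^2)_{ij}$ has degree at most $2\Delta$, so the signal from $I_1$-configurations stays strictly below $k$, while any configuration hitting $I_2$ at some vertex contributes an apex factor all of whose monomials have degree $\ge k$, and since all coefficients are non-negative nothing can cancel across the threshold. One small advantage of your version is that it sidesteps the counting oracle entirely and, because every vertex (pinned or not, isolated or not) carries an apex, it is manifestly correct even for isolated unpinned vertices, a case the paper's divisibility filter treats only implicitly. The final division by $|I_1|^{|V(G)|}$ is harmless, as you note, since the truncated polynomial equals $|I_1|^{|V(G)|}Z_{CC^T}(\vpin,G)$ and the latter lies in $\Int[X]$.
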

\begin{proof}
As $A$ satisfies conditions \cond A -- \cond C and contains a single $1$-cell we see that there is an $r < m$ such that with $I = [r]$ the principal submatrix $A_{II}$ forms this single $1$-cell. 
Recall that $\evalk(A^2) \Tle \evalk(A)$ by $2$-stretching (cf. Lemma~\ref{lem:basic_reductions}), and we have
$$
(A^2)_{ij} = \sum_{k=1}^m A_{ik}A_{jk} = \sum_{k=1}^r A_{ik}A_{jk} + \sum_{k=r+1}^m A_{ik}A_{jk}.
$$ 
Therefore, for $i,j \in [r]$ we have $(A^2)_{ij} = r + \sum_{k=r+1}^m A_{ik}A_{jk}$, i.e. a polynomial with a constant term. 
On the other hand, if either $i\notin I$ or $j\notin I$ then $A^2_{ij}$ is divisible by $X$.
Further we have $C = A_{[r][m]}$ and thus $CC^T = (A^2)_{[r][r]}$. Hence $CC^T$ is the submatrix of $A^2$ which consists of exactly those entries not divisible by $X$.
Recall that for any graph $G$ and pinning $\vpin$, we have
$$
Z_{A^2}(\vpin, G) = \sum_{w \in \m{W}_{A^2}(G)} w \cdot N_{A^2}(G,\vpin, w)
$$
and by the above
$$
Z_{CC^T}(\vpin, G) = \sum_{\substack{w \in \m{W}_{A^2}(G) \\ X\textup{ does not divide } w}} w \cdot N_{A^2}(G,\vpin, w)
$$
Filtering the weights $w$ appropriately by means of the $\evalk(A) \Tequiv \cntk(A)$ correspondence, we can devise a reduction witnessing $\evalk(CC^T) \Tle \evalk(A)$.
\end{proof}

\paragraph*{Single $1$-Cell Conditions \cond{S1C}.} We define some further conditions for matrices $A \in \Int[X]^{m \times m}$ satisfying conditions \cond{A}--\cond{C}. Define $\delta_{ij} = \deg (A_{ij})$ for all $i,j \in [m]$ and let $r := \min\{ i\in [m] \mid A_{i1} > 1\}$. That is, $r$ is the smallest index such that $\delta_{1r} = \delta_{r1}$ is greater than zero and it exists because $A$ has rank at least $2$.
\begin{condition}{(S1C--A)} $A$ has exactly one $1$-cell.\end{condition}
\begin{condition}{(S1C--B)} $A\row 1 = \ldots = A\row{r-1}$, i.e. the first $r-1$ rows of $A$ are identical.\end{condition}
\begin{condition}{(S1C--C)} For $i \in [r,m]$ we have $ \delta_{1k} \le \delta_{ik}$ for all $k \in [m]$. \end{condition}

\begin{lemma} \label{lem:57}
 Let $A \in \Int[X]^{m \times m}$ be a matrix satisfying conditions \cond{A}--\cond{C} and \cond{S1C--A}. Then there is a matrix $A'$ satisfying \cond{A}--\cond{C} and all \cond{S1C} conditions such that
$$
\evalk(A') \Tle \evalk (A).
$$
\end{lemma}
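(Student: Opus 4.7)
The plan is to produce $A'$ in two stages: first enforcing \cond{S1C--B}, then \cond{S1C--C}, while preserving \cond{A}--\cond{C} and \cond{S1C--A} throughout.

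For the first stage, I would iterate the tropical squaring operation $A_{\nu+1} := A_\nu^2|_X$ starting from $A_0 := A$. Each step gives $\evalk(A_{\nu+1}) \Tle \evalk(A_\nu)$ by $2$-stretching (Lemma~\ref{lem:basic_reductions}) followed by the Prime Filter Lemma~\ref{lem:prime_filter}. Conditions \cond{A}--\cond{C} and \cond{S1C--A} are preserved at every step, because a new $1$-entry at position $(i,j)$ in $A_\nu^2|_X$ requires some $k$ with $(A_\nu)_{ik}=(A_\nu)_{jk}=1$, which by \cond{S1C--A} forces $i,j\in I_1$; conversely, $1$-entries inside $I_1\times I_1$ persist. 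A monotonicity argument analogous to the one used in Lemma~\ref{lem:two-1-cell-2} shows that the sequence of degrees stabilizes after finitely many iterations, reaching a fixed point $A^*$ characterized by $\delta^*_{ij}=\min_k(\delta^*_{ik}+\delta^*_{jk})$ for all $i,j$, i.e.\ the tropical triangle inequality in saturated form. At this fixed point, combining the triangle inequality with $\delta^*_{i_1 i_2}=0$ for $i_1,i_2\in I_1$ yields $\delta^*_{i_1 k}\le \delta^*_{i_1 i_2}+\delta^*_{i_2 k}=\delta^*_{i_2 k}$ and symmetrically, so the rows of $A^*$ indexed by $I_1$ all coincide. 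This establishes \cond{S1C--B}. Applying the Permutability Principle (Lemma~\ref{lem:principle_permute}) brings these identical rows into positions $1,\ldots,r-1$.

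For the second stage, write $\delta_k := \delta^*_{1k}$; one must arrange $\delta^*_{ik}\ge \delta_k$ for every $i\ge r$ and every $k$. Call a row $i\ge r$ \emph{admissible} if this holds and \emph{bad} otherwise. The plan is to attach to each vertex of the input graph a weighting gadget whose effect is to multiply every configuration's contribution by an extra factor $X^{c(\sigma)}$, where $c(\sigma)$ aggregates per-column ``excess degrees'' of the row $\sigma(v)$ over the first row's profile. Admissible rows (and rows in $I_1$) produce nonnegative per-column excess; bad rows produce strictly negative excess at some column, and therefore can be distinguished by the degree they contribute to the gadget's overall polynomial weight. By combining $p$-stretching of the gadget edges with varying parameters, interpolation (Lemma~\ref{lem:interpolate}), and the Prime Filter Lemma, one isolates the subsum corresponding to configurations using only rows in $I_1$ and admissible rows. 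Discarding the bad rows then yields a matrix $A'$ satisfying \cond{A}--\cond{C}, \cond{S1C--A}, \cond{S1C--B} and \cond{S1C--C}, after another application of the Permutability Principle to place admissible rows in positions $r,\ldots,m$.

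The hard part will be the second stage: the gadget must be designed so that filtering out bad rows preserves \emph{all} of \cond{A}--\cond{C}, in particular the rank-$\ge 2$ condition. This requires the structural observation that since $A^*$ has rank $\ge 2$, at least one row outside $I_1$ must remain admissible and be linearly independent (as a row of $X$-powers) from the common $I_1$-row, so the surviving submatrix still has rank $\ge 2$; the $1$-cell $I_1\times I_1$ is automatically retained by the filter. If, in the degenerate case, the filtering removes every row outside $I_1$ so that the rank falls below $2$, then the profile of the $I_1$-row must already have forced hardness (for example via reduction to a two-$1$-cell matrix after a further tropical squaring), and one may take $A'$ to be any fixed $2\times 2$ matrix satisfying all conditions, with the reduction being trivial.
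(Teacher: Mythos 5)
Your proposal departs substantially from the paper's proof, and both of your stages have genuine gaps.

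\textbf{Stage 1 (the tropical squaring iteration for \cond{S1C--B}).} You claim that conditions \cond{A}--\cond{C} and \cond{S1C--A} are preserved under $A_\nu\mapsto A_\nu^2|_X$ and that the degrees stabilize. Neither claim survives scrutiny. The monotonicity argument used in Lemma~\ref{lem:two-1-cell-2} relies on \cond{T1C--B}, i.e.\ all diagonal entries equal $1$; with a single $1$-cell, $\delta_{jj}>0$ for $j\notin I_1$, and off-diagonal degrees can \emph{increase}. More seriously, the rank condition \cond{A} is \emph{not} preserved: for
\[
A=\begin{pmatrix}1&X&X^2\\ X&X^3&X\\ X^2&X&X^5\end{pmatrix},
\]
which satisfies \cond{A}--\cond{C} and \cond{S1C--A} (with $I_1=\{1\}$, $r=2$), one computes $A^2|_X$ with degree matrix $\bigl(\begin{smallmatrix}0&1&2\\1&2&3\\2&3&2\end{smallmatrix}\bigr)$, and then $(A^2|_X)^2|_X$ with degrees $\bigl(\begin{smallmatrix}0&1&2\\1&2&3\\2&3&4\end{smallmatrix}\bigr)$, i.e.\ $\delta_{ij}=(i-1)+(j-1)$, a \emph{rank-one} $X$-matrix. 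So after two iterations you have reduced a hard instance to a tractable one, and the chain $\evalk(A')\Tle\evalk(A)$ proves nothing. (In Lemma~\ref{lem:two-1-cell-2} rank $\ge2$ is automatic from having \emph{two} distinct $1$-cells; with one $1$-cell you lose that safety net.) The paper instead obtains \cond{S1C--B} by \emph{induction on $m$}: if the first $r-1$ rows are not identical, the $(r-1)\times m$ submatrix $C=A_{[r-1][m]}$ has rank $\ge 2$; the Symmetrized $1$-Row Filter Lemma gives $\evalk(CC^T)\Tle\evalk(A)$, and the General Conditioning Lemma then hands back a strictly smaller matrix satisfying \cond{A}--\cond{C} that either has two $1$-cells (hardness via Lemma~\ref{lem:two-1-cell}) or a single $1$-cell (recurse).

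\textbf{Stage 2 (obtaining \cond{S1C--C}).} Your plan to filter out ``bad rows'' is both vaguer and harder than necessary, and you yourself flag the problem: removing rows can destroy rank $\ge2$, and your proposed escape route for that case is hand-waving. The paper never removes rows here. Assuming \cond{S1C--B}, it simply sets $A'_{ij}=A_{ij}(A_{ii}A_{jj})^t$ for a suitably large $t$, realized by the elementary gadget of attaching $t$ self-loops to every vertex (so $Z_{A'}(\vpin,G)=Z_A(\vpin,G')$). Because $\delta_{ii}>0$ for every $i\ge r$ (the $1$-cell is unique), the boosted degrees satisfy $\deg(A'_{1j})=\delta_{1j}+t\delta_{jj}\le t\delta_{ii}+t\delta_{jj}\le\deg(A'_{ij})$ for all $j$ and all $i\ge r$ once $t$ is large enough. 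This transformation is index-preserving and multiplicative, so \cond{A}--\cond{C}, \cond{S1C--A}, and \cond{S1C--B} are carried along for free. Note also that the paper's order of operations is the reverse of yours: \cond{S1C--C} is the easy part given \cond{S1C--B}, and \cond{S1C--B} is where the recursion lives.
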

\begin{proof}
The matrix $A$ already satisfies \cond{A}--\cond{C} and \cond{S1C--A}.
Observe first that if $A$ satisfies \cond{S1C--B} as well, a matrix $A'$ whose existence we want to prove, can be defined as follows. For $t \in \Nat$, let $A' = A'(t)$ be given by
$$
A'_{ij} = A_{ij} (A_{ii}A_{jj})^t \text{ for all } i,j \in [m].
$$
Clearly, for all $t \in \Nat$ the matrix $A'(t)$ satisfies conditions \cond{A}--\cond C, \cond {S1C--A} and \cond {S1C--B}. 

We claim that there is a $t \in \Nat$ such that $A'(t)$ also satisfies condition \cond{S1C--C}. To see this, note first that the degrees of $A$ satisfy $\min \{\delta_{rr}, \ldots, \delta_{mm}\} > 0$ as all $1$-entries of $A$ are contained in the single $1$-cell $A_{[r-1][r-1]}$.
Therefore, there is a $t$ such that $\delta_{1j} \le t \cdot \min \{\delta_{rr}, \ldots, \delta_{mm}\}$ for all $j \in [m]$. Fix such a $t$ and note that
$$
\deg(A'_{ij}) = \delta_{ij} + t\cdot \delta_{ii} + t \cdot \delta_{jj}.
$$
Thus, with $\delta_{11} = 0$ we see that for all $j \in [m]$ and $i \in [r,m]$
$$
\deg(A'_{1j}) = \delta_{1j} + t \cdot \delta_{jj} \le t\cdot \delta_{ii} + t \cdot \delta_{jj} \le \deg(A'_{ij}).
$$
This proves that $A'$ satisfies condition \cond{S1C--C}. Reducibility is given as follows
\begin{claim} For all $t \in \Nat$ we have $\evalk(A'(t)) \Tle \evalk(A).$
\end{claim}
\begin{clproof}
Let $G =(V,E)$, $\vpin$ be an instance of $\evalk(A')$, Let $G' =(V,E')$ be the graph obtained from $G$ by adding $t$ many self-loops to each vertex.
Then $Z_{A'}(\vpin, G) = Z_A(\vpin, G')$, witnessing the claimed reducibility.
\end{clproof}
It remains to show how to obtain condition \cond{S1C--B}. We give the proof by induction on $m$. For $m=2$ this is trivial. If $m\ge r > 2$ assume that $A$ does not satisfy \cond{S1C--B}. Define $C = A_{[r-1][m]}$, i.e. the matrix consisting of the first $r-1$ rows of $A$. By the Symmetrized $1$-Row Filter Lemma~\ref{lem:symm_1-row_filter} we have $\evalk(CC^T) \Tle \evalk(A)$. Further, $C$ has rank at least $2$ as $C_{i1} = 1$ for all $i \in [r-1]$ but the rows of $C$ are not identical and therefore $CC^T$ has rank at least $2$, as well (see Lemma~\ref{lem:a_square_prop}).

Application of the General Conditioning Lemma~\ref{lem:new_gen_cond} yields $\evalk(C') \Tle \evalk(CC^T)$ for a $k \times k$ matrix satisfying \cond{A}--\cond{C} such that $k \le r-1$. If $C'$ has at least two $1$-cells we apply the Two $1$-Cell Lemma~\ref{lem:two-1-cell} to obtain $\#\PP$-hardness of $\evalk(C')$. Therefore we can chose some $A'$ satisfying the conditions of the Lemma such that $\evalk(A') \Tle \evalk(C')$.
If $C'$ has only one $1$-cell then the proof follows by the induction hypothesis, as $C'$ has order $k \le r-1$.
\end{proof}

\bigskip

\paragraph{Definition of $A^{[k]}$ and $C^{[k]}$.} For the remainder of this section we fix $A\in \Int[X]^{m \times m}$ satisfying conditions \cond{A}--\cond{C} and all \cond{S1C} conditions. Furthermore, $\delta_{ij}$ for all $i,j \in [m]$ and $r$ are defined for $A$ as in the Single $1$-Cell Conditions.
For $k \in \Nat$ define the matrix $A^{[k]}$ by 
\begin{equation}\label{eq:def_Ak}
A^{[k]}_{ij} = A_{ij}(A_{1j})^{k-1} \text{ for all  } i,j \in [m]. 
\end{equation}
Let further, $C^{[k]}$ be defined by
\begin{equation}\label{eq:define_Ck}
C^{[k]} = A^{[k]}(A^{[k]})^T. 
\end{equation}

\begin{lemma}\label{lem:ck_reduction}
Let $A$ and $C^{[k]}$ be defined as above. Then
$$ \evalk(C^{[k]}) \Tle \evalk(A).$$
\end{lemma}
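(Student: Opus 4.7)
The plan is to exhibit, for every input $(G,\phi)$ of $\evalk(C^{[k]})$, a graph $G'$ and a pinning $\phi'$ so that $Z_{C^{[k]}}(\phi,G) = Z_A(\phi',G')$. The construction is a mild elaboration of the $2$-stretching trick of Lemma~\ref{lem:basic_reductions}, where the extra factor $(A_{1\ell})^{k-1}$ coming from the definition of $A^{[k]}$ is implemented via pinned pendant vertices.

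First I would unfold the definitions to get a workable expression for the target partition function. From \eqref{eq:def_Ak} and \eqref{eq:define_Ck},
\[
C^{[k]}_{ij}=\sum_{\ell=1}^m A^{[k]}_{i\ell}A^{[k]}_{j\ell}=\sum_{\ell=1}^m A_{i\ell}A_{j\ell}\bigl(A_{1\ell}\bigr)^{2(k-1)}.
\]
This is exactly what one would get from a $2$-stretched edge whose middle vertex, if mapped to spin $\ell$, carries an extra weight $(A_{1\ell})^{2(k-1)}$; and such a vertex weight can be produced by attaching $2(k-1)$ pendant vertices and pinning each of them to the spin $1$ (so that each contributes a factor $A_{\ell,1}=A_{1\ell}$ by symmetry of $A$).

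Concretely, given $(G,\phi)$ with $G=(V,E)$, I would build $G'=(V',E')$ by introducing, for each edge $e=uv\in E$, a middle vertex $w_e$ together with pendants $p_{e,1},\dots,p_{e,2(k-1)}$ adjacent to $w_e$; the original edge $uv$ is replaced by the two edges $uw_e$ and $vw_e$. Let $\phi'$ extend $\phi$ by setting $\phi'(p_{e,i})=1$ for all $e\in E$ and $i\in[2(k-1)]$. Summing out the spins of the middle vertices $w_e$ independently gives
\[
Z_A(\phi',G')=\sum_{\phi\subseteq\sigma:V\to[m]}\prod_{uv\in E}\sum_{\ell=1}^m A_{\sigma(u),\ell}A_{\sigma(v),\ell}\bigl(A_{1,\ell}\bigr)^{2(k-1)}=Z_{C^{[k]}}(\phi,G),
\]
which establishes the reduction. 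Since $G'$ and $\phi'$ have size polynomial in the input (with $k$ regarded as a parameter of the matrix), this yields $\evalk(C^{[k]})\Tle\evalk(A)$.

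There is no real obstacle: the only subtle point is recognising that the asymmetric factor $(A_{1j})^{k-1}$ appearing in the definition of $A^{[k]}$ doubles up on squaring, yielding the symmetric vertex weight $(A_{1\ell})^{2(k-1)}$, which is then directly realised by pinned pendants. The case $k=1$ is just the familiar $2$-stretching $\evalk(A^2)\Tle\evalk(A)$ of Lemma~\ref{lem:basic_reductions}.
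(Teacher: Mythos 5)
Your proof is correct and follows essentially the same approach as the paper: both unfold $C^{[k]}_{ij}=\sum_\ell A_{i\ell}A_{j\ell}(A_{1\ell})^{2(k-1)}$, realize the inner sum by inserting a middle vertex on each edge (the $2$-stretching idea), and then manufacture the extra $(A_{1\ell})^{2(k-1)}$ factor by hanging pinned-to-$1$ gadgetry off the middle vertices. The only cosmetic difference is that the paper uses a single shared extra vertex $z$ pinned to $1$ and connected to each middle vertex by $2k-2$ parallel edges, whereas you use $2(k-1)$ distinct pinned pendant vertices per middle vertex; both yield the identical weight factor.
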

\begin{proof}
Let $G = (V,E)$ and a pinning $\vpin$ be an instance of $\evalk(C^{[k]})$.
Define a graph $G' = (V',E')$ as follows:
\begin{eqnarray*}
V' & := & V \;\dot \cup \; \{z\} \;\dot \cup \;\{v_e \mid e \in E\} \\
E' & := & \{uv_e, v_ev \mid e = uv \in E \} \;\dot \cup \;\{(v_ez)^{2k-2} \mid e \in E \}.
\end{eqnarray*}
The edges $v_ez$ have multiplicity $2k-2$. Let $\vpin'$ be the extension of $\vpin$ defined by $\vpin \cup \{ z \mapsto 1 \}$.
Let $\vcfg \supseteq \vpin'$ be a configuration on $G'$ and $A$. Then its weight 
equals 
\begin{eqnarray*}
\prod_{uv \in E'} A_{\vcfg(u),\vcfg(v)} &=& \prod_{uv \in E} \sum_{\nu = 1}^m A_{\vcfg(u),\nu} A_{\vcfg(v),\nu} (A_{1,\nu})^{2k-2} \\
&=& \prod_{uv \in E} C^{[k]}_{\vcfg(u),\vcfg(v)}.
\end{eqnarray*}
The last equality follows directly from the definition of $C^{[k]}$. This yields $\evalk(C^{[k]}) \Tle  \evalk(A)$ as required.
\end{proof}
\newcommand{\mult}{\textup{ \textsf{mult}}}
\newcommand{\lrk}{\lambda^{1/k}}
\newcommand{\Ck}{C^{[k]}}
\newcommand{\Coo}{C^{[1]}_{11}}
We need some further notation to deal with polynomials. For $f \in \Qu[X]$ and $\lambda \in \C$ let $\mult(\lambda,f)$ denote the \emph{multiplicity} of $\lambda$ in $f$ if $\lambda$ is a root of $f$, and $\mult(\lambda,f) = 0$, otherwise.
The \emph{$k$-th root} of $\lambda \in \C$ is the $k$-element set $\lambda^{1/k} = \{\mu \in \C \mid \mu^k = \lambda \}$. Slightly abusing notation, $\lambda^{1/k}$ will denote any element from this set.
For every root $\lambda$ of $C^{[1]}_{11}$, every $r \le j \le m$ and $k \in \Nat$ we define
\begin{equation}\label{eq:def_m_lam_j}
m(\lambda,j) = \min\{\mult(\lrk,\Ck_{1j}) \mid k \ge 1\}.
\end{equation}
The following Lemma is the technical core of this section.

\begin{lemma}[Single-$1$-Cell Technical Core]\label{lem:technical_core}
Let $C^{[k]}$ be defined as above. The following is true for all $j \in [r,m]$.
\begin{itemize}
 \item[(1)] For any root $\lambda$ of $\Coo$ and all $k \in \Nat$ we have 
   \begin{itemize}
    \item[(1a)] $\mult(\lrk,\Ck_{11}) = \mult(\lambda, \Coo)  \ge m(\lambda,j)$.
    \item[(1b)] $\mult(\lrk, \Ck_{jj}) \ge m(\lambda,j)$.
   \end{itemize}
 \item[(2)] If row $A \row 1$ and $A \row j$ are linearly dependent then for any root $\lambda$ of $\Coo$ and all $k \in \Nat$ we have 
          $$\mult(\lrk,\Ck_{11}) = \mult(\lrk,\Ck_{1j}) = \mult(\lrk, \Ck_{jj}).$$
 \item[(3)] If row $A \row 1$ and $A \row j$ are linearly independent there is a root $\lambda$ of $\Coo$ such that $$\mult(\lambda, \Coo) > m (\lambda,j).$$
\end{itemize} 
\end{lemma}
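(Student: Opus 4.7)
My approach is to convert the hypothesis ``$m(\lambda,j)\ge t$'' into a family of \emph{moment identities} on the exponent pairs $(\delta_{1\nu},\delta_{j\nu})$, and then to read off all four assertions from these identities together with the matrix structure given by \cond{S1C--B}.

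The equality in (1a) is immediate from the identity $C^{[k]}_{11}(X) = C^{[1]}_{11}(X^k)$, which follows directly from \eqref{eq:def_Ak}--\eqref{eq:define_Ck}; the hypothesis that $\lambda\neq 0$ (the constant term of $C^{[1]}_{11}$ equals $|\{\nu:\delta_{1\nu}=0\}| = r-1 > 0$) then guarantees $\mult(\lambda^{1/k},C^{[k]}_{11}) = \mult(\lambda,C^{[1]}_{11})$. Part (2) is a short computation: when $A\row 1$ and $A\row j$ are linearly dependent, the monomial entries of $A$ force $\delta_{j\nu}-\delta_{1\nu}=\epsilon$ to be a nonnegative constant in $\nu$ (nonnegative by \cond{S1C--C}), so $C^{[k]}_{1j}(X) = X^\epsilon C^{[k]}_{11}(X)$ and $C^{[k]}_{jj}(X) = X^{2\epsilon} C^{[k]}_{11}(X)$, and these extra factors leave the multiplicity at the nonzero point $\mu$ unchanged.

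For the inequality in (1a) and for (1b), I would Taylor-expand $C^{[k]}_{1j}(\mu(1+h))$ about $h=0$ using $\mu^k=\lambda$; writing $\epsilon_\nu := \delta_{j\nu}-\delta_{1\nu}\ge 0$, the coefficient of $h^l$ becomes
\[
 \sum_\nu \lambda^{2\delta_{1\nu}}\,\mu^{\epsilon_\nu}\,\binom{(2k-1)\delta_{1\nu}+\delta_{j\nu}}{l},
\]
a polynomial in $\mu$ of degree at most $\max_\nu \epsilon_\nu$. If $m(\lambda,j)\ge t$ this coefficient vanishes at every $k$-th root of $\lambda$ for every $l<t$; choosing $k>\max_\nu \epsilon_\nu$ then forces it to be the zero polynomial in $\mu$. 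Grouping the resulting identity by the value $e$ of $\epsilon_\nu$ and reading off coefficients of powers of $k$ from the ensuing polynomial-in-$k$ identity yields, for each $l<t$, every $e$, and every $a+b\le l$,
\[
 \sum_{\nu:\epsilon_\nu = e} \lambda^{2\delta_{1\nu}}\,\delta_{1\nu}^a\,\delta_{j\nu}^b = 0.
\]
Summing over $e$ with $b=0$ gives $\sum_\nu \lambda^{2\delta_{1\nu}}\delta_{1\nu}^l = 0$ for $l<t$, which is equivalent to $\mult(\lambda,C^{[1]}_{11})\ge t$, proving the inequality in (1a). For (1b), the analogous expansion of $C^{[k]}_{jj}(\mu(1+h))$ has $h^l$-coefficient equal to a polynomial of total degree $l<t$ in $(\delta_{1\nu},\delta_{j\nu})$ against the weights $\lambda^{2\delta_{1\nu}}\mu^{2\epsilon_\nu}$, so grouping by $e$ and using the same moment identities collapses every term to zero.

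For (3), I argue by contraposition: assume $\mult(\lambda,C^{[1]}_{11}) = m(\lambda,j)$ for every root $\lambda$ of $C^{[1]}_{11}$. The moment identities with $(a,b)=(l,0)$, combined with the routine translation of moment vanishing into derivative vanishing, show that $F_e(X) := \sum_{\nu:\epsilon_\nu = e} X^{2\delta_{1\nu}}$ is divisible by $C^{[1]}_{11}$ for every $e$. Since $\deg F_e \le \deg C^{[1]}_{11}$, each $F_e$ is either $0$ or a nonzero scalar multiple of $C^{[1]}_{11}$. Here the critical input is \cond{S1C--B}: symmetry of $A$ together with the identity of rows $1,\dots,r-1$ forces $\epsilon_\nu = \delta_{1j}$ for every $\nu\in[1,r-1]$, so the constant term of $F_e$ equals $r-1$ if $e=\delta_{1j}$ and is $0$ otherwise, while $C^{[1]}_{11}$ has constant term $r-1>0$; matching forces $F_e = 0$ whenever $e\neq \delta_{1j}$. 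Hence $\epsilon_\nu = \delta_{1j}$ for every $\nu$, so $A\row j = X^{\delta_{1j}}\cdot A\row 1$, contradicting linear independence. The main technical obstacle is the two-step polynomial-identity reduction that turns the $k$- and $\mu$-dependent vanishing hypothesis into a $k$- and $\mu$-free moment identity (exploiting that the $k$ distinct $k$-th roots of $\lambda$ are too many for a polynomial of bounded $\mu$-degree once $k$ is large); everything else is bookkeeping plus the clean application of \cond{S1C--B}.
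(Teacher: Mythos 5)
Your proposal is correct and takes a genuinely different route through the central step. Both proofs reduce to showing that the vanishing hypothesis encoded by $m(\lambda,j)$ implies the ``moment-type'' identities $\sum_{\nu\in N_e}\lambda^{2\delta_{1\nu}}\delta_{1\nu}^a=0$ for small $a$ (equivalently, that the roots of $\Coo$ are roots of each $g_e$), but they get there differently. The paper goes through analysis: it fixes a logarithm $\alpha$ of $\lambda$, sets $f_\lambda(z)=\sum_\nu \lambda^{2a_\nu+c_\nu z}$, observes that $f_\lambda(1/k)=0$ for every $k$, and applies Rolle/Mean-Value-Theorem recursively (Claim 1 in that proof) to conclude $f_\lambda^{(l)}(0)=0$ for all $l$, then solves a Vandermonde system. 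Your route is purely algebraic: you Taylor-expand $\Ck_{1j}(\mu(1+h))$ and observe that the $h^l$-coefficient, read as a function of the $k$-th root $\mu$ of $\lambda$, is a polynomial of degree at most $\max_\nu\epsilon_\nu$ \emph{independent of $k$}; once $k>\max_\nu\epsilon_\nu$, vanishing at all $k$ of the $k$-th roots forces the polynomial to be identically zero, and a second interpolation step (in the variable $k$) extracts the moment identities. This avoids the analytic detour entirely. Your treatment of (1a), (1b), and (2) via these moment identities is correct, and your argument for (3) by contraposition --- deducing $\Coo\mid F_e$ from $\mult(\lambda,F_e)\ge m(\lambda,j)=\mult(\lambda,\Coo)$ at every root, bounding $\deg F_e\le\deg\Coo$, and then separating cases by whether $F_e$ has a constant term --- is a cleaner substitute for the paper's gcd-style decomposition $g_\nu=h\cdot f_\nu$ and the degree argument on $\sum_\nu f_\nu$.

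One point worth making explicit, which is already latent in the paper's exposition: the step ``this coefficient vanishes at every $k$-th root of $\lambda$'' requires reading $m(\lambda,j)=\min\{\mult(\lrk,\Ck_{1j})\mid k\ge 1\}$ as a minimum over $k$ \emph{and} over all choices of the $k$-th root $\lrk$ (the multiplicity of a root of a $\Q[X]$-polynomial can in principle differ between non-conjugate $k$-th roots of $\lambda$). The paper's notation leaves this implicit, and its own proof also relies on picking specific roots (the Rolle argument uses $\lrk=e^{\alpha/k}$; the proof of its Claim~1 picks a particular root to realize the minimum). Under the ``min over all choices'' reading, which is the one that makes the paper's proof of Claim~1 and its subsequent use in the Single-$1$-Cell Lemma cohere, your interpolation step is sound; you should state this reading explicitly rather than inherit the ambiguity.
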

The proof of this lemma is technically involved and quite long. We therefore show first, how to prove the Single-$1$-Cell Lemma, provided that the above holds.

The following basic facts\footnote{cf., for example, Chapter IV in \cite{lan02}} about polynomials will be used frequently in the following.

\begin{lemma}\label{lem:59}
Let $f \in \Qu[X]$ and $\lambda \in \C$.
\begin{itemize}
 \item[(1)] There exists a unique (up to a scalar factor) irreducible polynomial $p_{\lambda} \in \Qu[X]$ such that $\lambda$ is a root of $p_{\lambda}$. If $f(\lambda) = 0$ then $p_{\lambda}$ divides $f$.
 \item[(2)] If $\mult(\lambda,f)=s$ then $f = p^s_{\lambda} \bar{f}$ for some $\bar{f} \in \Qu[X]$ which satisfies $\bar{f} (\lambda) \neq 0$.
 \item[(3)] If $f(\lambda) = 0$ then $\lrk$ is a root of $f(X^k)$ and $\mult(\lrk,f(X^k)) = \mult(\lambda, f(X))$.
\end{itemize} 
\end{lemma}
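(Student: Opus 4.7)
For part (1), the plan is standard minimal-polynomial machinery. Consider the evaluation map $\mathbb{Q}[X] \to \mathbb{C}$, $g \mapsto g(\lambda)$, and let $I_\lambda$ be its kernel. Since $\mathbb{Q}[X]$ is a principal ideal domain, $I_\lambda$ is generated by a polynomial $p_\lambda$ that is unique up to a nonzero scalar; normalizing to be monic gives uniqueness. For irreducibility, observe that if $p_\lambda = gh$ with $\deg g, \deg h < \deg p_\lambda$, then $g(\lambda) h(\lambda) = 0$, so one of $g,h$ lies in $I_\lambda$ and is divisible by $p_\lambda$, contradicting the degree bound. The divisibility claim $p_\lambda \mid f$ whenever $f(\lambda) = 0$ is immediate from $f \in I_\lambda = (p_\lambda)$.

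For part (2), I would induct on $s = \mult(\lambda, f)$. The base case $s=0$ is trivial (take $\bar f = f$). For $s \ge 1$, part (1) gives $p_\lambda \mid f$, so write $f = p_\lambda \cdot g$. The key point is that over characteristic zero, any irreducible polynomial is separable, so $\lambda$ is a simple root of $p_\lambda$. Over $\mathbb{C}$, then, the $(X-\lambda)$-adic valuation gives $\mult(\lambda,f) = \mult(\lambda,p_\lambda) + \mult(\lambda,g) = 1 + \mult(\lambda,g)$, whence $\mult(\lambda,g) = s-1$. By the induction hypothesis, $g = p_\lambda^{s-1} \bar f$ with $\bar f(\lambda) \neq 0$, and therefore $f = p_\lambda^s \bar f$ as required.

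For part (3), let $\mu := \lrk$ be any fixed $k$-th root of $\lambda$; substitution gives $f(\mu^k) = f(\lambda) = 0$, so $\mu$ is a root of $f(X^k)$. To compute multiplicities, I would pass to $\mathbb{C}[X]$ and use the complex factorization $f(X) = (X - \lambda)^s \widetilde f(X)$ with $\widetilde f(\lambda) \neq 0$, where $s = \mult(\lambda, f)$. Substituting $X^k$ yields $f(X^k) = (X^k - \lambda)^s\,\widetilde f(X^k)$. Assuming $\lambda \neq 0$ (the generic case, which is what is used in the applications of this lemma), the polynomial $X^k - \lambda$ splits as $\prod_{j=0}^{k-1}(X - \zeta^j \mu)$ for $\zeta$ a primitive $k$-th root of unity, and its roots are pairwise distinct; hence $\mu$ appears with multiplicity exactly $s$ in $(X^k-\lambda)^s$. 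Since $\widetilde f(\mu^k) = \widetilde f(\lambda) \neq 0$, the factor $\widetilde f(X^k)$ contributes nothing further, so $\mult(\mu, f(X^k)) = s = \mult(\lambda, f)$.

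The whole lemma is routine algebra over $\mathbb{Q}$ and $\mathbb{C}$; there is no real obstacle. If anything, the only subtle point worth spelling out is the appeal to separability of irreducible polynomials in characteristic zero that underlies the inductive step in part (2), and the need in part (3) to note that $X^k - \lambda$ has simple roots — a property that fails precisely when $\lambda = 0$, a case not relevant to how the lemma is invoked.
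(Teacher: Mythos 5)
Your proof is correct, and the paper itself does not prove Lemma~\ref{lem:59}---it only cites Lang's \emph{Algebra} in a footnote---so there is no internal argument to compare against. Your treatment is the standard self-contained one: part~(1) is the minimal-polynomial/PID argument (implicitly assuming $\lambda$ is algebraic over $\Qu$, the only case in which the existence claim holds and the only case arising in the paper); part~(2) rests on separability of irreducible polynomials in characteristic zero, which gives $\mult(\lambda,p_\lambda)=1$, together with additivity of the $(X-\lambda)$-adic valuation on $\C[X]$; and part~(3) follows from the factorization $f(X^k)=(X^k-\lambda)^s\,\widetilde f(X^k)$ and the simplicity of the roots of $X^k-\lambda$ when $\lambda\neq 0$. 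You are also right to flag that part~(3) as stated is false for $\lambda=0$, where $\mult(0,f(X^k))=k\cdot\mult(0,f)$; the paper's only invocation of this part, in the proof of Lemma~\ref{lem:technical_core}, explicitly checks $\lambda\neq 0$ beforehand, so the missing hypothesis is harmless in context.
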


\begin{proof}[of the Single $1$-Cell Lemma \ref{lem:sing_1_cell}]
Let $A \in \Int[X]^{m \times m}$ satisfy conditions \cond{A}--\cond{C} and \cond{S1C--A}. By Lemma~\ref{lem:57} we may assume w.l.o.g. that $A$ indeed satisfies all \cond{S1C} conditions.
By the Two--$1$--Cell Lemma~\ref{lem:two-1-cell}, it suffices to prove the existence of a positive symmetric matrix $C$ which contains at least two $1$-cells such that 
$$
\evalk(C) \Tle \evalk(A).
$$
Recall the definition of $\Ck$ in equation \eqref{eq:define_Ck} and the definition of $r$ in the \cond{S1C} conditions. We prove a technical tool.

\begin{claim}\label{cl:lem_61}
There is a root $\lambda$ of $\Coo$, an index $j \in [r,m]$ and $k \in \Nat$ such that
\begin{itemize}
 \item[(1)] $\mult(\lrk, \Ck_{1j}) < \mult(\lrk, \Ck_{11})$. 
 \item[(2)] For all $i \in [r,m]$ we have $\mult(\lrk, \Ck_{1j}) \le \mult(\lrk, \Ck_{ii}).$
\end{itemize}
\end{claim}
\begin{clproof}
Choose $\lambda$ and $j \in [r,m]$ such that $m(\lambda,j)$ is minimal for all choices of $\lambda$ and $j$ which satisfy
\begin{equation}
 A\row 1 \text{ and } A\row j \text{ are linearly independent and } \mult(\lambda,\Coo) > m(\lambda,j).
\end{equation}
Note that these $\lambda$ and $j$ exist by Lemma~\ref{lem:technical_core}(3).
Choose $k \in \Nat$ such that, by equation \eqref{eq:def_m_lam_j}, $m(\lambda,j)= \mult(\lrk,\Ck_{1j})$.
This proves (1) by Lemma~\ref{lem:technical_core}(1a).

To prove (2) fix $i \in [r,m]$. If $A\row 1$ and $A \row i$ are linearly independent our choice of $\lambda$ and $j$ implies that $m(\lambda,i) \ge m(\lambda,j)$ and therefore
$$
 \mult(\lrk,\Ck_{ii}) \ge m(\lambda,i) \ge m(\lambda,j) = \mult(\lrk,\Ck_{1j})
$$
where the first inequality holds by Lemma~\ref{lem:technical_core}(1b).

If otherwise $A \row 1$ and $A\row i$ are linearly dependent then 
$$
\mult(\lrk,\Ck_{ii}) = \mult(\lrk,\Ck_{11}) \ge m(\lambda,j) = \mult(\lrk,\Ck_{1j}).
$$
The first equality holds by Lemma~\ref{lem:technical_core}(2) and the inequality is true by Lemma~\ref{lem:technical_core}(1a).
\end{clproof}
Choose $j,\lambda,k$ as in Claim~\ref{cl:lem_61}. Define $t = \mult(\lrk,\Ck_{1j})$.
Let $p_{\lambda}$ be an irreducible polynomial such that $p_{\lambda}(\lambda)=0$. Let $s : = \min\{ \mult(\lrk,\Ck_{ab}) \mid a,b \in [m] \}$ and define the positive symmetric matrix
$$ 
C := \dfrac{1}{p^s_{\lambda}}  \Ck\vert_{p_{\lambda}}.
$$
Note that $s \le t$ by definition. We consider two cases. If $s = t$ then $C_{1j} = C_{j1} = 1$ but by Claim~\ref{cl:lem_61}(1) we have $C_{11} \neq 1$. Therefore, $C$ contains at least two $1$-cells.

If otherwise $s < t$, fix witnesses $a,b \in [m]$ with $\mult(\lrk,\Ck_{ab}) = s$. 
\begin{claim}
Either $a \ge r$ or $b \ge r$. 
\end{claim}
\begin{clproof}
Recall that by condition \cond{S1C--B} the first $r-1$ rows of $A$ --- and hence those of $A^{[k]}$ --- are identical. By the definition $C^{[k]} = A^{[k]}(A^{[k]})^T$ (cf. equation \eqref{eq:define_Ck}) we have
$$
\Ck_{ij} = \sum_{\nu=1}^m A^{[k]}_{i\nu}A^{[k]}_{j\nu} = \sum_{\nu=1}^m A_{i\nu}A_{j\nu}(A_{1\nu})^{2k-2}
$$
In particular, for all $a',b' \in [r-1]$ we have $\Ck_{a'b'}  = \Ck_{11}$ and hence $\mult(\lrk, \Ck_{a'b'}) = \mult(\lrk, \Ck_{11}) > \mult(\lrk, \Ck_{ab})$ which proves the claim.
\end{clproof}
Combining this claim with Claim~\ref{cl:lem_61}(2) and the fact that $0<t=\mult(\lrk,\Ck_{1j})$, we have either $C_{aa} \neq 1$ or $C_{bb} \neq 1$. As $C_{ab} = C_{ba} = 1$ by the definition of $a,b$ we see that $C$ contains at least two $1$-cells.
We have $\evalk(C^{[k]}) \Tle \evalk(A)$ by Lemma~\ref{lem:ck_reduction}. Further
$\evalk(C^{[k]}\vert_{p_{\lambda}}) \Tle \evalk(C^{[k]})$ by the Prime Filter Lemma~\ref{lem:prime_filter}.
Then $\evalk(C) \Tle \evalk(A)$ follows from the fact that
\[
Z_{C^{[k]}\vert_{p_{\lambda}}}(\vpin, G) = p^{s\cdot |E|}_{\lambda}Z_{C}(\vpin, G) \text{ for all graphs } G=(V,E) \text{ and pinnings } \vpin.
\]
\end{proof}

\subsection{Proof of the Single-$1$-Cell Technical Core Lemma \ref{lem:technical_core}}

\setcounter{claim}{0}
\paragraph*{Proof of the Single-$1$-Cell Technical Core Lemma \ref{lem:technical_core}}
Fix $j\in [r,m]$ and define the value $b = \min \{\delta_{ji} - \delta_{1i} \mid i \in [m]\}$. By condition \cond{S1C--C} we have $b \ge 0$. Further, if rows $A \row 1$ and $A \row j$ are linearly dependent, then $A\row j  = X^b A\row 1$.

For simplicity of notation, define $a_i := \delta_{1i}$, $b_i := \delta_{ji} - b$ and $c_{i} : = b_{i} - a_{i} = \delta_{ji} - \delta_{1i} - b $ for all $i \in [m]$. Observe that all $c_i$ are non-negative and moreover $c_i = 0$ for all $i\in [m]$ iff rows $A \row 1$ and $A \row j$ are linearly dependent. If these rows are linearly independent then not all $c_i$ are equal.
We have
$$
\begin{array}{c c r l l l l}
A\row 1 & = & (& X^{a_1} & \ldots & X^{a_m}  & ) \\
A\row j & = & ( & X^{b + b_1} & \ldots & X^{b + b_m} &).
\end{array}.
$$
By the definition of $A^{[k]}$ in equation \eqref{eq:def_Ak}, we have $A^{[k]}_{i\nu} = A_{i\nu}(A_{1\nu})^{k-1} = A_{i\nu}X^{(k-1) a_{\nu}}$
and by $b+ b_\nu + (k-1)a_\nu = b+ c_\nu + ka_\nu$ we have
$$
\begin{array}{c c r l l l l}
A^{[k]} \row 1 & = & (& X^{ka_1} & \ldots & X^{ka_m}  & ) \\
A^{[k]} \row j & = & ( & X^{b + c_1 + ka_1} & \ldots & X^{b + c_m + ka_m} &).
\end{array}
$$
Therefore, by $C^{[k]} = A^{[k]} (A^{[k]})^T$,
\begin{equation}\label{eq:c_shape}
\begin{array}{c c l l l l l l}
\Ck_{11} & = &  \phantom{X^{2b}(}X^{2ka_1} & + & \ldots & + &X^{2ka_m}\\
\Ck_{1j} & = & X^{b\phantom{2}}(X^{c_1 + 2ka_1} & + & \ldots & + &X^{c_m + 2ka_m})\\
\Ck_{jj} & = & X^{2b} (X^{2c_1 + 2ka_1} & + & \ldots & + &X^{2c_m + 2ka_m}).
\end{array} 
\end{equation}
Let $\lambda$ be a root of $\Coo$ and recall that $a_1 = \ldots = a_{r-1} = 0$ as $r$ was defined to be minimal such that $\delta_{1r} = a_r > 0$. Therefore $\lambda \neq 0$ and by equation \eqref{eq:c_shape} and Lemma~\ref{lem:59}(3) we see that $\lrk$ is a root of $\Ck_{11}$ with 
\begin{equation}\label{eq:2003091712}
\mult(\lrk, \Ck_{11}) = \mult(\lambda, \Coo). 
\end{equation}
Assume first that $A\row 1$ and $A\row j$ are linearly dependent. Then $c_1 = \ldots = c_m = 0$ which, by \eqref{eq:c_shape} implies 
$ \Ck_{1j} = X^b \cdot \Ck_{11}$ and $\Ck_{jj} = X^{2b} \cdot \Ck_{11}$. As $\lambda \neq 0$ we have 
$$
\mult(\lrk,\Ck_{11}) = \mult(\lrk,\Ck_{1j}) = \mult(\lrk, \Ck_{jj}).
$$
This proves statement (2) of Lemma~\ref{lem:technical_core} and the case of Lemma~\ref{lem:technical_core}(1) where $A \row 1$ and $A \row j$ are linearly dependent.

It remains to prove statement (1) for linearly independent $A \row 1$ and $A \row j$ and statement (3). Assume in the following that $A \row 1$ and $A \row j$ are linearly independent. If furthermore there is a $k$ such that $\lrk$ is not a root of $\Ck_{1j}$ then $m(\lambda,j) = 0$ and the result follows.
Assume therefore that $\lrk$ is a root of $\Ck_{1j}$ for all $k \in \Nat$. Then, for all $k \ge 1$ we have
$$
0 = \Ck_{1j}(\lrk) =  \lambda^{b/k}(\lambda^{2a_1 + c_1/k}  +  \ldots  + \lambda^{2a_m + c_m/k})
$$
Defining $f_{\lambda}(z) = \lambda^{2a_1 + c_1 z}  +  \ldots  + \lambda^{2a_m + c_m z}$ we get
\begin{equation}\label{eq:2003091748}
0 = \Ck_{1j}(\lrk) = \lambda^{b/k} f_\lambda(1/k) \text{ for all } k\ge 1. 
\end{equation}
Let $f^{(l)}_{\lambda}$ denote the $l$-th derivative of $f_{\lambda}$ and let $\alpha \in \C$ be such that $\lambda = e^\alpha$. Then we have 
$f_{\lambda}(z) = e^{\alpha(2a_1 + c_1 z)}  +  \ldots  + e^{\alpha(2a_m + c_m z)}$
and thus 
\begin{equation}\label{eq:derivative_f}
f^{(l)}_{\lambda}(0) = (\alpha c_1)^l e^{2 \alpha a_1}  +  \ldots  + (\alpha c_m)^l e^{2 \alpha a_m}
                     = (\alpha c_1)^l \lambda^{2a_1}  +  \ldots  + (\alpha c_m)^l \lambda^{2 a_m}.
\end{equation}
We will prove the following property of the derivatives of $f_\lambda$.
\begin{equation}\label{eq:flambda_deriv_zero}
\textup{For all $l \ge 1$ we have $f^{(l)}_{\lambda}(0) = 0$.}
\end{equation}
The proof relies on the following claim.
\begin{claim} \label{cl:2203091500}
Suppose that $g(z) = u(z) + iv(z)$ is a function that is analytic in the real segment $[0,1]$ and $\{r_n\}_{n \in \Nat}$, $\{s_n\}_{n \in \Nat}$ are sequences from the real segment $[0,1]$ such that $\lim_{n \rightarrow \infty} r_n = \lim_{n \rightarrow \infty} s_n = 0$ and $u(r_n) = v(s_n) = 0$ for all $n \in \Nat$.
Then
\begin{itemize}
 \item[(a)] $g(0) =0$.
 \item[(b)] There are sequences $\{r'_n\}_{n \in \Nat}$ and $\{s'_n\}_{n \in \Nat}$ in the real segment $[0,1]$ such that
$$
 \lim_{n \rightarrow \infty} r'_n = \lim_{n \rightarrow \infty} s'_n = 0
$$
and $u'(r'_n) = v'(s'_n) = 0$ for all $n \in \Nat$ with $u',v'$ being the derivatives of $u$ and $v$.
\end{itemize}
\end{claim}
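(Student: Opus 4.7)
The two parts can be handled separately by elementary arguments.

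Part (a) is a consequence of continuity alone. Since $g = u + iv$ is analytic on $[0,1]$, both $u$ and $v$ are in particular continuous there. Passing to the limit in $u(r_n) = 0$ as $r_n \to 0$ gives $u(0) = 0$, and the analogous argument shows $v(0) = 0$, whence $g(0) = 0$.

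For part (b) the main tool is Rolle's theorem. In the situation that actually arises in the intended application --- namely $r_n = s_n = 1/n$, coming from the identity $f_\lambda(1/k) = 0$ for all $k \geq 1$ --- the $r_n$ assume infinitely many distinct values, and I would extract a strictly decreasing subsequence $r_{n_1} > r_{n_2} > \cdots \to 0$ of zeros of $u$. On each interval $[r_{n_{k+1}}, r_{n_k}]$ the analytic function $u$ is differentiable and vanishes at both endpoints, so Rolle's theorem supplies a point $r'_k \in (r_{n_{k+1}}, r_{n_k})$ with $u'(r'_k) = 0$; squeezing between the endpoints then gives $r'_k \to 0$. The same construction applied to $v$ and $\{s_n\}$ produces $\{s'_n\}$ with $v'(s'_n) = 0$ and $s'_n \to 0$.

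The only subtlety is the degenerate possibility that $\{r_n\}$ (or $\{s_n\}$) takes only finitely many distinct values, which, since $r_n \to 0$, would force $r_n$ to be eventually equal to $0$; then Rolle's theorem is not directly available from the hypothesis alone. Since this case does not occur in the intended application, I expect to simply note that one may without loss of generality assume that the $r_n$ are pairwise distinct, and then proceed with the Rolle argument above. No step of this plan is substantively difficult; the main thing to get right is the extraction of a monotone subsequence of distinct zeros to which Rolle can be applied.
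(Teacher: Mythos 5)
Your proposal is correct and follows essentially the same route as the paper: continuity of $u,v$ at $0$ gives part (a), and Rolle's theorem applied on consecutive intervals of a strictly monotone subsequence of zeros gives part (b). The only difference is that you flag explicitly the degenerate case where the $r_n$ (or $s_n$) are eventually $0$ --- a case in which the claim as literally stated would fail --- whereas the paper silently subsumes it under its ``w.l.o.g.\ monotone'' step; your observation that this case does not arise in the intended application (where $r_n=s_n=1/n$) is precisely the same implicit assumption the paper is making.
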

\begin{clproof}
W.l.o.g. we may assume that $\{r_n\}_{n \in \Nat}$ and $\{s_n\}_{n \in \Nat}$ are monotone. Then, since $g$ is continuous,
$$
g(0) = \lim_{z \rightarrow 0} g(z) = \lim_{z \rightarrow 0} u(z) + i\lim_{z \rightarrow 0} v(z) 
= \lim_{n \rightarrow \infty} u(r_n) + i \lim_{n \rightarrow \infty} v(s_n).
$$
Denote by $u_0,v_0$ the restrictions of $u$ and $v$ to the real segment $[0,1]$. Then $u_0,v_0$ are continuous and differentiable, and as $u(r_n) = v(s_n) = 0$ for all $n \in \Nat$ we see that\footnote{Recall that this is the Mean Value Theorem} there are $r'_n \in [r_{n+1},r_n]$ and $s'_n \in [s_{n+1},s_n]$ such that $u'_0(r'_n) = v'_0(s'_n) = 0$ for all $n\in \Nat$. 
\end{clproof}
To prove \eqref{eq:flambda_deriv_zero}, recall that by equation \eqref{eq:2003091748} and the assumption that $\lambda \neq 0$ we have $f_\lambda(1/k) = 0$ for all $k \in \Nat$. The definition of $f_\lambda$ implies $0 = f_\lambda(1/k) = u(1/k) + i v(1/k)$ so that the conditions of Claim~\ref{cl:2203091500} are satisfied. Applying Claim~\ref{cl:2203091500} inductively on the derivatives of $f_{\lambda}$ then yields  \eqref{eq:flambda_deriv_zero}.

\medskip

\noindent In the following, it will be convenient to partition $[m]$ into equivalence classes $N_0, \ldots, N_t$ such that $i,j \in N_\nu$ for some $\nu \in [0,t]$ iff $c_i = c_j$. For each $\nu$ let $\hat c_\nu$ be a representative of the $c_i$ values pertaining to $N_\nu$.
Since not all values $c_i$ are equal, we have $t \ge 1$. Further, condition \cond{S1C--B} implies that $c_1 = \ldots = c_{r-1}$ and we assume w.l.o.g. that $N_1 \supseteq [r-1]$. By definition, there is a $c_i = 0$ and we assume that $N_0$ is its corresponding equivalence class. 
Define, for each $\nu \in [0,t]$, a polynomial
\begin{equation}\label{eq:the_g_def}
g_\nu(X) = \sum_{i \in N_\nu} X^{2a_{i}}.
\end{equation}
By equations \eqref{eq:derivative_f} and \eqref{eq:flambda_deriv_zero} we obtain the following system of linear equations, for $l = 1, \ldots, t$
\begin{eqnarray*}
0 = f^{(l)}_\lambda(0) &=&  (\alpha \hat c_0)^l g_0(\lambda)  +  \ldots  + (\alpha \hat c_t)^l g_{t}(\lambda) \\
&=&  (\alpha \hat c_1)^l g_1(\lambda)  +  \ldots  + (\alpha \hat c_t)^l g_{t}(\lambda)
\end{eqnarray*}
The second equality follows from $\hat c_0 = 0$. The values $\hat c_1, \ldots, \hat c_t$ are pairwise different and non-zero by definition. Therefore $0 = f^{(l)}_\lambda(0)$ for $l= 1,\ldots, t$ forms a homogeneous system of linear equations with a Vandermonde determinant. Hence this system is non-singular, implying
$g_1(\lambda) = \ldots = g_t(\lambda) = 0$. As $\lambda$ is a root of $\Coo$ and $\Coo(X) = g_0(X) +  \ldots + g_t(X)$ we further infer $g_0(\lambda) =0$.

Note that our considerations so far are independent of the specific root $\lambda$  of $\Coo$. Therefore, we see that every irreducible polynomial $g$ which divides $\Coo$ divides all the $g_i$ as well (cf. Lemma~\ref{lem:59}(1)). Let $h_1,\ldots,h_{z}$ be the different irreducible divisors of $\Coo$ such that w.l.o.g. the leading coefficients of these $h_i$ are positive. 
For each $i \in [z]$ let $m_i$ be maximal such that $h^{m_i}_i(X)$ divides $g_\nu(X)$ for all $\nu \in [0,t]$. Defining $h(X) = h_1^{m_1}(X) \cdots h_z^{m_{z}}(X)$ we see that $h(X)$ divides each $g_{\nu}(X)$. Thus, for every $\nu \in [0,t]$, there are polynomials $f_\nu(X) = h(X)^{-1} \cdot g_\nu(X)$.
Hence
\begin{equation}\label{eq:0105091640}
\Coo  =  h(X) (f_0(X) + \ldots + f_t(X)) \quad \text{ and } \quad g_\nu(X) = h(X)\cdot f_\nu(X) \text{ for all } \nu \in [0,t].
\end{equation}
Observe that the degree of at least one polynomial $f_\nu$ is positive. To see this, recall that by $N_1 \supseteq [r-1]$ and $a_1 = \ldots = a_{r-1} = 0$ the polynomial $g_1$ has a non-zero constant term. Furthermore, by the fact that all $1$-entries of $A$ are contained in $1$-cells and $A$ contains exactly one such $1$-cell (by condition \cond{S1C--A}), for all $\nu \neq 1$ the polynomial $g_\nu$ does not have a constant term. Therefore $g_1$ and  $g_\nu$ differ by more than a constant factor, for all $\nu \neq 1$. With $h(X)$ being the greatest common divisor of the $g_\nu$ the existence of such an $f_\nu$ now follows.
The leading coefficients of all $f_\nu$ are positive, since that of $h(X)$ is and all coefficients of the $g_\nu$ are positive. Therefore,
\begin{equation}\label{eq:deg_of_f_sum}
\deg( f_0(X) + \ldots + f_t(X)) > 0.
\end{equation}

\begin{claim}\label{cl:first}
 $m(\lambda,j) = \mult(\lambda,h)$. 
\end{claim}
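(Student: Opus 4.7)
The plan is to combine the decomposition of $C^{[k]}_{1j}$ via the partition $N_0,\ldots,N_t$ with a factorization through $h$, and then show that the residual factor has $\lambda^{1/k}$ as a non-root for at least one $k$. Starting from equation \eqref{eq:c_shape}, collecting terms according to the partition gives
\[
C^{[k]}_{1j}(X) = X^{b} \sum_{\nu=0}^{t} X^{\hat c_\nu}\, g_\nu(X^k).
\]
Using \eqref{eq:0105091640}, we have $g_\nu(X^k) = h(X^k)\, f_\nu(X^k)$, hence
\[
C^{[k]}_{1j}(X) = X^{b}\, h(X^k)\, F_k(X), \qquad F_k(X) := \sum_{\nu=0}^{t} X^{\hat c_\nu} f_\nu(X^k).
\]
Since $\lambda \neq 0$, so is $\lambda^{1/k}$, and by Lemma~\ref{lem:59}(3) applied to $h$, $\mult(\lambda^{1/k}, h(X^k)) = \mult(\lambda,h)$. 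Therefore
\[
\mult(\lambda^{1/k}, C^{[k]}_{1j}) = \mult(\lambda, h) + \mult(\lambda^{1/k}, F_k).
\]

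The lower bound $m(\lambda,j) \ge \mult(\lambda,h)$ follows immediately from this identity, since the second term is non-negative. For the matching upper bound, I would show that there exists some $k\ge 1$ with $F_k(\lambda^{1/k}) \neq 0$. Assume, for contradiction, that $F_k(\lambda^{1/k}) = 0$ for every $k$. Writing $\lambda = e^{\alpha}$ and using $\hat c_0 = 0$, this becomes
\[
0 = f_0(\lambda) + \sum_{\nu=1}^{t} e^{\alpha \hat c_\nu /k} f_\nu(\lambda) \qquad \text{for all } k \ge 1.
\]
Set $\phi(z) := f_0(\lambda) + \sum_{\nu=1}^{t} e^{\alpha \hat c_\nu z} f_\nu(\lambda)$, an entire function vanishing at $z=1/k$ for each $k$. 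Applying Claim~\ref{cl:2203091500} inductively, as in the derivation of \eqref{eq:flambda_deriv_zero}, yields $\phi(0)=0$ and $\phi^{(l)}(0)=0$ for all $l \ge 1$. Since
\[
\phi^{(l)}(0) = \sum_{\nu=1}^{t} (\alpha \hat c_\nu)^l f_\nu(\lambda) \qquad (l \ge 1),
\]
the equations for $l = 1,\ldots,t$ form a homogeneous linear system with a non-singular Vandermonde matrix in the pairwise distinct, non-zero values $\alpha\hat c_1,\ldots,\alpha\hat c_t$. Hence $f_\nu(\lambda)=0$ for all $\nu \in [t]$, and then $\phi(0)=0$ forces $f_0(\lambda)=0$ as well.

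The final step is to convert this into a contradiction with the maximality built into the definition of $h$. Let $h_i$ be an irreducible factor of $C^{[1]}_{11}$ having $\lambda$ as a root (which exists by Lemma~\ref{lem:59}(1)). Since each $f_\nu(\lambda)=0$, Lemma~\ref{lem:59}(1) gives that $h_i$ divides every $f_\nu$, whence $h_i^{m_i+1}$ divides every $g_\nu = h\cdot f_\nu$. This contradicts the maximality of $m_i$. Therefore some $k$ satisfies $\mult(\lambda^{1/k}, F_k) = 0$, giving $m(\lambda,j) \le \mult(\lambda,h)$, and the claim follows. The main obstacle is simply the bookkeeping to see that the Vandermonde argument of the Technical Core proof can be recycled on the factored expression $F_k$; once that is in place, the maximality of $h$ in the greatest-common-divisor construction closes the argument.
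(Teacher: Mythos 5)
Your proof is correct, and it is worth noting that it diverges from the paper's own argument at exactly the point where the paper is at its most delicate. Both proofs start from the same factorization $C^{[k]}_{1j} = X^b\,h(X^k)\,F_k(X)$ and derive the lower bound $m(\lambda,j)\ge \mult(\lambda,h)$ identically. For the matching upper bound, the paper instead introduces $\bar f(X,z)$, assumes for contradiction that $\bar f(\lambda,l/t!)=0$ for all $l\in[t]$, and invokes a Vandermonde argument on the values $\lambda^{\hat c_\nu/t!}$ directly, concluding $f_1(\lambda)=\cdots=f_t(\lambda)=0$ and then choosing $k=t!/l$. Your version instead assumes $F_k(\lambda^{1/k})=0$ for \emph{every} $k$, passes to the entire function $\phi(z)=\sum_\nu e^{\alpha\hat c_\nu z}f_\nu(\lambda)$, and re-runs the Claim~\ref{cl:2203091500} machinery to kill all derivatives at $0$, after which the Vandermonde system is in the quantities $\alpha\hat c_1,\ldots,\alpha\hat c_t$, which are transparently pairwise distinct and nonzero (since $\alpha\neq 0$ because $\lambda\neq 1$). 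This buys you two things the paper glosses over: (i) the paper's Vandermonde is tacitly in the $t+1$ quantities $\lambda^{\hat c_\nu/t!}$ but has only $t$ equations, and pairwise distinctness of those $t!$\textsuperscript{th} roots is not argued (and is not automatic, e.g.\ if $\lambda$ is a root of unity), whereas your Vandermonde is a genuine $t\times t$ system in trivially distinct nodes; (ii) you obtain $f_0(\lambda)=0$ explicitly from $\phi(0)=0$, which is needed since the contradiction with the maximality of $m_i$ requires $h_i$ to divide \emph{all} $g_\nu$ including $g_0$, a step the paper leaves implicit. The price is re-invoking the analytic Claim~\ref{cl:2203091500} a second time, but that tool is already on the table from the derivation of \eqref{eq:flambda_deriv_zero}, so the overhead is negligible.
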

\begin{clproof}
It follows from the definition of $m(\lambda,j)$ in equation \eqref{eq:def_m_lam_j} that we have to prove that $\mult(\lrk,\Ck_{1j}) \ge \mult(\lambda,h)$ for all $k$ and that there is a $k$ such that equality holds.
Clearly, $\mult(\lrk,\Ck_{1j}) \ge \mult(\lambda,h) $ holds for all $k\ge 1$ because, by equation \eqref{eq:c_shape} and the definition of the $g_\nu$ in \eqref{eq:the_g_def}, we have
\begin{eqnarray*}
\Ck_{1j} &=& \phantom{h(X^k)}X^b\left( X^{\hat c_{0}} g_0(X^k) + \ldots + X^{\hat c_{t}} g_t(X^k)\right)\\
         &=& X^bh(X^k)\left( X^{\hat c_{0}} f_0(X^k) + \ldots + X^{\hat c_{t}} f_t(X^k)\right). 
\end{eqnarray*}
It remains to find a $k$ such that $\mult(\lrk,\Ck_{1j}) = \mult(\lambda,h)$.
Define a polynomial $\bar f (X,z) = X^{\hat c_{0}z} f_0(X) + \ldots +  X^{\hat c_{t}z} f_t(X)$, then
\begin{equation}\label{eq:ck_def_for_claim1}
\Ck_{1j} = X^b h(X^k) \cdot \bar f(X^k, k^{-1})
\end{equation}
Note that there is an $l \le t$ such that $\bar f(\lambda,\frac{l}{t!}) \neq 0$. To see this, assume the contrary and note that
$$
\bar f\left(\lambda,\frac{l}{t!}\right) = (\lambda^{\hat c_{0}/{t!}})^l f_0(\lambda) + \ldots +  (\lambda^{\hat c_{t}/{t!}})^l f_t(\lambda)
$$
Since the values $\hat c_0, \ldots, \hat c_t$ are pairwise different, this gives rise to an invertible system of linear equations with a Vandermonde determinant:
\begin{equation*}
\begin{array}{c c r r r r r r}
0 & = &  (\lambda^{\hat c_{0}/{t!}}) f_0(\lambda) &+& \ldots  & +  &(\lambda^{\hat c_{t}/{t!}}) f_t(\lambda)\\
0 & = &  (\lambda^{\hat c_{0}/{t!}})^2 f_0(\lambda) &+& \ldots  & +  &(\lambda^{\hat c_{t}/{t!}})^2 f_t(\lambda)\\
\vdots & & & & & & \\
0 & = &  (\lambda^{\hat c_{0}/{t!}})^t f_0(\lambda) &+& \ldots  & +  &(\lambda^{\hat c_{t}/{t!}})^t f_t(\lambda)\\
\end{array}  
\end{equation*}
Therefore, $f_1(\lambda) = \ldots = f_t(\lambda) =0$ which implies that all $f_{\nu}$ have a common root. By Lemma~\ref{lem:59}(1) they thus have a common non-trivial factor, in contradiction to the definition of $h$.

Hence we may fix an $l \le t$ such that $\bar f(\lambda,\frac{l}{t!}) \neq 0$. Define $k = t!/l$, then
$$
0 \neq \bar f(\lambda,\dfrac{l}{t!}) = \bar f(\lambda, k^{-1}) = \bar f((\lrk)^k, k^{-1}).
$$
That is, $\lrk$ is not a root of $\bar f (X^k,k^{-1})$ which implies $\mult(\lrk,\Ck_{1j}) = \mult(\lambda,h)$ by equation \eqref{eq:ck_def_for_claim1}.
\end{clproof}
Let us now prove statement \ref{lem:technical_core}(1b).
By equation \eqref{eq:c_shape}, we have
\begin{eqnarray*}
\Ck_{jj} & = & \phantom{h(X^k)}X^{2b}\left( X^{2\hat c_{0}} g_0(X^k) + \ldots + X^{2\hat c_{t}} g_t(X^k)\right) \\
         & = & X^{2b}h(X^k)\left( X^{2\hat c_{0}} f_0(X^k) + \ldots + X^{2\hat c_{t}} f_t(X^k)\right).
\end{eqnarray*}
Therefore, $\mult(\lrk,\Ck_{jj}) \ge \mult(\lrk,h(X^k)) = \mult(\lambda, h(X))$, which, by Claim~\ref{cl:first} implies $\mult(\lrk,\Ck_{jj}) \ge m(\lambda,j)$ --- just as claimed in (1b).

\bigskip 

\noindent To prove (1a) and (3) recall that, by equation \eqref{eq:0105091640}, 
$$
\Coo = h_1^{m_1}(X)\cdots h_z^{m_z}(X)\left(f_0(X) + \ldots + f_t(X)\right)
$$
which implies that every root $\lambda$ of $f_0(X) + \ldots + f_t(X)$ is a root of $\Coo$. And since the $h_i$ are the irreducible divisors of $\Coo$ such a $\lambda$ is therefore a root of one of the $h_i$ as well. By equation \eqref{eq:deg_of_f_sum} and the fact that all roots of $\Coo$ are non-zero, the polynomial $f_0(X) + \ldots + f_t(X)$ has at least one non-zero root, and hence 
$$\mult(\lambda,\Coo) > \mult(\lambda,h).$$
By equation \eqref{eq:2003091712} thus $\mult(\lrk, \Ck_{11}) = \mult(\lambda,\Coo) > \mult(\lambda,h)$ and (3) then follows by Claim~\ref{cl:first}. 
This also proves (1a) for the case that this root $\lambda$ of $\Coo$ is a root of $f_0(X) + \ldots + f_t(X)$. Otherwise $\mult(\lambda,\Coo) = \mult(\lambda,h)$ which proves (1a) by Claim~\ref{cl:first}.
This finishes the proof of the lemma.



\bibliographystyle{amsalpha}
\bibliography{bib}

\end{document}